\crefname{section}{\textsection}{\textsection}
\crefname{appendix}{\textsection}{\textsection}
\crefname{subsection}{\textsection}{\textsection}
\crefname{subsubsection}{\textsection}{\textsection}
\crefname{paragraph}{\textparagraph}{\textparagraph}
\crefname{thm}{Theorem}{Theorem}
\DeclareMathOperator{\tr}{tr}
\DeclareMathOperator{\Tr}{Tr}
\renewcommand{\Im}{\mathrm{Im}}
\renewcommand{\Re}{\mathrm{Re}}
          \newtheorem{thm}{Theorem}[section]
          \newtheorem{proposition}[thm]{Proposition}
          \newtheorem{lemma}[thm]{Lemma}
          \newtheorem{corollary}[thm]{Corollary}
          \newtheorem{definition}[thm]{Definition}
          \theoremstyle{definition}
          \newtheorem{remark}[thm]{Remark}
\renewcommand{\vec}[1]{\bm{#1}}
\newcommand{\one}{\mathds{1}}
\newcommand{\xv}{\mathbf{x}}
\newcommand{\kv}{\mathbf{k}}
\newcommand{\kvp}{\mathbf{k}^{\prime}}
\newcommand{\mm}{\mathfrak{m}}
\newcommand{\nn}{\mathfrak{n}}
\newcommand{\dom}{\mathscr{D}}
\newcommand{\diff}{\mathrm{d}}
\newcommand{\eps}{\varepsilon}
\newcommand{\la}{\lambda}
\newcommand{\aav}{\mathbf{A}}
\newcommand{\beq}{\begin{equation}}
\newcommand{\eeq}{\end{equation}}
\newcommand{\OO}{\mathcal{O}}
\newcommand{\KK}{\mathcal{K}}
\newcommand{\GG}{\mathcal{G}}
\newcommand{\R}{\mathbb{R}}
\newcommand{\N}{\mathbb{N}}
\newcommand{\bdm}{\begin{displaymath}}
\newcommand{\edm}{\end{displaymath}}
\newcommand{\bdn}{\begin{eqnarray}}
\newcommand{\edn}{\end{eqnarray}}
\newcommand{\bay}{\begin{array}{c}}
\newcommand{\eay}{\end{array}}
\newcommand{\ben}{\begin{enumerate}}
\newcommand{\een}{\end{enumerate}}
\newcommand{\beqn}{\begin{eqnarray}}
\newcommand{\eeqn}{\end{eqnarray}}
\newcommand{\bml}[1]{\begin{multline} #1 \end{multline}}
\newcommand{\bmln}[1]{\begin{multline*} #1 \end{multline*}}
\newcommand{\lf}{\left}
\newcommand{\ri}{\right}
\newcommand{\disp}{\displaystyle}
\newcommand{\tx}{\textstyle}
\newcommand{\bra}[1]{\lf\langle #1\ri|}
\newcommand{\ket}[1]{\lf|#1 \ri\rangle}
\newcommand{\braket}[2]{\lf\langle #1|#2 \ri\rangle}
\newcommand{\braketr}[2]{\lf\langle #1\lf|#2\ri. \ri\rangle}
\newcommand{\braketl}[2]{\lf.\lf\langle #1\ri|#2 \ri\rangle}
\newcommand{\meanlrlr}[3]{\lf\langle #1\lf|#2\ri|#3\ri\rangle}
\renewcommand{\leq}{\leqslant}
\renewcommand{\geq}{\geqslant}
\newcommand{\gint}{\Upsilon_{\eps}}
\newcommand{\hgint}{\widehat{\Upsilon}_{\eps}}
\numberwithin{equation}{section}
\begin{document}
\tolerance=2000
\setlength{\emergencystretch}{1em}

\newenvironment{sistema}%
{\left\lbrace\begin{array}{@{}l@{}}}%
    {\end{array}\right.}
\title{Quasi-Classical Dynamics}

\author[M.\ Correggi]{Michele Correggi}

\address{Dipartimento di Matematica, Politecnico di Milano, P.zza Leonardo da Vinci, 32, 20133, Milano, Italy}
\email{michele.correggi@gmail.com}

\urladdr{https://sites.google.com/view/michele-correggi/}

\author[M.\ Falconi]{Marco Falconi}

\address{Dipartimento di Matematica e Fisica, Universit\`{a} di Roma Tre, Largo S.\ Leonardo Murialdo 1/C, 00146, Roma}

\email{mfalconi@mat.uniroma3.it}

\urladdr{http://ricerca.mat.uniroma3.it/users/mfalconi/}

\author[M.\ Olivieri]{Marco Olivieri}

\address{Fakult\"{a}t f\"{u}r Mathematik, Karlsruher Institut f\"{u}r Technologie, D-76128, Karlsruhe, Germany.}

\email{marco.olivieri@kit.edu}

\urladdr{}

\keywords{Quasi-classical limit, Interaction of matter and light, Open quantum systems, Semiclassical analysis.}

\subjclass[2010]{Primary: 81Q20, 81T10. Secondary: 81V10, 81Q10.}

\date{\today}

\begin{abstract}
  We study quantum particles in interaction with a force-carrying field, in the quasi-classical
  limit. This limit is characterized by the field having a very large number of excitations (it is
  therefore macroscopic), while the particles retain their quantum nature. We prove that the interacting
  microscopic dynamics converges, in the quasi-classical limit, to an effective dynamics where the field
  acts as a classical environment that drives the quantum particles.
\end{abstract}

\maketitle

\onehalfspacing{}


\setcounter{tocdepth}{1}
\tableofcontents
 
\section{Introduction and Main Results}
\label{sec:introduction}

This paper is devoted to the study of the quasi-classical dynamics of a coupled quantum system composed of finitely many non-relativistic particles interacting with a bosonic field. The quasi-classical regime is concretely realized by taking a suitable partial
semiclassical limit, introduced by the authors in \cite{correggi2017ahp,correggi2017arxiv} to derive
external potentials as effective
interactions emerging from the particle-field coupling. The physical meaning of such limit is discussed in
\cref{sec:physical-motivation}. 

Our analysis clarifies, both mathematically and physically, the role played by external macroscopic
classical force fields on quantum systems, and in which regime such macroscopic fields provide an accurate
description of the interaction between an open quantum system and its environment (bosonic field).

In order to study the dynamical quasi-classical limit, we develop a mathematical framework of infinite
dimensional quasi-classical analysis, in analogy with the semiclassical scheme initially introduced and in
\cite{ammari2008ahp,ammari2009jmp,ammari2011jmpa,ammari2015asns}, and further discussed in
\cite{falconi2017ccm,falconi2017arxiv}. Such a framework allows to characterize the quasi-classical
behavior of quantum states which are not factorized, {\it i.e.}, in which the degrees of freedom of the quantum particles and the bosonic field are entangled.  Although our mathematical scheme is more general, we are going to focus our attention on three concrete models of interaction between particles and
force-carrying fields: the Nelson, Pauli-Fierz, and Fr\"{o}hlich polaron models (see \cref{sec:concr-models:-nels}). 
Note that partial semiclassical limits have already been studied, with somewhat different purposes, in
\cite{ginibre2006ahp,amour2015arxiv,amour2017arxiv,amour2017jmp}, as well as in the context of adiabatic theories (see, \emph{e.g.}, \cite{teufel2002ahp,panati2003atmp,tenuta2008cmp,stiepan2013cmp}).

The paper is organized as follows. In the rest of
  \cref{sec:introduction} we introduce the paper's mathematical framework,
  and we formulate and motivate our results. In
\cref{sec:quasi-class-analys} we develop the main technical tools for the
subsequent analysis, that we call quasi-classical analysis, in analogy with
the more familiar semiclassical analysis. In fact, quasi-classical analysis
is semiclassical analysis on a bipartite system, where only one part is
semiclassical, and the other is quantum. In \cref{sec:microscopic-model} we
describe the relevant features of the microscopic Nelson model, that we use
as a reference to explain the strategy of the proof of \cref{thm:5}
below. We then take the limit $\varepsilon\to 0$ of the microscopic integral
equation of motion in \cref{sec:quasi-class-limit}, while we discuss in
\cref{sec:uniq-quasi-class} the uniqueness of solutions to the
quasi-classical equation obtained performing the aforementioned limit. In
\cref{sec:putting-it-all} we put together the results obtained in
\crefrange{sec:quasi-class-analys}{sec:uniq-quasi-class}, and prove
\cref{thm:5} for the Nelson model, and thus consequently also
\cref{cor:7,thm:8}. In \cref{sec:techn-modif-polar}, we provide the technical
modifications needed to prove the aforementioned theorems, for the
Pauli-Fierz and polaron models.

\bigskip

\begin{small}
  \noindent {\bf Acknowledgements:} The authors would like to thank Z.\
  Ammari, for many helpful discussions during the redaction of the paper, and
  the anonymous referees, for all the stimulating comments,
  particularly in relation to the use of the (PI) condition to obtain bounded
  quasi-classical convergence, and to the connection between the
    hypotheses \eqref{eq: ass1} and \eqref{eq: ass2}.  M.C.\ and M.O.\ are
  especially grateful to the Institut Mittag-Leffler, where part of this work
  was completed. M.F.\ has been supported by the
  Swiss National Science Foundation via the grant ``Mathematical Aspects of
  Many-Body Quantum Systems'', and by the European Research Council
    (ERC) under the European Union’s Horizon 2020 research and innovation
    programme (ERC CoG UniCoSM, grant agreement n.724939). M.O.\ has been
  partially supported by GNFM group of INdAM through the grant Progetto
  Giovani 2019 ``Derivation of effective theories for large quantum
  systems''.
\end{small}

\subsection{Physical Motivation}
\label{sec:physical-motivation}

The quasi-classical description, combining a quantum system with a classical force field, is often used in
physics to model external macroscopic forces acting on a quantum particle system. The best known examples
are atoms and electrons in a classical electromagnetic field (see, \emph{e.g.},
\cite{cohen-tannoudji1997wiley}), and particles subjected to external potentials, such as systems of
trapped atoms and of particles in optical lattices. Since these external force fields are macroscopic,
they are heuristically taken as classical, and inserted in the particles' Hamiltonian in the same way
their microscopic counterparts would appear. Note that in literature the terminology ``quasi-classical''
is often used as a synonymous of semiclassical, while here we use it to stress that the classical limit
we consider is not complete, but applies only to a part (radiation field or environment) of the
microscopic system.

In this paper we provide a detailed analysis of the quasi-classical dynamical scheme, and discuss its validity
as an approximation of a more fundamental microscopic model, thus justifying and completing the above
heuristic picture. The basic idea is the following: in experiments, the external force fields are
considered macroscopic because they live on an energy scale much larger than the ones of the quantum
particles under study: the number of field's excitations is much larger than the number of quantum
particles in the system. Let us denote by $ N $ the number of particles in the system. The force field is
itself a quantum object, and its excitations are created and annihilated by the interaction with the
particles. Let us denote the field's number operator by 
\beq
	\diff\mathcal{G}(1) =\int_{}^{}\mathrm{d}k \: a^{\dagger}(k)a(k),
\eeq
where $ \mathcal{G} $ stands for the second quantization functor. Therefore, the field is macroscopic if  the state $\Psi$ of the coupled system
particles+field is such that $
 	\meanlrlr{\Psi}{\diff\mathcal{G}(1) }{\Psi} \gg N $.
The number of particles $ N $ is fixed, and therefore of order 1. In other words, the quasi-classical
configurations are the ones for which
\begin{equation}
  \meanlrlr{\Psi}{\diff\mathcal{G}(1)}{\Psi} \gg 1\; .
\end{equation}
We thence introduce a quasi-classical parameter $\varepsilon$, playing the role of a semiclassical parameter but
only for the field's degrees of freedom: when $\varepsilon\to 0$, the system becomes quasi-classical. We quantify $\varepsilon$
as follows: a quasi-classical state $\Psi_{\varepsilon}$ is a state such that
\begin{equation*}
  \meanlrlr{\Psi_{\eps}}{\diff\mathcal{G}(1)}{\Psi_{\eps}}  \sim \frac{1}{\varepsilon}\; .
\end{equation*}
In other words $\varepsilon$ is proportional to the inverse of the average number of excitations of the force-carrying
field. It follows that on quasi-classical states,
\begin{equation}
	\label{eq: number of carriers}
 	\meanlrlr{\Psi_{\varepsilon}}{\varepsilon \diff\mathcal{G}(1) }{\Psi_\varepsilon} = \int_{}^{} \diff \kv \: \meanlrlr{\Psi_{\varepsilon}}{\varepsilon a^{\dagger}(\kv) a(\kv)}{\Psi_{\varepsilon}} = \int_{}^{} \diff \kv \: \meanlrlr{\Psi_{\varepsilon}}{a_{\eps}^{\dagger}(\kv) a_{\eps}(\kv)}{\Psi_{\varepsilon}} \leq C\;,
\end{equation}
where $a^{\#}_{\varepsilon}(\,\cdot \,) :=\sqrt{\varepsilon}a^{\#}(\,\cdot \,)$. The creation and annihilation operators $a^{\#}_{\varepsilon}$
satisfy $\varepsilon$-dependent semiclassical canonical commutation relations:
\begin{equation}
	\label{eq: ccr}
  	\lf[ a_{\varepsilon}(\kv),a^{\dagger}_{\varepsilon}(\kvp) \ri] = \varepsilon\delta(\kv- \kvp)\; .
\end{equation}
It is therefore clear that a quasi-classical state is a state that behaves semiclassically only with
respect to the field's degrees of freedom.

It remains to understand which microscopic dynamics would yield, in the quasi-classical limit, an external potential acting on the particles and generated by the macroscopic field. In concrete applications, the macroscopic field is not affected by the quantum system and acts as an environment. Therefore, the coupling should be such that the particles do not back-react on the environment, at least to leading order in $ \eps $ and for times of order $ 1 $. In addition, we may think that the environment itself either evolves freely, or it remains constant in time. The absence of back-reaction is determined by the $\varepsilon$-scaling of the microscopic interaction, while the dynamical behavior of the environment is determined by the $\varepsilon$-scaling of the field's free part. It turns out that it is indeed possible to tune the scaling of the interaction in such a way that, in the limit $\varepsilon\to 0$, the latter is precisely \emph{weak enough} to make the system decouple only partially: the classical field obeys a linear, unperturbed, evolution, while the quantum system's dynamics is driven by the classical field itself. The scaling yielding such a behavior is introduced in \cref{sec:concr-models:-nels}, and discussed in \cref{sec:main-results}. Let us stress the fact that, contrarily to a complete semiclassical limit, in the quasi-classical regime the aforementioned partial decoupling prevents any nonlinearity to appear in the effective dynamics of both the classical field and the quantum system.

In~\cref{sec:main-results} we prove that the quasi-classical description can be rigorously obtained from microscopic models of particle-field interaction in the limit $\varepsilon\to 0$ of a very large number of average field's excitations. Since such limit is a semiclassical limit on the field only, the resulting structure of quasi-classical systems is that of an hybrid quantum/classical probability theory. The quantum system is driven by the classical environment, whose configuration is a classical probability with values in the quantum states for the particles. This mathematical structure is described in detail in next \cref{sec:gener-struct-micr}.

\subsection{Notation}
\label{sec: notation}

Since we are going to consider a tensor product Hilbert space of the form $ \mathscr{H} \otimes \mathscr{K}_{\eps} $, we will distinguish between the full trace of operators $\Tr(\,\cdot \,)$ on $\mathscr{H}\otimes
\mathscr{K}_{\varepsilon}$, and the partial traces $\tr_{\mathscr{H}}(\,\cdot \,)$ and $\tr_{\mathscr{K}_{\varepsilon}}(\,\cdot \,)$ 
w.r.t.\ $\mathscr{H}$ and $\mathscr{K}_{\varepsilon}$, respectively.

We adopt the following convenient notation: an operator acting {\it only} on the particle space $\mathscr{H}$ is denoted
by a \emph{calligraphic capital letter} (\emph{e.g.}, $\mathcal{T}$ or $ \mathcal{T}_{\eps} $), whereas an operator on the full space
$\mathscr{H}\otimes \mathscr{K}_{\varepsilon}$ is identified by a \emph{roman capital letter} (\emph{e.g.}, $H_{\varepsilon}$). Given an operator
$\mathcal{T}$ on $\mathscr{H}$, we also conveniently denote its extension to
$\mathscr{H}\otimes \mathscr{K}_{\varepsilon}$, \emph{i.e.}\ $T=\mathcal{T}\otimes 1$, by the roman counterpart $T$. 

Given a Hilbert space $\mathscr{X}$, we denote by $ \mathscr{L}^p(\mathscr{X})$, $p\in [1,\infty]$, the
$p$-th Schatten ideal of $\mathscr{B}(\mathscr{X})$, the space of bounded operators on
$\mathscr{X}$. More in general the set $ \mathcal{L}(\mathscr{X}) $ identifies all linear operators on $ \mathscr{X} $. We also denote by $\mathscr{L}^p_+(\mathscr{X})$ and $ \mathscr{B}_+(\mathscr{X}) $ the cones of positive elements, and by
$\mathscr{L}^p_{+,1}(\mathscr{X})$ the set of positive elements of norm one. The corresponding norms are denoted by keeping track of the space, except for the case of the operator norm, for which we use the short notation $ \lf\| \: \cdot \: \ri\| : = \lf\| \: \cdot \: \ri\|_{\mathscr{B}(\mathscr{X})} $.

The space of finite measures on a measure space $ (X, \Sigma) $ is denoted
by $ \mathscr{M}(X,\Sigma) $, while the subset of probability measures is $
\mathscr{P}(X,\Sigma) $. If $X$ is an Hausdorff topological space and
  $\Sigma$ is the Borel $\sigma$-algebra, we denote by $\mathscr{M}(X) $ the finite
  Radon Borel measures on $X$, and by $\mathscr{P}(X)$ its subset of
  probability measures.

Throughout the paper, given a set $S$ we
denote by $ \one_S$ its indicator function. The symbol $ C $ also stands for a finite positive constant, whose value may vary from line to line.

\subsection{Quasi-Classical System}
\label{sec:gener-struct-micr}

We consider a microscopic system consisting of two parts in interaction. The first one contains the objects
whose microscopic nature remains relevant, while the second is a semiclassical environment. For the sake of
clarity, we focus on a specific class of systems: non-relativistic quantum particles in interaction with a
semiclassical bosonic force-carrying field (electromagnetic, vibrational, {\it etc}.). It is not difficult to
adapt the techniques to other coupled systems as well, consisting of a quantum and a semiclassical
part. We denote by $\mathscr{H}$ the Hilbert space of the quantum part, and by $\mathscr{K}_{\varepsilon}$ the
Hilbert space of the semiclassical part, that carries an $\varepsilon$-dependent, semiclassical, representation of
the canonical commutation relations as in \eqref{eq: ccr}. Therefore, the microscopic theory is set in the Hilbert space
$\mathscr{H}\otimes \mathscr{K}_{\varepsilon}$. 

We restrict our attention to Fock representations of the canonical commutation relations. Therefore, we assume that
\begin{equation*}
  \mathscr{K}_{\varepsilon}=  \mathcal{G}_{\eps}(\mathfrak{h}) = \bigoplus_{n = 0}^{\infty} \mathfrak{h}^{\otimes_{\mathrm{s}} n}\; ,
\end{equation*}
the symmetric Fock space constructed over a separable Hilbert space $\mathfrak{h}$. The space
$\mathfrak{h}$ is the space of classical fields\footnote{Strictly speaking, the space $ \mathfrak{h} $ should be the Hilbert completion of the set of test functions for the classical fields, but in the following we are going to restrict our attention to classical fields belonging to such space.}. The
canonical commutation relation \eqref{eq: ccr} in $ \mathscr{K}_{\eps} $ reads, for any $ z, w \in \mathfrak{h} $,
\begin{equation*}
  \lf[a_{\varepsilon}(z),a^{\dagger}_{\varepsilon}(w) \ri] = \varepsilon \braket{z}{w}_{\mathfrak{h}}\; ,
\end{equation*}
and the quasi-classical limit corresponds to the limit $\varepsilon\to 0$. 

According to the notation set above, a microscopic Fock-normal state
is thus described by a density matrix 
\beq
	\Gamma_{\varepsilon}\in \mathscr{L}^1_{+,1}(\mathscr{H}\otimes \mathscr{K}_{\varepsilon}).
\eeq
The dynamics is generated by a self-adjoint and bounded from below Hamiltonian on $\mathscr{H}\otimes
\mathscr{K}_{\varepsilon}$, that we denote by $H_{\varepsilon}$. Given the unitary dynamics $e^{-itH_{\varepsilon}}$, the evolved state is
\begin{equation}
  \Gamma_{\varepsilon}(t) := e^{-itH_{\varepsilon}} \Gamma_{\varepsilon} e^{itH_{\varepsilon}}\; .
\end{equation}

Let us now turn the attention to the effective quasi-classical system in the limit $\varepsilon\to 0$. This is an hybrid quantum-classical system, in which the classical part acts as an environment for the quantum part. In fact, as we will see, the classical field affects the quantum particles, but the converse is not true, the interaction is not strong enough to cause a back-reaction of the particles on the classical field.

The basic observables for the classical fields are the elements $z\in \mathfrak{h}$, or, more precisely, the real vectors of the form $z + \bar{z} $.  Scalar observables in a generalized sense are functions $z\mapsto f(z)\in \mathbb{C}$ semiclassically called {\it symbols}. In addition to scalar or field observables, there are more general observables involving both subsystems, which are thus represented by operator-valued functions $ z\mapsto \mathcal{F}(z) $, where $ \mathcal{F}(z) $ is a linear operator on the particle Hilbert space $ \mathscr{H}$. Note that one can easily associate an operator-valued function to a scalar symbol as well, by simply setting $ \mathcal{F}(z) = f(z) \cdot \mathds{1} $, where $ \mathds{1} \in \mathscr{B}(\mathscr{H}) $ stands here for the identity operator.

A state of the classical field (environment) is a Borel probability measure $\mu\in \mathscr{P}(\mathfrak{h}) $, while a state of the quantum
particles is a density matrix $ \gamma \in \mathscr{L}^1_{+,1}(\mathscr{H})$. Since in the quasi-classical regime
the environment affects the behavior of the quantum particle system, a quasi-classical state is a
\emph{state-valued probability measure} 
\beq 
	\mathfrak{m} \in	\mathscr{P}\lf(\mathfrak{h}; \mathscr{L}^1_{+}(\mathscr{H}) \ri).  
\eeq 
A state-valued measure thus takes values in $ \mathscr{L}^1_+(\mathscr{H})$, but it can also be conveniently described by its norm Radon-Nikod\'{y}m decomposition (see \cref{prop:7}): a pair $(\mu_{\mm},\gamma_{\mm}(z))$ consisting of a scalar Borel (probability) measure $\mu_{\mm} $, and a $\mu_{\mm}$-integrable, almost everywhere defined function $ \gamma_{\mm}(z) \in \mathscr{L}^1_{+,1}(\mathscr{H})$ taking values in normalized density matrices, {\it i.e.}, 
\beq 
	\diff \mm(z) = \gamma_{\mm}(z) \diff \mu_{\mm}(z).  
\eeq 
In other words, a generic normalized quasi-classical state consists of a measure $\mu_\mm $ describing the environment, and a function $ \gamma_\mm(z)$ describing how (almost) each configuration of the field affects the quantum particles' state. The quasi-classical equivalent of taking the partial trace w.r.t. the field's degrees of freedom is integrating w.r.t.\ the quasi-classical state-valued measure, {\it i.e.,} for any operator-valued function $ \mathcal{F}(z) \in \mathscr{B}(\mathscr{H}) $, \beq \int_{\mathfrak{h}}^{} \mathrm{d}\mm(z) \: \mathcal{F}(z)= \int_{\mathfrak{h}}^{} \mathrm{d}\mu_{\mm}(z) \:\gamma_{\mm}(z) \mathcal{F}(z) .  \eeq
Note that, when integrating against the state-valued measure, it is a priori relevant to keep the order as in the above expression, since $ \mathcal{F}(z) $ might not commute with $ \gamma_{\mm}(z) $.

The quasi-classical evolution also consists of two parts: an evolution of the environment's probability
measure $ \mu_{\mm} $, and one of the quantum system for (almost) every configuration of the classical
field. The evolution of the environment depends on the choice of a scaling parameter for the field's part
in $H_{\varepsilon}$, and we consider two cases: either the environment is stationary, {\it e.g.}, it is at equilibrium,
or it evolves freely. Concretely, the environment is evolved by a unitary, linear, flow
$e^{-it\nu\omega}:\mathfrak{h}\to \mathfrak{h}$, $t\in \mathbb{R}$, of classical fields, where $\omega$ is a positive self-adjoint
operator on $\mathfrak{h}$ (typically, a multiplication operator by the dispersion relation of the field), and $ \nu\in \{0,1\}$, depending on the chosen scaling. This flow pushes forward
the measure $\mu_{\mm} $, yielding
\begin{equation}
  \mu_{\mm, t} : = \lf( e^{-it \nu\omega} \ri) _{\star}  \mu_{\mm}\; .
\end{equation}
The explicit action of the pushforward, as it is well-known, is as follows: for all measurable Borel sets $B \subset
\mathfrak{h}$,
\begin{equation}
  	\lf[ \lf( e^{-it\nu\omega} \ri)_{\star} \mu \ri] (B) : = \mu\lf( e^{it\nu\omega} B \ri)\; ,
\end{equation}
where $e^{it\nu\omega}B$ stands for the preimage of $B$ w.r.t. the map $e^{-it\nu\omega}$. The quantum part of the
evolution is generated by a map from field configurations to two-parameter groups of unitary operators $z\mapsto
{\bigl(\mathcal{U}_{t,s}(z)\bigr)}_{t,s\in \mathbb{R}}$, and it acts as
\begin{equation}
  	\gamma_{\mm,t,s}(z) := \mathcal{U}_{t,s}(z) \gamma_{\mm}(z) \mathcal{U}^{\dagger}_{t,s}(z)\;.
\end{equation}
Let us remark that the pushforward of the measure \emph{does not affect} the Radon-Nikod\'{y}m derivative
$ \gamma_{\mm,t,s}(z)$, but only the integrated functions.

The quantum evolution is unitary for (almost) all configurations of the field. However, a measurement on
the classical system modifies the quantum state in a non-unitary, but explicit, way. Let $f(z)$ be a
scalar field's observable and suppose it is $\mu_{\mm}$-measurable. For $\lambda\in \mathbb{C}$, let us define the
level set of $f$ as
\begin{equation*}
  B_{\lambda}=\{z\in \mathfrak{h}, f(z)=\lambda\}\;.
\end{equation*}
Then the conditional quantum state $\gamma_{\mm,t,s}\bigr\rvert_{f=\lambda}\in
\mathscr{L}^1_{+,1}(\mathscr{H})$ at time $t\in \mathbb{R}$, describing the state of the quantum system conditioned to
an observed value $\lambda$ of the classical observable $f$, is given by
\begin{equation*}
  \gamma_{\mm,t,s}\bigr\rvert_{f=\lambda}=\int_{\mathfrak{h}}^{} \mathrm{d}\mu_{\mm,\lambda,t-s}(z) \: \gamma_{\mm,t}(z)\, \mathds{1}_{B_{\lambda}}(z)  = \int_{e^{i(t-s)\nu\omega}B_{\lambda}}^{} \mathrm{d}\mu_{\mm,\lambda}(z) \: \mathcal{U}_{t,s}(z) \gamma_{\mm}(z) \mathcal{U}^{\dagger}_{t,s}(z) \;,
\end{equation*}
where $(\mu_{\mm,\lambda})_{\lambda\in \mathbb{C}}$ is the disintegration of $\mu_{\mm}$ w.r.t. the function $f$. The
conditional evolution $(t,s) \mapsto  \gamma_{\mm,t,s}\bigr\rvert_{f=\lambda} $ is clearly non-unitary but it preserves positivity: the dynamics is in general non-Markovian, unless either $ B_{\lambda}=\{z_{\lambda}\}$ or $\mu_{\mm}=\delta_{z_0}$, {\it i.e.}, the group property might not be satisfied. One should indeed not expect that, for any $ t, s, \tau > 0 $, there exists some two-parameter unitary group $ \mathcal{W}_{t,s} \in \mathscr{B}(\mathscr{H}) $, such that
\bdm
	 \gamma_{\mm,t,s}\bigr\rvert_{f=\lambda} = \mathcal{W}_{t,\tau}  \gamma_{\mm,\tau,s}\bigr\rvert_{f=\lambda} \mathcal{W}_{t,\tau}^{\dagger}.
\edm

The quantum state at time $t\in \mathbb{R}$, conditioned to the fact that $f$ (or any other observable) is observed, irrespective of its value, is denoted by $\gamma_{\mm,t,s}$, it is independent of $f$, and it is given by
\begin{equation*}
  \gamma_{\mm,t,s}=\int_{\mathbb{C}}^{}\mathrm{d}\lambda  \int_{e^{i(t-s)\nu\omega}B_{\lambda}}^{}  \mathrm{d}\mu_{\mm,\lambda}(z) \: \mathcal{U}_{t,s}(z)\gamma_{\mm}(z)\mathcal{U}_{t,s}^{\dagger}(z) =\int_{\mathfrak{h}}^{}\mathrm{d}\mu_{\mm}(z) \: \mathcal{U}_{t,s}(z)\gamma_{\mm}(z)\mathcal{U}_{t,s}^{\dagger}(z)  \;.
\end{equation*}
Again the conditional evolution $(t,s) \mapsto\gamma_{\mm,t,s} $ preserves both positivity and the trace, but it  is still non-Markovian in general. It would be interesting to study the states of the environment, if any, not concentrated in a single field configuration, that make the conditional evolution Markovian, and possibly non-unitary. Such measures would yield a quasi-classical evolution on the open quantum system of Lindblad type (see, \emph{e.g.},\cite{kossakowski1972rmp,lindblad1976cmp}).

\subsection{The Concrete Models: Nelson, Pauli-Fierz, and Polaron}
\label{sec:concr-models:-nels}

Let us define more concretely the three models of interaction between non-relativistic particles and
bosonic force carrier fields that we consider throughout the paper: the Nelson, Pauli-Fierz, and polaron models.

\subsubsection{Nelson Model}
\label{sec:nelson-model-1}

The Nelson model describes quantum particles (\emph{e.g.}, nucleons), interacting with a force-carrying
scalar field (\emph{e.g.}, a meson field), and was firstly rigorously studied \cite{nelson1964jmp}. In this paper, we restrict our attention to the regularized Nelson model, where the
interaction is smeared by an ultraviolet cutoff. We consider $N$, $d$-dimensional, non-relativistic,
spinless particles, and therefore $\mathscr{H}=L^2 (\mathds{R}^{dN} )$. The classical fields are usually taken to be
in $\mathfrak{h}=L^2(\mathbb{R}^d)$, but other choices may be possible, {\it e.g.}, a cavity
field, whose classical space would then be $\ell^2(\mathbb{Z}^d)$. The Hamiltonian $H_{\varepsilon}$ has the form
\begin{equation*}
  H_{\varepsilon}= K_0+ \nu(\varepsilon)\mathrm{d}\GG_{\varepsilon}(\omega) + \sum_{j=1}^N \lf[ a^{\dagger}_{\varepsilon}\bigl(\lambda(\xv_j)\bigr)+a_{\varepsilon}\bigl(\lambda(\xv_j)\bigr) \ri]\; ,
\end{equation*}
where $K_0=\mathcal{K}_0\otimes 1$, with $\mathcal{K}_0$ self-adjoint and bounded from below on $\mathscr{H}$,
$\omega$ a positive operator on $\mathfrak{h}$ and $\mathrm{d}\GG_{\varepsilon}(\omega)$ its second
quantization, \emph{i.e.}, the Wick quantization of the symbol 
\beq
	\label{eq: kappa}
	\kappa(z) : =  \meanlrlr{z}{\omega}{z}_{\mathfrak{h}},
\eeq
and $\lambda\in L^{\infty}(\mathbb{R}^d; \mathfrak{h})$ is the coupling factor. 

If one naively replaces the quantum canonical variables $ a^{\#} $ with their classical counterparts, {\it i.e.}, $ z^{\#} $, one can easily deduce that the quasi-classical effective potential for the model above is given by the symbol $z\mapsto
\mathcal{V}(z)$, where (see also \cite[Sect. 2.2]{correggi2017ahp})
\begin{equation}
	\label{eq: nelson effective}
  \mathcal{V}(z)=\sum_{j=1}^N 2\Re \braket{z}{\lambda(\xv_j)}_{\mathfrak{h}}\in \mathscr{B}(\mathscr{H})\; .
\end{equation}
In most practical applications $ \lambda(\xv; \: \cdot \:) \in \mathfrak{h}=L^2 (\mathbb{R}^d)$ has the following explicit form:
\begin{equation*}
  \lambda(\xv; \kv) =\lambda_0(\kv) e^{-i \kv \cdot \xv}\;, \qquad		\lambda_0 \in L^2(\mathbb{R}^d)\;.
\end{equation*}
This leads to the effective potential $\mathcal{V}(z)$ being the Fourier transform of an integrable
  function, and thus continuous and vanishing at infinity. In order to obtain more singular potentials, it
  is necessary to consider microscopic states whose measures are not concentrated as Radon measures in
  $\mathfrak{h}$ \cite[Sect. 2.5]{correggi2017ahp}. This would, however, make the analysis more involved. We thus restrict our
  attention to states whose measures are indeed concentrated in $\mathfrak{h}$  (see \cref{rem:3} for
  additional details).

\subsubsection{Pauli-Fierz Model}
\label{sec:pauli-fierz-model-1}

We consider the class of Pauli-Fierz models describing $N$ non-relativistic, spinless,
extended $d$-dimensional charges moving in $\mathbb{R}^d$, $d\geq 2$, interacting with electromagnetic radiation in the Coulomb
gauge. Adding spin, adopting a different gauge, or constraining particles to an open subset
of $\mathbb{R}^d$ would not affect the results, but make the analysis more involved. The particles' Hilbert space
is thus $\mathscr{H}=L^2(\mathbb{R}^{dN})$, while the classical fields are in $\mathfrak{h}=L^2 (\mathbb{R}^d; \mathbb{C}^{d-1}
)$. The Hamiltonian $H_{\varepsilon}$ is customarily written as
\begin{equation*}
  H_{\varepsilon}=\sum_{j=1}^N{\bigl(-i\nabla_j + \aav_{\varepsilon}(\xv_j)\bigr)}^2 + W(\xv_1,\dotsc, \xv_N) + \nu(\varepsilon)\mathrm{d}\GG_{\varepsilon}(\omega)\; ,
\end{equation*}
with
\begin{equation*}
  \aav_{\varepsilon}(\xv)= a^{\dagger}_{\varepsilon}\bigl(\vec{\lambda}(\xv)\bigr) + a_{\varepsilon}\bigl(\vec{\lambda}(\xv)\bigl).
\end{equation*}
In the above: $W=W_1+W_2$ is a multiplicative potential describing the interaction among charges, with
$\mathcal{W}_1\in L^1_{\mathrm{loc}}(\mathbb{R}^{dN}; \mathbb{R}^+)$ and $\mathcal{W}_2{(-\Delta+1)}^{-\frac{1}{2}}\in
\mathscr{B}(\mathscr{H})$; $\omega$ is the field's dispersion relation, a positive multiplication operator on
$\mathfrak{h}$, such that $\omega^{-1}$ is also a positive self-adjoint operator on $\mathfrak{h}$, {\it e.g.}, $ \omega(\kv) = \lf| \kv \ri| $, and
$\vec{\lambda}=(\lambda_1,\dotsc,\lambda_d)$, with $\lambda_{\ell}\in L^{\infty}(\mathbb{R}^d; \mathscr{D}(\omega^{-1/2}+\omega^{1/2})) $ for all $ \ell \in
\{1,\dotsc,d\}$ and $\nabla\cdot \vec{\lambda}(x)=0$, is the particles' charge distribution. We denoted by $ \mathscr{D}(\omega^{-1/2}+\omega^{1/2}) \subset \mathfrak{h} $ the intersection of the self-adjointness domains of $ \omega^{-1/2} $ and $ \omega^{1/2} $. 

In this case, we have
$\mathcal{K}_0= -\Delta +\mathcal{W}$ and the effective potential can be easily seen to become \cite[Sect. 1.2]{correggi2017arxiv}
\begin{equation}
  \mathcal{V}(z)= 4 \sum_{j=1}^N \lf[ -  i \Re \braket{z}{\vec{\lambda}(\xv_j)}_{\mathfrak{h}}  \cdot \nabla_j + \lf( \Re \braket{z}{\vec{\lambda}(\xv_j)}_{\mathfrak{h}} \ri)^2 \ri] \;.
\end{equation}
Notice that the interaction term in $ H_{\eps} $ is not the Wick quantization of the above symbol $ \mathcal{V}(z) $, because $ H_{\eps} $ is not normal ordered and an additional term is missing, {\it i.e.},
\begin{equation*}
  \varepsilon\sum_{j=1}^N \sum_{\ell =1}^d \lVert \lambda_{\ell}(\xv_j)  \rVert_{\mathfrak{h}}^2\; = \OO(\eps),
\end{equation*}
but such a contribution vanishes in the limit $ \eps \to 0 $. Similarly to the Nelson model, the effective interaction $\mathcal{V}(z)$ describes the minimal
  coupling of the particles with a magnetic potential that is continuous and vanishing at infinity.

\subsubsection{Polaron}
\label{sec:polaron-model-1}

The Fr\"{o}hlich's polaron \cite{frohlich1937prslA} describes electrons moving in a quantum lattice crystal. The
$N$ $d$-dimensional electrons are modeled as non-relativistic spinless particles, and thus
again $\mathscr{H}=L^2 (\mathds{R}^{dN})$. For the phonon vibrational field, $\mathfrak{h}=L^2 (\mathds{R}^d )$. The
Hamiltonian $H_{\varepsilon}$ is formally written as
\begin{equation*}
  H_{\varepsilon}= -\Delta + a_{\varepsilon}^{\dagger}\bigl(\phi(\xv_j)\bigr)+a_{\varepsilon}\bigl(\phi(\xv_j)\bigr)+ W(\xv_1,\dotsc,\xv_N)+ \nu(\eps) \mathrm{d}\GG_{\varepsilon}(1) \; ,
\end{equation*}
with the particles' potential $W$ satisfying the same assumptions given in~\cref{sec:pauli-fierz-model-1}
for the Pauli-Fierz model. In addition, 
\begin{equation*}
  \phi(\xv; \kv) :=\alpha \frac{e^{-i \kv\cdot \xv}}{\lvert \kv  \rvert_{}^{\frac{d-1}{2}}}\; ,
\end{equation*}
$\alpha\in \mathbb{R}$, is the polaron's form factor and, for all $ \xv\in \mathbb{R}^d$, it does not belong to $\mathfrak{h}$. Hence, $H_{\varepsilon}$ as written above is only a formal expression. However, it makes sense as a closed
and bounded from below quadratic form: one can find a parameter $ r \in \R^+ $, a splitting $ \phi = \phi_{r} + \chi_{r} $, with
\begin{equation*}
  \phi_r(\xv; \kv) : = \one_{\lf\{ |\kv| \leq r \ri\}}(\kv) \phi(\xv; \kv),
\end{equation*}
and some $\vec{\lambda}_r \in
L^{\infty}(\mathbb{R}^d; \mathfrak{h}^d)$, such that, as a quadratic form,
\bmln{
      H_{\varepsilon}= -\Delta + a_{\varepsilon}^{\dagger}\bigl(\phi_r(\xv_j)\bigr) + a_{\varepsilon}\bigl(\phi_r(\xv_j)\bigr) + \lf[-i\nabla_j, a_{\varepsilon}\bigl(\vec{\lambda}_r(\xv_j)\bigr) - a^{\dagger}_{\varepsilon}\bigl(\vec{\lambda}_r(\xv_j)\bigr) \ri] \\+ W(\xv_1,\dotsc,\xv_N)+ \nu(\eps) \mathrm{d}\GG_{\varepsilon}(1)\;,
}
where the commutator between two vectors of operators involves a scalar product.

In the polaron model $\mathcal{K}_0= -\Delta +\mathcal{W}$, and the effective potential is given by \cite[Sect. 2.3]{correggi2017ahp}
\begin{equation*}
  \mathcal{V}(z)=2\sum_{j=1}^N\Re \braket{z}{\phi_r(\xv_j)}_{\mathfrak{h}} + \lf[-i\nabla_j, \Im \braket{\vec{\lambda}_r(\xv_j)}{z}_{\mathfrak{h}} \ri]\;.
\end{equation*}
Notice that one could formally resum the two terms above, obtaining the same expression \eqref{eq: nelson effective} as in the Nelson model. In the case of the polaron, the potential $\mathcal{V}(z)$ is not necessarily bounded,
  but still relatively form bounded w.r.t. $-\Delta$. In fact, $\mathcal{V}(z) $ can be any function in
  $\dot{H}^{\frac{d-1}{2}}(\mathbb{R}^d)\cap L^2_{\mathrm{loc}}(\mathbb{R}^d)$.

  Let us also remark that in the polaron case, the quasi-classical limit is mathematically analogous to
  the strong coupling limit. Strongly coupled polarons have been widely studied in the mathematical
  literature both from a dynamical and a variational point of view (see, \emph{e.g.}, \cite{gross1976ap,
    lieb1997cmp, griesemer2013jpa, frank2014lmp, frank2017apde, griesemer2016arxiv2, frank2019arxiv,
    lieb2019arxiv,leopold2019arxiv,frank2019arxiv2}). Compared to the available dynamical results
  \cite{frank2014lmp,frank2017apde,griesemer2016arxiv2, leopold2019arxiv}, our quasi-classical approach has the advantage
  of being applicable to a very general class of microscopic initial states. However, we have no control on the errors and we are not able to
  derive the higher order corrections to the effective dynamics, \emph{i.e.}, the ones given by the
  Landau-Pekar equations.

\subsection{Main Results}
\label{sec:main-results}

Before stating our main results, we provide more technical details about the general structure of the models we are considering in this
paper, by specifying some assumptions that are sufficient to prove our main results, and that are
satisfied in the above concrete models. We do not strive for the optimal assumptions nor for the most general setting.

First of all, we remark that all the Hamiltonians introduced in \cref{sec:concr-models:-nels} can be cast in the following form
\begin{equation}
  \label{eq:35}
  H_{\varepsilon}=\mathcal{K}_0\otimes 1 + \nu(\varepsilon)\, 1\otimes \mathrm{Op}_{\varepsilon}^{\mathrm{Wick}}(\kappa) + \mathrm{Op}_{\varepsilon}^{\mathrm{Wick}}(\mathcal{V}) + \OO(\varepsilon)\; ,
\end{equation}
where: $\mathcal{K}_0$ is self-adjoint and bounded from below on $\mathscr{H}$, and describes the
particle's system when it is isolated; $\nu(\varepsilon)$ is a quasi-classical scaling factor, such that 
\beq
	\nu=\lim_{\varepsilon\to
  0}\varepsilon\nu(\varepsilon) \in \{0, 1\},
\eeq
and the two relevant scalings are $\nu(\varepsilon)=1$, yielding an environment that remains constant in
time, and $\nu(\varepsilon)=\frac{1}{\varepsilon}$, yielding an environment that evolves freely; $\kappa$ is the symbol given by \eqref{eq: kappa} for a densely defined,
positive operator $ \omega  $ on $\mathfrak{h}$. Given a symbol $ z \mapsto \mathcal{F}(z) $, we denote by $ \mathrm{Op}_{\varepsilon}^{\mathrm{Wick}}(\mathcal{F})$ its Wick quantization, so that in particular $ \mathrm{Op}_{\varepsilon}^{\mathrm{Wick}}(\kappa) = \diff \GG_{\eps}(\omega) $. The symbol $z \mapsto \mathcal{V}(z) $ is operator-valued and polynomial, and it describes the
interaction between the particles and the environment. The possible concrete choices of $ \mathcal{V} $ have been presented in \cref{sec:concr-models:-nels}. Finally, $ \OO(\varepsilon) $ is a bounded particle
operator of order $\varepsilon$.

To study the limit $\varepsilon\to 0$ of evolved states $ \Gamma_{\varepsilon}(t)$, we make the following very
general assumption on\footnote{For simplicity, we set the initial time $ s$ equal to 0.} $\Gamma_{\varepsilon}(0)=\Gamma_{\varepsilon}$:
\begin{equation}
  \label{eq:20}\tag{$\mathrm{A1}$}
  \exists \delta>0 , \exists C_{\delta} < + \infty \quad \text{s.t.} \quad \Tr\bigl(\Gamma_{\varepsilon} {(\mathrm{d}\GG_{\varepsilon}(1)+1)}^{\delta}\bigr) \leq C_{\delta}\;,
\end{equation}
which is for instance satisfied if the state scales with $ \eps $ as in \eqref{eq: number of carriers},
  or if it is formed by a coherent superposition of vectors with a finite number of force carriers.  Such assumption is sufficient to prove the existence of a subsequence $ \lf\{ \varepsilon_n
\ri\}_{n \in \N} \to 0$ such that $\Gamma_{\varepsilon_n}$ converges to a quasi-classical state $\mm$ in the sense of the
\cref{def: qc convergence} below. For the polaron and Pauli-Fierz models, an additional assumption is
necessary to study the limit $\varepsilon\to 0$ of $\Gamma_{\varepsilon}(t)$, due to the fact that such models are ``more singular''
than the Nelson model:
\begin{equation}
  \label{eq:13}\tag{$\mathrm{A1'}$}
  \exists C < +\infty \quad \text{s.t.} \quad\Tr\bigl(\Gamma_{\varepsilon} \bigl(K_0+ \mathrm{d}\mathcal{G}_{\varepsilon}(\omega)\bigr)\bigr) \leq C\;.
\end{equation}
Finally, in order to ensure that no loss of mass occurs along the weak limit, or, equivalently, that the quasi-classical limit point $ \mm $ is still normalized and $ \lf\| \mm(\mathfrak{h}) \ri\|_{\mathscr{L}^1(\mathscr{H})} = 1 $, we also need a control of the particle component of the state $ \Gamma_{\eps} $. We thus define the {\it reduced density matrix} for the particles as
\beq
	\label{eq: reduced}
	\gamma_\eps : = \tr_{\mathscr{K}_{\eps}} \Gamma_{\eps} \in \mathscr{L}^1_{+,1}(\mathscr{H}),
\eeq
and impose the following alternative conditions on $ \gamma_{\eps} $: 
\beq
	\label{eq: ass1}
	\tag{$\mathrm{A2}$}
	\exists \mathcal{A} > 0, \mathcal{A}^{-1} \in \mathscr{L}^{\infty}(\mathscr{H}), \mbox{ s.t. } \tr_{\mathscr{H}}\lf(
          \mathcal{A} \gamma_{\eps} \ri) \leq C < + \infty\; ,
\eeq
or
\beq
	\label{eq: ass2}
	\tag{$\mathrm{A2'}$} 
	\exists \gamma\in \mathscr{L}^1_{+}(\mathscr{H}), \mbox{ s.t. } \gamma_{\eps} \leq \gamma \;.  
\eeq 
We are going to comment further about the above conditions in \cref{rem:
  role} and \cref{rem: assumptions}, but we point out here that the
  second is stronger than the first, in the sense that \eqref{eq: ass2}
  implies \eqref{eq: ass1}. A simple but relevant case in which \eqref{eq: ass2} is
          trivially satisfied is given by product states of the form $ \gamma \otimes \varsigma_{\eps} $ with $ \gamma \in
          \mathscr{L}^{1}_{+,1}(\mathscr{H}) $ independent of $ \eps
          $. Contrarily, the more general Assumption
          \eqref{eq: ass1} seems at a first glance also more arbitrary, but it could be put in relation with
          the physics of the model (see \cref{rem: assumptions}).

Let us define by $ \widehat{\Gamma}_{\varepsilon} $ the noncommutative Fourier transform or generating map of a state $ \Gamma_{\eps} \in \mathscr{L}^1_{+,1}(\mathscr{H} \otimes \mathscr{K}_{\eps}) $, {\it i.e.},
\begin{equation}
	\label{eq: nc fourier transform}
    \widehat{\Gamma}_{\varepsilon}(\eta):=\tr_{\mathscr{K}_{\varepsilon}}\bigl(\Gamma_{\varepsilon} W_{\varepsilon}(\eta) \bigr)\in \mathscr{L}^1(\mathscr{H})\; 
\end{equation}
for any $\eta\in
\mathfrak{h}$, where $W_{\varepsilon}(\eta)$ is the Weyl operator on $\mathscr{K}_{\varepsilon}$:
\begin{equation}
	\label{eq: weyl}
 	 W_{\varepsilon}(\eta) :=e^{i(a^{\dagger}_{\varepsilon}(\eta)+a_{\varepsilon}(\eta))}\; .
\end{equation}
Analogously, to any state-valued measure $\mm\in \mathscr{M}(\mathfrak{h};  \mathscr{L}^1_{+}(\mathscr{H}))$ there corresponds the Fourier transform
\begin{equation}
	\label{eq: fourier transform}
   \widehat{\mm}(\eta):=\int_{\mathfrak{h}}^{} \mathrm{d}\mu_{\mm}(z) \: \gamma_{\mm}(z) e^{2i\Re \braket{ \eta}{z}_{\mathfrak{h}}} \in \mathscr{L}^1(\mathscr{H})\;,
\end{equation}
for $ \eta \in \mathfrak{h} $.

 \begin{definition}[Quasi-classical convergence]
 	\label{def: qc convergence}
 	\mbox{}	\\
 	Let $ \Gamma_\eps \in \mathscr{L}^1_{+,1}(\mathscr{H} \otimes \mathscr{K}_{\eps}) $ and $ \mm \in \mathscr{M}\lf(\mathfrak{h};  \mathscr{L}^1_{+}(\mathscr{H})\ri) $. We say that
 	\beq
 		\Gamma_{\eps} \xrightarrow[\eps \to 0]{\mathrm{qc}} \mm,
	\eeq
	if and only if $ \widehat{\Gamma}_{\varepsilon}(\eta) \xrightarrow[\eps \to 0]{\mathrm{w}*} \widehat{\mm}(\eta)$ pointwise for
all $\eta\in \mathfrak{h}$ in weak-$*$  topology in $ \mathscr{L}^1(\mathscr{H}) $, {\it i.e.}, when testing against compact operators in $ \mathscr{L}^{\infty}(\mathscr{H}) $. 
 \end{definition}
  
 The above \cref{def: qc convergence} is given in terms of the Fourier transforms in order to completely
 characterize the limit quasi-classical measure $ \mm $. On the other hand, from the physical point of
 view, it is relevant to study the convergence of expectation values of quantum observables, which is
 discussed in \cref{sec:quasi-class-analys} and specifically in \cref{thm:8}. Note that in light of
 \cref{prop:8}, assumption \eqref{eq:20} guarantees that any such $ \Gamma_{\eps} $ admits at least one limit
 point in the sense of \cref{def: qc convergence}. 
 
In the following, we may omit the superscript $^{\mathrm{qc}}$ in $\Gamma_{\varepsilon} \xrightarrow[]{\mathrm{qc}} \mathfrak{m}$, if it is clear from the
context that we are considering the quasi-classical convergence of \cref{def: qc convergence} (not
its stronger counterpart of next \cref{def:11}).
        
	\begin{remark}[Reduced density matrix]
			\mbox{}	\\
			\label{rem: reduced}
			We point out that the reduced density matrix $ \gamma_{\eps} $ for the particle
                          system given in \eqref{eq: reduced} can be obtained by evaluating the noncommutative Fourier transform \eqref{eq: nc fourier transform} in $ \eta = 0 $, {\it i.e.},
			\bdm
				\gamma_{\eps} : = \tr_{\mathscr{K}_{\eps}} \Gamma_{\eps} = \widehat{\Gamma}_{\eps}(0).
			\edm
			Hence, the convergence $ \Gamma_{\eps_k} \xrightarrow[k \to + \infty]{} \mm $ can be easily seen to imply that
			\beq
				\gamma_{\eps_k} \xrightarrow[k \to + \infty]{\mathrm{w}*} \int_{\mathfrak{h}} \diff \mu_{\mm}(z) \: \gamma_{\mm}(z),
			\eeq
			where we have denoted by $ \mathrm{w}* $ the {\it weak-$*$ operator} topology.
		\end{remark}

  	\begin{remark}[Product states]
  		\label{rem: product states}
  		\mbox{}	\\
  		As a special case, we observe that, if $ \Gamma_{\eps} $ is a \emph{physical product
                    state}\footnote{Product states are the mathematical formulation of the fact that
                    the two parts of the system are independent. Since $\varepsilon$ characterizes only the behavior of
                    the field, it is not physically relevant to put an
                    $\varepsilon$-dependence on the particle part.}, {\it i.e.}, if there exist $ \gamma \in
                \mathscr{L}^1_{+,1}(\mathscr{H})$ and $\varsigma_{\varepsilon}\in \mathscr{L}^1_{+,1}(\mathscr{K}_{\varepsilon})$, such
                that $ \Gamma_{\eps} = \gamma \otimes \varsigma_{\varepsilon}$, then 
                \beq 
                	\Gamma_{\eps} \xrightarrow[\eps \to 0]{\mathrm{qc}} \mm \quad \Longleftrightarrow \quad \varsigma_{\varepsilon}
                \xrightarrow[\eps \to 0]{} \mu_{\mm} \mbox { and } \gamma_{\mm}(z) = \gamma. 
                \eeq 
                The proper definition
                of the convergence between scalar measures was given in \cite{correggi2017ahp}, but it
                coincides with \cref{def: qc convergence} when $ \mathscr{H} = \mathbb{C} $.
	\end{remark}

A stronger notion of quasi-classical convergence can be given lifting the weak-$*$ convergence of the Fourier transform in \cref{def: qc convergence} to weak convergence, \emph{i.e.}, by testing with bounded operators $\mathcal{B}\in \mathscr{B}(\mathscr{H})$. This leads to the following definition.

        \begin{definition}[Bounded quasi-classical convergence]
          	\label{def:11}
          	\mbox{}	\\
 		Let $ \Gamma_\eps \in \mathscr{L}^1_{+,1}(\mathscr{H} \otimes \mathscr{K}_{\eps}) $ and $ \mm \in \mathscr{M}\lf(\mathfrak{h};  \mathscr{L}^1_{+}(\mathscr{H})\ri) $. Then
 		\beq
 			\Gamma_{\eps} \xrightarrow[\eps \to 0]{\mathrm{bqc}} \mm,
		\eeq
		if and only if $ \widehat{\Gamma}_{\varepsilon}(\eta) \xrightarrow[\eps \to 0]{\mathrm{w}} \widehat{\mm}(\eta)$ pointwise for all $\eta\in \mathfrak{h}$ in weak  topology in $ \mathscr{L}^1(\mathscr{H}) $, {\it i.e.}, when testing against bounded operators in $ \mathscr{B}(\mathscr{H}) $.
	\end{definition}
	
	\begin{remark}[Mass conservation]
		\label{rem: mass conservation}
		\mbox{}	\\
		Bounded convergence ensures that no mass is lost in the quasi classical limit: in fact, if
  $\mathscr{L}^1_{+,1}(\mathscr{H} \otimes \mathscr{K}_{\eps}) \ni \Gamma_{\varepsilon} \xrightarrow[]{\mathrm{bqc}} \mathfrak{m}$, then
		\begin{equation}
  			\tr_{\mathscr{H}} \int_{\mathfrak{h}}^{}  \mathrm{d}\mathfrak{m}(z)=\tr_{\mathscr{H}}\Bigl(\hat{\mathfrak{m}}(0)\Bigr)=\lim_{\varepsilon\to 0}\tr_{\mathscr{H}}\Bigl(\hat{\Gamma}_{\varepsilon}(0)\Bigr)=1\; ,
		\end{equation}
		and therefore $\mathfrak{m}\in \mathscr{P}\lf(\mathfrak{h}; \mathscr{L}^1_{+}(\mathscr{H})\ri) $. On the contrary, there are normalized states $\Gamma_{\varepsilon} \in \mathscr{L}^1_{+,1}(\mathscr{H}\otimes \mathscr{K}_{\varepsilon})$ that converge to a quasi-classical measure with mass less than one, and possibly zero, thus occurring in a loss of mass phenomenon. For example, let us consider a state similar to the one defined in \cref{rem: product states}, where however $\gamma=\gamma_{\varepsilon}$ also depends on $\varepsilon$, with $\gamma_{\varepsilon}=  \ket{e_{n(\varepsilon)}} \bra{e_{n(\varepsilon)}} $, $ \lf\{ e_n \ri\}_{n\in \mathbb{N}}$ being an orthonormal basis in $\mathscr{H}$, and $n(\varepsilon)\underset{\varepsilon\to 0}{\longrightarrow} + \infty$. Then,$\gamma_{\varepsilon}\otimes \varsigma_{\varepsilon}\to 0 $, the zero state-valued measure, and thus all the mass is lost in the limit.
	\end{remark}
     
Our main result (see \cref{thm:5} and \cref{cor:7} below) is that initial convergence is propagated
  in time: for all $t\in \mathbb{R}$, $\Gamma_{\varepsilon_n}(t)$ converges to the quasi-classical state $\mm_t$ defined by the norm Radon-Nikod\'{y}m decomposition
\begin{equation}
	\label{eq: measure mm}
  	\boxed{\diff \mm_t = \mathcal{U}_{t,0}(z) \gamma_{\mm}(z) \mathcal{U}^{\dagger}_{t,0}(z) \: \diff \lf( e^{-it\nu\omega}\,_{\star} \, \mu_{\mm} \ri) \;,}
\end{equation}
where $\mathcal{U}_{t,s}(z)$ is the above mentioned quasi-classical two-parameter unitary group of
evolution, that turns out to be weakly generated by the time-dependent Schr\"{o}dinger operator
\begin{equation}
	\label{eq: keff}
  	\KK_t : =  \mathcal{K}_0+\mathcal{V}_{t}(z)\; ,
\end{equation}
with
\begin{equation}
  \mathcal{V}_{t}(z) := \mathcal{V}(e^{-i t \nu\omega}z)\; .
\end{equation}
Notice again that the pushforward in \eqref{eq: measure mm} does not affect the Radon-Nikod\'{y}m derivative $ \mathcal{U}_{t,0}(z) \gamma_{\mm}(z) \mathcal{U}^{\dagger}_{t,0}(z) $. The interplay between the quasi-classical limit and the time-evolution can be summed up in the following commutative diagram involving the Radon-Nikod\'{y}m derivatives 
\begin{equation}
	\label{eq: diagram}
  \begin{tikzcd}
	 \Gamma_{\eps} \arrow[r, "e^{-i H_{\eps} t}",mapsto] \arrow[d, "\eps \to 0",mapsto] &[2.5cm] \Gamma_{\eps}(t) \arrow[d, "\eps \to 0",mapsto]	\\[1cm]
	 \gamma_{\mm}(z) \: \diff\mu_{\mm}(z) 	\arrow[r, "\eqref{eq: measure mm}",mapsto] &  \mathcal{U}_{t,0}(z) \gamma_{\mm}(z) \mathcal{U}^{\dagger}_{t,0}(z) \: \diff \lf( e^{-it\nu\omega}\, _{\star} \,\mu_{\mm} \ri).
\end{tikzcd}
\end{equation}
where we have decomposed the initial state-valued measure as $ \diff \mm (z) = \gamma_{\mm}(z) \diff \mu_{\mm} (z) $, with $ \gamma_{\mm} \in \mathscr{L}^1_{+,1}(\mathscr{H})$ and $ \mu_{\mm} \in \mathscr{M}(\mathfrak{h}) $, and the convergence is always along a given subsequence $ \lf\{ \eps_n \ri\}_{n \in \N} $.

We state now the first result in detail. Recall that we say that $ \mm \in \mathscr{M} \lf(\mathfrak{h}; \mathscr{L}^1_{+,1}\ri) $ is a probability measure, and thus $\mm \in \mathscr{P}\lf(\mathfrak{h}; \mathscr{L}^1_{+,1}\ri)$, whenever $ \lf\| \mm(\mathfrak{h}) \ri\|_{\mathscr{L}^1(\mathscr{H})} = 1 $.

	\begin{thm}[Quasi-classical evolution in the Schr\"{o}dinger picture]
  		\label{thm:5}
  		\mbox{}	\\
  		Let $\nu(\varepsilon)$ be such that $\varepsilon\nu(\varepsilon)\to \nu\in\{0,1\}$, when $\varepsilon\to 0$ and let $\Gamma_{\varepsilon}\in \mathscr{L}^1_{+,1}(\mathscr{H} \otimes \mathscr{K}_{\eps}) $ be a state satisfying assumption \eqref{eq:20}. Let also \eqref{eq:13} be satisfied for the polaron and Pauli-Fierz models.  Then, there exist at least one subsequence $ \lf\{ \eps_n \ri\}_{n \in \N} $ and one measure $ \mm \in \mathscr{M} \lf( \mathfrak{h}; \mathscr{L}^1_{+,1}(\mathscr{H})\ri) $, such that 
  		\beq
			\label{eq: teo princ 1} 
			\Gamma_{\varepsilon_n} \xrightarrow[n \to + \infty]{\mathrm{qc}} \mm,  
		\eeq 
		and, if \eqref{eq: teo princ 1} holds, then for all $t\in \mathbb{R}$,
 		\begin{equation}
  			\label{eq: t convergence}
    			\Gamma_{\varepsilon_n}(t) \xrightarrow[n \to + \infty]{\mathrm{qc}} \mm_t \; ,
 		\end{equation}
  		where ${\mm}_t$ is given by \eqref{eq: measure mm}.
	\end{thm}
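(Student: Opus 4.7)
The proof naturally splits into three parts: (i) extraction of a quasi-classical limit point for every time $t$; (ii) derivation of the effective evolution equation satisfied by any such limit; (iii) uniqueness of its solution. Part (i) is essentially a compactness statement: thanks to assumption \eqref{eq:20} and the conservation of the moments of $\mathrm{d}\mathcal{G}_{\eps}(1)$ along the unitary flow (direct for $\nu(\eps)=1/\eps$, requiring commutator estimates for $\nu(\eps)=1$), Proposition 8 provides, for each rational $t$, a subsequence along which $\Gamma_{\eps_n}(t)$ is quasi-classically convergent. A Cantor diagonal extraction and an $\eps$-uniform equicontinuity estimate in $t$ on the noncommutative Fourier transforms then produce a single subsequence along which $\Gamma_{\eps_n}(t)\xrightarrow{\mathrm{qc}}\mm_t$ for every $t\in\mathbb{R}$, with $\mm_0=\mm$ a priori unknown.

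Part (ii) is the heart of the matter. I would pass to the interaction picture w.r.t.\ the free field evolution, setting $\widetilde{\Gamma}_{\eps}(t):=e^{it\nu(\eps)\mathrm{d}\mathcal{G}_{\eps}(\omega)}\Gamma_{\eps}(t)e^{-it\nu(\eps)\mathrm{d}\mathcal{G}_{\eps}(\omega)}$. This absorbs the free field dynamics into the interaction, which in turn becomes the Wick quantization of $\mathcal{V}(e^{-it\nu(\eps)\omega}z)$, while the particle Hamiltonian $\mathcal{K}_0$ is unchanged and the $\OO(\eps)$ remainder stays irrelevant. Duhamel's formula then gives an integral equation for $\widetilde{\Gamma}_{\eps_n}(t)$ in $\mathscr{L}^1(\mathscr{H}\otimes\mathscr{K}_{\eps_n})$. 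Testing against the Weyl observables entering the noncommutative Fourier transform and letting $n\to\infty$, the quasi-classical calculus developed in \cref{sec:quasi-class-analys} identifies the limit of Wick-quantized polynomial symbols paired with bounded moments as multiplication by the classical symbol against the limiting state-valued measure. The continuity of $\mathcal{V}$ on $\mathfrak{h}$ (in the appropriate topology on the support of $\mu_{\mm_t}$) allows us to identify $\mathcal{V}(e^{-it\nu(\eps_n)\omega}z)\to\mathcal{V}_{t}(z)$. One obtains, for the Radon-Nikod\'{y}m derivative $\widetilde{\gamma}_t(z)$ of $\widetilde{\mm}_t$,
\begin{equation*}
\widetilde{\gamma}_t(z)=\gamma_{\mm}(z)-i\int_0^t \mathrm{d}s\,\bigl[\mathcal{K}_0+\mathcal{V}_s(z),\,\widetilde{\gamma}_s(z)\bigr]\;,
\end{equation*}
understood weakly, together with $\mu_{\widetilde{\mm}_t}=\mu_{\mm}$, since the interaction picture does not move the scalar part of the measure.

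Part (iii) invokes the uniqueness result of \cref{sec:uniq-quasi-class}, which for $\mu_{\mm}$-a.e.\ $z$ identifies the unique solution of the above Schr\"{o}dinger-type equation as $\widetilde{\gamma}_t(z)=\mathcal{U}_{t,0}(z)\gamma_{\mm}(z)\mathcal{U}^{\dagger}_{t,0}(z)$, with $\mathcal{U}_{t,s}(z)$ the two-parameter unitary group weakly generated by $\mathcal{K}_t=\mathcal{K}_0+\mathcal{V}_t(z)$. Finally, undoing the interaction picture reintroduces the pushforward $(e^{-it\nu\omega})_{\star}\mu_{\mm}$ while leaving the Radon-Nikod\'{y}m derivative untouched, which is exactly \eqref{eq: measure mm}. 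Since the limit is uniquely determined by $\mm$, the extraction can be promoted from subsubsequence to subsequence, and \eqref{eq: t convergence} follows.

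The main obstacle is the passage to the limit in the Duhamel equation for the Pauli-Fierz and polaron cases, where $\mathcal{V}(z)$ is unbounded on $\mathscr{H}$ and, for the polaron, only form-bounded with respect to $\mathcal{K}_0$, so the Wick quantization of $\mathcal{V}$ cannot be handled by plain norm estimates. Here \eqref{eq:13}, together with a Gr\"{o}nwall argument on $\mathrm{Tr}\bigl(\Gamma_{\eps}(t)(K_0+\mathrm{d}\mathcal{G}_{\eps}(\omega))\bigr)$, is essential to preserve the energy moment along the flow and thus to apply the quasi-classical convergence results of \cref{sec:quasi-class-analys}. A secondary subtlety is controlling the commutators generated by the absence of normal ordering in the Pauli-Fierz interaction, which are absorbed by the $\OO(\eps)$ remainder; and, for the polaron, the Gross-type transformation splitting $\phi=\phi_r+\chi_r$ must be shown to commute with the quasi-classical limit — the technical modifications carried out in \cref{sec:techn-modif-polar}.
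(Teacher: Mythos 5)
Your overall architecture — common subsequence extraction, Duhamel in interaction picture, passage to a limit transport equation, uniqueness, undoing the interaction picture — matches the paper's, and your treatment of Parts (i) and (iii) and the remarks on the Pauli--Fierz/polaron adaptations are essentially on target. There are, however, two substantive gaps.

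First, in Part (ii) you pass to the interaction picture \emph{only} with respect to the free field, keeping $\mathcal{K}_0$ in the Duhamel generator, and you arrive at the transport equation
$\widetilde{\gamma}_t(z)=\gamma_{\mm}(z)-i\int_0^t \mathrm{d}s\,\bigl[\mathcal{K}_0+\mathcal{V}_s(z),\,\widetilde{\gamma}_s(z)\bigr]$.
This is precisely the \emph{heuristic} equation \eqref{eq:4} that the paper writes down and then explicitly warns against: making it rigorous is problematic because the commutator $[\mathcal{K}_0,\cdot]$ is unbounded on $\mathscr{L}^1(\mathscr{H})$. When you test the microscopic Duhamel equation against $\mathcal{B}\otimes W_{\eps}(\eta)$, the $K_0$-term becomes $\Tr\bigl(\widetilde{\Gamma}_\eps(\tau)\,[\mathcal{B},\mathcal{K}_0]\otimes W_\eps(\eta)\bigr)$, and $[\mathcal{B},\mathcal{K}_0]$ is not bounded (or even defined) for generic compact $\mathcal{B}$. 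Nor does the quasi-classical calculus of \cref{sec:quasi-class-analys}, which is built for Wick-quantized polynomial field symbols, cover the free particle Hamiltonian. The paper instead passes to the \emph{full} interaction picture $\Upsilon_\eps(t)=e^{it(K_0+\nu(\eps)\mathrm{d}\mathcal{G}_\eps(\omega))}\Gamma_\eps(t)e^{-it(K_0+\nu(\eps)\mathrm{d}\mathcal{G}_\eps(\omega))}$ (see \cref{lemma:5}), which replaces the unbounded commutator with an explicit unitary conjugation $e^{i\tau\mathcal{K}_0}(\cdot)e^{-i\tau\mathcal{K}_0}$ on the test operator side. The resulting limit equation \eqref{eq: fourier equation} involves only the bounded commutator $[\widetilde{\mathcal{V}}_\tau(e^{-i\tau\nu\omega}z),\gamma_{\nn_\tau}(z)]$, which is then fed to the uniqueness result \cref{prop:16}. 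Your uniqueness step implicitly appeals to the paper's Section~5, but that result is formulated for the $\mathcal{K}_0$-free equation, so the identification you make in Part (iii) does not follow without first conjugating away $\mathcal{K}_0$ as the paper does.

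Second, the moment hypothesis. Your equicontinuity estimate in Part (i) and the subsequent transport argument both require at least $\delta\geq\frac{1}{2}$ in \eqref{eq:20} (compare \cref{prop:13,prop:14,prop:17}), but the theorem only assumes $\delta>0$. The paper closes this gap with an approximation argument at the end of \cref{sec:putting-it-all}: it truncates $\Gamma_\eps$ to $\Gamma_\eps^{(r)}$ with a smooth cutoff in $\mathrm{d}\mathcal{G}_\eps(1)$, applies the stronger-hypothesis result, and passes to the limit $r\to\infty$ using $\|\Gamma_\eps(t)-\Gamma_\eps^{(r)}(t)\|_{\mathscr{L}^1}=o_r(1)$ uniformly in $\eps$. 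Without this step your argument only proves the theorem under the stronger assumption $\delta>\frac12$.
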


	\begin{corollary}[Mass conservation]
		\label{cor:2}
          	\mbox{}	\\
          	If in addition $\Gamma_{\varepsilon}$ satisfies assumption \eqref{eq: ass1} or \eqref{eq: ass2}, then $\mathfrak{m}\in \mathscr{P}\lf(\mathfrak{h}; \mathscr{L}^1_{+,1}(\mathscr{H})\ri)$ and thus $\mathfrak{m}_t\in \mathscr{P}\lf(\mathfrak{h}; \mathscr{L}^1_{+,1}(\mathscr{H})\ri)$ for all $t\in \mathbb{R}$. Furthermore, if condition \eqref{eq: ass1} holds also for $ \gamma_{\eps}(t) $ for any $ t \in [0,T)$, $T \in \R^+ $, then for all $t\in [0,T)$,
        	\begin{equation*}
          		\Gamma_{\varepsilon_n}(t) \xrightarrow[n \to + \infty]{\mathrm{bqc}} \mathfrak{m}_t\; .
        	\end{equation*}
	\end{corollary}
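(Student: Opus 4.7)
The plan is to use assumption \eqref{eq: ass1} (or the stronger \eqref{eq: ass2}) to obtain a uniform tightness estimate on the reduced density matrices $\gamma_\eps = \widehat{\Gamma}_\eps(0)$. This tightness is enough to prevent loss of mass in the weak-$*$ limit at $\eta = 0$ and hence upgrade $\mm$ to a probability measure; combined with a Cauchy--Schwarz estimate on $\mathscr{H} \otimes \mathscr{K}_\eps$, it will also promote quasi-classical convergence to bounded quasi-classical convergence along the dynamics, once the assumption is postulated for $\gamma_\eps(t)$.

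Under \eqref{eq: ass2}, finite-rank spectral projections $P_N$ of $\gamma$ with $\tr((1-P_N)\gamma) \to 0$ give, by operator monotonicity, $\tr((1-P_N)\gamma_\eps) \leq \tr((1-P_N)\gamma)$ uniformly in $\eps$. Under \eqref{eq: ass1}, the compactness of $\mathcal{A}^{-1}$ forces $\mathcal{A}$ to have purely discrete spectrum, so $P_N := \one_{[0,N]}(\mathcal{A})$ is finite-rank and commutes with $\mathcal{A}$; since $(1-P_N) \leq N^{-1}\mathcal{A}(1-P_N)$, we obtain $\tr((1-P_N)\gamma_\eps) \leq N^{-1}\tr(\mathcal{A}\gamma_\eps) \leq C/N$, again uniformly in $\eps$. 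Each $P_N$ is compact, so \cref{def: qc convergence} yields $\tr(P_N\gamma_\eps) \to \tr(P_N\widehat{\mm}(0))$; combined with $\tr(P_N\gamma_\eps) \geq 1 - C/N$ and monotone convergence as $N\to\infty$, this forces $\tr(\widehat{\mm}(0)) \geq 1$, while the reverse inequality is immediate from $\tr(P_N\widehat{\mm}(0)) \leq 1$. Hence $\mu_\mm(\mathfrak{h}) = \lf\|\mm(\mathfrak{h})\ri\|_{\mathscr{L}^1(\mathscr{H})} = 1$, and $\mm \in \mathscr{P}\lf(\mathfrak{h}; \mathscr{L}^1_{+,1}(\mathscr{H})\ri)$. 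The corresponding statement for $\mm_t$ follows immediately from the explicit formula \eqref{eq: measure mm} using the $\mu_\mm$-a.s.\ unitarity of $\mathcal{U}_{t,0}(z)$ together with mass-preservation of the pushforward $(e^{-it\nu\omega})_\star$.

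To obtain bounded quasi-classical convergence at time $t$, we need to promote $\widehat{\Gamma}_{\eps_n}(t)(\eta) \to \widehat{\mm}_t(\eta)$ from weak-$*$ to weak in $\mathscr{L}^1(\mathscr{H})$. Testing against $\mathcal{B} \in \mathscr{B}(\mathscr{H})$, we split $\mathcal{B} = \mathcal{B} P_N + \mathcal{B}(1-P_N)$: the first summand is compact and handled by \cref{thm:5}, whereas the tail is controlled uniformly in $\eta$ by the Cauchy--Schwarz estimate
\[
\bigl| \tr\bigl((\mathcal{B}(1-P_N) \otimes W_\eps(\eta))\Gamma_\eps(t)\bigr) \bigr| \leq \|\mathcal{B}\|_\infty \sqrt{\tr\bigl((1-P_N)\gamma_\eps(t)\bigr)} \leq \|\mathcal{B}\|_\infty \sqrt{C/N},
\]
which uses $\|W_\eps(\eta)\| = 1$, $\tr(\Gamma_\eps(t)) = 1$ and the hypothesis that \eqref{eq: ass1} persists for $\gamma_\eps(t)$ on $[0,T)$. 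The analogous bound on the tail of $\widehat{\mm}_t(\eta)$ requires transferring tightness to $\gamma_{\mm_t}(z)$ pointwise $\mu_{\mm_t}$-almost everywhere, which is secured by testing the compact operator $P_k \mathcal{A} P_k$ against $\gamma_{\mm_t}(z)$ and letting $k \to \infty$ by monotone convergence. Sending first $\eps \to 0$ and then $N \to \infty$ closes the argument. The main obstacle in the whole proof is precisely this transfer of particle-level tightness both to bipartite observables of the form $\mathcal{B} \otimes W_\eps(\eta)$ (uniformly in $\eta$) and to the disintegration $\gamma_{\mm_t}(z)$ of the limit measure (pointwise in $z$), which is what the compactness structure encoded in \eqref{eq: ass1}--\eqref{eq: ass2} is designed to provide.
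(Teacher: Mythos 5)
Your proof is correct, but it takes a genuinely different route from the paper's, and comparing the two is instructive. The paper handles both the mass-preservation statement and the lift to bounded quasi-classical convergence under \eqref{eq: ass1} via a single \emph{conjugation} argument: it passes to the auxiliary positive trace-class family $A^{1/2}\Gamma_{\eps}A^{1/2}$, invokes \cref{prop:6} to identify its limit as $\mathcal{A}^{1/2}\mm\mathcal{A}^{1/2}$, and then observes that testing $\widehat{\Gamma}_{\eps}(\eta)$ against a \emph{bounded} $\mathcal{B}$ is the same as testing the conjugated Fourier transform against the \emph{compact} operator $\mathcal{A}^{-1/2}\mathcal{B}\mathcal{A}^{-1/2}$, so weak-$*$ convergence automatically upgrades to weak. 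Your approach instead works directly with the tightness encoded in \eqref{eq: ass1}: the spectral cut-offs $P_N=\one_{[0,N]}(\mathcal{A})$ give a uniform-in-$\eps$ estimate $\tr((1-P_N)\gamma_{\eps})\leq C/N$, and a three-term $\eps/3$ argument (compact part via \cref{thm:5}, tail of $\widehat{\Gamma}_{\eps_n}(t)(\eta)$ via the Cauchy--Schwarz bound you wrote down, tail of $\widehat{\mm}_t(\eta)$ via trace-class decay) closes the argument. The trade-off: the conjugation argument is slicker and does not require a Cauchy--Schwarz estimate, but it needs \cref{prop:6} as a black box; your spectral-cut-off route is more elementary and self-contained, and also unifies the treatment of \eqref{eq: ass1} and \eqref{eq: ass2} under a common tightness scheme, whereas the paper runs a separate dominated-convergence argument (over eigenprojections of $\widehat{\mm}(0)$) for \eqref{eq: ass2}. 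One minor inefficiency on your side: the step where you ``transfer tightness to $\gamma_{\mm_t}(z)$ pointwise'' by testing $P_k\mathcal{A}P_k$ and invoking monotone convergence is correct but unnecessary---since $\widehat{\mm}_t(\eta)\in\mathscr{L}^1(\mathscr{H})$ and $P_N\to\mathds{1}$ strongly, the tail $\tr_{\mathscr{H}}\bigl(\widehat{\mm}_t(\eta)\,\mathcal{B}(1-P_N)\bigr)\to 0$ is automatic, with no need to propagate the $\mathcal{A}$-moment bound to the disintegration of the limit measure.
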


  	\begin{remark}[Extraction of a subsequence]
  		\label{rem: extraction}
  		\mbox{}	\\
  		Let us point out that, as anticipated above, the limit measure $ \mm $ at initial time, according to \cref{def: qc convergence}, might depend on the choice of the subsequence $ \lf\{ \eps_n \ri\}_{n \in \N} \to 0 $. However, we stress that the convergence at time $ t $ stated in \eqref{eq: t convergence} occurs along the {\it same} subsequence. 
  	\end{remark}
  
  	\begin{remark}[Loss of mass]
  		\label{rem: role}
  		\mbox{}	\\
           \cref{thm:5} holds irrespective of any possible loss of mass for the initial-time convergence. The quasi-classical evolution preserves the mass, thus proving that \emph{the same amount of mass is lost at any time}. Conditions \eqref{eq: ass1} and \eqref{eq: ass2} ensure that no mass is lost at initial time, and thus at any further time. Another sufficient condition to ensure no-loss of mass is the so-called (PI) condition, that will be discussed in detail in \cref{sec:bound-conv-pi}. However, as suggested by the fact that physical factorized states $\gamma\otimes \varsigma_{\varepsilon}$ do not lose mass, this peculiar loss of mass phenomenon is due either to a ``bad''  correlation between the field and particle subsystems, or to a somewhat artificial dependence of the particle subsystem on the quasi-classical parameter in an uncorrelated state (see \cref{sec:bound-conv-pi} for a more detailed discussion).
	\end{remark}
  
  	\begin{remark}[Assumptions \eqref{eq: ass1} and \eqref{eq: ass2}]
  		\label{rem: assumptions}
  		\mbox{}	\\
  		The implications and the meaning of assumptions \eqref{eq:
                  ass1} and \eqref{eq: ass2} is \emph{a priori} quite
                different. For instance, \eqref{eq: ass2} provides a uniform
                control on the reduced density matrix $ \gamma_{\eps} $ but has
                little physical motivation, unless the two subsystems
                  in the state are uncorrelated (\emph{i.e.}, the state has a
                  tensor product structure). Assumption
                \eqref{eq: ass1} on the other hand implies the stronger convergence of
                $\Gamma_{\eps} $ to $ \mm $ in the bounded quasi-classical sense
                of \cref{def:11}. Such a stronger convergence holds true
                however only at initial time, and its propagation along the
                time-evolution is typically very difficult to prove. A
                notable exception is given by trapped particle systems, {\it
                  i.e.,} when $ \mathcal{K}_0 $ has compact resolvent and
                thus one can take $ \mathcal{A} = (\mathcal{K}_0+1)^{\delta} $,
                for some $ \delta > 0 $, in \eqref{eq: ass1}. Thus, in this case
                the assumption $ \Tr(\Gamma_{\eps}(K_0+\mathrm{d}\GG_{\varepsilon}(\omega))^\delta ) \leq
                C $ on the initial state is sufficient to strengthen the
                convergence at any time (see, {\it e.g.}, \cite[Lemma
                3.4]{ammari2014jsp} for the Nelson model and
                \cref{sec:pull-thro-form} for the polaron and Pauli-Fierz
                models). As we already remarked, the two assumptions are
                  in fact related, because \emph{\eqref{eq: ass2} implies
                    \eqref{eq: ass1}}\footnote{The inference $\text{\eqref{eq: ass2} } \Rightarrow \text{ \eqref{eq: ass1}}$ can be proved as follows. Since $\gamma$ is a positive trace class  operator, it can be decomposed as $\gamma=\sum_{j\in \mathbb{N}}^{} \ket{  \varphi_j} \bra{\varphi_j}$, where $(\varphi_j)_{j\in \mathbb{N}}$ is an orthonormal basis   of $\mathscr{H}$ and $\lambda_j\geq 0$ are the singular values satisfying $\sum_{j\in \mathbb{N}}^{}\lambda_j <+\infty$. Therefore, there exists a non-negative sequence $(\mu_j)_{j\in \mathbb{N}}$ such that: $\mu_j\to +\infty$ and $\sum_{j\in\mathbb{N}}^{}\mu_j\lambda_j<+\infty$. In fact, if there is only a finite number of nonzero $\lambda_j$s, then the existence is trivial, while, if the number of nonzero $\lambda_j$s is infinite, one can set, for all $k\in \mathbb{N}$, $J_k= \min \{ J \in \N \: | \: \sum_{J\leq j}^{} \lambda_j< 2^{-k} \}$, and $\mu_j=1$, for $j<  J_0$, and $\mu_j= 2^{k/2}$, for $J_k \leq j < J_{k+1}$. Then, by construction, the inverse of the operator $\mathcal{A}_{\gamma}:= \sum_{j\in \mathbb{N}}^{}\mu_j \ket{ \varphi_j} \bra{\varphi_j}$ is compact and $\tr_{\mathscr{H}}(\mathcal{A}_{\gamma}\gamma_{\varepsilon}) =  \tr_{\mathscr{H}}(\mathcal{A}_{\gamma}^{1/2}\gamma_{\varepsilon}\mathcal{A}_{\gamma}^{1/2}) \leq \tr_{\mathscr{H}}(\mathcal{A}_{\gamma}^{1/2}\gamma\mathcal{A}_{\gamma}^{1/2}) = \sum_{j\in \mathbb{N}}^{}\mu_j\lambda_j<+\infty$, so that $\gamma_{\varepsilon}$ satisfies assumption \eqref{eq: ass1}.}. However, due to the different
              physical implications (particles' trapping on one
              hand, isolation of the subsystems on the other), we preferred
              to keep the two separated.
	\end{remark}
  
	\begin{remark}[Rougher potentials]
  		\label{rem:3}
  		\mbox{}	\\
  		As already remarked in \cref{sec:concr-models:-nels}, states satisfying~\eqref{eq:20} yield effective potentials $\mathcal{V}_{t}$ that are ``regular''. For example, no confining potential can be obtained with such quasi-classical states. It is possible to obtain more general effective potentials relaxing assumption~\eqref{eq:20} to accommodate states whose limit are \emph{cylindrical measures} \cite{falconi2017arxiv}, however the analysis becomes more complicated. In the polaron model, for coherent states, whose cylindrical measure is concentrated in a single ``singular'' point (a suitable tempered distribution), the analysis has been carried out in \cite{carlone2019arxiv} to obtain an effective (time-dependent) point interaction.
	\end{remark}

	Before proceeding further we discuss in some detail the scaling factor $\nu(\varepsilon)$ that appears in front of the free energy of the field in the Hamiltonian $H_{\varepsilon}$. Physically, one should distinguish between two relevant situations: $\nu(\varepsilon)=1$, and $\nu(\varepsilon)=\frac{1}{\varepsilon}$; all the other possibilities are physically less relevant,
and yield the same qualitative results, up to rescaling of the parameters. Let us remark first however that, despite the fact that the two cases yield different evolutions for the classical field, \emph{the interaction is always too weak to cause a back-reaction of the particles on the   field}, when $\varepsilon\to 0$.  Thus, the quasi-classical field can indeed be seen as an environment.
  
	\begin{remark}[$\nu = 0$]
		\label{rem: nu 0}
		\mbox{}	\\
		When $\nu(\varepsilon)=O(\varepsilon^{-\delta})$, $\delta<1$, the quasi-classical field remains constant in time. In fact,
                in such a case $\nu=\lim_{\varepsilon\to 0}\varepsilon\nu(\varepsilon)=0$, and therefore $\mathcal{U}_{t,s}(z) =
                \mathcal{U}_{t-s}(z) $ is the strongly continuous group generated by the self-adjoint
                operator $\mathcal{K}_0+\mathcal{V}(z)$, with, \emph{e.g.},
                $\mathcal{V}(z)=\sum_{j=1}^N2\Re \braketl{\lambda(\xv_j)}{z}_{\mathfrak{h}}$ for the Nelson
                  model. Also, the measure $\mu_t$ is constant: $\mu_t=\mu$ for all $t\in \mathbb{R}$. Therefore, in the
                scaling yielded by $\nu(\varepsilon)= \OO(\varepsilon^{-\delta})$ the radiation field does not evolve. Let us remark
                that, in the case of the polaron, this is the scaling equivalent, up to suitable
                rescalings, to the well-known strong coupling regime.
	\end{remark}
	
	\begin{remark}[$\nu = 1$]
		\label{rem: nu 1}
		\mbox{}	\\
		When $ \nu = 1 $, {\it e.g.}, if $ \nu(\eps) = \frac{1}{\eps} $, the quasi-classical radiation field evolves in time in a non-trivial way, obeying a free
field equation, and therefore the effective evolution operator for the
particles $\mathcal{U}_{t,s}(z)$ has a time-dependent
generator. For the regularized Nelson model, such a free evolution is given by the Klein-Gordon-like
equation 
	\beq
		(\partial_t^2 + \omega^2(D)) A=0,
	\eeq
	where $\omega(D)$ is the pseudodifferential operator defined by the Fourier
transform of the function $\omega$; for the Pauli-Fierz model it is given by the free Maxwell equations in
the Coulomb gauge, and for the polaron  by the equation $(\partial^2_t+1)A=0$. Here, for clarity, we have written such equations in the
usual form, that involves the real field $A$ and its time derivatives. Throughout the paper
however, we use the complex counterpart of such real field, that we denoted by $z$, and which is given in terms of $ A $, {\it e.g.}, in the regularized Nelson model, by
	\bdm
			z = \tx\frac{1}{2} \lf(\omega^{1/2}(D) A + i \omega^{-1/2}(D)  \partial_t A \ri),
	\edm
	Hence, the evolution equation for $z$ becomes $i\partial_t
z=\omega z$. 
	\end{remark}
	
A consequence of \cref{thm:5} is that, for any compact operator $ \mathcal{B} \in \mathscr{L}^{\infty}(\mathscr{H}) $, its Heisenberg evolution satisfies
	\beq
		\Tr \bigl( \Gamma_{\eps_n} e^{it H_{\varepsilon_n}} B e^{-it H_{\varepsilon_n}} \bigr) \xrightarrow[n \to + \infty]{} \int_{\mathfrak{h}}^{} \mathrm{d}\mu_{\mm}(z) \: \tr_{\mathscr{H}} \lf[ \gamma_{\mm}(z) \: \mathcal{U}^{\dagger}_{t,0}(z)\,\mathcal{B} \,\mathcal{U}_{t,0}(z) \ri] \; .
	\eeq
There is also a counterpart of the above statement for the particle degrees of freedom alone: for any $ \Gamma_{\eps} $ as in Theorem \ref{thm:5}, the following weak-$*$ convergence holds in $\mathscr{L}^1(\mathscr{H})$,
	\beq
		\tr_{\mathscr{K}_{\eps}} \bigl(e^{-it H_{\varepsilon_n}} \Gamma_{\eps_n} e^{it H_{\varepsilon_n}} \bigr) \xrightarrow[n \to + \infty]{\mathrm{w}*} \int_{\mathfrak{h}}^{} \mathrm{d}\mu_{\mm}(z) \: \mathcal{U}_{t,0}(z)\,\gamma_{\mm}(z) \,\mathcal{U}^{\dagger}_{t,0}(z)  \; ,
	\eeq
{\it i.e.}, the particle state obtained by tracing out the field degrees of freedom evolves as $ \eps \to 0 $ into the r.h.s.\ of the above expression. When the state is a product state, the above result can be made more explicit (see also \cref{rem: product states}):

	\begin{corollary}[Quasi-classical evolution of product states]
  		\label{cor:7}
  		\mbox{}	\\
		Let $\varsigma_{\varepsilon}\in \mathscr{L}^1_{+,1}(\mathscr{K}_{\varepsilon})$ be a field's state such that for all $ \gamma
                \in \mathscr{L}^1_{+,1}(\mathscr{H})$, $ \gamma \otimes \varsigma_{\varepsilon}$ satisfies assumption~\eqref{eq:20},
                and~\eqref{eq:13} for the polaron and Pauli-Fierz models, so that there exists $ \mu \in
                \mathscr{M}(\mathfrak{h}) $ such that \beq \varsigma_{\varepsilon_n} \xrightarrow[n \to + \infty]{} \mu.  \eeq Then,
                for all $\mathcal{B}\in \mathscr{L}^{\infty}(\mathscr{H}) $ and all $t\in \mathbb{R}$,
  		\begin{equation}
    			 \tr_{\mathscr{K}_{\varepsilon_n}}\bigl( \varsigma_{\varepsilon_n}e^{it H_{\varepsilon_n}}B e^{-it H_{\varepsilon_n}} \bigr) \xrightarrow[n \to + \infty]{\mathrm{w}} \int_{\mathfrak{h}}^{}\mathcal{U}_{0,t}(z)\,\mathcal{B}\,\mathcal{U}_{t,0}(z)  \mathrm{d}\mu(z)\;.
  		\end{equation}
	\end{corollary}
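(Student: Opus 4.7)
The plan is to deduce the statement directly from \cref{thm:5}, applied to an auxiliary family of product initial states indexed by an arbitrary $\gamma \in \mathscr{L}^1_{+,1}(\mathscr{H})$, and to recover the claimed convergence by a duality argument. First I will fix such a $\gamma$ and consider the product state $\Gamma_{\eps} := \gamma \otimes \varsigma_{\eps} \in \mathscr{L}^1_{+,1}(\mathscr{H} \otimes \mathscr{K}_{\eps})$. By the hypothesis of the Corollary this satisfies \eqref{eq:20} (together with \eqref{eq:13} in the polaron and Pauli-Fierz cases) for every choice of $\gamma$, and, since $\varsigma_{\eps_n} \to \mu$ in the scalar sense on $\mathfrak{h}$, \cref{rem: product states} yields $\Gamma_{\eps_n} \xrightarrow[n \to +\infty]{\mathrm{qc}} \mm$ with constant density $\gamma_{\mm}(z) \equiv \gamma$ and underlying scalar measure $\mu_{\mm} = \mu$.

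Next I will propagate this convergence in time via \cref{thm:5}: along the same subsequence, $\Gamma_{\eps_n}(t) \xrightarrow[n \to +\infty]{\mathrm{qc}} \mm_t$ with $\mm_t$ given by \eqref{eq: measure mm}. By \cref{rem: reduced}, evaluating the noncommutative Fourier transform at $\eta = 0$ provides weak-$*$ convergence in $\mathscr{L}^1(\mathscr{H})$ of the reduced density matrix $\tr_{\mathscr{K}_{\eps_n}} \Gamma_{\eps_n}(t)$ to the total integrated state $\int_{\mathfrak{h}} \mathrm{d}\mm_t(z)$. Testing this limit against any compact $\mathcal{B} \otimes 1$, performing the change of variable that absorbs the pushforward $(e^{-it\nu\omega})_{\star}\mu$ back into the original measure $\mu$, and using cyclicity of the trace together with $\mathcal{U}_{0,t}(z) = \mathcal{U}^{\dagger}_{t,0}(z)$, I obtain
\begin{equation*}
\Tr\bigl(\Gamma_{\eps_n}(t)\,(\mathcal{B} \otimes 1)\bigr) \xrightarrow[n \to +\infty]{} \tr_{\mathscr{H}}\!\left(\gamma \int_{\mathfrak{h}} \mathcal{U}_{0,t}(z)\, \mathcal{B}\, \mathcal{U}_{t,0}(z)\: \mathrm{d}\mu(z)\right)\!.
\end{equation*}
On the other hand, writing the full trace on $\mathscr{H} \otimes \mathscr{K}_{\eps}$ as a composition of two partial traces gives $\Tr\bigl(\Gamma_{\eps_n}(t)(\mathcal{B}\otimes 1)\bigr) = \tr_{\mathscr{H}}(\gamma\, T_{\eps_n})$, where $T_{\eps_n}$ is precisely the operator appearing on the left-hand side of the Corollary.

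Thus $\tr_{\mathscr{H}}(\gamma\, T_{\eps_n}) \to \tr_{\mathscr{H}}(\gamma\, T)$ for every $\gamma \in \mathscr{L}^1_{+,1}(\mathscr{H})$, with $T$ the operator on the right-hand side of the Corollary. By taking complex linear combinations this pairing convergence extends to arbitrary $\gamma \in \mathscr{L}^1(\mathscr{H})$, and combined with the uniform bound $\lVert T_{\eps_n} \rVert_{\mathscr{L}^1(\mathscr{H})} \leq \lVert \mathcal{B} \rVert$ a standard density argument yields the claimed weak convergence of $T_{\eps_n}$ to $T$. The step I anticipate as most delicate is the bookkeeping of the change of variable for the pushforward measure inside \eqref{eq: measure mm}, making sure that, after rearrangement by cyclicity, one indeed recovers the Heisenberg conjugation $\mathcal{U}_{0,t}(z)\mathcal{B}\mathcal{U}_{t,0}(z)$ integrated against the original measure $\mu$.
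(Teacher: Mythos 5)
Your argument is correct and is exactly the ``trivial application of the definition of weak-$*$ convergence'' that the paper invokes when it omits this proof: fix $\gamma$, apply \cref{thm:5} (via \cref{rem: product states,rem: reduced}) to $\gamma\otimes\varsigma_{\varepsilon}$, identify $\Tr\bigl(\Gamma_{\varepsilon_n}(t)(\mathcal{B}\otimes 1)\bigr)=\tr_{\mathscr{H}}(\gamma\,T_{\varepsilon_n})$, and conclude by duality over all trace-class $\gamma$; note also that the change of variable you flag as delicate is immaterial here, since the paper's convention is that the pushforward acts only on the scalar measure and not on the Radon--Nikod\'{y}m density, and a $z$-independent observable $\mathcal{B}$ is insensitive to it. Two cosmetic points: the uniform bound should read $\lVert T_{\varepsilon_n}\rVert_{\mathscr{B}(\mathscr{H})}\leq\lVert\mathcal{B}\rVert$ (the partial trace against $\varsigma_{\varepsilon_n}$ is bounded, not trace class), and since you obtain convergence of the pairing for \emph{every} $\gamma\in\mathscr{L}^1(\mathscr{H})$ by linearity, the closing density argument is superfluous.
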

	
	\begin{remark}[Bounded operators]
		\mbox{}	\\
		It would obviously be more satisfactory to extend the above result to bounded operators $\mathcal{B} \in \mathscr{B}(\mathscr{H}) $. However, this can not be done in full generality because the convergence in \cref{def: qc convergence} holds in weak-$*$ topology. As explained in \cref{rem: assumptions,sec:bound-conv-pi}, one can lift the convergence to weak topology, and thus extend the statement above to bounded observables, if an additional regularity on the initial state is assumed and such a regularity can be propagated by the dynamics, which can be done for example whenever the particle system is trapped.
	\end{remark}

The analogue of the above \cref{cor:7} for non-product states and more complicated observables, {\it i.e.}, self-adjoint operators acting on the full Hilbert space, is more involved to state and holds true only for a subclass of such operators. We indeed introduce a class of operators on $ \mathscr{H} \otimes \mathscr{K}_{\eps} $, consisting of
polynomials with $m$ creation and $n$ annihilation normal ordered operators, with arguments possibly
depending on the particle's positions: explicitly, we consider operators $ \mathrm{Op}_{\varepsilon_n}^{\mathrm{Wick}}(\mathcal{F}) $ obtained as the {\it Wick quantization} of symbols $ \mathcal{F} \in \mathscr{S}_{n,m} $, {\it i.e.}, of the form
\begin{equation}
  	\label{eq:5}\tag{$\mathscr{S}_{\ell,m}$}
  	\mathcal{F}(z) =\sum_{j=1}^{N} \braketr{z}{\lambda_1(\xv_j)}_{\mathfrak{h}} \dotsm \braketr{z}{\lambda_\ell(\xv_j)}_{\mathfrak{h}} \braketl{\lambda_{\ell+1}(\xv_j)}{z}_{\mathfrak{h}} \dotsm \braketl{\lambda_{\ell+m}(\xv_j)}{z}_{\mathfrak{h}}\;,
\end{equation}
where $ \la_j \in L^{\infty}(\R^{d}; \mathfrak{h}) $, $ j = 1, \ldots, m+\ell $.

To state the result, we also need to make more restrictive assumptions on the initial state $\Gamma_{\varepsilon}$:
\begin{equation}
  \label{eq:29}\tag{$\mathrm{A}_{\delta}$}
  \begin{cases}
    	\exists \delta>\frac{1}{4} , \exists C_{\delta} < +\infty \,:\;\Tr\bigl(\Gamma_{\varepsilon} {(\mathrm{d}\Gamma_{\varepsilon}(1)+1)}^{2\delta}\bigr)\leq C_{\delta}, & \text{Nelson model,}\\
    \delta=1, \exists C < +\infty \, :\; \Tr\bigl(\Gamma_{\varepsilon}H^2_{\varepsilon}\bigr)\leq C\nu(\varepsilon)^2, &\text{Pauli-Fierz model,}\\
    \exists \delta\in \mathbb{N}_{*} , \exists C_{\delta} < +\infty \, :\; \Tr\bigl(\Gamma_{\varepsilon}H_{\varepsilon}^{2\delta}\bigr)\leq C_{\delta}\nu(\varepsilon)^{2\delta}, &\text{polaron.}
  \end{cases}
\end{equation}

	\begin{thm}[Quasi-classical evolution in the Heisenberg picture]
  		\label{thm:8}
 		\mbox{}	\\
  		Let $\Gamma_{\varepsilon}\in \mathscr{L}^1_{+,1}(\mathscr{H}\otimes \mathscr{K}_{\varepsilon})$ be a state satisfying
                assumption~\eqref{eq:29}, so that there exists $ \mm \in \mathscr{M}\lf(\mathfrak{h};
                \mathscr{L}^1_{+}(\mathscr{H})\ri) $ such that \beq \Gamma_{\varepsilon_n} \xrightarrow[n \to + \infty]{\mathrm{qc}} \mm.
                \eeq Then, for all $ \mathcal{F} \in \mathscr{S}_{\ell,m}$, with $\frac{\ell+m}{2}<2\delta $, for all
                $t\in \mathbb{R}$ and for all $ \mathcal{S}, \mathcal{T} \in \mathscr{B}(\mathscr{H}) $, such that
                  either $\mathcal{S}$ or $\mathcal{T}\in \mathscr{L}^{\infty}(\mathscr{H})$, 
                  \bml{
                  \Tr\bigl(\Gamma_{\varepsilon_n}(t)T \mathrm{Op}_{\varepsilon_n}^{\mathrm{Wick}}(\mathcal{F})S \bigr) \xrightarrow[n \to + \infty]{}  \tr_{\mathscr{H}} \lf( \int_{\mathfrak{h}} \diff \mm_t(z) \: \mathcal{T} \mathcal{F}(z) \mathcal{S} \ri)	\\
                  = \tr_{\mathscr{H}} \lf(\int_{\mathfrak{h}}^{}\mathrm{d}\mu_{\mm}(z) \: \mathcal{U}_{t,0}(z)
                  \gamma_{\mm}(z) \mathcal{U}_{0,t}(z)\,\mathcal{T} \mathcal{F}(e^{-it\nu \omega}z) \mathcal{S}\, \ri)\;.
                }
\end{thm}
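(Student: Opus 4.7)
\textbf{Proof plan for \cref{thm:8}.} The starting point is \cref{thm:5}, which already delivers the quasi-classical convergence $\Gamma_{\varepsilon_n}(t)\xrightarrow{\mathrm{qc}}\mm_t$ for every $t\in\mathbb{R}$. The task is to upgrade this convergence from bounded (indeed, compact) test observables to triples of the form $T\,\mathrm{Op}^{\mathrm{Wick}}_{\varepsilon_n}(\mathcal{F})\,S$, whose middle factor is an unbounded polynomial of total degree $\ell+m$ in the $\varepsilon$-scaled creation and annihilation operators. The central difficulty is that quasi-classical convergence in the sense of \cref{def: qc convergence} is weak-$*$, and in itself it only handles bounded field observables: the unboundedness of $\mathrm{Op}^{\mathrm{Wick}}_\varepsilon(\mathcal{F})$ must be traded against the higher summability granted by \eqref{eq:29}, via the condition $\tfrac{\ell+m}{2}<2\delta$.

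\textbf{Step 1 (propagation of the moment bound).} I would first show that hypothesis \eqref{eq:29} propagates along the flow: for every $t\in\mathbb{R}$ there exists $C(t)<\infty$ such that $\Tr\bigl(\Gamma_\varepsilon(t)(\mathrm{d}\GG_\varepsilon(1)+1)^{2\delta}\bigr)\le C(t)$ in the Nelson case, with the analogous bound $\Tr\bigl(\Gamma_\varepsilon(t)H_\varepsilon^{2\delta}\bigr)\le C(t)\nu(\varepsilon)^{2\delta}$ for Pauli-Fierz and polaron. The latter is immediate from $[H_\varepsilon,H_\varepsilon]=0$, and the number-moment version is then recovered from the Hamiltonian moments by a form-relative bound between $\mathrm{d}\GG_\varepsilon(\omega)$ and $H_\varepsilon$ (using $\omega\geq c>0$ for the polaron and the explicit magnetic form bounds for Pauli-Fierz). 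In the Nelson case a Gr\"onwall argument on the commutator $[(\mathrm{d}\GG_\varepsilon(1)+1)^{2\delta},H_\varepsilon]$, whose only non-trivial contribution comes from the interaction, suffices, as in \cite[Lemma 3.4]{ammari2014jsp}.

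\textbf{Step 2 (cutoff and error control).} I fix a smooth cutoff $\chi_R\in C^\infty_c(\mathbb{R})$ equal to $1$ on $[-R,R]$ and set $P_{R,\varepsilon}:=\chi_R\bigl(\mathrm{d}\GG_\varepsilon(1)\bigr)$. Using the standard operator bound
\begin{equation*}
  \bigl\|a^{\#}_\varepsilon(\lambda_1)\cdots a^{\#}_\varepsilon(\lambda_k)\,\Psi\bigr\|\le C\prod_j\lVert\lambda_j\rVert_{\mathfrak{h}}\,\bigl\|(\mathrm{d}\GG_\varepsilon(1)+\varepsilon)^{k/2}\,\Psi\bigr\|,
\end{equation*}
applied with $k=\ell+m$, together with Chebyshev on the moment bound of Step 1 and the hypothesis $\tfrac{\ell+m}{2}<2\delta$, one shows
\begin{equation*}
  \bigl|\Tr\bigl(\Gamma_\varepsilon(t)\,T\,\mathrm{Op}^{\mathrm{Wick}}_\varepsilon(\mathcal{F})(1-P_{R,\varepsilon})\,S\bigr)\bigr|\le\lVert\mathcal{T}\rVert\,\lVert\mathcal{S}\rVert\,o_R(1),
\end{equation*}
uniformly in $\varepsilon$, and the symmetric bound on the left. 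This reduces the problem to the truncated operator $P_{R,\varepsilon}\,\mathrm{Op}^{\mathrm{Wick}}_\varepsilon(\mathcal{F})\,P_{R,\varepsilon}$, which is now uniformly bounded in $\varepsilon$.

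\textbf{Step 3 (limit of the truncated expression and identification).} On the cutoff range, $\mathrm{Op}^{\mathrm{Wick}}_\varepsilon(\mathcal{F})$ can be represented, modulo errors of order $\varepsilon$, as a finite combination of derivatives of Weyl operators $W_\varepsilon(s\eta)$ at $s=0$, because each factor $a_\varepsilon^{\#}(\lambda_i(\xv_j))$ is obtained by differentiating $W_\varepsilon$ in a suitable direction depending on $\xv_j$. Consequently, $\Tr\bigl(\Gamma_\varepsilon(t)\,T\,P_{R,\varepsilon}\mathrm{Op}^{\mathrm{Wick}}_\varepsilon(\mathcal{F})P_{R,\varepsilon}\,S\bigr)$ is rewritten as a finite sum of expressions testing $\widehat{\Gamma_\varepsilon(t)}(\eta)$ against particle operators built from $\mathcal{S},\mathcal{T}$ and the $\lambda_i(\xv_j)$; compactness of either $\mathcal{S}$ or $\mathcal{T}$ is exactly what allows to pass the weak-$*$ convergence $\widehat{\Gamma_\varepsilon(t)}(\eta)\to \widehat{\mm_t}(\eta)$ through the partial trace on $\mathscr{H}$. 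Sending first $\varepsilon\to 0$ and then $R\to\infty$, the symbol calculus and polynomial nature of $\mathcal{F}$ yield the limit $\tr_{\mathscr{H}}\bigl(\int\mathrm{d}\mm_t(z)\,\mathcal{T}\mathcal{F}(z)\mathcal{S}\bigr)$; the second equality in the statement follows by inserting the explicit Radon-Nikod\'ym decomposition \eqref{eq: measure mm} and performing the change of variable $z\mapsto e^{it\nu\omega}z$ in the scalar integral.

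The main obstacle I expect is \textbf{Step 1} for the polaron and Pauli-Fierz models: translating the propagated $H_\varepsilon$-moments into genuine number-operator moments is delicate because the interaction $\mathcal{V}(z)$ is only form-bounded relative to $-\Delta$, so passing from $H_\varepsilon^{2\delta}$ to $(\mathrm{d}\GG_\varepsilon(1)+1)^{2\delta}$ requires a pull-through argument together with iteration on the commutator $[\mathrm{d}\GG_\varepsilon(1),\mathcal{V}]$, and sufficient integer powers $\delta\in\mathbb{N}_*$ in \eqref{eq:29} are exactly what makes the iteration close. Once this polynomial weighted bound is secured, Steps 2--3 reduce the statement to the already available convergence of \cref{thm:5}.
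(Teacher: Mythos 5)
Your overall architecture coincides with the paper's: the proof there is exactly ``propagate \eqref{eq:29} in time (via \cref{prop:12} for Nelson, via the pull-through formula and the propagation estimates of \cref{sec:pull-thro-form} for the polaron and Pauli--Fierz), identify $\mm_t$ via \cref{thm:5}, and conclude by a convergence-of-expectations result for Wick quantizations of symbols in $\mathscr{S}_{\ell,m}$'' --- which in the paper is \cref{prop:10}. Your Steps 1 and 2 are consistent with this (the number-operator cutoff $P_{R,\varepsilon}$ plus Chebyshev is a harmless variant of how \cref{prop:10} uses the moment bound directly), and your diagnosis of the polaron/Pauli--Fierz difficulty in Step 1 matches \cref{sec:pull-thro-form}.

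The genuine gap is in Step 3. You propose to realize $\mathrm{Op}_{\varepsilon}^{\mathrm{Wick}}(\mathcal{F})$ as derivatives of Weyl operators ``in a suitable direction depending on $\xv_j$'' and then to test $\widehat{\Gamma_{\varepsilon}(t)}(\eta)$ against particle operators. But the direction $\lambda_i(\xv_j)$ is operator-valued: the corresponding Weyl operator $W_{\varepsilon}(\lambda_i(\xv_j))$ acts nontrivially on \emph{both} tensor factors and is \emph{not} of the form $\mathcal{B}\otimes W_{\varepsilon}(\eta)$ with a fixed $\eta\in\mathfrak{h}$, which is the only class of test observables for which the quasi-classical convergence of \cref{def: qc convergence} (equivalently, the convergence $\widehat{\Gamma_{\varepsilon}(t)}(\eta)\xrightarrow{\mathrm{w}*}\widehat{\mm_t}(\eta)$) gives you anything. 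So the limit cannot be passed through as you describe; the entanglement of the particle position with the field argument is precisely the obstruction. The paper resolves it by \cref{lemma:4} and \cref{cor:3}: approximate each $\lambda_i$ by a simple function $\sum_k\varphi_k\,\one_{B_k}(\xv)$ in $L^{\infty}(\mathbb{R}^d;\mathfrak{h})$, so that $\mathrm{Op}_{\varepsilon}^{\mathrm{Wick}}(\mathcal{F}_M)$ becomes a finite sum of genuine tensor products $\one_{B_k}(\xv_j)\otimes a^{\dagger}_{\varepsilon}(\varphi_{k,1})\dotsm a_{\varepsilon}(\varphi_{k,\ell+m})$; each field factor is the Wick quantization of a cylindrical polynomial symbol, whose expectation converges by the quasi-classical analogue of the scalar result \cite[Theorem 6.13]{ammari2008ahp}, while the particle factor $\one_{B_k}(\xv_j)\mathcal{S}\mathcal{B}\mathcal{T}$ is compact; the approximation error $\mathrm{Op}_{\varepsilon}^{\mathrm{Wick}}(\mathcal{F}-\mathcal{F}_M)$ is controlled uniformly in $\varepsilon$ by the moment bound. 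Your argument needs this decoupling device (or an equivalent one) inserted before the limit $\varepsilon_n\to 0$ can be taken; with it, the rest of your plan goes through.
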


\begin{remark}[Regularity assumptions for the Pauli-Fierz model]
  \label{rem:1}
  	\mbox{}	\\
   The constraint $\delta=1$ for the Pauli-Fierz model is due to some technical difficulties
    in propagating in time higher order regularity of the number operator, due to the fact that the
    number operator and the field's kinetic term are not comparable in such case, since the field carriers may be massless.
\end{remark}

\subsection{Semiclassical Analysis and Sketch of the Proof}
\label{sec:bound-conv-pi}

In this section we present a short sketch of the proof and discuss some of key features of semiclassical analysis for infinite dimensional systems, which is the core tool of our analysis. This discussion is meant to clarify the role of our assumptions and propose alternative approaches.

One of the main point in our investigation is the convergence of a family of quantum states as $ \eps \to 0  $ to a quasi-classical Wigner measure in the sense either of \cref{def: qc convergence} or \cref{def:11}. The latter is clearly preferable but there are known obstructions. Indeed, in infinite dimensional semiclassical analysis with no additional degrees of freedom (which we refer to as the {\it scalar case}), {\it i.e.}, when the limit Wigner measure is a conventional scalar measure, there can be two types of defects of convergence for a given family of normalized states $ \lf\{ \Gamma_{\eps} \ri\}_{\eps \in (0,1)} $: 
\begin{itemize}
	\item {\it a loss of mass}, as in the finite-dimensional case, {\it i.e.}, the limit measure may not be a probability measure and have a total mass strictly less than one;
	\item a dimensional {\it loss of compactness} that is characteristic of the infinite-dimensional setting (see \cite[\textsection 7.4]{ammari2008ahp}), where the mass is preserved but the expectation values of operators obtained as Wick quantization of non-compact symbols do not converge to their limit expressions.
\end{itemize} 

These defects are prevented formulating conditions that are both sufficient and reasonable to verify in a relevant class of concrete examples: the loss of mass is prevented by imposing a $\varepsilon$-uniformity condition on the expectation of some power of the number operator, analogous to assumption \eqref{eq:20} given above (see also \cite[\textsection
6.1]{ammari2008ahp}); loss of compactness is prevented by the so-called (PI) sufficient condition
\cite{ammari2011jmpa,ammari2019tjm}, that reads as follows: for all $k\in \mathbb{N}$,
\begin{equation*}
  \lim_{\varepsilon \to  0} \Tr\bigl(\mathrm{d}\mathcal{G}_{\varepsilon}(1)^k\Gamma_{\varepsilon}\bigr)=\int_{}^{}\mathrm{d}\mu(z)\,\lVert z  \rVert_{}^{2k}  \; .
\end{equation*}
If the above condition holds, then the expectations of all Wick polynomial bounded symbols converge.

A key difference of our quasi-classical setting compared with the scalar case is that assumption \eqref{eq:20} is \emph{not sufficient} to ensure that no mass is lost in the limit: the correlation with the $\mathscr{H}$-degrees of freedom may cause a new type of mass defect, as exemplified by the product state $\gamma_{\varepsilon}\otimes \varsigma_{\varepsilon}$ introduced above in \cref{rem: mass conservation}, that satisfies assumption \eqref{eq:20} and converges to the state-valued measure $ 0 $ with no mass. Assumptions \eqref{eq: ass1} and \eqref{eq: ass2} are both conditions on the $\mathscr{H}$-degrees of freedom, that are sufficient to prevent this quasi-classical defect and whose usefulness has been discussed in \cref{rem: assumptions}. Note however that the defect of compactness mentioned above may also occur: it is indeed not difficult to produce examples of states with no loss of mass but a defect of compactness, simply tensorizing any scalar example of such defect with a $\mathscr{H}$-state that is independent of the quasi-classical parameter $\varepsilon$. In order to
overcome the defect of compactness in the quasi-classical limit, a straightforward analogous of the scalar
(PI) condition can be formulated:
\begin{equation}
  \label{eq:PIK}
  \tag{PI$_K$}
  \lim_{\varepsilon\to 0} \Tr\Bigl(\mathrm{d}\mathcal{G}_{\varepsilon}(1)^k\, \Gamma_{\varepsilon}\Bigr)=\int_{\mathfrak{h}}^{}\mathrm{d}\mu_{\mathfrak{m}}(z)\,\lVert z  \rVert_{\mathfrak{h}}^{2k}\,,	\qquad		\forall 0\leq k\leq K\;, 
\end{equation}
so that the scalar condition (PI) of \cite{ammari2011jmpa,ammari2019tjm} corresponds here to the condition (PI$_{\infty}$). We allow for a possibly
finite index $K$ for a motivation to be explained in detail below, related to the propagation in time of the
condition.

Restricting to states that satisfy the condition \eqref{eq:PIK}, for suitable $K\in \mathbb{N}_{*}$, allows to use some powerful tools of infinite dimensional semiclassical analysis (see, \emph{e.g.}, \cite[Appendix B]{ammari2019tjm}), that make the study of quasi-classical dynamics less involved, and allow to obtain stronger results. In particular, the convergence in \cref{thm:5} can be lifted to bounded quasi-classical convergence of \cref{def:11} and \cref{thm:8} holds for a more general class of symbols. There are, however, also some drawbacks. The most relevant is that there are states of physical relevance that do not satisfy condition \eqref{eq:PIK}, or that are defined by abstract and \emph{a priori} considerations, in a way that does not provide enough information to test the validity of such condition. The foremost examples of this kind are those of states satisfying suitable variational problems (\emph{e.g.}, ground states of physical problems related to the ones under study, perhaps with an additional external potential that is removed at the initial time, or states belonging to some minimizing sequence of the model). In addition, there is a technical difficulty: the condition \eqref{eq:PIK} is in general difficult to propagate in time. As it will be explained later, to prove its propagation one has to rely on a propagation estimate for the number operator up to power $K$. This is possible for the Nelson model for all $K\in \mathbb{N}_{*}$ \cite{falconi2013jmp}, and for the Pauli-Fierz model, at least for $K\leq 2$ \cite{ammari2019prep}. However, it does not seem feasible for the polaron model.

In view of the above considerations, in this paper we mainly focus on the more general class of states
satisfying only assumption \eqref{eq:20} and that thus can be defective both on mass and compactness, as in the main results presented in \cref{sec:main-results}. Let us remark that, in order to consider more general states, a finer technical analysis on our part is required; this however makes the proofs also slightly more involved. We believe that it is interesting to present the results in such generality, both from a physical and a mathematical standpoint. Nonetheless, we also believe that it makes sense to informally present the proof of our results that can be obtained using the condition \eqref{eq:PIK}, for arbitrary $K\in \mathbb{N}_{*}$ in the Nelson model, and for $K\leq 2$ in the Pauli-Fierz model. The purpose of such outline is twofold: on one hand it serves as a summary of the
semiclassical strategies used throughout the paper, on the other hand it allows us to emphasize the
simplifications yielded by using the condition \eqref{eq:PIK}, and thus also the subtleties that we had to face otherwise.

Let us then assume, only in this section, that \eqref{eq:PIK} holds true, so that the statement of \cref{thm:5} takes the following stronger form: there exist at least one subsequence $ \lf\{ \eps_n \ri\}_{n \in \N} $ and one probability measure $ \mm \in \mathscr{P}\lf(\mathfrak{h};
  \mathscr{L}^1_{+,1}(\mathscr{H})\ri) $, such that 
\beq
  	\label{eq: bounded teo princ 1}
  	\Gamma_{\varepsilon_n} \xrightarrow[n \to + \infty]{\mathrm{bqc}} \mm, 
\eeq 
and, if \eqref{eq: bounded teo princ 1} holds, then for all $t\in  \mathbb{R}$,
\begin{equation}
	\label{eq: t convergence b}
	\Gamma_{\varepsilon_n}(t) \xrightarrow[n \to + \infty]{\mathrm{bqc}} \mm_t \; ,
\end{equation}
where ${\mm}_t$ is given by \eqref{eq: measure mm}. In addition,
  \begin{multline*}
    \Tr \lf[ \Gamma_{\varepsilon_n}(t)T \mathrm{Op}_{\varepsilon_n}^{\mathrm{Wick}} \lf( \big\langle z^{\otimes \ell}  \big\vert \tilde{b} z^{\otimes m}
    \big\rangle_{\mathfrak{h}^{\otimes_{\mathrm{s}}\ell}} \ri) S \ri] \xrightarrow[n \to + \infty]{}  \tr_{\mathscr{H}} \lf[ \int_{\mathfrak{h}} \diff \mm_t(z) \: \mathcal{T} \,\big\langle z^{\otimes \ell}  \big\vert \tilde{b} z^{\otimes m}
    \big\rangle_{\mathfrak{h}^{\otimes_{\mathrm{s}}\ell}}\, \mathcal{S} \ri] \\
    = \tr_{\mathscr{H}} \lf[ \int_{\mathfrak{h}}^{}\mathrm{d}\mu_{\mm}(z) \: \mathcal{U}_{t,0}(z)
    \gamma_{\mm}(z) \mathcal{U}_{0,t}(z)\,\mathcal{T} \,\lf\langle (e^{-it\nu \omega}z)^{\otimes \ell}  \lf\vert \tilde{b} (e^{-it\nu\omega}z)^{\otimes m}  \ri.\ri\rangle_{\mathfrak{h}^{\otimes_{\mathrm{s}}\ell}}\, \mathcal{S} \ri]\, \;,
  \end{multline*}
  for all $\mathcal{T},\mathcal{S}\in \mathscr{B}(\mathscr{H})$, and for all $\tilde{b}\in
  \mathscr{B}(\mathscr{H}\otimes \mathfrak{h}^{\otimes_{\mathrm{s}}m},\mathscr{H}\otimes
  \mathfrak{h}^{\otimes_{\mathrm{s}}\ell})$, with $\frac{m+\ell}{2}<K$.
  
The main steps of the proof of the above results are the following:

  \begin{enumerate}[i)]
  \item First of all, we pass to the interaction representation, setting
    \begin{equation*}
      \Upsilon_{\varepsilon}(t):= e^{it(K_0+\nu(\varepsilon)\mathrm{d}\mathcal{G}_{\varepsilon}(\omega))}\Gamma_{\varepsilon}(t)e^{-it(K_0+\nu(\varepsilon)\mathrm{d}\mathcal{G}_{\varepsilon}(\omega))}\; ,
    \end{equation*}
    and write Duhamel's formula for the time evolution of its Fourier transform
    \begin{multline}
      \label{eq:11}
      \lf[\hgint(t)\ri](\eta)=\lf[\hgint(s)\ri](\eta) \\-i\sum_{j=1}^N\int_s^t \mathrm{d}\tau \: e^{i\tau \mathcal{K}_0}\,\tr_{\mathscr{K}_{\varepsilon}}\lf( \lf[ \varphi_{\varepsilon}\lf(e^{-i\tau\varepsilon\nu(\varepsilon)\omega}\lambda(\xv_j) \ri),\gint(\tau) \ri]\,W_{\varepsilon}(\eta) \ri)\,e^{-i\tau \mathcal{K}_0} \;,
    \end{multline}
    where $ \varphi_{\eps}(z) : = a_{\eps}^{\dagger}(z) + a_{\eps}(z) $ is the field operator.
    
  \item Now, the goal is to extract a common subsequence of $\Upsilon_{\varepsilon_{n}}(t)$ that converges for  all times $t\in \mathbb{R}$. Hence, one preliminary needs to show that, at any time $ t \in \R $, $ \Upsilon_{\varepsilon_{n}}(t) $ converges along a suitable subsequence. In order to do that, we need to verify that condition \eqref{eq:PIK} (resp. \eqref{eq:20}, in the case of  \cref{thm:5}) is satisfied for any $ t \in \R $, which guarantees convergence in the sense of \cref{def:11} (resp. \cref{def: qc convergence}) at all times. Let us skecth how it is possible to propagate \eqref{eq:PIK} in the Nelson model: the Duhamel formula
    \begin{multline*}
      \Tr\Bigl(\mathrm{d}\mathcal{G}_{\varepsilon}(1)^k\, \Upsilon_{\varepsilon}(t)\Bigr)= \Tr\Bigl(\mathrm{d}\mathcal{G}_{\varepsilon}(1)^k\, \Upsilon_{\varepsilon}(0)\Bigr) \\
      -i \int_0^t  \Tr\Bigl(\bigl[\mathrm{d}\mathcal{G}_{\varepsilon}(1)^k,\varphi_{\varepsilon}\lf(e^{-i\tau\varepsilon\nu(\varepsilon)\omega}\lambda(\xv_j) \ri)\bigr]\Upsilon_{\varepsilon}(\tau)\Bigr) \mathrm{d}\tau\; ,
    \end{multline*}
    yields
    \begin{multline*}
      	\lf\| \lf[\mathrm{d}\mathcal{G}_{\varepsilon}(1)^k,\varphi_{\varepsilon}\lf(e^{-i\tau\varepsilon\nu(\varepsilon)\omega}\lambda(\xv_j) \ri) \ri]\Upsilon_{\varepsilon}(\tau)  \ri\|_{\mathscr{L}^1(\mathscr{H}\otimes \mathscr{K}_{\varepsilon})}^{}\leq \varepsilon\, C_k \lf\| (\mathrm{d}\mathcal{G}_{\varepsilon}(1)+1)^k\Upsilon_{\varepsilon}(\tau)  \ri\|_{\mathscr{L}^1(\mathscr{H}\otimes \mathscr{K}_{\varepsilon})}^{}\; .
    \end{multline*}
    The trace norm on the right hand side is uniformly bounded w.r.t $\varepsilon$ (see, \emph{e.g.} \cref{prop:12} below), yielding
    \begin{equation*}
      \Tr\Bigl(\mathrm{d}\mathcal{G}_{\varepsilon}(1)^k\, \Upsilon_{\varepsilon}(t)\Bigr)= \Tr\Bigl(\mathrm{d}\mathcal{G}_{\varepsilon}(1)^k\, \Upsilon_{\varepsilon}(0)\Bigr)+ \mathcal{O}_{k,t}(\varepsilon)\; .
    \end{equation*}
    This implies that condition \eqref{eq:PIK} holds for all times, provided it holds at the initial
    time. Let us remark again that such propagation estimate (more precisely, the $\varepsilon$-uniform number estimate) is not available for the polaron model, nor for the Pauli-Fierz model whenever $k\geq 3$.

    \item Once convergence of $ \Upsilon_{\varepsilon_{n}}(t) $ is obtained, one has to prove that $ \big\{\hat{\Upsilon}_{\varepsilon} \big\}_{\varepsilon\in (0,1)}$ is uniformly equicontinuous as a family of functions of time. This can be done exploiting once more \eqref{eq:PIK}: 
    \begin{multline*}
      \Tr \lf( \lf[\varphi_{\varepsilon}\lf(e^{-i\tau\varepsilon\nu(\varepsilon)\omega}\lambda(\xv_j)\ri),\Upsilon_{\varepsilon}(t)  \ri] W_{\varepsilon}(\eta) \ri)\\=\Tr \lf( \Upsilon_{\varepsilon}(t)(\mathrm{d}\mathcal{G}_{\varepsilon}(1))^{\delta}(\mathrm{d}\mathcal{G}_{\varepsilon}(1))^{-\delta} \lf[W_{\varepsilon}(\eta),\varphi_{\varepsilon}\lf(e^{-i\tau\varepsilon\nu(\varepsilon)\omega}\lambda(\xv_j)\ri) \ri] \ri)= \OO (\varepsilon^{\delta})\; .
    \end{multline*}
    
  \item\label{item:1} After the extraction of a subsequence $\Upsilon_{\varepsilon_{n_k}}(t)$ converging at all time, we take the limit $\varepsilon_{n_k} \to 0 $ of \eqref{eq:11}: by \eqref{eq:PIK}, the convergence follows by a direct generalization to operator-valued symbols of the analysis done in the scalar case, \emph{e.g.} in \cite[Appendix B]{ammari2019tjm}. On the other hand, to study the limit under assumption \eqref{eq:20} alone, we have to develop a specific quasi-classical calculus for the   symbols appearing in the energy functionals of the three models. We take advantage of an approximation in simple functions that allows to separate the two types of degrees of freedom, at the same time making the symbol compact and thus convergent without additional assumptions (see \cref{sec:quasi-class-analys}).

  \item\label{item:2} The equation obtained in the limit from the Duhamel equation is a transport equation
    for the Fourier transform of the quasi-classical measures in interaction picture $\mathfrak{n}_t$:
    \begin{equation*}
      \widehat{\nn}_t(\eta)= \widehat{\nn}_s(\eta)-i\int_s^t  \mathrm{d}\tau  \int_{\mathfrak{h}}^{} \mathrm{d}\mu_{\nn_\tau}(z) \: \lf[ \widetilde{\mathcal{V}}_{\tau}(e^{-i\tau\nu\omega}z) , \gamma_{\nn_\tau}(z) \ri] e^{2i\Re \braket{\eta}{z}_{\mathfrak{h}}}\; .
    \end{equation*}
    We prove that such equation has a unique solution, given by $ \nn_t = \lf( \mu_{\mm}, \:
    \widetilde{\mathcal{U}}_{t,0}(z) \gamma_{\mm}(z) \widetilde{\mathcal{U}}_{t,0}^{\dagger}(z) \ri) $, that yields
    once the interaction representation is removed,
    \begin{equation*}
      \diff \mm_t = \mathcal{U}_{t,0}(z) \gamma_{\mm}(z) \mathcal{U}^{\dagger}_{t,0}(z) \: \diff \lf(e^{-it\nu\omega}\, _{\star} \,\mu_{\mm} \ri)\; ,
    \end{equation*}
    as expected (see \cref{sec:uniq-quasi-class}).
    
  \item\label{item:3} The aforementioned uniqueness allows finally to extend the convergence to the
    original sequence $\Upsilon_{\varepsilon_n}(t)$.
  \end{enumerate}

\section{Quasi-Classical Analysis}
\label{sec:quasi-class-analys}

In this section we introduce the quasi-classical asymptotic analysis, needed to study the dynamical limit of
quasi-classical systems. In particular, we have to develop a semiclassical theory for operator-valued
symbols, since the latter are crucial to characterize the interaction part of the dynamics. The key tools presented here are
\begin{itemize}
	\item the convergence of regular states to state-valued measures in the quasi-classical limit (\cref{prop:8}) in the sense of \cref{def: qc convergence};
	\item the convergence of expectation values of suitable classes of operators to their classical counterparts (\cref{prop:10}).
\end{itemize}
Note that, in the context of finite dimensional semiclassical analysis, operator-valued symbols correspoding to additional degrees of freedom have already been studied \cite{ba,fg,pg,gms} (see also \cite[Appendix B]{teu} and references therein), although with different applications in mind.

We start by clarifying the notion of state-valued measure.

	\begin{definition}[State-valued measure]
  		\label{def:9}
  		\mbox{}	\\
  		An additive measure $\mm$ on a measurable space $(X,\Sigma)$ is $\mathscr{H}$-state-valued iff
  		\begin{itemize}
  			\item $\mm(S)\in \mathscr{L}^1_+(\mathscr{H})$ for any $S\in \Sigma$;
    
  			\item $\mm(\varnothing)=0$;
    
  			\item $\mm$ is unconditionally $\sigma$-additive in trace norm.
  		\end{itemize}
  		A $\mathscr{H}$-state-valued measure is a probability measure
                iff $\lVert \mm(X) \rVert_{\mathscr{L}^1}^{}=1$. We denote by $ \mathscr{M}(X, \Sigma; \mathscr{L}^1_+(\mathscr{H})) $ and $\mathscr{P}(X, \Sigma;\mathscr{L}^1_+(\mathscr{H}))$ and by $ \mathscr{M}(X;\mathscr{L}^1_+(\mathscr{H})) $ and $\mathscr{P}(X; \mathscr{L}^1_+(\mathscr{H}))$ the spaces of $\mathscr{H}$-state-valued measures and probability measures, respectively, w.r.t. either a generic
                  or the Borel $\sigma$-algebra, in case $X$ is a topological space.
	\end{definition}

Using the Radon-Nikod\'{y}m property and positivity, there is a simple characterization of state-valued measures:

	\begin{proposition}[Radon-Nikod\'{y}m decomposition]
  		\label{prop:7}
  		\mbox{}	\\
  		For any measure $\mm\in \mathscr{M}(X, \Sigma;\mathscr{L}^1_+(\mathscr{H}))$, there exists a  scalar measure $ \mu_{\mm} \in
  \mathscr{M}(X, \Sigma) $, with $\mu_{\mm}(X)=\lVert \mm(X) \rVert_{\mathscr{L}^1(\mathscr{H})}^{}$, and a  $\mu_{\mm}$-a.e. defined
  measurable function $ \gamma_{\mm}: X\to \mathscr{L}^1_{+,1}(\mathscr{H})$, such that for any $S\in \Sigma$,
  		\begin{equation}
    			\mm(S) = \int_S^{} \mathrm{d}\mu_{\mm}(z) \: \gamma_{\mm}(z), 
  		\end{equation}
  		with the r.h.s.\ meant as a Bochner integral. In addition, $\mm\in \mathscr{M}(X, \Sigma;\mathscr{L}^1_+(\mathscr{H}))$ is a $\mathscr{H}$-state-valued probability measure iff $\mu_{\mm}$ is a probability measure. We call $(\mu_{\mm}; \gamma_{\mm}(z))$ the Radon-Nikod\'{y}m decomposition of $\mm$.
	\end{proposition}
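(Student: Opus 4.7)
The plan is to construct $\mu_{\mm}$ as the trace of $\mm$, exploit absolute continuity to apply a vector-valued Radon-Nikod\'ym theorem via the Radon-Nikod\'ym property of $\mathscr{L}^1(\mathscr{H})$, and finally extract positivity and normalization of the density from the corresponding properties of $\mm$.

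First I would set, for $S\in\Sigma$,
\begin{equation*}
\mu_{\mm}(S):=\tr_{\mathscr{H}}\bigl(\mm(S)\bigr)=\lVert\mm(S)\rVert_{\mathscr{L}^1(\mathscr{H})}^{},
\end{equation*}
where the second equality uses that $\mm(S)\in\mathscr{L}^1_+(\mathscr{H})$. Since $\mm$ is unconditionally $\sigma$-additive in trace norm, the continuous linear functional $\tr_{\mathscr{H}}$ transports this to scalar $\sigma$-additivity, giving $\mu_{\mm}\in\mathscr{M}(X,\Sigma)$ with $\mu_{\mm}(X)=\lVert\mm(X)\rVert_{\mathscr{L}^1(\mathscr{H})}^{}<+\infty$. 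Moreover $\mm\ll\mu_{\mm}$ in the vector sense: if $\mu_{\mm}(S)=0$, then $\tr_{\mathscr{H}}(\mm(S))=0$, and positivity of $\mm(S)$ forces $\mm(S)=0$.

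Second, I would invoke the Radon--Nikod\'{y}m property of $\mathscr{L}^1(\mathscr{H})$. For separable $\mathscr{H}$, this space is the separable dual of the compact operators $\mathscr{L}^{\infty}(\mathscr{H})$, hence enjoys the RNP. The total variation of $\mm$ is majorised by $\mu_{\mm}$, since for any finite partition $\{S_j\}$ of $S$ one has $\sum_j\lVert\mm(S_j)\rVert_{\mathscr{L}^1}^{}=\sum_j\mu_{\mm}(S_j)=\mu_{\mm}(S)$. The classical vector-valued Radon--Nikod\'{y}m theorem then yields a strongly measurable, Bochner $\mu_{\mm}$-integrable function $\gamma_{\mm}:X\to\mathscr{L}^1(\mathscr{H})$ satisfying
\begin{equation*}
\mm(S)=\int_S^{}\mathrm{d}\mu_{\mm}(z)\:\gamma_{\mm}(z),\qquad\forall S\in\Sigma.
\end{equation*}

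Third, I would upgrade $\gamma_{\mm}(z)$ to take values in $\mathscr{L}^1_{+,1}(\mathscr{H})$. For positivity, fix any $\psi\in\mathscr{H}$ and set $f_{\psi}(z):=\braket{\psi}{\gamma_{\mm}(z)\psi}_{\mathscr{H}}$; then $\int_S f_{\psi}\,\mathrm{d}\mu_{\mm}=\braket{\psi}{\mm(S)\psi}_{\mathscr{H}}\geq 0$ for every $S$, so $f_{\psi}\geq 0$ $\mu_{\mm}$-a.e. Picking a countable dense family $\{\psi_k\}_{k\in\mathbb{N}}\subset\mathscr{H}$ and intersecting the corresponding full-measure sets gives $\gamma_{\mm}(z)\geq 0$ for $\mu_{\mm}$-a.e.\ $z$. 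For normalization, applying the (continuous) functional $\tr_{\mathscr{H}}$ under the Bochner integral yields
\begin{equation*}
\mu_{\mm}(S)=\tr_{\mathscr{H}}\bigl(\mm(S)\bigr)=\int_S^{}\mathrm{d}\mu_{\mm}(z)\:\tr_{\mathscr{H}}\bigl(\gamma_{\mm}(z)\bigr),
\end{equation*}
so $\tr_{\mathscr{H}}(\gamma_{\mm}(z))=1$ $\mu_{\mm}$-a.e., and on the exceptional null set we can redefine $\gamma_{\mm}$ to be any fixed element of $\mathscr{L}^1_{+,1}(\mathscr{H})$. The equivalence ``$\mm$ probability $\Leftrightarrow$ $\mu_{\mm}$ probability'' is immediate from $\mu_{\mm}(X)=\lVert\mm(X)\rVert_{\mathscr{L}^1(\mathscr{H})}^{}$.

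The main obstacle I expect is the invocation of the RNP for $\mathscr{L}^1(\mathscr{H})$: one must check that the vector measure has bounded variation (controlled here by $\mu_{\mm}$) so that the standard Radon--Nikod\'{y}m theorem applies, and then carry the cone condition $\gamma_{\mm}(z)\in\mathscr{L}^1_{+,1}(\mathscr{H})$ from set-wise to pointwise a.e.\ via a separability-based countable intersection argument. The rest of the proof is bookkeeping.
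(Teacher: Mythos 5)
Your proposal is correct and follows essentially the same route as the paper: both set the reference scalar measure to be $m(S)=\lVert\mm(S)\rVert_{\mathscr{L}^1(\mathscr{H})}$ (which, by positivity of $\mm(S)$, coincides with $\tr_{\mathscr{H}}\mm(S)$), observe mutual absolute continuity, and invoke the Radon--Nikod\'{y}m property of $\mathscr{L}^1(\mathscr{H})$ as a separable dual to obtain a Bochner-integrable density. The one difference of emphasis: the paper packages the final decomposition by renormalizing the RN derivative pointwise by its norm, $\diff\mu_{\mm}=\lVert\tfrac{\diff\mm}{\diff m}\rVert\diff m$ and $\gamma_{\mm}=\tfrac{\diff\mm}{\diff m}/\lVert\tfrac{\diff\mm}{\diff m}\rVert$, whereas you take $\mu_{\mm}=m$ from the outset and verify \emph{a posteriori} that $\tr_{\mathscr{H}}\gamma_{\mm}(z)=1$ $\mu_{\mm}$-a.e.\ by pushing the trace under the Bochner integral; the two routes are equivalent once one knows the density is a.e.\ positive. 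In fact your version is slightly more careful on exactly this point -- the pointwise a.e.\ positivity of $\gamma_{\mm}(z)$, which you establish by testing against a countable dense family $\{\psi_k\}$ and intersecting full-measure sets -- which the paper's proof uses implicitly when it normalizes by the $\mathscr{L}^1$-norm (equal to the trace only under positivity) but does not spell out.
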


	\begin{proof}
		First of all we point out that the separable Schatten space of trace class operators $\mathscr{L}^1(\mathscr{H})$ has the Schatten space
of compact operators $\mathscr{L}^{\infty}(\mathscr{H})$ as predual, and therefore it has the Radon-Nikod\'{y}m
property (see, \emph{e.g.}, \cite{dunford1940tams,diestel1977ms}). In addition, since $\mm$ takes values in positive operators, we can define its ``norm'' measure as
		\beq
			\label{eq: norm measure}
			m( \: \cdot \: ) := \lVert \mm( \: \cdot \:) \rVert_{\mathscr{L}^1(\mathscr{H})}.  \eeq In fact, $ m $
                        is a scalar measure such that
                        $\mm\ll m\ll \mm$, {\it i.e.}, $ \mm $ and $m$ are absolutely continuous w.r.t. each other. The latter property can indeed be easily seen as follows:
                        $\mathfrak{m}(S)=0$, as an element of the vector space
                        $\mathscr{L}^1_+(\mathscr{H})$, if and only if $m(S)=\lVert \mathfrak{m}(S)
                        \rVert_{\mathscr{L}^1(\mathscr{H})}^{}=0$.
  	
                        Moreover, the Radon-Nikod\'{y}m property guarantees the existence of the
                        Radon-Nikod\'{y}m derivative $\frac{\mathrm{d}\mathfrak{m}}{\mathrm{d}\mu}\in
                        L^1\bigl(X,\mathrm{d}\mu;\mathscr{L}^1_+(\mathscr{H})\bigr)$, such that \beq
                        \mathfrak{m}(S)=\int_S^{}\mathrm{d}\mu(z) \frac{\mathrm{d}\mathfrak{m}}{\mathrm{d}\mu}(z)
                        \eeq for any measurable $S\in \Sigma$ and for any scalar measure $\mu$ such that
                        $\mathfrak{m}$ is absolutely continuous w.r.t. $ \mu $. 
                        
                        In our setting, compared to
                        the more general case of Banach-space-valued vector measures, there is an
                        additional notion of positivity, as discussed above. Such notion naturally singles
                        out a given scalar measure, w.r.t. which $\mathfrak{m}$ is absolutely
                        continuous. Such measure is the ``norm'' measure $m$ defined in \eqref{eq: norm
                          measure}. Indeed, combining the mutual absolute continuity of $ \mathfrak{m}$
                        and $m$ with the existence of the Radon-Nikod\'{y}m derivative, we deduce that,
                        for any measurable $ S \subset \mathfrak{h} $,
  		\begin{equation}
    			\mathfrak{m}(S)=\int_{S}^{}  \mathrm{d}m(z)\, \frac{\mathrm{d}\mathfrak{m}}{\mathrm{d}m}(z)\; ,
  		\end{equation}
  		and that, $m$-a.e.,
  		\begin{equation*}
   	 		\frac{\mathrm{d}\mathfrak{m}}{\mathrm{d}m}\neq 0\; .
  		\end{equation*}

  		Therefore, we can rewrite
  		\begin{equation}
    			\mathfrak{m}(S)= \int_{S}^{}\mathrm{d}m(z)  \:  \lf\| \frac{\mathrm{d}\mathfrak{m}}{\mathrm{d}m}  (z) \ri\|_{\mathscr{L}^1(\mathscr{H})}^{} \, \frac{\mathrm{d}\mathfrak{m}}{\mathrm{d}m}(z)\, \lf\| \frac{\mathrm{d}\mathfrak{m}}{\mathrm{d}m}  (z) \ri\|_{\mathscr{L}^1(\mathscr{H})}^{-1}\;,
 	 	\end{equation}
  		and setting
  		\beqn
    			\mathrm{d}\mu_{\mm}(z) &:= & \lf\| \frac{\mathrm{d}\mathfrak{m}}{\mathrm{d}m}  (z) \ri\|_{\mathscr{L}^1(\mathscr{H})}^{}  \mathrm{d}m(z)\;,\\
    \gamma_{\mm}(z)&:= &\frac{\mathrm{d}\mathfrak{m}}{\mathrm{d}m}(z)\,\lf\| \frac{\mathrm{d}\mathfrak{m}}{\mathrm{d}m}  (z) \ri\|_{\mathscr{L}^1(\mathscr{H})} ^{-1}\; ,
  		\eeqn
  		we obtain the sought Radon-Nikod\'{y}m decomposition.
	\end{proof}

Let now $ F:X\to \mathscr{B}(\mathscr{H})$ be a measurable function with respect to the weak-* topology on
$\mathscr{B}(\mathscr{H})$. It is then natural to define the ($\mathscr{L}^1$-Bochner) integrals of $f$
with respect to $\mm$ as follows: for any $S\in \Sigma$,
\beqn
    \int_{S}^{}\mathrm{d}\mm(z) F(z) &:= \disp\int_S^{}  \mathrm{d}\mu_{\mm}(z)\: \gamma_{\mm}(z) F(z) ,\\
    \int_{S}^{} F(z)\mathrm{d}\mm(z) &:= \disp\int_S^{}  \mathrm{d}\mu_{\mm}(z)\: F(z) \gamma_{\mm}(z).
\eeqn
Notice that one has to keep track of the order inside the integral, {\it i.e.}, putting the measure on the right or on the left of the integrand is not the same, because $ \gamma_{\mm} $ might not commute with $ F(z) $, since both are operators on $ \mathscr{H} $.

State-valued measures are important since they are the quasi-classical counterpart of quantum states
(see \cite{falconi2017arxiv} for a detailed discussion). Operator-valued symbols, such as the
aforementioned $\mathcal{F}$, are correspondingly the quasi-classical counterpart of quantum observables. From a
general point of view, we can summarize the main objective of quasi-classical analysis as follows: let
$\mathrm{Op}_{\varepsilon}(\mathcal{F})$ be a ``quantization'' of $\mathcal{F} $ acting on $\mathscr{H}\otimes \mathscr{K}_{\varepsilon}$, where the
space $\mathscr{K}_{\varepsilon}$ carries a semiclassical representation of the canonical commutation relations
corresponding to a symplectic space of test functions $(V, \sigma)$, and let $\Gamma_{\varepsilon}$ be a quantum state
converging to the Borel state-valued measure $\mm$ on the space $V'$ of suitably regular classical
fields, then we would like to prove that
\begin{equation}
  \label{eq:1}
  \begin{split}
    \lim_{\varepsilon\to 0}\tr_{\mathscr{K}_{\varepsilon}}\bigl(\Gamma_{\varepsilon}\mathrm{Op}_{\varepsilon}(\mathcal{F})\bigr) &= \int_{V'}^{}\mathrm{d}\mm(z) \: \mathcal{F}(z)\; ,\\
    \lim_{\varepsilon\to 0}\tr_{\mathscr{K}_{\varepsilon}}\bigl(\mathrm{Op}_{\varepsilon}(\mathcal{F})\Gamma_{\varepsilon}\bigr) &= \int_{V'}^{} \mathcal{F}(z) \: \mathrm{d}\mm(z)\; ;
  \end{split}
\end{equation}
where the convergence holds in a suitable topology of $\mathscr{L}^1(\mathscr{H})$. 

It is however difficult to obtain results such as the above for general symbols and quantum states. The most important obstruction is indeed the difficulty of define a proper quantization
procedure for symbols acting on infinite dimensional spaces. However, for the theories of particle-field
interaction under consideration (Nelson, polaron, Pauli-Fierz), the interaction terms in the
quasi-classical Hamiltonians contain only symbols of a specific form. We can therefore restrict our analysis
to such type of symbols.

Let us recall that we are considering the following concrete setting: $\mathscr{H}=L^2 (\mathbb{R}^{dN} )$, where
$d$ is the spatial dimension on which the particles move and $N$ is the number of quantum particles;
$\mathscr{K}_{\varepsilon}=\GG_{\eps}(\mathfrak{h})$, the symmetric Fock space over the complex separable
Hilbert space $\mathfrak{h}$, carrying the standard $\varepsilon$-dependent Fock representation of the canonical commutation relations
\begin{equation*}
  \lf[a_{\varepsilon}(z),a_{\varepsilon}^{\dagger}(\eta)\ri]=\varepsilon \braket{z}{\eta}_{\mathfrak{h}}\; .
\end{equation*}
Finally, we are interested in the case $V'=\mathfrak{h}$, \emph{i.e.}, the space of test functions
coincides with the space of classical fields. The type of symbols $ \mathcal{F} $ is given by the class defined in  \eqref{eq:5}, {\it i.e.}, 
\begin{equation*}
  \mathcal{F}(z) =\sum_{j=1}^{N} \braketr{z}{\lambda_1(\xv_j)}_{\mathfrak{h}} \dotsm \braketr{z}{\lambda_\ell(\xv_j)}_{\mathfrak{h}} \braketl{\lambda_{\ell+1}(\xv_j)}{z}_{\mathfrak{h}} \dotsm \braketl{\lambda_{\ell+m}(\xv_j)}{z}_{\mathfrak{h}}\;,
\end{equation*}
where the functions $\lambda_j\in L^{\infty}(\mathbb{R}^{d}; \mathfrak{h})$ for any $j\in \{1,\dotsc,\ell+m\}$ should be considered as fixed
``parameters'', and $ \mathcal{F}(z)$ acts as a multiplication operator on $L^2(\mathbb{R}^{dN})$.

Since $\mathcal{F}$ is a polynomial symbol with respect to $z$ and $\bar{z}$, it is natural to quantize it by the
Wick quantization rule. For such simple symbols the Wick rule has a very easy form: substitute each $z$
with $a_{\varepsilon}$ and each $\bar{z}$ with $a^{\dagger}_{\varepsilon}$, and then put the so obtained expression in normal order, by moving all the creation operators to the left of
the annihilation operators. Therefore, we obtain
\begin{equation}
  \mathrm{Op}_{\varepsilon}^{\mathrm{Wick}}(\mathcal{F})=\sum_{j=1}^N a^{\dagger}_{\varepsilon}\bigl(\lambda_1(\xv_j)\bigr)\dotsm a^{\dagger}_{\varepsilon}\bigl(\lambda_\ell(\xv_j)\bigr)a_{\varepsilon}\bigl(\lambda_{\ell+1}(\xv_j)\bigr)\dotsm a_{\varepsilon}\bigl(\lambda_{\ell+m}(\xv_j)\bigr)\; ,
\end{equation}
as a densely defined operator on $L^2(\mathbb{R}^{dN})\otimes \GG_{\eps}(\mathfrak{h})$.

In order to prove a weak convergence as in \cref{eq:1} for $T \mathrm{Op}_{\varepsilon}^{\mathrm{Wick}}(\mathcal{F}) S $, with $ \mathcal{S}, \mathcal{T} \in \mathscr{B}(\mathscr{H}) $, we need suitable hypotheses on the quantum state $\Gamma_{\varepsilon}$, and some preparatory results. The following condition ensures that all the quasi-classical Wigner measures corresponding to a state $\Gamma_{\varepsilon}\in \mathscr{L}^1_{+,1}\bigl(\mathscr{H} \otimes \mathscr{K}_{\eps} \bigr)$ are concentrated as Radon $L^2$-state-valued
probability measures on $\mathfrak{h}$. Recall the definition \eqref{eq: weyl} of the Weyl operator $ W_{\eps}(\eta) $, $ \eta \in \mathfrak{h} $, and the Fourier transform \eqref{eq: fourier transform} of a measure $ \mm \in \mathscr{M}(\mathfrak{h}; \mathscr{L}^1_{+}(\mathscr{H})) $. Recall that in this section and the rest of the paper we are going to consider only the convergence defined in \cref{def: qc convergence} and thus we simply write $ \Gamma_{\eps} \xrightarrow[\eps \to 0]{} \mm $ instead of $  \Gamma_{\eps} \xrightarrow[\eps \to 0]{\mathrm{qc}} \mm $, to simplify the notation.

	\begin{proposition}[Convergence of quantum to classical states]
  		\label{prop:8}
  		\mbox{}	\\
  		Let $ \Gamma_{\varepsilon}\in\mathscr{L}^1_{+,1}\bigl(\mathscr{H} \otimes \mathscr{K}_{\eps} \bigr) $ be such that there exist $\delta>0$, so that
  		\begin{equation}
    			\label{eq:15}
    			\Tr\bigl(\Gamma_{\varepsilon}{(\mathrm{d}\GG_{\varepsilon}(1)+1)}^{\delta} \bigr)\leq C\; .
 		 \end{equation}
                 Then, there exists at least one subsequence $ \lf\{\varepsilon_n\ri\}_{n \in \N} $ and a
                 $\mathscr{H}$-state-valued cylindrical measure $\mm$ (that may depend on the sequence)
                 such that \beq \Gamma_{\eps_n} \xrightarrow[n \to +\infty]{} \mm, \eeq in the sense of Definition
                 \ref{def: qc convergence}. Furthermore, all cluster points $ \mm $ of $\Gamma_{\varepsilon}$ are
                 state-valued Radon measures on $\mathfrak{h}$, and, for any $0\leq \delta'\leq \delta$, there exists $
                 C_{\delta'}\leq C_{\delta}$, with $C_0=1$, such that
                 \begin{equation}
                   \label{eq:19}
    			\lf\| \int_{\mathfrak{h}}^{}  \mathrm{d}\mu_{\mm}(z) \: \gamma_{\mm}(z) \lf( \lVert z  \rVert_{\mathfrak{h}}^2 +1 \ri)^{\delta'} \ri\|_{\mathscr{L}^1(\mathscr{H})}^{}= \int_{\mathfrak{h}}^{} \mathrm{d}\mu_{\mm}(z) \lf(\lVert z  \rVert_{\mathfrak{h}}^2+1 \ri)^{\delta'} \leq C_{\delta'} \; .
                      \end{equation}
                      If in addition $\Gamma_{\varepsilon}$ satisfies either Assumption \eqref{eq: ass1} or
                        \eqref{eq: ass2}, then $\mathfrak{m}$ is a probability measure, \emph{i.e.},
                      \begin{equation}
                        \label{eq:12}
                        \lf\| \int_{\mathfrak{h}}^{}  \mathrm{d}\mu_{\mm}(z) \: \gamma_{\mm}(z)  \ri\|_{\mathscr{L}^1(\mathscr{H})}^{}= \int_{\mathfrak{h}}^{} \mathrm{d}\mu_{\mm}(z) =1\; .
                      \end{equation}
	\end{proposition}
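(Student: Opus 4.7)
The plan is to prove \cref{prop:8} in three stages, following the Ammari--Nier extraction scheme adapted to the state-valued setting.

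\textbf{Stage 1 (extraction of a convergent subsequence).} Since $W_{\varepsilon}(\eta)$ is unitary and $\Gamma_{\varepsilon}$ is a normalized state, $\{\widehat{\Gamma}_{\varepsilon}(\eta)\}_{\varepsilon\in(0,1)}$ is uniformly bounded by $1$ in $\mathscr{L}^1(\mathscr{H})=(\mathscr{L}^{\infty}(\mathscr{H}))^{*}$. Separability of $\mathfrak{h}$, Banach--Alaoglu and a standard diagonal extraction then yield a subsequence $\{\varepsilon_n\}_{n\in \mathbb{N}}$ and a map $L$ defined on a countable dense subset $\mathfrak{h}_0\subset \mathfrak{h}$ such that $\widehat{\Gamma}_{\varepsilon_n}(\eta)\xrightarrow[n\to +\infty]{\mathrm{w}*}L(\eta)$ for every $\eta\in \mathfrak{h}_0$. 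To extend this convergence to all $\eta\in \mathfrak{h}$ the plan is to combine the Weyl CCR with the standard bound
\[
\bigl\|(W_{\varepsilon}(\eta_1)-W_{\varepsilon}(\eta_2))(\mathrm{d}\mathcal{G}_{\varepsilon}(1)+1)^{-1/2}\bigr\|_{\mathscr{B}(\mathscr{K}_{\varepsilon})}\leq C\sqrt{\varepsilon}\,\lVert\eta_1-\eta_2\rVert_{\mathfrak{h}},
\]
together with \eqref{eq:15} via Schatten--H\"older (or real interpolation in Schatten classes when $\delta<\tfrac{1}{2}$), producing a uniform-in-$\varepsilon$ H\"older modulus of continuity for $\eta\mapsto \widehat{\Gamma}_{\varepsilon}(\eta)$ tested against any fixed $\mathcal{C}\in \mathscr{L}^{\infty}(\mathscr{H})$. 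Equicontinuity then transfers the pointwise weak-$*$ convergence from $\mathfrak{h}_0$ to all of $\mathfrak{h}$. The main technical difficulty lies precisely here in the regime $\delta<\tfrac{1}{2}$, since in that case \eqref{eq:15} is too weak to control half-derivatives of Weyl operators directly, and the modulus of continuity must be recovered by interpolation between the unconditional $\mathscr{L}^1$-bound and the partial number control.

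\textbf{Stage 2 (identification as a state-valued measure and Radon concentration).} The limit $L$ inherits from $\widehat{\Gamma}_{\varepsilon}$ the properties of being $\mathscr{L}^1_+(\mathscr{H})$-valued, normalized by $\lVert L(0)\rVert_{\mathscr{L}^1}\leq 1$, pointwise weak-$*$ continuous on every finite-dimensional subspace of $\mathfrak{h}$, and completely positive-definite in the sense appropriate to noncommutative Fourier transforms of state-valued measures. The plan is then to invoke the operator-valued Bochner theorem for $\mathscr{H}$-state-valued cylindrical measures developed in \cite{falconi2017arxiv}, producing a unique cylindrical measure $\mathfrak{m}$ with $\widehat{\mathfrak{m}}=L$. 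Exploiting then that $\mathrm{d}\mathcal{G}_{\varepsilon}(1)$ is the Wick quantization of the symbol $z\mapsto \lVert z\rVert_{\mathfrak{h}}^{2}$, together with semiclassical lower-semicontinuity of positive Wick symbols, I get for every $0\le \delta'\le \delta$
\[
\int_{\mathfrak{h}}\mathrm{d}\mu_{\mathfrak{m}}(z)\,(\lVert z\rVert_{\mathfrak{h}}^{2}+1)^{\delta'}\leq \liminf_{n\to +\infty}\Tr\bigl(\Gamma_{\varepsilon_n}(\mathrm{d}\mathcal{G}_{\varepsilon_n}(1)+1)^{\delta'}\bigr)\leq C_{\delta'}.
\]
The finiteness of all such moments forces $\mathfrak{m}$ to be countably additive and tight on $\mathfrak{h}$, hence Radon; the case $\delta'=0$ gives $C_0=\mu_{\mathfrak{m}}(\mathfrak{h})=\lVert \mathfrak{m}(\mathfrak{h})\rVert_{\mathscr{L}^1}\leq 1$.

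\textbf{Stage 3 (mass preservation under \eqref{eq: ass1} or \eqref{eq: ass2}).} As noted in the footnote to \cref{rem: assumptions}, \eqref{eq: ass2} implies \eqref{eq: ass1}, so it is enough to handle the latter. Since $\mathcal{A}>0$ and $\mathcal{A}^{-1}\in \mathscr{L}^{\infty}(\mathscr{H})$, the operator $\mathcal{A}$ has purely discrete spectrum accumulating only at $+\infty$, so the spectral projection $P_n$ of $\mathcal{A}$ onto $[0,n]$ has finite rank and therefore belongs to $\mathscr{L}^{\infty}(\mathscr{H})$. The weak-$*$ convergence of $\gamma_{\varepsilon_k}=\widehat{\Gamma}_{\varepsilon_k}(0)$ to $\widehat{\mathfrak{m}}(0)$ gives, for each fixed $n$, $\tr_{\mathscr{H}}(P_n\gamma_{\varepsilon_k})\to \tr_{\mathscr{H}}(P_n\widehat{\mathfrak{m}}(0))$, while
\[
\tr_{\mathscr{H}}((1-P_n)\gamma_{\varepsilon})\leq \tfrac{1}{n}\tr_{\mathscr{H}}(\mathcal{A}\gamma_{\varepsilon})\leq \tfrac{C}{n}
\]
uniformly in $\varepsilon$. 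Combining these and letting $n\to +\infty$ yields $\tr_{\mathscr{H}}(\widehat{\mathfrak{m}}(0))=\lim_{\varepsilon\to 0}\tr_{\mathscr{H}}(\gamma_{\varepsilon})=1$, which is \eqref{eq:12}. This completes the proof outline.
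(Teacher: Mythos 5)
Your proposal is correct, and Stages 1--2 follow essentially the route the paper takes, namely the Ammari--Nier extraction-plus-Bochner scheme of \cite[Theorem 6.2]{ammari2008ahp} transplanted to operator-valued Fourier transforms tested against compacts (the paper simply cites that theorem and calls the extension immediate; your equicontinuity step is what \cref{lemma:6} encapsulates, and note that the correct modulus there is H\"older of exponent $2\delta'$ with $\delta'=\min\{\delta,\tfrac12\}$ and carries \emph{no} factor $\sqrt{\varepsilon}$ when the resolvent weight is the $\varepsilon$-scaled number operator $\mathrm{d}\GG_{\varepsilon}(1)$ --- the scaling is already absorbed; this slip is harmless since only uniformity in $\varepsilon$ is needed). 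Stage 3 is where you genuinely diverge. The paper proves \eqref{eq:12} under \eqref{eq: ass1} by first establishing, via \cref{prop:6} (subnets plus the separating family $\mathfrak{K}$), that $A^{1/2}\Gamma_{\varepsilon_n}A^{1/2}$ converges to $\mathcal{A}^{1/2}\mm\,\mathcal{A}^{1/2}$, which upgrades the convergence to the bounded (weak) sense and then yields mass one by testing with $\mathds{1}$; under \eqref{eq: ass2} it runs a separate dominated-convergence argument over the eigenbasis of $\widehat{\mm}(0)$. You instead reduce \eqref{eq: ass2} to \eqref{eq: ass1} (legitimate, by the paper's own footnote) and handle \eqref{eq: ass1} with finite-rank spectral projections $P_n=\mathds{1}_{[0,n]}(\mathcal{A})$ and the Chebyshev tail bound $\tr_{\mathscr{H}}\bigl((1-P_n)\gamma_{\varepsilon}\bigr)\leq \tfrac{1}{n}\tr_{\mathscr{H}}(\mathcal{A}\gamma_{\varepsilon})$. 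This is more elementary and entirely sufficient for \eqref{eq:12}; what it does not deliver, and what the paper's \cref{prop:6} route buys in addition, is the full bounded quasi-classical convergence $\tr_{\mathscr{H}}(\widehat{\Gamma}_{\varepsilon_n}(\eta)\mathcal{B})\to\tr_{\mathscr{H}}(\widehat{\mm}(\eta)\mathcal{B})$ for \emph{all} $\mathcal{B}\in\mathscr{B}(\mathscr{H})$ and all $\eta$, which is reused later in the proof of \cref{cor:2}. As a proof of \cref{prop:8} alone, your argument stands.
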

	
      In order to prove the last part of the above proposition, we need a couple of preparatory
          results, that will be useful as well in \cref{sec:quasi-class-analys-1}.
        	
	\begin{lemma}
  		\label{lemma:7}
  		\mbox{}	\\
  		Let $\mathcal{T}$ be a densely defined self-adjoint operator on $\mathscr{H}$, and $\mathds{1}_m(\mathcal{T})$	
  its spectral projection on the interval $[-m,m]$, $m\in \mathbb{N}$. Then, the set of operators
  		\beq
  			\label{eq: set K}
      		\mathfrak{K} := \{\mathcal{B}_m:=\mathds{1}_m(\mathcal{T}) \mathcal{B}\,\mathds{1}_m(\mathcal{T})\,,\, \mathcal{B}\in \mathscr{L}^{\infty}_+(\mathscr{H}),m\in \mathbb{N}\}\;,
   		\eeq
  		separates points in $\mathscr{L}^1_+(\mathscr{H})$ w.r.t. the weak-$*$ topology.
	\end{lemma}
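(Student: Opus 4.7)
The plan is to reduce the separation property to the standard duality $\mathscr{L}^1(\mathscr{H}) \cong (\mathscr{L}^{\infty}(\mathscr{H}))^*$, localized by the spectral projections $P_m := \mathds{1}_m(\mathcal{T})$, and then to remove the localization by sending $m \to \infty$, exploiting that $P_m \to \mathds{1}$ strongly since $\mathcal{T}$ is densely defined self-adjoint.

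First I would take $\gamma_1, \gamma_2 \in \mathscr{L}^1_+(\mathscr{H})$ such that $\tr_{\mathscr{H}}(\gamma_1 \mathcal{B}_m) = \tr_{\mathscr{H}}(\gamma_2 \mathcal{B}_m)$ for every $\mathcal{B}_m \in \mathfrak{K}$. By the cyclicity of the trace, this is equivalent to
\begin{equation*}
  \tr_{\mathscr{H}}\bigl( (P_m \gamma_1 P_m - P_m \gamma_2 P_m)\,\mathcal{B} \bigr)=0, \qquad \forall \mathcal{B} \in \mathscr{L}^{\infty}_+(\mathscr{H}),\; \forall m \in \mathbb{N}.
\end{equation*}
Any self-adjoint compact operator decomposes as a difference of two positive compact operators via its spectral resolution, and any compact operator is a complex linear combination of self-adjoint compact ones (real/imaginary parts), so $\mathscr{L}^{\infty}_+(\mathscr{H})$ linearly spans $\mathscr{L}^{\infty}(\mathscr{H})$. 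Thus the above identity extends by linearity to all $\mathcal{B} \in \mathscr{L}^{\infty}(\mathscr{H})$. Since $P_m \gamma_j P_m \in \mathscr{L}^1(\mathscr{H})$ and $\mathscr{L}^{\infty}(\mathscr{H})$ separates points of $\mathscr{L}^1(\mathscr{H})$ via the weak-$*$ duality, this yields $P_m \gamma_1 P_m = P_m \gamma_2 P_m$ for every $m\in \mathbb{N}$.

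Finally, I would send $m \to \infty$. Spectral calculus applied to the self-adjoint operator $\mathcal{T}$ gives $P_m \to \mathds{1}$ strongly. Writing the trace class decomposition $\gamma_j = \sum_k \lambda_k^{(j)} \ket{\varphi_k^{(j)}}\bra{\varphi_k^{(j)}}$, one has
\begin{equation*}
  \lVert P_m \gamma_j P_m - \gamma_j \rVert_{\mathscr{L}^1(\mathscr{H})} \leq \sum_k \lambda_k^{(j)} \bigl\lVert \ket{P_m \varphi_k^{(j)}}\bra{P_m\varphi_k^{(j)}} - \ket{\varphi_k^{(j)}}\bra{\varphi_k^{(j)}} \bigr\rVert_{\mathscr{L}^1(\mathscr{H})}.
\end{equation*}
Each summand is bounded by $2\lVert \varphi_k^{(j)}\rVert^2 = 2$, and converges to $0$ as $m\to \infty$ because $P_m \varphi \to \varphi$ in $\mathscr{H}$; by dominated convergence (against $\sum_k \lambda_k^{(j)} < \infty$), it follows that $P_m \gamma_j P_m \to \gamma_j$ in trace norm. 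Passing to the limit in $P_m \gamma_1 P_m = P_m \gamma_2 P_m$ gives $\gamma_1 = \gamma_2$, proving that $\mathfrak{K}$ separates the points of $\mathscr{L}^1_+(\mathscr{H})$.

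The only mildly delicate point is the trace-norm convergence $P_m \gamma_j P_m \to \gamma_j$, but this is a routine dominated convergence applied to the spectral expansion. All remaining steps are a direct bookkeeping of cyclicity of the trace and the standard identification $\mathscr{L}^1(\mathscr{H}) = (\mathscr{L}^{\infty}(\mathscr{H}))^*$.
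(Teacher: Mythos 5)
Your proof is correct, but it follows a genuinely different route from the paper's. The paper fixes a single positive $\gamma=\sum_j\lambda_j\ket{\psi_j}\bra{\psi_j}$ annihilated by all of $\mathfrak{K}$ and exploits positivity twice: since $\lambda_j\geq 0$ and $\mathcal{B}_m\geq 0$, each diagonal term $\meanlrlr{\psi_j}{\mathcal{B}_m}{\psi_j}$ must vanish separately; letting $m\to\infty$ and choosing $\mathcal{B}=\ket{\psi_j}\bra{\psi_j}$ then forces $\psi_j=0$. You instead localize with $P_m$, observe that $\mathscr{L}^{\infty}_+(\mathscr{H})$ linearly spans $\mathscr{L}^{\infty}(\mathscr{H})$, invoke the duality $\mathscr{L}^1(\mathscr{H})\cong(\mathscr{L}^{\infty}(\mathscr{H}))^*$ to get $P_m\gamma_1P_m=P_m\gamma_2P_m$, and remove the localization by the trace-norm convergence $P_m\gamma_jP_m\to\gamma_j$ (your dominated-convergence argument for this step is sound; weak-$*$ convergence of $P_m\gamma_jP_m$ would in fact already suffice). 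The trade-off is worth noting: the paper's argument is shorter and purely elementary, but as written it only shows that the zero operator is the unique positive trace-class operator annihilated by $\mathfrak{K}$; separating two distinct elements of $\mathscr{L}^1_+(\mathscr{H})$ requires handling their difference, which need not be positive, and this is exactly what your duality argument does without any extra work. Indeed, your proof never uses positivity of $\gamma_1,\gamma_2$ and therefore establishes separation on all of $\mathscr{L}^1(\mathscr{H})$, which is the form in which the lemma is actually invoked later (for the non-positive Fourier transforms $\widehat{\Gamma}_{\varepsilon}(\eta)$ in the proof of \cref{prop:6}). So your version is, if anything, slightly more complete for the intended application.
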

	
	\begin{proof}
  		Let $\gamma \in
  \mathscr{L}^1_+(\mathscr{H})$ be such that, for all $\mathcal{B}_m\in \mathfrak{K}$,
  		\begin{equation*}
    			\tr_{\mathscr{H}}(\gamma \mathcal{B}_m)=0\; .
  		\end{equation*}
  		Let $\sum_j^{}\lambda_j \lvert \psi_j\rangle\langle \psi_j\rvert$ be the decomposition of $ \gamma $. Then, it follows that for all $j\in \mathbb{N}$,
  		\begin{equation*}
    			\sum_{j \in \N} \lambda_j \meanlrlr{\psi_j}{\mathcal{B}_m}{\psi_j}_{\mathscr{H}} = 0 \Longrightarrow  \meanlrlr{\psi_j}{\mathcal{B}_m}{\psi_j}_{\mathscr{H}} = 0,	\qquad		\forall j \in \N,
  		\end{equation*}
  		by positivity of $ \mathcal{B} $.
  		Taking the limit $m\to + \infty$ of the last equation, one obtains that for any $\mathcal{B}\in
  \mathscr{L}_+^{\infty}(\mathscr{H})$ and $j\in \mathbb{N}$,
  		\begin{equation}
    			\meanlrlr{\psi_j}{\mathcal{B}}{\psi_j}_{\mathscr{H}} = 0\;,
  		\end{equation}
  		but, taking in particular $ \mathcal{B} = \ket{\psi_j} \bra{\psi_j} $, we get $\psi_j=0$ for any $j\in \mathbb{N}$, and therefore $ \gamma =0$.
	\end{proof}
	
        \begin{proposition}[Convergence of general state sequences]
          \label{prop:6}
          \mbox{}	\\
        	Let $\Gamma_{\varepsilon}\in \mathscr{L}_{+,1}^1(\mathscr{H}\otimes \mathscr{K}_{\varepsilon})$ be such that:
          \begin{itemize}
          \item assumption \eqref{eq:20} is satisfied,
            
          \item $\Gamma_{\varepsilon_n}\xrightarrow[n\to + \infty]{} \mathfrak{m}$,
            
          \item $\bigl\lvert\tr_{\mathscr{H}} (\mathcal{T} \gamma_{\varepsilon}\mathcal{T})\bigr\rvert\leq C $ for some self-adjoint
            $\mathcal{T} \in \mathcal{L}(\mathscr{H}) $, where $\gamma_{\varepsilon}$ is  given by \eqref{eq: reduced}.
          \end{itemize}
          Then,
          \begin{equation}
            \label{eq:24}
            T\,\Gamma_{\varepsilon_n}\,T\xrightarrow[n\to +\infty]{} \mathcal{T}\,\mathfrak{m}\, \mathcal{T}\; ,
          \end{equation}
          where the latter is defined by the Radon-Nikod\'{y}m decomposition $
            \lf(\mu_{\mm}, \mathcal{T}\, \gamma_{\mm}(z)\, \mathcal{T} \ri)\; $.
        \end{proposition}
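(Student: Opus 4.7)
My plan is to verify \cref{def: qc convergence} directly for the pair $(T\Gamma_{\varepsilon_n}T, \mathcal{T}\mathfrak{m}\mathcal{T})$, i.e., that $\widehat{T\Gamma_{\varepsilon_n}T}(\eta) \xrightarrow{w*} \widehat{\mathcal{T}\mathfrak{m}\mathcal{T}}(\eta)$ pointwise in $\eta\in\mathfrak{h}$. The first step is a commutation reduction: since $T=\mathcal{T}\otimes 1$ commutes with $1\otimes W_{\varepsilon}(\eta)$, the definition \eqref{eq: nc fourier transform} factorises as $\widehat{T\Gamma_{\varepsilon}T}(\eta)=\mathcal{T}\widehat{\Gamma}_{\varepsilon}(\eta)\mathcal{T}$, and by the prescribed Radon--Nikod\'ym data $(\mu_{\mathfrak{m}}, \mathcal{T}\gamma_{\mathfrak{m}}(z)\mathcal{T})$ together with \eqref{eq: fourier transform}, the target reads $\widehat{\mathcal{T}\mathfrak{m}\mathcal{T}}(\eta)=\mathcal{T}\widehat{\mathfrak{m}}(\eta)\mathcal{T}$. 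The problem therefore reduces to weak-$*$ convergence of $\mathcal{T}\widehat{\Gamma}_{\varepsilon_n}(\eta)\mathcal{T}$ to $\mathcal{T}\widehat{\mathfrak{m}}(\eta)\mathcal{T}$ in $\mathscr{L}^1(\mathscr{H})$.

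Next I would extract a uniform $\mathscr{L}^1$-bound from the third hypothesis: since $T\Gamma_{\varepsilon}T$ is positive and trace class with $\lVert T\Gamma_{\varepsilon}T\rVert_{\mathscr{L}^1}=\tr_{\mathscr{H}}(\mathcal{T}\gamma_{\varepsilon}\mathcal{T})\le C$ (by cyclicity), and $W_{\varepsilon}(\eta)$ is unitary, one gets $\lVert\mathcal{T}\widehat{\Gamma}_{\varepsilon}(\eta)\mathcal{T}\rVert_{\mathscr{L}^1(\mathscr{H})}\le C$ uniformly in $\varepsilon$ and $\eta$. Banach--Alaoglu then produces, for each fixed $\eta$, a weak-$*$ cluster point $\mathcal{L}(\eta)\in\mathscr{L}^1(\mathscr{H})$ along some subsequence. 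The same uniform bound, combined with \cref{rem: reduced} applied at $\eta=0$ and a lower semicontinuity argument, also guarantees that $\mathcal{T}\widehat{\mathfrak{m}}(\eta)\mathcal{T}=\int\mathrm{d}\mu_{\mathfrak{m}}(z)\,\mathcal{T}\gamma_{\mathfrak{m}}(z)\mathcal{T}\,e^{2i\Re\braket{\eta}{z}_{\mathfrak{h}}}$ is itself a genuine element of $\mathscr{L}^1(\mathscr{H})$, so the statement makes sense.

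To identify $\mathcal{L}(\eta)$ I would test against the separating family $\mathfrak{K}$ provided by \cref{lemma:7}. For $\mathcal{B}_m=\mathds{1}_m(\mathcal{T})\mathcal{B}\mathds{1}_m(\mathcal{T})\in\mathfrak{K}$, the spectral truncation $\mathcal{T}_m:=\mathcal{T}\mathds{1}_m(\mathcal{T})\in\mathscr{B}(\mathscr{H})$ is bounded, hence $\mathcal{T}_m\mathcal{B}\mathcal{T}_m$ is compact, and cyclicity plus the initial quasi-classical convergence yield
\[
\tr(\mathcal{B}_m\,\mathcal{T}\widehat{\Gamma}_{\varepsilon_n}(\eta)\mathcal{T})=\tr(\mathcal{T}_m\mathcal{B}\mathcal{T}_m\,\widehat{\Gamma}_{\varepsilon_n}(\eta))\xrightarrow[n\to\infty]{}\tr(\mathcal{T}_m\mathcal{B}\mathcal{T}_m\,\widehat{\mathfrak{m}}(\eta))=\tr(\mathcal{B}_m\,\mathcal{T}\widehat{\mathfrak{m}}(\eta)\mathcal{T}).
\]
Thus $\tr(\mathcal{B}_m\mathcal{L}(\eta))=\tr(\mathcal{B}_m\,\mathcal{T}\widehat{\mathfrak{m}}(\eta)\mathcal{T})$ for every $\mathcal{B}_m\in\mathfrak{K}$. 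Since $\mathds{1}_m(\mathcal{T})\to 1$ strongly, $\mathcal{B}_m\to\mathcal{B}$ in operator norm for any compact $\mathcal{B}$; combined with the uniform $\mathscr{L}^1$-bound, this extends the identity to all of $\mathscr{L}^{\infty}(\mathscr{H})$ (first for positive compact $\mathcal{B}$, then by linearity/decomposition for arbitrary compact ones), forcing $\mathcal{L}(\eta)=\mathcal{T}\widehat{\mathfrak{m}}(\eta)\mathcal{T}$. Uniqueness of the cluster point then upgrades the subsequential convergence to the full sequence.

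The main obstacle I anticipate is the bookkeeping forced by the potential unboundedness of $\mathcal{T}$: without the spectral truncation $\mathds{1}_m(\mathcal{T})$, neither the test operators $\mathcal{T}\mathcal{B}\mathcal{T}$ nor the prospective limit $\mathcal{T}\widehat{\mathfrak{m}}(\eta)\mathcal{T}$ are \emph{a priori} well-defined, and the hypothesis $\tr_{\mathscr{H}}(\mathcal{T}\gamma_{\varepsilon}\mathcal{T})\le C$ plays a double role --- it provides both the uniform $\mathscr{L}^1$ control enabling Banach--Alaoglu and, through the Fatou-type passage to the limit at $\eta=0$, the $\mu_{\mathfrak{m}}$-integrability of $z\mapsto\mathcal{T}\gamma_{\mathfrak{m}}(z)\mathcal{T}$ required to make sense of $\mathcal{T}\mathfrak{m}\mathcal{T}$ as a state-valued measure.
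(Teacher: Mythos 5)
Your proof is correct and rests on the same pivotal device as the paper's: testing against the family $\mathfrak{K}$ from \cref{lemma:7}, and exploiting that $\mathcal{T}_m\mathcal{B}\mathcal{T}_m$ is compact so that the assumed convergence $\Gamma_{\varepsilon_n}\to\mathfrak{m}$ applies. The difference is in the surrounding machinery. The paper first invokes the general theory of cylindrical state-valued Wigner measures to extract a generalized (net-indexed) subsequence whose limit is a priori a cylindrical state-valued measure $\mathfrak{n}$, tested both against Weyl quantizations of smooth cylindrical symbols and against compact operators; it then identifies $\mathfrak{n}=\mathcal{T}\mathfrak{m}\mathcal{T}$ and the fibrewise limit $\gamma(\eta)=\mathcal{T}\widehat{\mathfrak{m}}(\eta)\mathcal{T}$ by appealing to the separating property of $\mathfrak{K}$. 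You instead work pointwise in $\eta$ with plain Banach--Alaoglu on the unit ball of $\mathscr{L}^1(\mathscr{H})$, identify the cluster point through $\mathfrak{K}$, and then extend from $\mathfrak{K}$ to all of $\mathscr{L}^{\infty}(\mathscr{H})$ by the operator-norm convergence $\mathds{1}_m(\mathcal{T})\mathcal{B}\mathds{1}_m(\mathcal{T})\to\mathcal{B}$ together with the uniform $\mathscr{L}^1$-bound. Your version is more elementary (no nets, no cylindrical-measure apparatus for the existence step) and arguably more careful at the separating-family step, since it handles the fact that $\gamma(\eta)-\mathcal{T}\widehat{\mathfrak{m}}(\eta)\mathcal{T}$ is not positive for $\eta\neq0$ by norm-density rather than by a bare appeal to \cref{lemma:7}, which as stated concerns only $\mathscr{L}^1_+(\mathscr{H})$. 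The trade-off is that your route verifies \cref{def: qc convergence} directly but does not by itself produce the limit as a measure; you patch this by observing that $\mathcal{T}\widehat{\mathfrak{m}}(\eta)\mathcal{T}$ is the Fourier transform of the explicit Radon--Nikod\'ym data $(\mu_{\mathfrak{m}},\mathcal{T}\gamma_{\mathfrak{m}}(z)\mathcal{T})$, with $\mu_{\mathfrak{m}}$-integrability of the latter guaranteed by the Fatou-type bound at $\eta=0$ — that step should be spelled out a bit more if this were to be a full proof, since it is what makes the displayed Radon--Nikod\'ym pair a bona fide element of $\mathscr{M}(\mathfrak{h};\mathscr{L}^1_+(\mathscr{H}))$.
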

        
        \begin{proof}
          Since $\bigl\lvert\tr_{\mathscr{H}} (\mathcal{T} \gamma_{\varepsilon}\mathcal{T})\bigr\rvert\leq C$, $T\, \Gamma_{\varepsilon}\, T$ is a
            quasi-classical family of states and thus it follows that there exists a generalized subsequence
            $(\Gamma_{\varepsilon_{n_\alpha}})_{\alpha\in A}$ of $\Gamma_{\varepsilon_n}$ and a cylindrical state-valued measure $\mathfrak{n}$ such that
            (see \cite{falconi2017arxiv} for additional details):
          \begin{itemize}
          \item $T\,\Gamma_{\varepsilon_{n_\alpha}}\, T$ converges to $\mathfrak{n}$ when tested on the Weyl quantization
              of smooth cylindrical symbols,
            
          \item $\tr_{\mathscr{H}}\bigl(\mathcal{T}\,\widehat{\Gamma}_{\varepsilon_{n_\alpha}}(\eta)\, \mathcal{T}
              \mathcal{B}\bigr)$ converges to $\tr_{\mathscr{H}}\bigl(\gamma(\eta)\mathcal{B}\bigr)$ for all $\eta\in
              \mathfrak{h}$ and $\mathcal{B}\in \mathscr{L}^{\infty}(\mathscr{H})$, where $\gamma(\eta)\in
              \mathscr{L}^1(\mathscr{H})$ has yet to be determined.
          \end{itemize}
          Now, let $\mathcal{K}\in \mathfrak{K}$. Then, $\mathcal{T}\mathcal{K}\mathcal{T}\in
            \mathscr{L}^{\infty}(\mathscr{H})$ and therefore
          \begin{equation*}
            \lim_{\alpha\in A}\tr_{\mathscr{H}}\bigl(\mathcal{T}\,\widehat{\Gamma}_{\varepsilon_{n_\alpha}}(\eta)\, \mathcal{T}\mathcal{K}\bigr)=\lim_{\alpha\in A}\tr_{\mathscr{H}}\bigl(\widehat{\Gamma}_{\varepsilon_{n_\alpha}}(\eta)\, \mathcal{T}\mathcal{K}\mathcal{T}\bigr)= \tr_{\mathscr{H}}\bigl(\widehat{\mathfrak{m}}(\eta)\,\mathcal{T}\mathcal{K}\mathcal{T}\bigr)=\tr_{\mathscr{H}}\bigl(\mathcal{T}\,\widehat{\mathfrak{m}}(\eta)\,\mathcal{T}\mathcal{K}\bigr)\; .
          \end{equation*}
          However, the set $\mathfrak{K}$ separates points by \cref{lemma:7}, and therefore we can
            conclude that
          \begin{equation*}
            \gamma(\eta)= \mathcal{T}\,\widehat{\mathfrak{m}}(\eta)\,\mathcal{T}\; .
          \end{equation*}
          
          On the other hand, an analogous reasoning when testing with the Weyl quantization of smooth
            cylindrical symbols yields that
          \begin{equation*}
            \mathfrak{n}=\mathcal{T}\,\mathfrak{m}\,\mathcal{T}\; .
          \end{equation*}
          Therefore, we conclude that $T\,\Gamma_{\varepsilon_{n_{\alpha}}}\, T\,\xrightarrow[\alpha\in A]{}
            \mathcal{T}\,\mathfrak{m}\,\mathcal{T}$. Finally, let $\Gamma_{\varepsilon_{n_{\alpha'}}}$ be any generalized
            subsequence such that for any $\eta\in \mathfrak{h}$ and $\mathcal{B}\in
            \mathscr{L}^{\infty}(\mathscr{H})$
          \begin{equation*}
            \lim_{\alpha'\in A'}\tr_{\mathscr{H}}\bigl(\mathcal{T}\,\widehat{\Gamma}_{\varepsilon_{n_{\alpha'}}}(\eta)\,\mathcal{T}\,\mathcal{B}\bigr)=\tr_{\mathscr{H}}\bigl(\gamma'(\eta) \mathcal{B}\bigr)\; .
          \end{equation*}
          Then, repeating the above reasoning it follows that
            $\gamma'(\eta)=\mathcal{T}\,\widehat{\mathfrak{m}}(\eta)\,\mathcal{T}$. In other words, the cluster point is
            unique, and therefore $T\,\Gamma_{\varepsilon_n}\,T\xrightarrow[n\to +\infty]{} \mathcal{T}\,\mathfrak{m}\,
            \mathcal{T}$.
        \end{proof}
	
	\begin{proof}[Proof of \cref{prop:8}]
          The key result about the weak-$*$ convergence in the semiclassical case is proven in
          \cite[Theorem 6.2]{ammari2008ahp}. The generalization to the quasi-classical setting is trivial:
          for all compact operators $ \mathcal{B} \in \mathscr{L}^{\infty}(\mathscr{H})$ and all $\eta\in
          \mathfrak{h}$, one immediately gets
  		\begin{equation}
    			\label{eq:2}
    			\lim_{n\to \infty}\Tr\Bigl(\Gamma_{\varepsilon_n}\, W_{\eps_n}(\eta) \, B \Bigr)=\tr_{\mathscr{H}} \lf( \int_{\mathfrak{h}}^{} \mathrm{d}\mu_{\mm}(z) \: \gamma_{\mm}(z)\, e^{2i\Re \braket{\eta}{z}_{\mathfrak{h}}} \, \mathcal{B} \ri) = \tr_{\mathscr{H}} \lf( \widehat{\mm}(\eta)\, \mathcal{B} \ri) \; ,
  		\end{equation}
 		 where $ B:= \mathcal{B}\otimes 1$. Moreover, the Fourier transform $\widehat{\mm}:\mathfrak{h}\to
                 \mathscr{L}^1(L^2)$ identifies uniquely the measure $\mm$ by Bochner's theorem
                 \cite{falconi2017arxiv}. The bound \eqref{eq:19} is also an immediate extension of
                   \cite[Theorem 6.2]{ammari2008ahp} to the quasi-classical case.

                 It remains to prove that under either Assumption \eqref{eq: ass1} or \eqref{eq: ass2} $\mathfrak{m} \in \mathscr{P}(\mathfrak{h};\mathscr{L}_+^1(\mathscr{H}))$. Let us start assuming \eqref{eq:
                     ass1}. Then, by \cref{prop:6}, for any \emph{bounded} $\mathcal{B}\in
                   \mathscr{B}(\mathscr{H})$, and $\eta\in \mathfrak{h}$:
                 \begin{multline*}
                   	\lim_{n\to +\infty}\tr_{\mathscr{H}} \lf( \widehat{\Gamma}_{\varepsilon_n}(\eta) B \ri)=\lim_{n\to
                      +  \infty}\tr_{\mathscr{H}} \lf( A^{\frac{1}{2}}\,\widehat{\Gamma}_{\varepsilon_n}(\eta)\,A^{\frac{1}{2}} \:
                     A^{-\frac{1}{2}}B A^{-\frac{1}{2}}
                     \ri)\\
                  	=\tr_{\mathscr{H}} \lf(\mathcal{A}^{\frac{1}{2}}\, \widehat{\mathfrak{m}}(\eta)\,
                     \mathcal{A}^{\frac{1}{2}}\: \mathcal{A}^{-\frac{1}{2}}\mathcal{B}
                     \mathcal{A}^{-\frac{1}{2}} \ri)=\tr_{\mathscr{H}}\lf(\widehat{\mathfrak{m}}(\eta)\,
                     \mathcal{B} \ri)\; .
                 \end{multline*}
                 In particular, for $\eta=0$ and $\mathcal{B}=\mathds{1}$, 
                 \begin{equation*}
                  	1=\lim_{n\to +\infty}\tr_{\mathscr{H}} \lf( \gamma_{\varepsilon_n}\ri)=\lim_{n\to + \infty}\tr_{\mathscr{H}} \lf( \widehat{\Gamma}_{\varepsilon_n}(0)\ri)=\tr_{\mathscr{H}} \lf(\widehat{\mathfrak{m}}(0) \ri)=\tr_{\mathscr{H}} \lf(\mathfrak{m}(\mathfrak{h}) \ri)\; .
                 \end{equation*}
                 
                 If we instead assume \eqref{eq: ass2}, the proof goes as follows. Since
                   $\Gamma_{\varepsilon_n}\xrightarrow[n\to+ \infty]{}\mathfrak{m}$, it follows that $\gamma_{\varepsilon_n}$ converges in weak
                   operator topology to $\hat{\mathfrak{m}}(0)$, by compactness of rank-one operators. Let
                   $ \lf\{ m_j \ri\}_{j\in \mathbb{N}}$ be the eigenvalues of $\widehat{\mathfrak{m}}(0)$, and $ \lf\{ \psi_j \ri\}_{j\in \mathbb{N}}$ the
                   corresponding eigenvectors. By the aforementioned weak operator convergence, it follows
                   that for any $j\in \mathbb{N}$
                 \begin{equation}
                   	\lim_{n\to +\infty}  \meanlrlr{\psi_j }{\gamma_{\varepsilon_n}}{\psi_j} _{\mathscr{H}}=  \meanlrlr{\psi_j}{\widehat{\mathfrak{m}}(0)}{\psi_j}_{\mathscr{H}}=m_j\; .
                 \end{equation}
                 On the other hand, by \eqref{eq: ass2}, we can apply Lebesgue's dominated convergence
                   theorem to the series
                 \begin{equation*}
                   \sum_{j=0}^{+\infty} \meanlrlr{\psi_j}{\gamma_{\varepsilon_n}}{\psi_j}_{\mathscr{H}}\; ,
                 \end{equation*}
                 since $ \meanlrlr{\psi_j}{\gamma_{\varepsilon_n}}{\psi_j}_{\mathscr{H}}\leq \meanlrlr{\psi_j}{\gamma}{\psi_j}_{\mathscr{H}}$, and
                 \begin{equation*}
                  \tr_{\mathscr{H}} \lf(\gamma \ri)=\sum_{j=0}^{+\infty} \meanlrlr{\psi_j}{\gamma}{\psi_j}_{\mathscr{H}}<+\infty\; .
                 \end{equation*}
                Therefore,
                 \begin{equation*}
                  	1=\lim_{n\to +\infty}\sum_{j=0}^{+\infty} \meanlrlr{\psi_j}{ \gamma_{\varepsilon_n}}{\psi_j}_{\mathscr{H}}=\sum_{j=0}^{+\infty}\lim_{n\to +\infty}\meanlrlr{\psi_j}{ \gamma_{\varepsilon_n}}{\psi_j}_{\mathscr{H}}=\sum_{j=0}^{+\infty}m_j=\tr_{\mathscr{H}} \lf( \widehat{\mathfrak{m}}(0) \ri)\; .
                 \end{equation*}
	\end{proof}

It is clear that together with \cref{prop:8}, all the other results that hold in semiclassical analysis for
infinite dimensions can be adapted to quasi-classical analysis, considering the semiclassical symbols and
corresponding quantizations in tensor product with the identity acting on $ \mathscr{H} $, replacing Wigner scalar
measures with state-valued Wigner measures, and replacing convergence of the trace with
$\mathscr{L}^1(\mathscr{H})$-weak-$*$ convergence of the partial trace, \emph{i.e.}, one should test the partial
traces and integrals with compact operators. 

	\begin{proposition}[Convergence of expectation values]
  		\label{prop:10}
  		\mbox{}	\\
  		Let $ \mathcal{F} \in \mathscr{S}_{\ell,m}$, and let $\Gamma_{\varepsilon}\in \mathscr{L}^1_{+,1}(\mathscr{H} \otimes \mathscr{K}_{\eps})$. Assume that there exist $\delta> \frac{n+m}{2}$, such that
  		\begin{equation}
   			\label{eq:6}
    			\Tr \lf( \Gamma_{\varepsilon}{(\mathrm{d}\GG_{\varepsilon}(1)+1)}^{\delta} \ri) \leq C\; .
  		\end{equation}
  		Then, if $\Gamma_{\varepsilon_n} \xrightarrow[n \to + \infty]{} \mm$, for any $ \mathcal{S}, \mathcal{T} \in \mathscr{B}(\mathscr{H})$, $ \mathcal{B} \in \mathscr{L}^{\infty}(\mathscr{H}) $ and $\eta\in \mathfrak{h}$,
  		\bml{
    			\label{eq:7}
   			 \disp\lim_{n\to +\infty} \Tr \lf(\Gamma_{\varepsilon_n}\, T\mathrm{Op}_{\varepsilon_n}^{\mathrm{Wick}}(\mathcal{F})S\, \bigl(\mathcal{B}\otimes W_{\varepsilon_n}(\eta)\bigr) \ri) \\
   			 = \tr_{\mathscr{H}} \lf[ \disp\int_{\mathfrak{h}}^{}\mathrm{d}\mu_{\mm}(z) \: \gamma_{\mm}(z)\, \mathcal{T} \mathcal{F}(z) \mathcal{S}\, e^{2i\Re \braket{\eta}{z}_{\mathfrak{h}}}\,   \mathcal{B} \ri];
		}
		with analogous statement when the positions of $ \Gamma_{\varepsilon_n} $ and $ T\mathrm{Op}_{\varepsilon_n}^{\mathrm{Wick}}(\mathcal{F})S $ are exchanged.
	\end{proposition}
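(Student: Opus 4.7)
The plan is to reduce the statement to its scalar analogue by approximating the $\mathfrak{h}$-valued coefficients $\lambda_k$ by simple functions in $L^\infty(\mathbb{R}^d;\mathfrak{h})$; this disentangles the particle and field degrees of freedom inside the Wick operator, so that the field side reduces to Wick monomials with \emph{fixed} test vectors, to which the scalar semiclassical machinery directly applies. By separability of $\mathfrak{h}$, for each $k\in\{1,\dotsc,\ell+m\}$ one finds simple approximants
\begin{equation*}
\lambda_k^{(N)}(\xv) = \sum_{i=1}^{N_k}\one_{A_{k,i}^{(N)}}(\xv)\, v_{k,i}^{(N)},\qquad v_{k,i}^{(N)}\in\mathfrak{h},
\end{equation*}
such that $\tau_N:=\max_k\|\lambda_k-\lambda_k^{(N)}\|_{L^\infty(\mathbb{R}^d;\mathfrak{h})}\xrightarrow[N\to\infty]{}0$. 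Substituting these into $\mathcal{F}$ yields a symbol $\mathcal{F}_N$ whose Wick quantization is a finite sum of tensor products of a multiplication operator on $\mathscr{H}$ with a Wick monomial on $\mathscr{K}_\varepsilon$ in the fixed vectors $v_{k,i}^{(N)}$.

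The second step is a uniform-in-$\varepsilon$ control of the approximation error. Iterating the number-operator estimate $\|a^\#_\varepsilon(v)\Psi\|\leq\|v\|_{\mathfrak{h}}\|(\mathrm{d}\GG_\varepsilon(1)+\varepsilon)^{1/2}\Psi\|$ and the pull-through $\mathrm{d}\GG_\varepsilon(1)\,a^\#_\varepsilon(v) = a^\#_\varepsilon(v)(\mathrm{d}\GG_\varepsilon(1)\pm\varepsilon)$, and telescoping in the factors of $\mathcal{F}-\mathcal{F}_N$, yields
\begin{equation*}
\bigl\|\bigl(\mathrm{Op}^{\mathrm{Wick}}_{\varepsilon}(\mathcal{F}) - \mathrm{Op}^{\mathrm{Wick}}_{\varepsilon}(\mathcal{F}_N)\bigr)(\mathrm{d}\GG_\varepsilon(1)+1)^{-(\ell+m)/2}\bigr\|_{\mathscr{B}(\mathscr{H}\otimes \mathscr{K}_\varepsilon)} \leq C_{\ell,m}\,\tau_N
\end{equation*}
uniformly for $\varepsilon\in(0,1]$. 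Combining this with the hypothesis \eqref{eq:6} (with $\delta>(\ell+m)/2$), H\"older for Schatten norms, cyclicity of the trace, and the uniform boundedness of $T$, $S$, and $\mathcal{B}\otimes W_\varepsilon(\eta)$ gives
\begin{equation*}
\sup_{\varepsilon\in(0,1]}\Bigl|\Tr\bigl(\Gamma_\varepsilon\, T\bigl[\mathrm{Op}^{\mathrm{Wick}}_{\varepsilon}(\mathcal{F}) - \mathrm{Op}^{\mathrm{Wick}}_{\varepsilon}(\mathcal{F}_N)\bigr] S\, (\mathcal{B}\otimes W_\varepsilon(\eta))\bigr)\Bigr| \leq C\,\tau_N,
\end{equation*}
while the corresponding classical-side error is handled identically via the moment bound \eqref{eq:19} of \cref{prop:8} with $\delta'=(\ell+m)/2$.

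For the per-$N$ limit the expression reduces to a finite sum of terms $\Tr\bigl(\Gamma_{\varepsilon_n}\,\mathcal{C}_\alpha\otimes[a^\dagger_{\varepsilon_n}(v_{\alpha,1})\dotsm a_{\varepsilon_n}(v_{\alpha,\ell+m})W_{\varepsilon_n}(\eta)]\bigr)$, where $\mathcal{C}_\alpha = \mathcal{T}\cdot(\text{product of characteristic functions})\cdot\mathcal{S}\mathcal{B}$ is a compact operator on $\mathscr{H}$ (because $\mathcal{B}$ is). Repeated application of the Weyl relation $a^\#_\varepsilon(v)W_\varepsilon(\eta) = W_\varepsilon(\eta)\bigl(a^\#_\varepsilon(v)+\OO(\varepsilon)\bigr)$ turns the field factor into $W_{\varepsilon_n}(\eta)$ followed by a Wick monomial in the fixed $v_{\alpha,k}$, up to $\OO(\varepsilon_n)$ correction terms again controlled by \eqref{eq:6}. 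The $\mathscr{L}^1(\mathscr{H})$-weak-$*$ limit
\begin{equation*}
\tr_{\mathscr{K}_{\varepsilon_n}}\!\bigl(\Gamma_{\varepsilon_n}\,W_{\varepsilon_n}(\eta)\,a^\dagger_{\varepsilon_n}(v_1)\dotsm a_{\varepsilon_n}(v_{\ell+m})\bigr) \longrightarrow \int_{\mathfrak{h}} \mathrm{d}\mu_\mm(z)\,\gamma_\mm(z)\, e^{2i\Re\braket{\eta}{z}_{\mathfrak{h}}}\braketr{z}{v_1}_{\mathfrak{h}}\dotsm \braketl{v_{\ell+m}}{z}_{\mathfrak{h}}
\end{equation*}
is then the state-valued analogue of the scalar semiclassical convergence for Wick polynomials, whose proof follows the very argument of \cref{prop:8} combined with the moment bound \eqref{eq:19} (compare with the scalar case in \cite{ammari2008ahp}). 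Pairing against the compact operator $\mathcal{C}_\alpha$ and summing over $\alpha$ closes \eqref{eq:7} with $\mathcal{F}_N$ in place of $\mathcal{F}$.

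A diagonal argument combining the uniform approximation of Step~2 with the per-$N$ convergence of Step~3 yields \eqref{eq:7}; the version with $\Gamma_{\varepsilon_n}$ and $T\mathrm{Op}^{\mathrm{Wick}}_{\varepsilon_n}(\mathcal{F})S$ exchanged follows from the same argument on the opposite side of the trace. The main technical obstacle lies in Step~2: extracting the uniform-in-$\varepsilon$ operator-norm bound in spite of the $\xv$-dependence of the test vectors $\lambda_k(\xv_j)$ inside the creation/annihilation operators. This is overcome by noting that $\mathrm{d}\GG_\varepsilon(1)$ acts trivially on $\mathscr{H}$, so multiplication by $\lambda_k(\xv_j)$ commutes with any power of $\mathrm{d}\GG_\varepsilon(1)+1$, allowing the standard pull-through estimates for fixed test vectors to pass through to the $\xv$-dependent setting after replacing $\|v\|_{\mathfrak{h}}$ by $\|\lambda(\xv)\|_{L^\infty(\mathbb{R}^d;\mathfrak{h})}$.
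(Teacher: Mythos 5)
Your proposal is correct and follows essentially the same route as the paper: approximation of the $\lambda_k$ by simple functions to separate the particle and field variables (the paper's \cref{lemma:4} and \cref{cor:3}), a uniform-in-$\varepsilon$ error estimate via number-operator bounds and hypothesis \eqref{eq:6}, convergence of each approximant by reducing to scalar cylindrical Wick symbols tested against the compact operator $\one_{B_k}(\xv_j)\mathcal{S}\mathcal{B}\mathcal{T}$ (the quasi-classical analogue of the scalar result of Ammari--Nier), and a final double-limit/dominated-convergence argument. The only cosmetic difference is that you commute the Weyl operator through the creation/annihilation operators by hand, where the paper invokes the scalar convergence result directly for the product of a Wick monomial with a Weyl operator.
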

	
To prove \cref{prop:10}, we need the following preparatory lemma, which introduces the approximation of $ \mathcal{F} $ by simple functions.

	\begin{lemma}
  		\label{lemma:4}
  		\mbox{}	\\
  		Let $ \mathcal{F} \in \mathscr{S}_{\ell,m}$. Then, there exists a sequence of operator-valued functions $\lf\{\mathcal{F}_M \ri\}_{M\in \mathbb{N}}$,  $\mathcal{F}_M:\mathfrak{h}\to \mathscr{B}(\mathscr{H})$, such that
  		\begin{itemize}
  			\item for all $z\in \mathfrak{h}$,
    				\begin{equation}
     					 \lim_{M\to \infty} \lf\|  \mathcal{F}(z)-  \mathcal{F}_M(z) \ri\|_{\mathscr{H}}^{}=0\; ;
    				\end{equation}
    
  			\item $\mathcal{F}_M(z)$ acts as the multiplication operator by
    				\begin{equation}
  				    \mathcal{F}_M(z)= \sum_{j=1}^{N}\sum_{k=1}^{J(M)} \braketr{z}{\varphi_{k,1}}_{\mathfrak{h}} \dotsm \braketr{z}{\varphi_{k,\ell}}_{\mathfrak{h}} \braketl{\varphi_{k,\ell+1}}{z}_{\mathfrak{h}} \dotsm \braketl{ \varphi_{k,\ell+m}}{z}_{\mathfrak{h}} \one_{B_k}(\xv_j)\; ,
    				\end{equation}
    				where $J:\mathbb{N}\to \mathbb{N}$, $\varphi_{j,l} \in \mathfrak{h} $, $l\in \{1,\dotsc,\ell+m\}$, and $ \one_{B_j}$ is the characteristic function of the Borel set $B_j \subseteq \mathbb{R}^{d} $ and the $B_j$ are pairwise disjoint.
  		\end{itemize}
	\end{lemma}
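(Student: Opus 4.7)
My plan is to construct $\mathcal{F}_M$ by approximating each coupling function $\la_l \in L^{\infty}(\R^d;\mathfrak{h})$, $l = 1, \ldots, \ell+m$, by an $\mathfrak{h}$-valued simple function supported on a common measurable partition of $\R^d$, and then substituting the approximations into the definition of $\mathcal{F}$.

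First, I would exploit the separability of $\mathfrak{h}$ together with the strong (Bochner) measurability of each $\la_l$. Fixing a countable dense sequence $\{\psi_n\}_{n\in\N}$ inside a closed ball of $\mathfrak{h}$ containing the essential ranges of all the $\la_l$'s, for each $M\in\N$ I set
\[
A_{n,l}^{M} := \Bigl\{\xv\in\R^{d} \,:\, \lVert \la_l(\xv) - \psi_n \rVert_{\mathfrak{h}} < 1/M \Bigr\} \setminus \bigcup_{n' < n} A_{n',l}^{M},
\]
obtaining $\mathfrak{h}$-valued simple functions $\la_l^{M}(\xv) := \sum_n \psi_n\, \one_{A_{n,l}^{M}}(\xv)$ with $\lVert \la_l^{M}(\xv) - \la_l(\xv) \rVert_{\mathfrak{h}} < 1/M$ for a.e.\ $\xv$. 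After a finite truncation of the index $n$, I take the common refinement of the partitions $\{A_{n,l}^{M}\}_n$ over $l = 1, \ldots, \ell+m$ to obtain a single partition $\{B_k^{M}\}_{k=1}^{J(M)}$ of $\R^{d}$ into pairwise disjoint Borel sets on which each $\la_l^{M}$ is constant, equal to some $\varphi_{k,l} \in \mathfrak{h}$.

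Next I define
\[
\mathcal{F}_{M}(z) := \sum_{j=1}^{N} \braketr{z}{\la_1^{M}(\xv_j)}_{\mathfrak{h}} \dotsm \braketl{\la_{\ell+m}^{M}(\xv_j)}{z}_{\mathfrak{h}};
\]
plugging in the simple-function representation $\la_l^{M} = \sum_k \varphi_{k,l}\, \one_{B_k^{M}}$ and exploiting the pairwise disjointness of the $B_k^{M}$'s (so that only same-$k$ contributions survive at any fixed $\xv_j$) yields exactly the claimed form. The convergence $\mathcal{F}_{M}(z) \to \mathcal{F}(z)$ in the operator norm on $\mathscr{H}$ then follows from the telescoping identity
\[
\prod_{l=1}^{\ell+m} a_l - \prod_{l=1}^{\ell+m} b_l = \sum_{l=1}^{\ell+m} a_1 \dotsm a_{l-1}(a_l - b_l) b_{l+1} \dotsm b_{\ell+m},
\]
applied summand-by-summand, combined with the uniform bounds $|\braket{z}{\la_l(\xv) - \la_l^{M}(\xv)}_{\mathfrak{h}}| \leq \lVert z \rVert_{\mathfrak{h}}/M$ and $\lVert \la_l^{M}(\xv) \rVert_{\mathfrak{h}} \leq \lVert \la_l \rVert_{\infty}$ essentially in $\xv$.

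The main subtlety I expect is the finiteness of the truncation: in general, $\mathfrak{h}$-valued simple functions are not $L^\infty$-dense in $L^{\infty}(\R^{d};\mathfrak{h})$ when $\mathfrak{h}$ is infinite-dimensional, since this would require the essential range of $\la_l$ to be totally bounded in $\mathfrak{h}$. For the concrete models treated later (Nelson, Pauli--Fierz, polaron), the couplings have enough additional regularity to make the truncation essentially harmless; more generally, one either allows $J(M)$ to grow suitably with $M$ while keeping the bound uniform on a residual set of vanishing measure, or accepts convergence in a weaker sense, both being sufficient for the uses of $\mathcal{F}_M$ in the proof of \cref{prop:10}.
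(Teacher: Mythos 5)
Your strategy---approximate each $\lambda_l$ by an $\mathfrak{h}$-valued simple function on a common partition of $\R^d$ and substitute into $\mathcal{F}$---is the natural ``$z$-independent'' route, and your telescoping estimate for the convergence is fine as far as it goes. But the step you flag as a subtlety is a genuine gap, not a technicality. Truncating the countable partition $\{A^M_{n,l}\}_{n}$ to finitely many indices leaves a set of positive (indeed typically infinite) Lebesgue measure on which $\lambda^M_l$ no longer approximates $\lambda_l$, and since $\lVert \mathcal{F}(z)-\mathcal{F}_M(z)\rVert_{\mathscr{B}(\mathscr{H})}$ is an essential supremum over $\xv$, the operator-norm convergence fails there. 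None of your fallbacks closes this: finitely-valued simple functions are dense in $L^{\infty}(\R^d;\mathfrak{h})$ only if the essential range is totally bounded in $\mathfrak{h}$, and this fails even for the concrete models. For the Nelson form factor $\lambda(\xv)=\lambda_0(\cdot)e^{-i\kv\cdot\xv}$ one has $\lVert\lambda(\xv)-\lambda(\xvp)\rVert_{\mathfrak{h}}^2=2\lVert\lambda_0\rVert^2-2\Re\,\widehat{|\lambda_0|^2}(\xv-\xvp)\to 2\lVert\lambda_0\rVert^2$ by Riemann--Lebesgue, so the essential range contains infinitely many points pairwise separated by a fixed positive distance and is not totally bounded. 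Allowing $J(M)=\infty$ or weakening the mode of convergence changes the statement and would break its later use (the Wick quantization of $\mathcal{F}_M$ in the proof of \cref{prop:10} must be a \emph{finite} sum of tensor products).

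The idea you are missing is the one the paper uses: fix $z$ \emph{first} and discretize the range of the scalar function $\xv\mapsto\braket{z}{\lambda(\xv)}_{\mathfrak{h}}$ rather than the range of the $\mathfrak{h}$-valued map $\lambda$. That range is a bounded subset of $\CC$, hence totally bounded, so it can be covered by $O(M^4)$ cells of diameter $\sim K\lVert z\rVert_{\mathfrak{h}}/M$ (splitting into real/imaginary and positive/negative parts); pulling the cells back gives a \emph{finite} Borel partition $\{B_k\}$ of $\R^d$ and vectors $\varphi_k$ with $\lvert\braket{z}{\lambda(\xv)-\lambda_M(\xv)}_{\mathfrak{h}}\rvert\leq 4K\lVert z\rVert_{\mathfrak{h}}/M$ uniformly in $\xv$. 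The price is that the partition and the $\varphi_k$ depend on $z$, which is harmless for \cref{lemma:4} since only pointwise-in-$z$ convergence is claimed; your instinct that a fully $z$-uniform, finitely-valued approximation (as invoked in \cref{cor:3}) is the delicate point is sound, but it cannot be obtained by truncating a countably-valued Bochner approximation as you propose.
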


	\begin{proof}
  It is sufficient to prove the convergence in the case $N=1,n=1, m=0$, since the case $N=1,n=0,m=1$ is
  perfectly analogous, and the general one $N\in \mathbb{N}$, $n\in \mathbb{N}$, $m\in \mathbb{N}$, can be obtained combining the
  approximation for each term of the product within each term of the sum and possibly reorder the sum.

  So let us restrict to the case $ \mathcal{F}(z)= \braket{z}{\lambda(\xv)}_{\mathfrak{h}}$, $\xv\in \mathbb{R}^d$, acting as a multiplication
  operator on $ \mathscr{H} = L^2 (\mathbb{R}^d )$. Since both $\mathcal{F}(z)$ and 
  \beq
  	\label{eq: lambda M}
  	\mathcal{F}_M(z)=\sum_{k=1}^{J(M)}  \braket{z}{\varphi_k}_{\mathfrak{h}} \one_{B_k}(\xv) = : \braket{z}{\lambda_M(\xv)}_{\mathfrak{h}}
	\eeq
	are multiplication operators, we have
  \begin{equation*}
    \lf\| \mathcal{F}(z)- \mathcal{F}_M(z) \ri\|_{\mathscr{L}(L^2(\R^d))}^{} = \underset{\xv \in \mathbb{R}^d}{\mathrm{ess}\, \sup} \lf| \braket{z}{\lambda(\xv) - \lambda_M(\xv)}_{\mathfrak{h}}  \ri|_{}^{}\;.
  \end{equation*}
  
  Now, let us fix $z\in \mathfrak{h}$ and consider $ \mathcal{F}(z) = \mathcal{F}_z(\xv) $ only as a function of  $ \xv \in \R^d $. We can decompose
  $ \mathcal{F}_z(\xv) = \mathcal{F}_{\mathbb{R}}(\xv)+i F_{\mathbb{C}}(\xv)$, and split both the real and imaginary part as
  $ \mathcal{F}_{\mathbb{R}/\mathbb{C}}(\xv)= \mathcal{F}_{\mathbb{R}/\mathbb{C},+}(\xv)-\mathcal{F}_{\mathbb{R}/\mathbb{C},-}(\xv)$. Setting $K:=\lVert \lambda \rVert_{L^{\infty}(\mathbb{R}^{d},\mathfrak{h})}^{}$, we can partition the real positive half-line as
  \begin{equation*}
    \mathbb{R}_+=A\cup \bigcup_{m=1}^{M }A_{m}\; ,
  \end{equation*}
  where 
  \beq
  	A : = \lf[K\lVert z \rVert_{\mathfrak{h}}^{},\infty \ri), \qquad	A_{m} :=K \lf[\tx\frac{m-1}{M}\lVert z
  \rVert_{\mathfrak{h}}^{}, \tx\frac{m}{M}\lVert z \rVert_{\mathfrak{h}}^{} \ri).
	\eeq 
	Let us now focus on the real positive part
  $\mathcal{F}_{\mathbb{R},+}(\xv) $: we can introduce the measurable sets
  \begin{equation*}
    D^+:=\mathcal{F}^{-1}_{\mathbb{R},+}(A), 	\qquad D^+_{m}:=\mathcal{F}^{-1}_{\mathbb{R},+}(A_{m})\; .
  \end{equation*}
  By construction, $D^+=\varnothing$, while , for all $m\in \{1,\dotsc,M\}$, there exists $\eta^+_{m}\in
  \mathfrak{h}$ such that 
  \bdm
  	\braketl{\eta^+_{m}}{z}_{\mathfrak{h}} \in A_{m}. 
	\edm
	For any given $\xv\in \mathbb{R}^{d}$, there is a
  single $\tilde{m}\in \{1,\dotsc,2^M\}$ such that $ \mathcal{F}_{\mathbb{R},+}(\xv)\in A^+_{\tilde{m}}$. Therefore, uniformly with
  respect to $ \xv \in D^+_{\tilde{m}}$,
  \begin{equation}
    \lf| \mathcal{F}_{\mathbb{R},+}(\xv)- \braketl{\eta^+_{m}}{z}_{\mathfrak{h}} \ri|_{}^{}<\frac{K\lVert z \rVert_{\mathfrak{h}}^{}}{M}\; .
  \end{equation}
  Repeating the same procedure for the real negative and complex positive and negative parts, we obtain
  collections of sets and elements, respectively
  \begin{equation*}
    {\{D^-_{m} \}}_{m=1}^{M}\;,\; {\{\eta^-_{m} \}}_{m=1}^{M}\; ;	\qquad  {\{E^{\pm}_{m} \}}_{m=1}^{M}\;,\; {\{\xi^{\pm}_{m} \}}_{m=1}^{M}\; ,
  \end{equation*}
  approximating $\mathcal{F}_{\mathbb{R},-}$ and $\mathcal{F}_{\mathbb{C},\pm}$.

  Let us now define the collection ${\{B_k\}}_{k=1}^{M^4}$ of
  disjoint Borel sets of $\mathbb{R}^d$ for the simple approximation of $\mathcal{F}(z)$. We first identify $k\in \{1,\dotsc,M^4\}$
  with the image $ \jmath (m_1,m_2,m_3,m_4)$ with respect to some fixed set bijection $\jmath: {\{1,\dotsc,M\}}^4\to
  \{1,\dotsc,M^{4}\}$, and then set
  \begin{equation}
    B_k:=D^+_{m_1}\cap D^-_{m_2}\cap E^+_{m_3}\cap E^-_{m_4}\; .
  \end{equation}
  Therefore, we define $ \varphi_k : = \eta^+_{m_1} - \eta^-_{m_2} + i(\xi^+_{m_3} - \xi^-_{m_4}) $ and
  \bml{
      \mathcal{F}_M(z)=\sum_{k=1}^{M^{4}} \braket{z}{\varphi_k}_{\mathfrak{h}}\, \one_{B_k}(\xv) \\
      =\sum_{m_1,m_2,m_3,m_4=1}^{M} \braketr{z}{\eta^+_{m_1} - \eta^-_{m_2} + i(\xi^+_{m_3} - \xi^-_{m_4})}_{\mathfrak{h}} \one_{B_{\jmath(m_1,m_2,m_3,m_4)}}(\xv)\;.
   }
  By construction,
  \begin{equation}
    \lVert \mathcal{F}(z)-  \mathcal{F}_M(z)\rVert_{\mathscr{L}(L^2(\R^d))}^{}\leq \frac{4K\lVert z  \rVert_{\mathfrak{h}}^{}}{M}\; ,
  \end{equation}
  and therefore the convergence is proved.
\end{proof}

	\begin{corollary}
  		\label{cor:3}
  		\mbox{}	\\
  		The approximating function $\mathcal{F}_M(z)$ can be rewritten as
  		\begin{equation}
    			\mathcal{F}_M(z)= \sum_{j=1}^{N} \braketr{z}{\lambda_{M,1}(\xv_j)}_{\mathfrak{h}} \dotsm \braketr{z}{\lambda_{M,\ell}(\xv_j)}_{\mathfrak{h}} \braketl{\lambda_{M,\ell+1}(\xv_j)}{z}_{\mathfrak{h}} \dotsm \braketl{\lambda_{M,\ell+m}(\xv_j)}{z}_{\mathfrak{h}}\;,
  		\end{equation}
  		where $\lambda_{M,j}\in L^{\infty}(\mathbb{R}^d;\mathfrak{h})$, $j\in \{1,\dotsc,\ell+m\}$, and
  		\begin{equation}
  			\label{eq: lambdaM converges}
   			\lim_{M\to + \infty} \lf\| \lambda_j - \lambda_{M,j}  \ri\|_{L^{\infty}(\mathbb{R}^d;\mathfrak{h})}^{}=0\; .
  		\end{equation}
	\end{corollary}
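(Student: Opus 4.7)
The plan is to define the coupling functions directly from the data provided by \cref{lemma:4} by setting, for each $i \in \{1,\dotsc,\ell+m\}$,
\begin{equation*}
  \lambda_{M,i}(\xv) := \sum_{k=1}^{J(M)} \varphi_{k,i}\, \one_{B_k}(\xv) \in \mathfrak{h},
\end{equation*}
which lies in $L^{\infty}(\mathbb{R}^d;\mathfrak{h})$ as a finite $\mathfrak{h}$-valued simple function. The rewriting of $\mathcal{F}_M$ in the stated form then follows from a purely algebraic identity, leveraging the pairwise disjointness of the Borel sets $B_k$.

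Specifically, since $\one_{B_k}(\xv)\one_{B_{k'}}(\xv) = \delta_{kk'}\,\one_{B_k}(\xv)$, for any collection of scalars $c_{k,i}$ one has
\begin{equation*}
  \prod_{i=1}^{\ell+m}\lf(\sum_{k=1}^{J(M)} c_{k,i}\,\one_{B_k}(\xv)\ri) = \sum_{k=1}^{J(M)}\lf(\prod_{i=1}^{\ell+m} c_{k,i}\ri)\one_{B_k}(\xv),
\end{equation*}
since all off-diagonal terms in the expansion of the product vanish. Applying this identity with $c_{k,i} = \braket{z}{\varphi_{k,i}}_{\mathfrak{h}}$ for $i \leq \ell$ and $c_{k,i} = \braket{\varphi_{k,i}}{z}_{\mathfrak{h}}$ for $i > \ell$, and then evaluating at $\xv = \xv_j$ and summing over $j = 1,\dotsc,N$, immediately yields the claimed representation of $\mathcal{F}_M(z)$ in terms of the newly defined $\lambda_{M,i}$.

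The remaining and main point is the convergence \eqref{eq: lambdaM converges}. The construction in the proof of \cref{lemma:4} partitions the real line based on the image of $\braket{z}{\lambda(\cdot)}_{\mathfrak{h}}$ for a \emph{fixed} $z$, and hence delivers only pointwise-in-$z$ scalar approximation, with a rate controlled by $\lVert z\rVert_{\mathfrak{h}}/M$; this is not enough for uniform-in-$\xv$ control in $\mathfrak{h}$-norm. To obtain the stronger $L^{\infty}(\mathbb{R}^d;\mathfrak{h})$ statement, I would instead approximate each $\lambda_i \in L^{\infty}(\mathbb{R}^d;\mathfrak{h})$ directly by $\mathfrak{h}$-valued simple functions in $L^{\infty}$-norm, using the separability of $\mathfrak{h}$ and the strong (Bochner) measurability of $\lambda_i$ to partition $\mathbb{R}^d$ into preimages, under $\lambda_i$, of a sufficiently fine covering of the essential range of $\lambda_i$ by small balls of $\mathfrak{h}$. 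A common partition $\{B_k\}_{k=1}^{J(M)}$ valid for all indices $i = 1,\dotsc,\ell+m$ is then obtained by intersecting the individual partitions constructed separately for each $\lambda_i$, with the values $\varphi_{k,i}$ chosen consistently on each refined cell. This is the step that requires the most care and is the main obstacle, since it replaces the scalar real-line partition employed in \cref{lemma:4} with a vector-valued one that must be simultaneously compatible across all $\ell+m$ coupling functions while preserving the uniform bound $\sup_{\xv}\lVert \lambda_i(\xv)-\lambda_{M,i}(\xv)\rVert_{\mathfrak{h}} \to 0$.
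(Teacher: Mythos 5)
Your definition of $\lambda_{M,i}$ and the rearrangement of $\mathcal{F}_M$ exploiting the disjointness of the $B_k$ coincide with the paper's own opening step and are correct. The concern you raise about $z$-dependence is also legitimate: in the proof of \cref{lemma:4}, the partition of $\mathbb{R}^d$ consists of preimages of level sets of $\xv\mapsto\braket{z}{\lambda(\xv)}_{\mathfrak{h}}$ for a \emph{fixed} $z$, and the vectors $\eta^{\pm}_m$, $\xi^{\pm}_m$ are chosen by prescribing their inner products against that same $z$, so as written both the $B_k$ and the $\varphi_k$ depend on $z$. The paper's own proof of the corollary is considerably shorter than what you propose: it keeps $\lambda_M$ as in the lemma, invokes the scalar bound $\lvert\braket{z}{\lambda(\xv)-\lambda_M(\xv)}_{\mathfrak{h}}\rvert\leq 4K\lVert z\rVert_{\mathfrak{h}}/M$ as holding for all $z$ and $\xv$, and concludes by taking $\sup_{\lVert z\rVert_{\mathfrak{h}}=1}$ and identifying the result with $\lVert\lambda(\xv)-\lambda_M(\xv)\rVert_{\mathfrak{h}}$. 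That last identification tacitly requires the scalar estimate to hold uniformly in $z$ with a single, $z$-independent $\lambda_M$, which is precisely the point you flag.

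Your proposed alternative does not escape the difficulty; it relocates it. Covering the essential range of $\lambda_i\in L^{\infty}(\mathbb{R}^d;\mathfrak{h})$ by \emph{finitely} many small $\mathfrak{h}$-balls, as required to produce a finitely-valued simple function $\lambda_{M,i}$, presupposes that the essential range is totally bounded, and bounded subsets of an infinite-dimensional Hilbert space are in general not totally bounded. Separability of $\mathfrak{h}$ and Bochner measurability of $\lambda_i$ yield only a \emph{countable} ball covering, hence a countably-valued approximant, which is not a simple function and destroys the finite-sum separation of variables on which the Wick-quantization step in \cref{prop:10} relies. Concretely, for the Nelson form factor $\lambda(\xv;\kv)=\lambda_0(\kv)e^{-i\kv\cdot\xv}$ one has $\lVert\lambda(\xv)\rVert_{\mathfrak{h}}\equiv\lVert\lambda_0\rVert_{\mathfrak{h}}$ while $\lambda(\xv)\rightharpoonup 0$ weakly as $\lvert\xv\rvert\to\infty$ by Riemann--Lebesgue, so no sequence $\lambda(\xv_n)$ with $\lvert\xv_n\rvert\to\infty$ has a strongly convergent subsequence; $\{\lambda(\xv)\}_\xv$ is therefore not precompact, and for $M$ large no finitely-valued $\lambda_M$ can satisfy $\lVert\lambda-\lambda_M\rVert_{L^{\infty}(\mathbb{R}^d;\mathfrak{h})}<4K/M$. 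Thus your diagnosis of the weakness in the lemma's construction is on target, but the ball-covering remedy runs into the same obstruction in disguise; closing the gap honestly requires either an additional hypothesis on the $\lambda_j$ (total boundedness of the essential range) or a finer argument that tracks the $z$-dependence rather than eliminating it.
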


	\begin{proof}
  		Again, it is sufficient to prove the corollary for $N=1$ and $n=1,m=0$, the other cases being direct
  consequences. The function $\lambda_M$ approximating $\lambda$ is defined in \eqref{eq: lambda M} in the proof of \cref{lemma:4}, {\it i.e.},
  		\begin{equation*}
    			\lambda_M(\xv) := \sum_{k=1}^{J(M)}\varphi_k \one_{B_k}(\xv)\; .
  		\end{equation*}
  		From the same proof it also follows that, for all $z\in \mathfrak{h}$ and for all $ \xv\in \mathbb{R}^d$,
  		\begin{equation*}
    			\lf| \braketr{z}{\lambda(\xv)- \lambda_M(\xv)}_{\mathfrak{h}}  \ri|_{}^{}\leq \frac{4K\lVert z  \rVert_{\mathfrak{h}}^{}}{M}\; .
  		\end{equation*}
  		Therefore, it follows that
  		\begin{equation*}
    			\underset{\xv \in \mathbb{R}^d}{\mathrm{ess}\, \sup} \lf\| \lambda(\xv)- \lambda_M(\xv)   \ri\|_{\mathfrak{h}}^{}= \underset{\xv \in \mathbb{R}^d}{\mathrm{ess}\, \sup} \sup_{\lVert z  \rVert_{\mathfrak{h}}^{}=1} \lf| \braket{z}{\lambda(\xv)-\lambda_M(\xv)}_{\mathfrak{h}}  \ri|  \leq \frac{4K}{M}\;,
  		\end{equation*}
  		and thence the convergence is proved.
	\end{proof}

	\begin{proof}[Proof of \cref{prop:10}]
  		Let us prove~\eqref{eq:7}. Let us approximate $\mathcal{F}(z)$ with
  $\mathcal{F}_M(z)$, as dictated by \cref{lemma:4}. The advantage of $\mathcal{F}_M(z)$ is that its dependence on the $z$ and $\xv$
  variables is separated, and thus its Wick quantization is the finite sum of tensor products of operators:
  		\bml{
    			\label{eq:9}
      		\mathrm{Op}_{\varepsilon}^{\mathrm{Wick}}(\mathcal{F}_M)=\sum_{j=1}^N \sum_{k=1}^{J(M)} \one_{B_k}(\xv_j)\otimes a^{\dagger}_{\varepsilon}(\varphi_{k,1})\dotsm a^{\dagger}_{\varepsilon}(\varphi_{k,\ell})a_{\varepsilon}(\varphi_{k,\ell+1})\dotsm a_{\varepsilon}(\varphi_{k,\ell+m})\\
      		=\sum_{j=1}^N a^{\dagger}_{\varepsilon}(\lambda_{M,1}(\xv_j))\dotsm a^{\dagger}_{\varepsilon}(\lambda_{M,\ell}(\xv_j))a_{\varepsilon}(\lambda_{M,\ell+1}(\xv_j))\dotsm a_{\varepsilon}(\lambda_{M,\ell+m}(\xv_j)).
  		}
  		
  		Next we exploit the linearity of Wick quantization to split 
  		\bml{
    			\label{eq:10}
      		\Tr\Bigl(\Gamma_{\varepsilon_n}\, T\mathrm{Op}_{\varepsilon_n}^{\mathrm{Wick}}(\mathcal{F})S\bigl(\mathcal{B}\otimes W_{\varepsilon_n}(\eta)\bigr)\Bigr)=\Tr\Bigl(\Gamma_{\varepsilon_n}\, T\mathrm{Op}_{\varepsilon_n}^{\mathrm{Wick}}(\mathcal{F}-\mathcal{F}_M)S\bigl(\mathcal{B}\otimes W_{\varepsilon_n}(\eta)\bigr)\Bigr)\\+\Tr\Bigl(\Gamma_{\varepsilon_n}\, T\mathrm{Op}_{\varepsilon_n}^{\mathrm{Wick}}(\mathcal{F}_M)S\bigl(\mathcal{B}\otimes W_{\varepsilon_n}(\eta)\bigr)\Bigr).
  		}
  		The first term on the r.h.s.\ can be estimated using well-known estimates for creation
  and annihilation operators, the hypothesis on the expectation of the number operator, and \cref{cor:3}:
  		\bmln{
      		\Bigl\lvert \Tr\Bigl(\Gamma_{\varepsilon_n}\, T\mathrm{Op}_{\varepsilon_n}^{\mathrm{Wick}}(\mathcal{F}-\mathcal{F}_M)S\bigl(\mathcal{B}\otimes W_{\varepsilon_n}(\eta)\bigr)\Bigr)  \Bigr\rvert_{}^{}\leq N \lf\| \mathcal{B}  \ri\| \lf\| \mathcal{T} \ri\| \lf\| \mathcal{S}  \ri\| \Tr\Bigl(\Gamma_{\varepsilon_n}{\mathrm{d}\GG_{\varepsilon}(1)^{\frac{\ell+m}{2}}\Bigr)} \times \\ \times \sum_{p=1}^{\ell+m}\lVert \lambda_{1}  \rVert_{L^{\infty}(\mathbb{R}^d,\mathfrak{h})}^{}\dotsm\lVert \lambda_p - \lambda_{M,p}  \rVert_{L^{\infty}(\mathbb{R}^d,\mathfrak{h})}^{}\dotsm \lVert \lambda_{M,\ell+m}  \rVert_{L^{\infty}(\mathbb{R}^d,\mathfrak{h})}^{}\\
      \leq C N (\ell+m) \max_{p \in \{1, \ldots, \ell+m \}} \lf\| \lambda_p-\lambda_{M,p}  \ri\|_{L^{\infty}(\mathbb{R}^d; \mathfrak{h})},
    	}
    	where we have used that the $\lVert \lambda_{M,p}
  \rVert_{L^{\infty}(\mathbb{R}^d,\mathfrak{h})}^{}$ are all uniformly bounded with respect to $M$ by \eqref{eq: lambdaM converges}. 
  The r.h.s.\ of the above expression then converges to zero as $M\to +\infty$ by \cref{cor:3},
  uniformly w.r.t.\ $\varepsilon_n$.

  Let us now discuss the limit $n\to +\infty$ of the second term on the r.h.s.\ of \eqref{eq:10}: for any $ \mathcal{B} \in \mathscr{L}^{\infty}(L^2(\R^{Nd}))$, using the first identity of \eqref{eq:9}, we obtain
  \bmln{
      \Tr\Bigl(\Gamma_{\varepsilon_n}\, T\mathrm{Op}_{\varepsilon_n}^{\mathrm{Wick}}(\mathcal{F}_M)S\bigl(\mathcal{B}\otimes W_{\varepsilon_n}(\eta)\bigr)\Bigr) \\
      =\sum_{j=1}^N\sum_{k=1}^{J(M)}\tr_{\mathscr{H}}\Bigl( \tr_{\mathscr{K}_{\eps}}\Bigl(\Gamma_{\varepsilon_n}a^{\dagger}_{\varepsilon_n}(\varphi_{k,1})\dotsm a_{\varepsilon}(\varphi_{k,\ell+m})W_{\varepsilon_n}(\eta)  \Bigr) \one_{B_k}(\xv_j)\mathcal{S}\mathcal{B}\mathcal{T} \Bigr)\; .
    }
  Now, on one hand we know that $\Gamma_{\varepsilon_n}\to \mm$ by \cref{prop:8}, and on the other hand
  \begin{equation*}
    a^{\dagger}_{\varepsilon_n}(\varphi_{k,1})\dotsm a_{\varepsilon}(\varphi_{k,\ell+m})= \mathrm{Op}_{\varepsilon_n}^{\mathrm{Wick}} \lf( \braketr{z}{\varphi_{k,1}}_{\mathfrak{h}}\dotsm \braketl{\varphi_{k,\ell+m}}{z}_{\mathfrak{h}} \ri)\; ,
  \end{equation*}
  where the scalar symbol on the r.h.s.\ is polynomial and cylindrical. Therefore, since $
  \one_{B_k}(\xv_j) \mathcal{S} \mathcal{B} \mathcal{T} \in \mathscr{L}^{\infty}(L^2(\R^{Nd})) $, by the
  quasi-classical analogue of \cite[Theorem 6.13]{ammari2008ahp}, 
  \bmln{
    \lim_{n\to+ \infty}\Tr\Bigl(\Gamma_{\varepsilon_n}\, T\mathrm{Op}_{\varepsilon_n}^{\mathrm{Wick}}(\mathcal{F}_M)S\bigl(\mathcal{B}\otimes W_{\varepsilon_n}(\eta)\bigr)\Bigr)	\\
    =\sum_{j=1}^N\sum_{k=1}^{J(M)}\tr_{\mathscr{H}} \lf( \int_{\mathfrak{h}}^{}\mathrm{d}\mu_{\mm}(z) \: e^{2i\Re
      \braket{\eta}{z}_{\mathfrak{h}}} \braketr{z}{\varphi_{k,1}}_{\mathfrak{h}} \dotsm
    \braketl{\varphi_{k,\ell+m}}{z}_{\mathfrak{h}} \gamma_{\mm}(z) \one_{B_k}(\xv_j) \mathcal{S}\mathcal{B}\mathcal{T}
    \ri)\; .  } 
    The proof is then concluded by taking the limit $M\to \infty$ of the last expression, that by
    dominated convergence yields the sought result.
\end{proof}

\section{The Microscopic Model}
\label{sec:microscopic-model}

Our aim is to study systems of nonrelativistic particles in interaction with radiation. As discussed
previously, the techniques developed in this paper allow to study some well-known classes of explicit
models (Nelson, polaron, Pauli-Fierz). We carry out here the detailed analysis only for
the simplest example, the Nelson model, in order to convey the general strategy without too many technical
details. The main adaptations needed for the polaron and Pauli-Fierz systems are outlined in
\cref{sec:techn-modif-polar}.

Let $ \mathscr{H} \otimes \mathscr{K}_{\eps} = L^2(\mathbb{R}^{dN})\otimes \GG_{\eps}\bigl(L^2 (\mathbb{R}^d )\bigr)$ be the Hilbert space of the theory, then the Nelson Hamiltonian $H_{\varepsilon}$ is explicitly given by
\begin{equation}
  \label{eq:14}
  H_{\varepsilon}= K_0 + \nu(\varepsilon)\mathrm{d}\Gamma_{\varepsilon}\bigl(\omega\bigr) +\sum_{j=1}^N a^{\dagger}_{\varepsilon}\bigl(\lambda(\xv_j)\bigr) +a_{\varepsilon}\bigl(\lambda(\xv_j)\bigr)\; ,
\end{equation}
where $K_0=\mathcal{K}_0\otimes 1$ is the part of the Hamiltonian acting on the particles alone, being such that
$\mathcal{K}_0$ is self-adjoint on $\dom(\mathcal{K}_0) \subset L^2 (\mathbb{R}^{dN} )$, $\nu(\varepsilon)>0$ is a quasi-classical scaling factor to be
discussed in detail below, $\omega$ is the operator on $L^2(\mathbb{R}^d)$ acting as the multiplication by the positive
dispersion relation of the field $\omega(\kv)$, and $\lambda\in L^{\infty}\bigl(\mathbb{R}^d; L^2 (\mathbb{R}^d )\bigr)=: L^{\infty}_\xv L^2_\kv$ is the
interaction's form factor. In addition, let us define the set of vectors with a finite number of field's
excitations $C_0^{\infty}\bigl(\mathrm{d}\GG_{\varepsilon}(1)\bigr)$:
\bml{
      C_0^{\infty}\bigl(\mathrm{d}\GG_{\varepsilon}(1)\bigr)= \lf\{\psi\in L^2(\R^{Nd}) \otimes\GG_{\eps}(\mathfrak{h}) \; \big| \; \exists M\in \mathbb{N} \text{ s.t. } a_{\varepsilon}(f_1) \dotsm a_{\varepsilon}(f_{M'})\psi=0 \;, \ri. \\ \lf. \forall M'> M, \forall {\{f_j\}}_{j=1}^{M'}\subset L^2 (\mathbb{R}^d  ) \ri\}\; .
 }
 
 The question of self-adjointness of $ H_{\eps} $ has already been addressed in the literature and indeed the following proposition holds:

\begin{proposition}[Self-adjointness of $ H_{\eps} $ \mbox{\cite[Theorem 3.1]{falconi2015mpag}}]
  \label{prop:11}
  \mbox{}	\\
  The operator $H_{\varepsilon}$ is essentially self-adjoint on $D(K_0)\cap D(\mathrm{d}\GG_{\varepsilon}(\omega))\cap
  C_0^{\infty}\bigl(\mathrm{d}\GG_{\varepsilon}(1)\bigr)$.
\end{proposition}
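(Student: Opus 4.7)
The proof is attributed to \cite{falconi2015mpag}; here I outline the strategy I would follow to reproduce it. The natural tool is Nelson's commutator theorem applied with the comparison operator
\begin{equation*}
  A := K_0 + \nu(\varepsilon)\,\mathrm{d}\GG_{\varepsilon}(\omega) + \mathrm{d}\GG_{\varepsilon}(1) + C,
\end{equation*}
where $C>0$ is chosen large enough (using that $\mathcal{K}_0$ is bounded below) so that $A\geq 1$. By standard tensor-product arguments $A$ is essentially self-adjoint on $D(K_0)\cap D(\mathrm{d}\GG_{\varepsilon}(\omega))\cap C_0^{\infty}(\mathrm{d}\GG_{\varepsilon}(1))$, which will serve as the common core throughout the argument.

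The first ingredient is the standard number-operator bound for creation and annihilation operators in the $\varepsilon$-dependent Fock representation: for any $f\in L^2(\mathbb{R}^d)$ and $\psi\in C_0^{\infty}(\mathrm{d}\GG_{\varepsilon}(1))$,
\begin{equation*}
  \|a_{\varepsilon}(f)\psi\| \leq \|f\|_{\mathfrak{h}}\|(\mathrm{d}\GG_{\varepsilon}(1))^{1/2}\psi\|, \qquad \|a_{\varepsilon}^{\dagger}(f)\psi\| \leq \|f\|_{\mathfrak{h}}\|(\mathrm{d}\GG_{\varepsilon}(1)+\varepsilon)^{1/2}\psi\|.
\end{equation*}
Since $\lambda\in L^{\infty}_{\xv}L^2_{\kv}$, applying this pointwise in $\xv_j$ yields
\begin{equation*}
  \Bigl\|\sum_{j=1}^N \bigl(a_{\varepsilon}^{\dagger}(\lambda(\xv_j))+a_{\varepsilon}(\lambda(\xv_j))\bigr)\psi\Bigr\| \leq 2N\|\lambda\|_{L^{\infty}_{\xv}L^2_{\kv}}\|(\mathrm{d}\GG_{\varepsilon}(1)+1)^{1/2}\psi\|,
\end{equation*}
so the interaction is symmetric, densely defined on the core, and $A^{1/2}$-bounded. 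This gives the first hypothesis of Nelson's commutator theorem, namely $|\langle\phi,H_{\varepsilon}\psi\rangle|\leq c\|A^{1/2}\phi\|\|A^{1/2}\psi\|$.

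The remaining and most delicate step is the commutator estimate $\pm i[H_{\varepsilon},A]\leq c\,A$ on the core. The commutators split into: (i) $[K_0,\phi_{\varepsilon}(\lambda(\xv_j))]$, which involves derivatives of $\lambda$ in $\xv$ and requires mild regularity of the form factor; (ii) $[\mathrm{d}\GG_{\varepsilon}(1),\phi_{\varepsilon}(\lambda(\xv_j))]$, which reduces to $\varepsilon(a_{\varepsilon}^{\dagger}(\lambda(\xv_j))-a_{\varepsilon}(\lambda(\xv_j)))$ and is again $(\mathrm{d}\GG_{\varepsilon}(1)+1)^{1/2}$-bounded; (iii) $[\mathrm{d}\GG_{\varepsilon}(\omega),\phi_{\varepsilon}(\lambda(\xv_j))]$, which produces $\phi_{\varepsilon}(\omega\lambda(\xv_j))$ and is the principal obstacle, as it demands $\omega\lambda\in L^{\infty}_{\xv}L^2_{\kv}$ (or at least $\omega^{1/2}\lambda$ so as to be absorbed into $\mathrm{d}\GG_{\varepsilon}(\omega)$). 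This is where the regularization encoded in the UV cutoff enters: under the standing assumptions on $\lambda$ for the regularized Nelson model, $\omega^{1/2}\lambda\in L^{\infty}_{\xv}L^2_{\kv}$, and one bounds $\phi_{\varepsilon}(\omega\lambda(\xv_j))$ by $\|(\mathrm{d}\GG_{\varepsilon}(\omega)+1)^{1/2}\psi\|$ up to a constant.

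Once all three commutators are dominated by $A$, Nelson's commutator theorem yields essential self-adjointness of $H_{\varepsilon}$ on the stated core. The main obstacle, as noted, is commutator (iii): one must leverage the UV-cutoff structure of $\lambda$ to control field operators with argument $\omega\lambda$, and a careful bookkeeping of the $\varepsilon$-dependence is needed to ensure all estimates are $\varepsilon$-uniform (which is not required here for essential self-adjointness, but becomes crucial later when taking the quasi-classical limit).
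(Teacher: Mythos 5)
Your sketch correctly identifies that a commutator-theorem argument is at play, and the $N^{1/2}$-bounds for $a_\eps^{\#}(\lambda(\xv))$ are right, but the comparison operator $A = K_0 + \nu(\eps)\diff\GG_\eps(\omega) + \diff\GG_\eps(1) + C$ forces you into commutators (i) and (iii), and controlling them demands hypotheses on $\lambda$ that the paper simply does not make. The standing assumption for the Nelson model is only $\lambda\in L^{\infty}(\R^d;\mathfrak{h})$: nothing about $\partial_{\xv}\lambda$ (needed for $[K_0,\phi_\eps(\lambda(\xv_j))]$, especially since $\mathcal{K}_0$ is allowed to be an \emph{arbitrary} self-adjoint operator bounded below, not just $-\Delta$), and nothing about $\omega^{1/2}\lambda$ or $\omega\lambda$ belonging to $L^\infty_{\xv}L^2_{\kv}$ (needed for $[\diff\GG_\eps(\omega),\phi_\eps(\lambda(\xv_j))]$). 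Your claim that $\omega^{1/2}\lambda\in L^{\infty}_{\xv}L^2_{\kv}$ is a ``standing assumption'' is not supported by the paper, and without it commutator (iii) is not dominated by $A$.

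The point of \cite[Theorem 3.1]{falconi2015mpag} is precisely to avoid these commutators. The criterion there does not compare against $K_0 + \nu(\eps)\diff\GG_\eps(\omega)$; rather it exploits that the free part $H_0 = K_0 + \nu(\eps)\diff\GG_\eps(\omega)$ strongly commutes with the number operator $\diff\GG_\eps(1)$, so the whole fixed-particle-number sector structure of Fock space is preserved by $H_0$, while the interaction is a finite-degree Wick polynomial that couples only finitely many adjacent sectors. Essential self-adjointness then follows from the relative bound of $V$ with respect to $(\diff\GG_\eps(1)+1)^{1/2}$ (which you proved) and the commutator bound $[V,\diff\GG_\eps(1)]$ (your item (ii), the only one that is actually harmless), \emph{without} ever differentiating $\lambda$ in $\xv$ or multiplying it by $\omega$. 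In other words, the key lemma you are missing is not a refinement of the Nelson commutator theorem with a bigger $A$, but a replacement that never puts $K_0$ or $\diff\GG_\eps(\omega)$ on the right-hand side of a commutator.
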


Therefore, there exists a unitary evolution generated by $H_{\varepsilon}$,
\begin{equation}
  U_{\varepsilon}(t)=e^{-itH_{\varepsilon}}\; .
\end{equation}
Now for any  normalized density matrix $\Gamma_{\varepsilon}\in \mathscr{L}^1_{+,1}(\mathscr{H} \otimes \mathscr{K}_{\eps} )$, we denote by $\Gamma_{\varepsilon}(t)$ its unitary
evolution by means of $U_{\varepsilon}(t)$, {\it i.e.},
\begin{equation}
  \Gamma_{\varepsilon}(t)=U_{\varepsilon}(t)\Gamma_{\varepsilon}U^{\dagger}_{\varepsilon}(t)\; .
\end{equation}
The main aim of this paper
is to characterize the asymptotic behavior as $\varepsilon\to 0$ of
\begin{equation}
  \gamma_{\varepsilon}(t):=\tr_{\mathscr{K}_{\eps}}\bigl(\Gamma_{\varepsilon}(t)\bigr) = \tr_{\GG_{\eps}(L^2 (\mathbb{R}^d ))}\bigl(\Gamma_{\varepsilon}(t)\bigr)\; .
\end{equation}
As stated in \cref{def: qc convergence} and characterized in \cref{prop:10}, the quasi-classical limit of a sufficiently
regular state is determined by the weak convergence of its vector-valued noncommutative Fourier
transform $\widehat{\Gamma}_{\varepsilon}(t):\mathfrak{h}\to \mathscr{L}^1(\mathscr{H})$ defined in \eqref{eq: nc fourier transform}:
\begin{equation*}
  \eta\longmapsto \lf[\widehat{\Gamma}_{\varepsilon}(t)\ri](\eta):=\tr_{\mathscr{K}_{\eps}} \lf(\Gamma_{\varepsilon}(t) W_{\varepsilon}(\eta) \ri)=\tr_{\mathscr{K}_{\eps}}\lf(\Gamma_{\varepsilon}(t) e^{i(a^{\dagger}_{\varepsilon}(\eta)+a_{\varepsilon}(\eta))} \ri).
\end{equation*}
Note that consequently $\gamma_{\varepsilon}(t)=\bigl[\widehat{\Gamma}_{\varepsilon}(t)\bigr](0)$. 

The regularity of the state is given by~\eqref{eq:15}, that should thus be satisfied at any time. It is therefore necessary
to ensure a proper propagation in time of such a regularity. An estimate of that kind is however readily available for the Nelson
model with cutoff:

	\begin{proposition}[Regularity propagation \mbox{\cite[Proposition 4.2]{falconi2013jmp}}]
  		\label{prop:12}
  		\mbox{}	\\
  		For any $\varepsilon>0$, $t\in \mathbb{R}$ and $\delta\in \mathbb{R}$:
  		\beqn
      		\Tr \lf(\Gamma_{\varepsilon}(t){(\mathrm{d}\GG_{\varepsilon}(1)+N^2+\varepsilon)}^{\delta} \ri) \leq & e^{c_{\delta/2}(\varepsilon) \sqrt{\varepsilon}\lvert \delta  \rvert_{}^{}\lvert t  \rvert_{}^{} \lf\| \lambda \ri\|_{ L^{\infty}_\xv L^2_\kv}^{}}\,  \Tr \lf(\Gamma_{\varepsilon} {(\mathrm{d}\GG_{\varepsilon}(1)+N^2+\varepsilon)}^{\delta} \ri)\; ,		\label{eq: propagation 1}\\
      		\Tr \lf| \Gamma_{\varepsilon}(t){(\mathrm{d}\GG_{\varepsilon}(1)+N^2+\varepsilon)}^{\delta} \ri|  \leq & e^{c_{\delta}(\varepsilon)\sqrt{\varepsilon}\lvert \delta  \rvert_{}^{}\lvert t  \rvert_{}^{}\lf\| \lambda \ri\|_{ L^{\infty}_\xv L^2_\kv}^{}}\,  \Tr \lf| \Gamma_{\varepsilon} {(\mathrm{d}\GG_{\varepsilon}(1)+N^2+\varepsilon)}^{\delta} \ri|\; ,
    		\eeqn
  		where $ c_{\delta}(\varepsilon):=\max \lf\{ 2+\varepsilon,1+{(1+\varepsilon)}^{\delta} \ri\}$.
	\end{proposition}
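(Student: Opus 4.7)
The natural strategy is to differentiate the quantity $F_\delta(t) := \Tr\bigl(\Gamma_\eps(t)\,\mathcal{N}_{\eps,\delta}\bigr)$, where $\mathcal{N}_{\eps,\delta} := (\diff\GG_\eps(1) + N^2 + \eps)^\delta$, and derive a Gronwall-type differential inequality. The shift by $N^2+\eps$ ensures strict positivity of the operator inside the fractional power, so a well-defined functional calculus is available on each $N$-particle sector. Heisenberg's equation and the cyclicity of the trace formally yield
\begin{equation*}
\tfrac{d}{dt} F_\delta(t) = i\,\Tr\bigl(\Gamma_\eps(t)\,[H_\eps,\mathcal{N}_{\eps,\delta}]\bigr).
\end{equation*}
Since $K_0$ and $\nu(\eps)\diff\GG_\eps(\omega)$ both preserve the total particle number, they commute with $\mathcal{N}_{\eps,\delta}$, and only the interaction $\sum_j[a^\dagger_\eps(\lambda(\xv_j)) + a_\eps(\lambda(\xv_j))]$ contributes.

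The core estimate is the pull-through identity $a_\eps(z)\,f(\diff\GG_\eps(1)) = f(\diff\GG_\eps(1)+\eps)\,a_\eps(z)$, which follows immediately from the canonical commutation relation \eqref{eq: ccr}. Applied to $f(x) = (x+N^2+\eps)^\delta$ it gives
\begin{equation*}
[a_\eps(\lambda(\xv_j)),\mathcal{N}_{\eps,\delta}] = \bigl((\diff\GG_\eps(1)+N^2+2\eps)^\delta - \mathcal{N}_{\eps,\delta}\bigr)\,a_\eps(\lambda(\xv_j)),
\end{equation*}
with the analogous formula for $a^\dagger_\eps$ (shift by $-\eps$). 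The operator difference in brackets is controlled by a mean-value bound of the shape $|(x+\eps)^\delta - x^\delta| \leq |\delta|\,\eps\,\max\{x,\, x+\eps\}^{\delta-1}$, while the $\sqrt{\eps}$ prefactor comes from the standard estimate $\|a_\eps(\lambda(\xv_j))(\diff\GG_\eps(1)+\eps)^{-1/2}\| \leq \sqrt{\eps}\,\|\lambda\|_{L^\infty_\xv L^2_\kv}$. Splitting $\delta-1 = 2\cdot(\delta/2) - 1$ so that one power of $(\diff\GG_\eps(1)+\eps+N^2)^{\delta/2}$ is reabsorbed by $\Gamma_\eps(t)$ on each side via cyclicity, and summing over the $N$ particles, one arrives at
\begin{equation*}
\bigl|\tfrac{d}{dt} F_\delta(t)\bigr| \leq c_{\delta/2}(\eps)\,\sqrt{\eps}\,|\delta|\,\|\lambda\|_{L^\infty_\xv L^2_\kv}\,F_\delta(t),
\end{equation*}
where the explicit constant $c_{\delta/2}(\eps)=\max\{2+\eps,\,1+(1+\eps)^{\delta/2}\}$ arises precisely from the comparison between $(\diff\GG_\eps(1)+N^2+2\eps)^{\delta/2}$ and $\mathcal{N}_{\eps,\delta/2}$ on the vacuum sector (where the shift $+\eps$ is non-negligible). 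Gronwall's lemma then produces the first stated bound. For the trace-norm inequality, positivity of $\Gamma_\eps(t)$ allows to write $\Tr|\Gamma_\eps(t)\mathcal{N}_{\eps,\delta}| = \Tr\bigl(|\mathcal{N}_{\eps,\delta}|^{1/2}\Gamma_\eps(t)|\mathcal{N}_{\eps,\delta}|^{1/2}\bigr)$ and to repeat the above argument on the conjugated object; the commutator with $|\mathcal{N}_{\eps,\delta}|^{1/2}$ now enters twice, which is exactly why the exponent doubles from $|\delta|/2$ to $|\delta|$ and $c_{\delta/2}(\eps)$ is replaced by $c_{\delta}(\eps)$.

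The main obstacle is not the algebraic commutator computation, which is elementary, but rather the rigorous justification of the differentiation under the trace: $\mathcal{N}_{\eps,\delta}$ is unbounded for $\delta>0$, and $\Gamma_\eps(t)\,\mathcal{N}_{\eps,\delta}$ is not a priori of trace class. The standard remedy is a spectral cut-off, replacing $\mathcal{N}_{\eps,\delta}$ by $\mathcal{N}_{\eps,\delta}\,\mathds{1}_{[0,R]}(\diff\GG_\eps(1))$, running the Gronwall inequality with $R$-independent constants (which is possible because all intermediate estimates depend only on $\|\lambda\|_{L^\infty_\xv L^2_\kv}$), and then recovering the full statement as $R\to \infty$ by monotone convergence in $\delta>0$ and by duality in $\delta<0$. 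Since the proposition is directly quoted from \cite{falconi2013jmp}, I would carry out only the formal computation above and invoke that reference for the cut-off procedure.
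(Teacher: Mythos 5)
The paper itself does not prove this proposition: it is imported verbatim from \cite[Proposition 4.2]{falconi2013jmp}, so your decision to give only the formal computation and defer the cut-off/regularization to that reference is consistent with what the paper does. Your overall strategy (Heisenberg equation for $F_\delta(t)$, pull-through of $a_\eps$ across functions of $\diff\GG_\eps(1)$, a mean-value bound on $(x+\eps)^{\delta}-x^{\delta}$, reabsorption of half-powers of $\mathcal{N}_{\eps,\delta}$ by cyclicity, Gronwall) is indeed the argument of the cited proof.

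Two steps would fail as written. First, the estimate $\|a_\eps(f)(\diff\GG_\eps(1)+\eps)^{-1/2}\|\leq\sqrt{\eps}\,\|f\|_{\mathfrak h}$ is false: since $a_\eps=\sqrt{\eps}\,a$ and $\diff\GG_\eps(1)+\eps=\eps(N+1)$, the two factors of $\sqrt{\eps}$ cancel and the correct bound is $\|a_\eps(f)(\diff\GG_\eps(1)+\eps)^{-1/2}\|\leq\|f\|_{\mathfrak h}$, with no gain. The $\sqrt{\eps}$ in the exponent does not come from the annihilation operator at all; it comes from the factor $\eps$ produced by the commutator (the difference $(x+\eps)^{\delta}-x^{\delta}\sim\delta\,\eps\,x^{\delta-1}$) combined with trading the leftover half-power via $(\diff\GG_\eps(1)+N^2+\eps)^{-1/2}\leq\eps^{-1/2}$. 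If you keep your (false) extra $\sqrt{\eps}$ the bookkeeping appears to close, but only by accident; with the correct normalization the argument still closes and in fact yields the stated constant with room to spare. Second, for the trace-norm inequality the identity $\Tr|\Gamma_\eps(t)\mathcal{N}_{\eps,\delta}|=\Tr\bigl(\mathcal{N}_{\eps,\delta}^{1/2}\Gamma_\eps(t)\mathcal{N}_{\eps,\delta}^{1/2}\bigr)$ is not true: for positive $A,B$ one only has $\Tr(B^{1/2}AB^{1/2})=\Tr(AB)\leq\Tr|AB|$, and the inequality is strict in general (take two non-commuting rank-one projections). So the reduction of the second bound to the first is broken; one must work with the trace norm directly, e.g.\ via $\Tr|\Gamma_\eps(t)\mathcal{N}_{\eps,\delta}|=\sup_{\|U\|\leq1}|\Tr(U\,\Gamma_\eps(t)\mathcal{N}_{\eps,\delta})|$ or a H\"older splitting $\|\Gamma_\eps(t)^{1/2}\cdot\Gamma_\eps(t)^{1/2}\mathcal{N}_{\eps,\delta}\|_{\mathscr L^1}$, which is also what produces $c_{\delta}(\eps)$ in place of $c_{\delta/2}(\eps)$ (note that the factor $|\delta|$ is the same in both displayed inequalities, contrary to your claim that the exponent doubles). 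Finally, a small imprecision: the mean-value bound should read $|(x+\eps)^{\delta}-x^{\delta}|\leq|\delta|\,\eps\,\max\{x^{\delta-1},(x+\eps)^{\delta-1}\}$; as you wrote it, with $\max\{x,x+\eps\}^{\delta-1}$, it is wrong for $\delta<1$.
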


Since the exponential in the above inequality is bounded uniformly with respect to $\varepsilon\in (0,1)$, it follows
that the bound~\eqref{eq:15} is satisfied by the state at any time with a suitable time-dependent
constant, provided it is satisfied by the state at $t=0$: using that, for any $ \delta \in \R^+ $,
\bdm	
	\lf(\mathrm{d}\GG_{\varepsilon}(1)+1 \ri)^{\delta} \leq \lf(\mathrm{d}\GG_{\varepsilon}(1)+N^2 + \eps \ri)^{\delta} \leq \lf(\mathrm{d}\GG_{\varepsilon}(1)+N^2 + 1 \ri)^{\delta} \leq (N^2+1)^{\delta} \lf(\mathrm{d}\GG_{\varepsilon}(1)+ 1 \ri)^{\delta},
\edm
we can use \eqref{eq: propagation 1} to obtain
\bml{
	\label{eq: propagation}
  \Tr\lf(\Gamma_{\varepsilon}{(\mathrm{d}\GG_{\varepsilon}(1)+1)}^{\delta} \ri) \leq C_{\delta}\; \Longrightarrow \; \Tr \lf(\Gamma_{\varepsilon}(t){(\mathrm{d}\GG_{\varepsilon}(1)+1)}^{\delta} \ri)\leq C_{\delta} \, (N^2 + 1)^{\delta} e^{c_{\delta/2}(1)\lvert \delta  \rvert_{}^{}\lvert t  \rvert_{}^{}\lVert \lambda \rVert_{ L^{\infty}_\xv L^2_\kv}^{}}	\\
  \leq C(\delta, t), \; 
}
which guarantees that the a priori bound \eqref{eq:15} is preserved by the time evolution.

In analogy with the dynamical semiclassical limit for bosonic field theories (see,
\emph{e.g.}, \cite{ammari2014jsp,ammari2015asns,ammari2017sima}), the quasi-classical dynamics is
characterized studying the limit $\varepsilon\to 0$ of the integral equation of evolution for the microscopic
system. Let us sketch the main ideas: consider the family of states
\begin{equation*}
  \lf\{\Gamma_{\varepsilon}(t) \ri\}_{\varepsilon\in (0,1), \: t\in \mathbb{R}}\;
\end{equation*}
at time $  t = 0 $, satisfying the bound~\eqref{eq:15}. Then, we know that for each fixed $t\in \mathbb{R}$, there exists a
subsequence $\varepsilon_n\to 0$ such that $\Gamma_{\varepsilon_n}(t) \xrightarrow[n \to + \infty]{} \mm_t$ in the sense of \cref{def: qc convergence} by \cref{prop:12,prop:8}. In the next section we prove
that it is actually possible to extract a \emph{common subsequence} $\varepsilon_{n_k} \to 0$ such that \emph{for all} $t\in \mathbb{R}$,
\begin{equation*}
  \Gamma_{\varepsilon_{n_k}}(t) \xrightarrow[k \to +\infty]{} \mm_t\; .
\end{equation*}

Hence one only needs to characterize the map $t\mapsto \mm_t$, and this is done by studying the associated transport equation, that is obtained passing to the limit in the microscopic integral equation of
evolution. Let us provide some intuition on such strategy. For later convenience let us pass to the interaction representation and set
\beq
	\widetilde{\Gamma}_{\eps}(t) : = e^{-i \nu(\eps) t \diff \GG_{\eps}(\omega)} \Gamma_{\eps}(t) e^{i \nu(\eps) t \diff \GG_{\eps}(\omega)}.
\eeq
Then, the microscopic evolution can be rewritten as an integral equation, using Duhamel's formula:
\begin{equation}
  \label{eq:31}
  \widetilde{\Gamma}_{\varepsilon}(t)=\Gamma_{\varepsilon}-i \int_0^t \mathrm{d}\tau \: \lf[\widetilde{H}_{\varepsilon}(\tau),\widetilde\Gamma_{\varepsilon}(\tau) \ri] \; ,
\end{equation}
where accordingly
\beq
	\widetilde{H}_{\varepsilon}(t) : = e^{-i \nu(\eps) t \diff \GG_{\eps}(\omega)} \lf( H_{\eps} - \nu(\eps) \diff \GG_{\eps}(\omega) \ri) e^{i \nu(\eps) t \diff \GG_{\eps}(\omega)}.
\eeq
In addition, $H_{\varepsilon}- \nu(\eps) \diff \GG_{\eps}(\omega) $ is the Wick quantization of an operator-valued symbol
$\mathcal{K}_0+\mathcal{V}(z)$. Therefore, the quasi-classical analysis developed in
\cref{sec:quasi-class-analys} suggests that the integral equation \eqref{eq:31} shall converge, in the
limit $\varepsilon\to 0$, to an equation for the measure $\widetilde\mm_t$, obtained by replacing 
\bdm
	\widetilde{\Gamma}_{\eps}(t) \rightsquigarrow \widetilde\mm_t,	\qquad		H_{\eps}- \nu(\eps) \diff \GG_{\eps}(\omega) \rightsquigarrow \mathcal{K}_0+\mathcal{V}(z),
\edm
and substituting the quantum flow $  e^{- i \nu(\eps) t \diff \GG_{\eps}(\omega)} $ in the phase space $ \mathfrak{h} = L^2(\R^3) $ by its classical counterpart, {\it i.e.},
\beq
	\label{eq: flow}
	z \mapsto e^{- i \nu t \omega} z,	\qquad \forall z \in L^2(\R^3).
\eeq
In conclusion, we get the equation
\begin{equation}
  \label{eq:4}
  \mathrm{d}\widetilde{\mathfrak{m}}_t(z)= \mathrm{d}\widetilde{\mathfrak{m}}(z)-i\int_0^t\mathrm{d}\tau \: \mathrm{d}\mu_{\widetilde{\mm}_{\tau}}(z) \: \lf[\mathcal{K}_0+\mathcal{V}\lf(e^{-i \nu t \omega} z\ri),\gamma_{\widetilde{\mm}_{\tau}}(z)\ri]  \; ,
\end{equation}
and the classical measure $ \mm_t $ associated with the original state $ \Gamma_{\eps}(t) $ is simply given by the {\it push-forward} of $ \widetilde{\mm}_t $ through the flow \eqref{eq: flow}, {\it i.e.},
\beq
	{\Gamma}_{\eps}(t) \rightsquigarrow \mm_t = e^{- i \nu t \omega} \, _{\star} \,\widetilde\mm_t.
\eeq

Such an equation is the integral form of a Liouville-type equation. Once the convergence of the microscopic
to the quasi-classical integral equation has been established (see \cref{sec:quasi-class-limit}), the
crucial point is to prove that equation \eqref{eq:4} has a unique solution that satisfies some properties,
given by the \emph{a priori} information that we have on the quasi-classical measure (see
\cref{sec:uniq-quasi-class}).  As a final step (\cref{sec:putting-it-all}), we show that the convergence is in fact at any time $ t \geq 0 $ along the {\it same} subsequence $ \lf\{ \eps_n \ri\}_{n \in \N} $. Let us remark that, in order to make this heuristic strategy rigorous, some
technical modifications are necessary, in particular it is necessary to pass to the full interaction
representation.

We conclude the section with the rigorous derivation of the microscopic integral
evolution equation for the Fourier transform of $ \Gamma_{\varepsilon}(t)$. By definition, for any $\eta\in \mathfrak{h}$, the Fourier transform
$\Bigl[\widehat{\Gamma}_{\varepsilon}(t)\Bigr](\eta)$ is a reduced microscopic complex state for the particles, and therefore, if $\Gamma_{\varepsilon}$ is regular enough, its time evolution can be described by means of the microscopic generator $H_{\varepsilon}$. It is technically convenient to use the evolved state in interaction picture, {\it i.e.},
\begin{equation}
	\label{eq: interaction picture}
 	\gint(t) := e^{it (K_0+\nu(\varepsilon)\mathrm{d}\GG_{\varepsilon}(\omega))}\Gamma_{\varepsilon}(t)e^{-it (K_0+\nu(\varepsilon)\mathrm{d}\GG_{\varepsilon}(\omega))}\; ,
\end{equation}
in place of $\Gamma_{\varepsilon}(t)$, and therefore study the integral equation for $\hgint(t)$. 

	\begin{remark}[Regularity propagation for $ \gint $]
		\label{rem: regularity gint}
		\mbox{}	\\
		Thanks to the commutativity of $e^{it(K_0+\nu(\varepsilon)\mathrm{d}\GG_{\varepsilon}(\omega))}$ with $\mathrm{d}\GG_{\varepsilon}(1)$, one can easily realize that the results stated in \cref{prop:12} and, consequently the bound propagation in \eqref{eq: propagation}, hold true also for the density matrix $ \gint(t) $ in the interaction picture with the same constants.
	\end{remark}

	\begin{lemma}
  		\label{lemma:5}
  		\mbox{}	\\
  		Let $\Gamma_{\varepsilon} \in \mathscr{L}^1_{+,1}(\mathscr{H} \otimes \mathscr{K}_{\eps}) $ be such that
  		\bdm
      		\Tr \lf( \Gamma_{\varepsilon}{(\mathrm{d}\GG_{\varepsilon}(1)+1)}^{\frac{1}{2}} \ri) \leq C\; ,
    		\edm
  		then, for any $ s,t \in \R $, 
  		\bml{
  			\label{eq: micro transport}
    			\lf[\hgint(t)\ri](\eta)=\lf[\hgint(s)\ri](\eta) \\
    			-i\sum_{j=1}^N\int_s^t \mathrm{d}\tau \: e^{i\tau \mathcal{K}_0}\,\tr_{\mathscr{K}_{\varepsilon}}\lf( \lf[ \varphi_{\varepsilon}\lf(e^{-i\tau\varepsilon\nu(\varepsilon)\omega}\lambda(\xv_j) \ri),\gint(\tau) \ri]\,W_{\varepsilon}(\eta) \ri)\,e^{-i\tau \mathcal{K}_0} \; ;
  		}
 		 weakly in $\mathscr{L}^1(L^2(\R^{Nd})$, where $\varphi_{\varepsilon}(\: \cdot \: )=a^{\dagger}_{\varepsilon}(\: \cdot \:)+a_{\varepsilon}(\: \cdot \:)$ is the  Segal field.
	\end{lemma}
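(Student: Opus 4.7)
My plan is to derive \eqref{eq: micro transport} from the Heisenberg equation of motion $i\partial_t \Gamma_\eps(t) = [H_\eps, \Gamma_\eps(t)]$, which, by the essential self-adjointness established in \cref{prop:11}, is well-posed on a suitable core. I would pass to the interaction picture via \eqref{eq: interaction picture} and exploit the commutativity of $K_0$ and $\diff\GG_\eps(\omega)$ (they act on distinct tensor factors), then take the partial trace against $W_\eps(\eta)$ and integrate in $\tau$ from $s$ to $t$ to arrive at the stated integral identity.

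Concretely, I would start from the equivalent expression
\bdm
  \hgint(t)(\eta) = e^{it\mathcal{K}_0}\,\tr_{\mathscr{K}_\eps}\bigl(\Gamma_\eps(t)\,W_\eps(e^{-it\eps\nu(\eps)\omega}\eta)\bigr)\,e^{-it\mathcal{K}_0},
\edm
which follows from the Weyl-operator identity $e^{\pm it\nu(\eps)\diff\GG_\eps(\omega)}\,W_\eps(\eta)\,e^{\mp it\nu(\eps)\diff\GG_\eps(\omega)} = W_\eps(e^{\pm it\eps\nu(\eps)\omega}\eta)$ (itself a consequence of the commutation relations $[\diff\GG_\eps(\omega),a_\eps^{\#}(f)] = \pm\eps\,a_\eps^{\#}(\omega f)$). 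Differentiating in $t$ and decomposing $H_\eps = K_0 + \nu(\eps)\diff\GG_\eps(\omega) + \sum_j \varphi_\eps(\lambda(\xv_j))$ inside the Heisenberg commutator produces two clean cancellations: the $\mathcal{K}_0$-derivative coming from the outer conjugation $e^{\pm it\mathcal{K}_0}$ kills the $[K_0,\Gamma_\eps(t)]$ contribution (using $[K_0,W_\eps(\eta)]=0$), while the derivative $\partial_t W_\eps(e^{-it\eps\nu(\eps)\omega}\eta) = -i[\nu(\eps)\diff\GG_\eps(\omega),W_\eps(\cdots)]$ cancels the $[\nu(\eps)\diff\GG_\eps(\omega),\Gamma_\eps(t)]$ contribution by cyclicity of $\tr_{\mathscr{K}_\eps}$ on operators supported on $\mathscr{K}_\eps$. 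Only the field-operator commutators survive; after cycling $e^{\pm i\tau\nu(\eps)\diff\GG_\eps(\omega)}$ through the partial trace to restore $W_\eps(\eta)$ (at the cost of transporting $\lambda(\xv_j)\mapsto e^{-i\tau\eps\nu(\eps)\omega}\lambda(\xv_j)$ inside $\varphi_\eps$, and repackaging $\Gamma_\eps(\tau)$ into the interaction-picture state via the outer $e^{\pm i\tau\mathcal{K}_0}$ conjugations already factored out), one recovers precisely \eqref{eq: micro transport}.

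The main technical obstacle is justifying these formal operator manipulations for the \emph{unbounded} Segal field $\varphi_\eps(f)$: one must ensure that $[\varphi_\eps(\,\cdot\,),\Gamma_\eps(\tau)]$ is trace class with $\tau$-integrable trace norm on any compact interval, that differentiation under the partial trace is legitimate, and that cycling $e^{\pm i\tau\nu(\eps)\diff\GG_\eps(\omega)}$ through $\tr_{\mathscr{K}_\eps}$ is well-defined despite $\diff\GG_\eps(\omega)$ being unbounded. The standard pointwise bound $\lVert\varphi_\eps(f)\psi\rVert\leq 2\lVert f\rVert_{\mathfrak{h}}\,\lVert(\diff\GG_\eps(1)+1)^{1/2}\psi\rVert$ reduces all of this to a uniform-in-$\eps$ control of $\Tr\bigl(\Gamma_\eps(\tau)(\diff\GG_\eps(1)+1)^{1/2}\bigr)$, which is exactly what the propagation estimate \cref{prop:12} and its interaction-picture counterpart \cref{rem: regularity gint} supply under the hypothesis $\Tr\bigl(\Gamma_\eps(\diff\GG_\eps(1)+1)^{1/2}\bigr)\leq C$. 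The exchange of derivative and partial trace then follows by a standard density argument on the self-adjointness core of $H_\eps$.
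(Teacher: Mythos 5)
Your proposal is correct and matches the approach the paper itself relies on: the paper's ``proof'' of \cref{lemma:5} is a one-line reference to \cite[Proposition 3.5]{ammari2014jsp}, which is exactly a Duhamel computation of the type you carry out, adapted to a particle observable and to a Weyl operator acting only on the field factor. The useful thing your write-up adds is spelling out the cancellation mechanism: the commutativity of $K_0$ with $\diff\GG_\eps(\omega)$ and with $W_\eps(\eta)$, the covariance identity for $W_\eps$ under $e^{\pm it\nu(\eps)\diff\GG_\eps(\omega)}$, and the resulting rewrite $\hgint(t)(\eta)=e^{it\mathcal{K}_0}\,\tr_{\mathscr{K}_\eps}\!\bigl(\Gamma_\eps(t)W_\eps(e^{-it\eps\nu(\eps)\omega}\eta)\bigr)\,e^{-it\mathcal{K}_0}$, from which the two free contributions drop out and only the coupling survives. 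This is precisely the form that the paper actually uses later in the proof of \cref{prop:13}, where the Duhamel term appears as $\Tr\bigl([\varphi_\eps(\lambda(\xv_j)),\Gamma_\eps(\tau)]\,(\widetilde{\mathcal B}(\tau)\otimes W_\eps(e^{-i\tau\eps\nu(\eps)\omega}\eta))\bigr)$; so your expression agrees with how \eqref{eq: micro transport} is deployed downstream. You also correctly pinpoint the analytical crux — trace-class control of $[\varphi_\eps(\cdot),\Gamma_\eps(\tau)]W_\eps$ and integrability in $\tau$ — and the right tools: the standard bound $\lVert\varphi_\eps(f)\psi\rVert\leq 2\lVert f\rVert_{\mathfrak h}\lVert(\diff\GG_\eps(1)+1)^{1/2}\psi\rVert$ together with \cref{prop:12} and \cref{rem: regularity gint}. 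One small caution: to turn the pointwise bound into a trace-norm bound for $[\varphi_\eps(f),\Gamma_\eps(\tau)]W_\eps(\eta)$ you should sandwich with $(\diff\GG_\eps(1)+1)^{\pm 1/4}$ on both sides (rather than $(\diff\GG_\eps(1)+1)^{\pm 1/2}$ on one side) and use positivity of $\Gamma_\eps(\tau)$ so that $\Tr\bigl(\Gamma_\eps(\tau)(\diff\GG_\eps(1)+1)^{1/2}\bigr)=\lVert(\diff\GG_\eps(1)+1)^{1/4}\Gamma_\eps(\tau)(\diff\GG_\eps(1)+1)^{1/4}\rVert_{\mathscr L^1}$; this is exactly what the paper does in \cref{prop:13}, so it is a cosmetic point.
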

\begin{proof}
  The proof is obtained adapting \cite[Proposition 3.5]{ammari2014jsp}. The differences here are only
  the presence of an arbitrary bounded particle observable, and that the Weyl operator acts only on the
  field's degrees of freedom. Therefore, we omit the details.
\end{proof}

\section{The Quasi-Classical Limit of Time Evolved States}
\label{sec:quasi-class-limit}

	In this section we focus on the quasi-classical limit $\varepsilon\to 0$ of the Fourier transform $ \hgint(t)$ of
time evolved states in the interaction picture. The first and most relevant step is the proof that it is possible to extract a common subsequence for the convergence of
$ \hgint(t)$ at any time (\cref{prop:14}), which in turn follows from the uniform equicontinuity of $ \hgint $ (\cref{prop:13}). Finally, we show that the limit measure satisfies the transport equation (\cref{prop:15}) of \cref{lemma:5}.

	Let us start with a preparatory lemma.

	\begin{lemma}[\mbox{\cite[Lemma 3.1]{ammari2008ahp}}]
  		\label{lemma:6}
  		\mbox{}	\\
  		For any $0<\delta\leq 1/2$, there exists a finite constant $c_{\delta} $ such that for all $\eta,\xi\in \mathfrak{h}$,
  		\begin{equation}
    			\lf\| \bigl(W_{\varepsilon}(\eta) -W_{\varepsilon}(\xi) \bigr){(\mathrm{d}\GG_{\varepsilon}(1)+1)}^{-\delta}  \ri\|_{\mathscr{B}(\mathscr{K}_{\eps})}^{} \leq c_{\delta} \lf( \min \lf\{ \lVert \eta  \rVert_{\mathfrak{h}}^{2\delta}, \lVert \xi  \rVert_{\mathfrak{h}}^{2\delta} \ri\}+1 \ri) \lf\| \eta-\xi  \ri\|_{\mathfrak{h}}^{2\delta}\; .
  		\end{equation}
	\end{lemma}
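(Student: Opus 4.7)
The plan is to exploit the Weyl canonical commutation relations to reduce the difference of Weyl operators to a single Weyl operator close to the identity (plus a scalar phase), and then to derive the $2\delta$-H\"older dependence via operator interpolation between a trivial uniform bound and a Lipschitz-type bound controlled by the number operator.

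First, using the Weyl product rule $W_{\varepsilon}(\xi)^{*} W_{\varepsilon}(\eta) = e^{i\varepsilon \sigma(\xi,\eta)} W_{\varepsilon}(\eta - \xi)$, with $\sigma(\xi,\eta)$ proportional to $\Im \braket{\xi}{\eta}_{\mathfrak{h}}$ (the exact constant depends on convention), I would write
\[
W_{\varepsilon}(\eta) - W_{\varepsilon}(\xi) \; = \; W_{\varepsilon}(\xi) \bigl[ W_{\varepsilon}(\eta - \xi) - 1 \bigr] \; + \; \bigl( e^{i\varepsilon \sigma(\xi,\eta)} - 1 \bigr)\, W_{\varepsilon}(\xi) W_{\varepsilon}(\eta - \xi) .
\]
Since $W_{\varepsilon}(\xi)$ and $W_{\varepsilon}(\xi) W_{\varepsilon}(\eta-\xi)$ are unitaries, the problem decouples into the operator estimate on $(W_{\varepsilon}(\eta - \xi) - 1)(\mathrm{d}\GG_{\varepsilon}(1) + 1)^{-\delta}$ and the scalar estimate on $|e^{i\varepsilon\sigma(\xi,\eta)} - 1|$. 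For the latter I would use the algebraic identity $\Im \braket{\xi}{\eta}_{\mathfrak{h}} = \Im \braket{\xi}{\eta-\xi}_{\mathfrak{h}}$ (since $\braket{\xi}{\xi}_{\mathfrak{h}}$ is real) to bound $|e^{i\varepsilon\sigma(\xi,\eta)} - 1| \leq C\varepsilon \|\xi\|_{\mathfrak{h}} \|\eta-\xi\|_{\mathfrak{h}}$ by the mean value theorem, and interpolate with the trivial bound $2$ (using $\varepsilon \leq 1$) to get $|e^{i\varepsilon\sigma(\xi,\eta)} - 1| \leq C_{\delta} \|\xi\|_{\mathfrak{h}}^{2\delta} \|\eta-\xi\|_{\mathfrak{h}}^{2\delta}$ for $0 < \delta \leq 1/2$.

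The operator estimate $\bigl\|(W_{\varepsilon}(\zeta) - 1)(\mathrm{d}\GG_{\varepsilon}(1) + 1)^{-\delta}\bigr\| \leq C_{\delta}\|\zeta\|_{\mathfrak{h}}^{2\delta}$ with $\zeta = \eta - \xi$ is the core of the lemma, and I would obtain it by interpolation between two endpoints. At $\delta = 0$ the trivial bound $\|W_{\varepsilon}(\zeta) - 1\| \leq 2$ is immediate. At $\delta = 1/2$ I would use Stone's representation $W_{\varepsilon}(\zeta) - 1 = i\int_{0}^{1} W_{\varepsilon}(s\zeta)\,\varphi_{\varepsilon}(\zeta)\,\mathrm{d}s$ combined with the $\varepsilon$-uniform number estimate $\|\varphi_{\varepsilon}(\zeta)\psi\| \leq C\|\zeta\|_{\mathfrak{h}}\,\|(\mathrm{d}\GG_{\varepsilon}(1) + 1)^{1/2}\psi\|$ (which follows from the scaling $a_{\varepsilon} = \sqrt{\varepsilon}\,a$ and the standard number bound, since $\mathrm{d}\GG_{\varepsilon}(1) = \varepsilon N$ with $N$ the standard number operator), yielding the Lipschitz bound $\|(W_{\varepsilon}(\zeta)-1)(\mathrm{d}\GG_{\varepsilon}(1) + 1)^{-1/2}\| \leq C\|\zeta\|_{\mathfrak{h}}$. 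Complex interpolation on the analytic family $w \mapsto (W_{\varepsilon}(\zeta) - 1)(\mathrm{d}\GG_{\varepsilon}(1) + 1)^{-w}$ in the strip $0 \leq \Re w \leq 1/2$ then yields the H\"older estimate for all intermediate $\delta$.

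Summing the two contributions produces $\bigl\|(W_{\varepsilon}(\eta) - W_{\varepsilon}(\xi))(\mathrm{d}\GG_{\varepsilon}(1) + 1)^{-\delta}\bigr\| \leq c_{\delta}(1 + \|\xi\|_{\mathfrak{h}}^{2\delta})\|\eta - \xi\|_{\mathfrak{h}}^{2\delta}$, and the minimum in the statement is obtained by rerunning the same argument with $\eta$ and $\xi$ exchanged. The main technical obstacle will be the $\varepsilon$-uniformity at the Lipschitz endpoint: moving $(\mathrm{d}\GG_{\varepsilon}(1) + 1)^{1/2}$ past $W_{\varepsilon}(s\zeta)$ in the Duhamel integral requires the explicit pull-through identity $W_{\varepsilon}(s\zeta)^{*}\mathrm{d}\GG_{\varepsilon}(1) W_{\varepsilon}(s\zeta) = \mathrm{d}\GG_{\varepsilon}(1) + \varepsilon\,\varphi_{\varepsilon}(is\zeta) + \varepsilon^{2}s^{2}\|\zeta\|_{\mathfrak{h}}^{2}$, whose lower-order terms must be absorbed via Young-type inequalities without producing inverse powers of $\varepsilon$; this uniformity is automatic from the semiclassical scaling but has to be tracked carefully through the interpolation step.
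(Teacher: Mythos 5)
Your argument is correct and is essentially the standard proof of this estimate; note that the paper does not prove \cref{lemma:6} at all but imports it from \cite[Lemma 3.1]{ammari2008ahp}, whose proof follows the same route you take (Weyl relations to reduce to $W_{\varepsilon}(\eta-\xi)-1$ plus a scalar phase, the Duhamel formula and the $\varepsilon$-uniform number bound $\lVert\varphi_{\varepsilon}(\zeta)(\mathrm{d}\GG_{\varepsilon}(1)+1)^{-1/2}\rVert\leq 2\lVert\zeta\rVert_{\mathfrak{h}}$ at the endpoint $\delta=\tfrac12$, and interpolation against the trivial bound at $\delta=0$, with the minimum obtained by exchanging $\eta$ and $\xi$). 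The only superfluous element is your closing concern about the pull-through identity: in the Duhamel representation the unitary $W_{\varepsilon}(s\zeta)$ stands to the \emph{left} of $\varphi_{\varepsilon}(\zeta)(\mathrm{d}\GG_{\varepsilon}(1)+1)^{-1/2}$, so it is removed by unitarity and nothing ever has to be commuted through it, either at the Lipschitz endpoint or on the boundary line $\Re w=\tfrac12$ of the interpolation strip.
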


We are now able to prove uniform equicontinuity of $\lf[\hgint(\: \cdot \:) \ri](\: \cdot \:)$.

	\begin{proposition}[Equicontinuity of $ \hgint $]
  		\label{prop:13}
  		\mbox{}	\\
	  		Let $\Gamma_{\varepsilon}\in \mathscr{L}^1_{+,1}(\mathscr{H} \otimes \mathscr{K}_{\eps})$ be such that 
 	 	\begin{equation*}
    			\Tr \lf( \Gamma_{\varepsilon}{(\mathrm{d}\GG_{\varepsilon}(1)+1)}^{\frac{1}{2}} \ri) \leq C\;.
  		\end{equation*}
  		Then, $ \lf[ \hgint(\: \cdot \:)\ri](\: \cdot \:): \mathbb{R}\times \mathfrak{h}\to \mathscr{L}^1(\mathscr{H})$ is uniformly
  equicontinuous w.r.t. $\varepsilon\in (0,1)$ on bounded sets of $\mathbb{R}\times \mathfrak{h}$, if we endow
  $ \mathscr{L}^1(\mathscr{H})$ with the weak-$*$ topology.
	\end{proposition}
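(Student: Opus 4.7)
The plan is to decouple the joint equicontinuity into separate equicontinuity in the time variable $t$ and in the test function $\eta$, since joint equicontinuity on bounded sets follows by a triangle inequality. Since we test in the weak-$*$ topology of $\mathscr{L}^1(\mathscr{H})$, we fix a compact operator $\mathcal{B} \in \mathscr{L}^{\infty}(\mathscr{H})$ and estimate
$\bigl| \tr_\mathscr{H}(\mathcal{B} \{ [\hgint(t)](\eta) - [\hgint(s)](\xi)\}) \bigr|$ uniformly in $\varepsilon \in (0,1)$, with bounds depending only on $\|\mathcal{B}\|$ and on the bounded region of $\mathbb{R} \times \mathfrak{h}$.

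For equicontinuity in $t$, I will use the microscopic integral equation \eqref{eq: micro transport} of \cref{lemma:5}, which expresses the difference $[\hgint(t)](\eta) - [\hgint(s)](\eta)$ as an integral of commutators $[\varphi_\varepsilon(e^{-i\tau\varepsilon\nu(\varepsilon)\omega}\lambda(\xv_j)), \gint(\tau)]$ dressed by $W_\varepsilon(\eta)$ and unitaries $e^{\pm i\tau \mathcal{K}_0}$. Using cyclicity of the trace, unitarity of $W_\varepsilon(\eta)$ and $e^{i\tau \mathcal{K}_0}$, and the standard relative bound $\|\varphi_\varepsilon(f)(\diff\GG_\varepsilon(1)+1)^{-1/2}\|_{\mathscr{B}(\mathscr{K}_\eps)} \leq C \|f\|_{\mathfrak{h}}$ (uniform in $\varepsilon \in (0,1)$), each integrand can be bounded as
\begin{equation*}
    \bigl\| [\varphi_\varepsilon(\cdot), \gint(\tau)] \bigr\|_{\mathscr{L}^1} \leq 2C\|\lambda\|_{L^{\infty}_\xv L^2_\kv} \bigl\|(\diff\GG_\varepsilon(1)+1)^{1/2}\gint(\tau)\bigr\|_{\mathscr{L}^1}.
\end{equation*}
The right-hand side is $\varepsilon$-uniformly bounded on compact time intervals by the regularity propagation of \cref{prop:12} (with $\delta = 1/2$), as transferred to $\gint$ by \cref{rem: regularity gint}. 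Hence the $\tau$-integrand is uniformly bounded on bounded time intervals, yielding Lipschitz continuity in $t$ uniform in $\varepsilon$ and in $\eta$ on bounded sets.

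For equicontinuity in $\eta$, at fixed $t$ I write
\begin{equation*}
    [\hgint(t)](\eta) - [\hgint(t)](\xi) = \tr_{\mathscr{K}_\varepsilon}\bigl(\gint(t)(W_\varepsilon(\eta) - W_\varepsilon(\xi))\bigr),
\end{equation*}
and factorize through $(\diff\GG_\varepsilon(1)+1)^{1/2}$: after testing against $\mathcal{B}$ and using H\"older in Schatten norms,
\begin{equation*}
    \bigl|\tr_\mathscr{H}(\mathcal{B} \cdot \{\cdots\})\bigr| \leq \|\mathcal{B}\| \cdot \bigl\|(\diff\GG_\varepsilon(1)+1)^{1/2}\gint(t)\bigr\|_{\mathscr{L}^1} \cdot \bigl\|(\diff\GG_\varepsilon(1)+1)^{-1/2}(W_\varepsilon(\eta) - W_\varepsilon(\xi))\bigr\|.
\end{equation*}
The first factor is controlled $\varepsilon$-uniformly on bounded time intervals by the propagation estimate, and the second factor is bounded by \cref{lemma:6} (applied either directly or after taking adjoints, exploiting $W_\varepsilon(\eta)^* = W_\varepsilon(-\eta)$) by $c_{1/2}(\min\{\|\eta\|_\mathfrak{h}, \|\xi\|_\mathfrak{h}\}^{} + 1)\|\eta - \xi\|_\mathfrak{h}$. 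This yields the desired continuity in $\eta$, uniform in $\varepsilon$ and in $t$ on bounded sets.

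No step is particularly hard: the only care needed is to ensure that the constants produced by both estimates are $\varepsilon$-independent, which is precisely what the combination of the relative number-operator bounds, of \cref{lemma:6}, and of the propagation estimate \eqref{eq: propagation} for $\gint$ guarantees. Combining the two partial continuity estimates via
\begin{equation*}
    [\hgint(t)](\eta) - [\hgint(s)](\xi) = \bigl([\hgint(t)](\eta) - [\hgint(s)](\eta)\bigr) + \bigl([\hgint(s)](\eta) - [\hgint(s)](\xi)\bigr)
\end{equation*}
then produces uniform equicontinuity on bounded subsets of $\mathbb{R}\times \mathfrak{h}$ in the weak-$*$ topology of $\mathscr{L}^1(\mathscr{H})$, as claimed.
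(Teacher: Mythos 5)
Your overall strategy coincides with the paper's: triangle inequality to split time and $\eta$ variation, \cref{lemma:5} plus the propagation estimate for the time part, \cref{lemma:6} plus propagation for the $\eta$ part. However, there is a concrete gap in both of your factorizations. You absorb the full number-operator weight on a single side of $\gint(\tau)$, producing the quantity $\bigl\|(\mathrm{d}\GG_{\varepsilon}(1)+1)^{1/2}\gint(\tau)\bigr\|_{\mathscr{L}^1}$. This is \emph{not} what the hypothesis controls. The hypothesis $\Tr\bigl(\Gamma_{\varepsilon}(\mathrm{d}\GG_{\varepsilon}(1)+1)^{1/2}\bigr)\leq C$ is, by positivity of $\Gamma_\varepsilon$, equal to the symmetric trace norm $\bigl\|(\mathrm{d}\GG_{\varepsilon}(1)+1)^{1/4}\Gamma_{\varepsilon}(\mathrm{d}\GG_{\varepsilon}(1)+1)^{1/4}\bigr\|_{\mathscr{L}^1}$, and the same is true of the propagated bound. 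For a positive trace-class $\Gamma$ and a positive unbounded $A$, one has $\|A^{1/2}\Gamma\|_{\mathscr{L}^1}\leq (\Tr\Gamma)^{1/2}(\Tr(\Gamma A))^{1/2}$ by H\"older, and in general $\|A^{1/2}\Gamma\|_{\mathscr{L}^1}$ cannot be bounded by $\Tr(\Gamma A^{1/2})=\|A^{1/4}\Gamma A^{1/4}\|_{\mathscr{L}^1}$ (spread a rank-one state between a low and a high eigenvector of $A$ to see the ratio diverge). So what you need is effectively $\Tr\bigl(\gint(\tau)(\mathrm{d}\GG_{\varepsilon}(1)+1)\bigr)\leq C$, i.e.\ $\delta=1$, strictly stronger than the assumed $\delta=\frac{1}{2}$.

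The fix is exactly the paper's symmetric placement: insert $(\mathrm{d}\GG_{\varepsilon}(1)+1)^{\pm 1/4}$ on \emph{each} side of $\gint(\tau)$, so that the remaining trace-norm factor is $\bigl\|(\mathrm{d}\GG_{\varepsilon}(1)+1)^{1/4}\gint(\tau)(\mathrm{d}\GG_{\varepsilon}(1)+1)^{1/4}\bigr\|_{\mathscr{L}^1}=\Tr\bigl(\gint(\tau)(\mathrm{d}\GG_{\varepsilon}(1)+1)^{1/2}\bigr)$, which \emph{is} controlled by \cref{prop:12} under the stated hypothesis. The price is a two-sided relative bound $\bigl\|(\mathrm{d}\GG_{\varepsilon}(1)+1)^{-1/4}\varphi_\varepsilon(\cdot)(\mathrm{d}\GG_{\varepsilon}(1)+1)^{-1/4}\bigr\|$ for the field operator (cf.\ \cite[Corollary 6.2 (ii)]{ammari2014jsp}), an additional boundedness fact $\bigl\|(\mathrm{d}\GG_{\varepsilon}(1)+1)^{1/4}W_\varepsilon(\cdot)(\mathrm{d}\GG_{\varepsilon}(1)+1)^{-1/4}\bigr\|<\infty$ for the Weyl operator, and using \cref{lemma:6} with $\delta=\frac{1}{4}$ rather than $\delta=\frac{1}{2}$, which yields H\"older modulus $\|\eta-\xi\|_{\mathfrak{h}}^{1/2}$ in place of your (illusory) Lipschitz modulus. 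You should rework the estimates with this symmetric insertion.
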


	\begin{proof}
  		Let us fix
  $\mathcal{B}\in \mathscr{L}^{\infty}(\mathscr{H})$, and $(t,\eta),(s,\xi)\in \mathbb{R}\times \mathfrak{h}$, with $ 0 \leq s \leq t $. Then,
  		\bml{
      		\lf| \tr_{\mathscr{H}} \lf[ \lf( \lf[ \hgint(t) \ri](\eta) - \lf[ \hgint(s) \ri](\xi) \ri) \mathcal{B} \ri] \ri| \leq \lf| \tr_{\mathscr{H}} \lf[ \lf( \lf[ \hgint(t) \ri](\eta) - \lf[ \hgint(s) \ri](\eta) \ri) \mathcal{B} \ri] \ri| \\
      		+ \lf| \tr_{\mathscr{H}} \lf[ \lf( \lf[ \hgint(s) \ri](\eta) - \lf[ \hgint(s) \ri](\xi) \ri) \mathcal{B} \ri] \ri| 
      		=:   (I) + (II) \; .
    		}
    		
    		Let us consider the two terms separately. Making use of \cref{lemma:5}, we obtain
  		\begin{equation*}
   			(I) \leq \sum_{j=1}^N  \int_s^t \mathrm{d}\tau \:  \lf| \Tr \lf( \lf[ \varphi_{\varepsilon}\bigl(\lambda(\xv_j)\bigr) , \Gamma_{\varepsilon}(\tau) \ri] \Bigl(\widetilde{\mathcal{B}}(\tau)\otimes W_{\varepsilon}\lf(e^{-i\tau\varepsilon\nu(\varepsilon)\omega}\eta \ri)\Bigr) \ri)  \ri| \; ,
  		\end{equation*}
  		where $\widetilde{\mathcal{B}}(\tau) : =e^{-i\tau \mathcal{K}_0}\mathcal{B} e^{i\tau \mathcal{K}_0}$. Therefore, 
  		\bmln{
      		(I) \leq 2N \lf\lVert \mathcal{B}  \ri\rVert \lf\| {(\mathrm{d}\GG_{\varepsilon}(1)+1)}^{-\frac{1}{4}}\varphi_{\varepsilon}\bigl(\lambda(\: \cdot \:)\bigr){(\mathrm{d}\GG_{\varepsilon}(1)+1)}^{-\frac{1}{4}}  \ri\|_{\mathscr{B}(\mathscr{H} \otimes \mathscr{K}_{\eps})}^{} \times	\\
      		\times  \int_s^t\mathrm{d}\tau \: \lf\| {(\mathrm{d}\GG_{\varepsilon}(1)+1)}^{\frac{1}{4}} W_{\varepsilon}\lf(e^{-i\tau\varepsilon\nu(\varepsilon)\omega}\eta \ri){(\mathrm{d}\GG_{\varepsilon}(1)+1)}^{-\frac{1}{4}}  \ri\|_{\mathscr{B}(\mathscr{K}_{\eps})} \Tr \lf( \Gamma_{\varepsilon}(\tau){(\mathrm{d}\GG_{\varepsilon}(1)+1)}^{\frac{1}{2}}  \ri) 
		}
		where we have used the identity
		\beq
			\lf\| (\mathrm{d}\GG_{\varepsilon}(1)+1)^{-\frac{1}{4}} \Gamma_{\eps} {(\mathrm{d}\GG_{\varepsilon}(1)+1)}^{-\frac{1}{4}} \ri\|_{\mathscr{L}^1(\mathscr{H})} = \Tr \lf( \Gamma_{\varepsilon}(\tau){(\mathrm{d}\GG_{\varepsilon}(1)+1)}^{\frac{1}{2}}  \ri).
		\eeq
		Next, we apply \cite[Corollary 6.2 (ii)]{ammari2014jsp} and \eqref{eq: propagation}, which follows from \cref{prop:12}, to deduce
		\bml{
			(I) \leq C C_{\frac{1}{2}} N{(N^2+1)}^{\frac{1}{2}} \lf\| \lambda  \ri\|_{L^{\infty}_\xv L^2_\kv}^{} \lf\| \mathcal{B} \ri\| \int_s^t \mathrm{d}\tau \: \exp \lf\{ 
      		\tx\frac{3}{2} \lf\| \lambda  \ri\|_{L^{\infty}_\xv L^2_\kv}^{} |\tau| \ri\} \\
      \leq C \lf[ \exp \lf\{ \tx\frac{3}{2} \lf\| \lambda  \ri\|_{L^{\infty}_\xv L^2_\kv}^{} t \ri\} - \exp \lf\{ \tx\frac{3}{2} \lf\| \lambda  \ri\|_{L^{\infty}_\xv L^2_\kv}^{} s \ri\} \ri| \; .
    		}
    	
  		The second term $(II) $ is bounded using again \cref{prop:12} and \cref{lemma:6}, and the fact that
  $e^{it(K_0+\nu(\varepsilon)\mathrm{d}\GG_{\varepsilon}(\omega))}$ commutes with $\mathrm{d}\GG_{\varepsilon}(1)$:
  		\bmln{
      		(II) \leq \lVert \mathcal{B}  \rVert  \lf\| {(\mathrm{d}\GG_{\varepsilon}(1)+1)}^{-\frac{1}{4}}\bigl(W_{\varepsilon}(\eta) -W_{\varepsilon}(\xi) \bigr){(\mathrm{d}\GG_{\varepsilon}(1)+1)}^{-\frac{1}{4}} \ri\| \Tr \lf( \gint(s){(\mathrm{d}\GG_{\varepsilon}(1)+1)}^{\frac{1}{2}} \ri) \\
      			\leq  \lVert \mathcal{B}  \rVert  \lf\| {(\mathrm{d}\GG_{\varepsilon}(1)+1)}^{-\frac{1}{4}}\bigl(W_{\varepsilon}(\eta) -W_{\varepsilon}(\xi) \bigr){(\mathrm{d}\GG_{\varepsilon}(1)+1)}^{-\frac{1}{4}} \ri\| \Tr \lf( \Gamma_{\eps}(s){(\mathrm{d}\GG_{\varepsilon}(1)+1)}^{\frac{1}{2}} \ri) \\
      			\leq c_{\frac{1}{4}}C_{\frac{1}{2}}{(N^2+1)}^{\frac{1}{2}}\lVert \mathcal{B}  \rVert \lf(\min \lf\{ \lVert \eta  \rVert^{\frac{1}{2}}_{\mathfrak{h}},\lVert \xi  \rVert^{\frac{1}{2}}_{\mathfrak{h}} \ri\}+1 \ri)  e^{\frac{3}{2}\lVert \lambda  \rVert_{L^{\infty}_\xv L^2_\kv}^{}\lvert s  \rvert_{}^{}} \lf\| \eta-\xi  \ri\|_{\mathfrak{h}}^{^{\frac{1}{2}}}\; .
 		}
  		This concludes the proof.
\end{proof}

By means of \cref{prop:13}, we are now in a position to prove the existence of a common subsequence,
convergent for all times.

	\begin{proposition}[Existence of a converging subsequence]
  		\label{prop:14}
  		\mbox{}	\\
  		Let $\Gamma_{\varepsilon}\in \mathscr{L}^1_{+,1}(\mathscr{H} \otimes \mathscr{K}_{\eps})$ be such that there exists $\delta\geq \frac{1}{2}$ so
  that
  		\begin{equation*}
   			 \Tr \lf( \Gamma_{\varepsilon}{(\mathrm{d}\GG_{\varepsilon}(1)+1)}^{\delta}  \ri)_{}^{}\leq C\;.
  		\end{equation*}
 		Then, for any sequence $ \lf\{ \varepsilon_n \ri\}_{n \in \N} $, with $\varepsilon_n \to 0$, there exists a subsequence $ \lf\{ \varepsilon_{n_k} \ri\}_{k \in \N}$, with $ \varepsilon_{n_k} \to 0$, and a family of state-valued
  probability measures $ \lf\{ \nn_t \ri\}_{t \in \R} $ indexed by time, such that for all $t\in \mathbb{R}$,
  		\begin{equation}
  			\label{eq: existence subsequence}
    			\Upsilon_{\varepsilon_{n_k}}(t) \xrightarrow[k \to + \infty]{} \nn_t \;.
  		\end{equation}
 		Furthermore, for any $T>0$, there exists $ C(T)>0$, such that, for any $t\in [-T,T]$ and for any $\delta' \leq \delta$,
  		\begin{equation}
  			\label{eq: a priori bound common}
    			\int_{\mathfrak{h}}^{} \mathrm{d} \mu_{\nn_t}(z) \: {\bigl(\lVert z  \rVert_{\mathfrak{h}}^2+1\bigr)}^{\delta'} \leq C(T)\; .
  		\end{equation}
	\end{proposition}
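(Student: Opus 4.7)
The plan is to combine the uniform regularity propagation coming from \cref{prop:12} with the uniform-in-$\eps$ equicontinuity of \cref{prop:13} and the pointwise-in-$t$ compactness of \cref{prop:8}, via a standard Arzel\`{a}--Ascoli-type diagonal extraction.

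First, I would observe that, thanks to \cref{prop:12} and \cref{rem: regularity gint} (in particular the bound \eqref{eq: propagation}, which is equally valid for $\Upsilon_{\eps}(t)$ since $e^{it(K_0+\nu(\eps)\mathrm{d}\GG_{\eps}(\omega))}$ commutes with $\mathrm{d}\GG_{\eps}(1)$), for every $T>0$ there exists a constant $C(\delta,T)$ such that
\begin{equation*}
	\sup_{\eps \in (0,1)}\sup_{t \in [-T,T]} \Tr\bigl(\Upsilon_{\eps}(t)(\mathrm{d}\GG_{\eps}(1)+1)^{\delta}\bigr) \leq C(\delta,T).
\end{equation*}
In particular, for each fixed $t \in \R$, the family $\{\Upsilon_{\eps}(t)\}_{\eps}$ satisfies the hypothesis of \cref{prop:8} with uniform constant, so every subsequence admits a further subsequence converging in the sense of \cref{def: qc convergence} to a state-valued Radon measure on $\mathfrak{h}$.

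Second, I would fix a countable dense subset $D \subset \R$ and apply a diagonal extraction: enumerate $D = \{t_j\}_{j \in \N}$; starting from the given sequence $\{\eps_n\}$, use \cref{prop:8} at $t_1$ to extract a subsequence along which $\Upsilon_{\eps}(t_1)$ converges to some $\mathfrak{n}_{t_1}$; from this subsequence, extract a further one converging at $t_2$; and so on. The standard diagonal $\{\eps_{n_k}\}$ then yields convergence $\Upsilon_{\eps_{n_k}}(t_j) \to \mathfrak{n}_{t_j}$ in the quasi-classical sense for every $j \in \N$.

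Third, and this is the main technical point, I would promote this to convergence for every $t \in \R$ using the uniform equicontinuity of \cref{prop:13}. Fix $t \in \R \setminus D$, $\eta \in \mathfrak{h}$ and $\mathcal{B} \in \mathscr{L}^{\infty}(\mathscr{H})$. Given $\sigma >0$, by equicontinuity there exists $\rho>0$, independent of $\eps$, such that $|s-t|<\rho$ and $s \in D$ imply
\begin{equation*}
	\bigl|\tr_{\mathscr{H}}\bigl([\hat{\Upsilon}_{\eps}(t)(\eta)-\hat{\Upsilon}_{\eps}(s)(\eta)]\mathcal{B}\bigr)\bigr| < \sigma
\end{equation*}
uniformly in $\eps \in (0,1)$. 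Choosing any such $s \in D$ and using the convergence of $\hat{\Upsilon}_{\eps_{n_k}}(s)(\eta)$ in weak-$*$ topology, a $3\sigma$-argument shows that $\{\tr_{\mathscr{H}}(\hat{\Upsilon}_{\eps_{n_k}}(t)(\eta) \mathcal{B})\}_{k}$ is Cauchy. Since $\mathcal{B}$ was an arbitrary compact operator and $\{\hat{\Upsilon}_{\eps_{n_k}}(t)(\eta)\}_k$ is norm-bounded in $\mathscr{L}^1(\mathscr{H})$ by the Step~1 estimate, weak-$*$ sequential compactness together with this Cauchy property identifies a unique limit, which I denote $\hat{\mathfrak{n}}_t(\eta) \in \mathscr{L}^1(\mathscr{H})$. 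Applying \cref{prop:8} along $\{\eps_{n_k}\}$ at this fixed $t$, one sees that $\hat{\mathfrak{n}}_t$ is the noncommutative Fourier transform of a state-valued Radon measure $\mathfrak{n}_t$ on $\mathfrak{h}$, which by Bochner's theorem is unique; hence the full subsequence $\Upsilon_{\eps_{n_k}}(t)$ converges to $\mathfrak{n}_t$ in the sense of \cref{def: qc convergence}. This proves \eqref{eq: existence subsequence}.

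Finally, the a priori bound \eqref{eq: a priori bound common} is immediate: by \eqref{eq:19} of \cref{prop:8} applied at each $t \in [-T,T]$ and by the uniform estimate of Step~1,
\begin{equation*}
	\int_{\mathfrak{h}} \diff \mu_{\nn_t}(z)\,\bigl(\lVert z \rVert_{\mathfrak{h}}^2+1\bigr)^{\delta'} \leq \liminf_{k \to \infty} \Tr\bigl(\Upsilon_{\eps_{n_k}}(t)(\mathrm{d}\GG_{\eps_{n_k}}(1)+1)^{\delta'}\bigr) \leq C(\delta',T)\;.
\end{equation*}
The delicate point, and really the only non-routine step, is the third one: the equicontinuity of \cref{prop:13} is with respect to the weak-$*$ topology on $\mathscr{L}^1(\mathscr{H})$, so one must invoke separability arguments carefully (the space of compact operators is separable, so one can also work with a countable dense subset of $\mathscr{L}^{\infty}(\mathscr{H})$ if needed), and one must verify that the limits constructed pointwise at $t \notin D$ indeed come from a genuine state-valued measure rather than merely an $\mathscr{L}^1(\mathscr{H})$-valued function of $\eta$.
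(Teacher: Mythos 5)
Your proof is essentially correct and follows the same strategy as the paper's: uniform propagation of the number bound via \cref{prop:12} (and \cref{rem: regularity gint}), a diagonal extraction over a countable dense set of times, and the uniform equicontinuity from \cref{prop:13} to promote pointwise convergence on the dense set to convergence for all $t$, with \cref{prop:8} identifying the limit as a Radon state-valued measure and eq.~\eqref{eq:19} yielding the uniform bound.

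The one cosmetic difference is the order in which the limit object is built: the paper first passes to the weak-$*$ limit $\widehat{\nn}_{t_j}\to\widehat{\nn}_t$ of the Fourier transforms of the cluster measures at rational times, invokes Bochner's theorem for cylindrical vector measures to produce $\nn_t$, and only then proves $\Upsilon_{\eps_{n_k}}(t)\to\nn_t$ by another equicontinuity argument; you instead show directly that $\bigl\{\tr_{\mathscr{H}}\bigl(\widehat\Upsilon_{\eps_{n_k}}(t)(\eta)\,\mathcal B\bigr)\bigr\}_k$ is Cauchy and let \cref{prop:8} identify the limit in one stroke, which is a slight streamlining of the same idea. A caveat worth flagging: the statement asserts \emph{probability} measures, i.e.\ $\lVert \nn_t(\mathfrak{h})\rVert_{\mathscr L^1}=1$, but what your argument (and \cref{prop:8} alone, without \eqref{eq: ass1} or \eqref{eq: ass2}) actually delivers is a Radon state-valued measure with $\lVert \nn_t(\mathfrak{h})\rVert_{\mathscr L^1}\le 1$, consistent with the possible loss of mass discussed in \cref{rem: role}; the paper's own proof asserts $\lVert\widehat{\nn}_t(0)\rVert_{\mathscr L^1}=1$ en route to Bochner without further justification, so this is a shared imprecision rather than a defect specific to your argument, but you should be explicit that normalization is not claimed at this stage.
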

	
	\begin{proof}
  		Let $E:={\{t_j\}}_{j\in \mathbb{N}}\subset \mathbb{R}$ be a dense countable subset of $ \R $, and let $\varepsilon_n\to 0$. Using a
  diagonal extraction argument, and \cref{prop:12,prop:8} (see also \cref{rem: regularity gint} and \eqref{eq: propagation}), there exists a subsequence $\varepsilon_{n_k}\to 0$ such that for all
  $t_j\in E$:
  		\begin{equation*}
    			\Upsilon_{\varepsilon_{n_k}}(t_j) \xrightarrow[k \to + \infty]{} \nn_{t_j}\; .
  		\end{equation*}
  		In addition, since $ \bigl\lVert \bigl[\widehat{\Upsilon}_{\varepsilon}(t_j)\bigr](\eta) \bigr\rVert_{\mathscr{L}^1(\mathscr{H})}^{}\leq
  1$, for any $\eta\in \mathfrak{h}$ and $t_j\in E$, it follows that $ \lVert \widehat{\nn}_{t_j}(\eta)
  \rVert_{\mathscr{L}^1(\mathscr{H})}^{}\leq 1$, by Banach-Alaoglu's theorem. Furthermore, by \cref{prop:13}, for any $t_j,t_\ell \in E $  and for any $\mathcal{B}\in \mathscr{L}^{\infty}(\mathscr{H})$,
  		\begin{equation*}
    			\lf| \tr_{\mathscr{H}} \lf[ \lf( \lf[\widehat{\Upsilon}_{\varepsilon_{n_k}}(t_j)-\widehat{\Upsilon}_{\varepsilon_{n_k}}(t_\ell) \ri](\eta) \ri) \mathcal{B} \ri]  \ri|_{}^{}\leq C \lf| e^{c \lvert t_j  \rvert_{}^{}}-e^{c \lvert t_\ell  \rvert_{}^{}}  \ri|\; ,
  		\end{equation*}
  		where the constants on the r.h.s. are independent of $ k $. Therefore, we can take the limit $k \to + \infty $ of the
  above inequality obtaining that, for all $\mathcal{B}\in \mathscr{L}^{\infty}(\mathscr{H})$,
  		\begin{equation}
    			\lf| \tr_{\mathscr{H}} \lf[ \lf( \widehat{\nn}_{t_j}(\eta) -\widehat{\nn}_{t_\ell}(\eta) \ri) \mathcal{B} \ri]  \ri|_{}^{}\leq C \lf| e^{c \lvert t_j  \rvert_{}^{}}-e^{c \lvert t_\ell  \rvert_{}^{}}  \ri|\;.
  		\end{equation}
  
  		Now, let $t\in \mathbb{R}$ be arbitrary. By density of $E\subset \mathbb{R}$, there exists a sequence $ \lf\{t_j \ri\}_{j\in \mathbb{N}}$ of times in
  $E$, such that $t_j\to t$. It follows that, for any $\eta\in \mathfrak{h}$,
  $ \lf\{\widehat{\nn}_{t_j}(\eta) \ri\}_{j\in \mathbb{N}}$ is a weak-$*$ Cauchy sequence in the ultraweakly
  compact unit ball of the uniform space $\mathscr{L}^1(\mathscr{H})$. Thus, it converges when $t_j\to t$. Hence, we define
  		\begin{equation}
    			\widehat{\nn}_t(\eta):= \underset{j \to + \infty}{\mathrm{w} \lim} \: \widehat{\nn}_{t_j}(\eta)\; ,
  		\end{equation}
  		where the limit is meant in the weak-$*$ topology. For any $t\in \mathbb{R}$, $ \eta \mapsto \widehat{\nn}_t(\eta)$ is
  an ultraweakly continuous function such that:
 		\begin{itemize}
  			\item $\lf\| \widehat{\nn}_t(0)  \ri\|_{\mathscr{L}^1(\mathscr{H})}^{}=1$;
 			 \item $\eta\mapsto \widehat{\nn}_t(\eta)$ is a function of completely positive type (see, {\it e.g.}, \cite[Definition
    A.7]{falconi2017arxiv}).
  		\end{itemize}
  		Therefore, by Bochner's theorem for cylindrical vector measures \cite[Theorem
  A.17]{falconi2017arxiv}, $\widehat{\nn}_t$ is the Fourier transform of a unique state-valued
  cylindrical probability measure $ \nn_t$.

  Furthermore, by approximating $ \Upsilon_{\varepsilon_{n_k}}(t)$ with $ \Upsilon_{\varepsilon_{n_k}}(t_j)$ and using the uniform
  equicontinuity of the noncommutative Fourier transform, one can prove that
  		\begin{equation*}
    			\Upsilon_{\varepsilon_{n_k}}(t) \xrightarrow[k \to +\infty]{} \nn_t\; .
  		\end{equation*}
  		Here, we have used \cref{prop:10} to lift the convergence from the weak-$*$ to the weak
                topology. This in particular implies that $ \nn_t$ is a probability Radon measure on
                $\mathfrak{h}$, because it is a Wigner measure of $ \gint(t)$, satisfying the hypotheses
                of \cref{prop:8}, thanks to \cref{prop:12}.

                To summarize, we have defined the common subsequence, and the family of state-valued
                probability measures obtained in the limit at any arbitrary time. The last inequality
                \eqref{eq: a priori bound common} is finally proved again combining \cref{prop:8,prop:12}.
	\end{proof}
	
Once rewritten for the density matrix $ \Gamma_{\eps}(t) $, the result of \cref{prop:14} reads as follows:
		
	\begin{corollary}
  		\label{cor:4}
  		\mbox{}	\\
  		If $\lim_{\varepsilon\to 0}\varepsilon\nu(\varepsilon) = \nu \in \mathbb{R} $, then, under
  the same hypotheses of \cref{prop:14}, there exists a common subsequence $ \lf\{ n_k \ri\}_{k \in \N} $, such that, for any $t\in \mathbb{R}$,
  		\begin{equation}
    			\Gamma_{\varepsilon_{n_k}}(t) \xrightarrow[k \to + \infty]{} \mm_t := e^{-it \mathcal{K}_0} \, \lf( e^{-it\nu\omega} \: _{\star} \: \nn_t \ri)\, e^{it \mathcal{K}_0}\; ,
  		\end{equation}
  		where $e^{-it\nu\omega}\: _{\star} \: \nn_t $ is the measure obtained pushing forward $ \nn_t$ by means of the unitary map $e^{-it\nu\omega}:\mathfrak{h}\to \mathfrak{h}$. Furthermore, for any $T>0$, any $t\in [-T,T]$ and any $\delta' \leq \delta$,
  		\begin{equation}
    			\int_{\mathfrak{h}}^{} \mathrm{d} \mu_{\mm_t}(z) \: {\bigl(\lVert z  \rVert_{\mathfrak{h}}^2+1\bigr)}^{\delta'} \leq C(T)\; .
  		\end{equation}
  		where $C(T)$ is the same as in \eqref{eq: a priori bound common}.
	\end{corollary}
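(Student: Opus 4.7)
The plan is to relate the noncommutative Fourier transform $\widehat{\Gamma}_{\eps}(t)$ to that of the interaction-picture state $\widehat{\Upsilon}_{\eps}(t)$, which is already controlled by \cref{prop:14}, and then pass to the limit. Since the free generator splits as $K_{0}+\nu(\eps)\diff\GG_{\eps}(\omega)=(\mathcal{K}_{0}\otimes 1)+(1\otimes\nu(\eps)\diff\GG_{\eps}(\omega))$ with commuting summands, definition \eqref{eq: interaction picture} of $\Upsilon_{\eps}(t)$ yields
\bdm
  \Gamma_{\eps}(t)=(e^{-it\mathcal{K}_{0}}\otimes 1)(1\otimes e^{-it\nu(\eps)\diff\GG_{\eps}(\omega)})\,\Upsilon_{\eps}(t)\,(1\otimes e^{it\nu(\eps)\diff\GG_{\eps}(\omega)})(e^{it\mathcal{K}_{0}}\otimes 1).
\edm
Pulling the particle factors $e^{\mp it\mathcal{K}_{0}}$ out of the partial trace $\tr_{\mathscr{K}_{\eps}}$, cycling the field factors $e^{\mp it\nu(\eps)\diff\GG_{\eps}(\omega)}$ inside it, and using the Bogoliubov action $e^{is\diff\GG_{\eps}(\omega)}W_{\eps}(\eta)e^{-is\diff\GG_{\eps}(\omega)}=W_{\eps}(e^{is\eps\omega}\eta)$, one obtains the key identity
\bdm
  \widehat{\Gamma}_{\eps}(t)(\eta)=e^{-it\mathcal{K}_{0}}\,\widehat{\Upsilon}_{\eps}(t)\!\lf(e^{it\eps\nu(\eps)\omega}\eta\ri)\,e^{it\mathcal{K}_{0}}.
\edm

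\textbf{Limit.} The claimed convergence $\Gamma_{\eps_{n_{k}}}(t)\to\mm_{t}$ then reduces to taking two simultaneous limits inside $\widehat{\Upsilon}_{\eps_{n_{k}}}(t)(e^{it\eps_{n_{k}}\nu(\eps_{n_{k}})\omega}\eta)$. On one hand, $\widehat{\Upsilon}_{\eps_{n_{k}}}(t)$ converges pointwise in the $\mathscr{L}^{1}(\mathscr{H})$-weak-$\ast$ topology to $\widehat{\nn}_{t}$ by \cref{prop:14}; on the other, the test vector $e^{it\eps_{n_{k}}\nu(\eps_{n_{k}})\omega}\eta$ converges in $\mathfrak{h}$-norm to $e^{it\nu\omega}\eta$, by functional calculus and the hypothesis $\eps\nu(\eps)\to\nu$. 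Setting $\xi_{k}:=e^{it\eps_{n_{k}}\nu(\eps_{n_{k}})\omega}\eta$, $\xi_{\infty}:=e^{it\nu\omega}\eta$, and picking any compact $\mathcal{B}\in\mathscr{L}^{\infty}(\mathscr{H})$, the two-term splitting
\bmln{
  \tr_{\mathscr{H}}\!\lf( \lf[\widehat{\Upsilon}_{\eps_{n_{k}}}(t)(\xi_{k})-\widehat{\nn}_{t}(\xi_{\infty})\ri]\mathcal{B} \ri) \\
  =\tr_{\mathscr{H}}\!\lf( \lf[\widehat{\Upsilon}_{\eps_{n_{k}}}(t)(\xi_{k})-\widehat{\Upsilon}_{\eps_{n_{k}}}(t)(\xi_{\infty})\ri]\mathcal{B} \ri)+\tr_{\mathscr{H}}\!\lf( \lf[\widehat{\Upsilon}_{\eps_{n_{k}}}(t)(\xi_{\infty})-\widehat{\nn}_{t}(\xi_{\infty})\ri]\mathcal{B} \ri)
}
controls the limit: the first summand vanishes as $k\to+\infty$ by the $\eps$-uniform equicontinuity proved in \cref{prop:13}, since $\xi_{k}\to\xi_{\infty}$ in $\mathfrak{h}$ and the sequence $\{\xi_{k}\}_{k}$ stays in the bounded ball of radius $\lVert\eta\rVert_{\mathfrak{h}}$; the second summand vanishes by the pointwise weak-$\ast$ convergence of \cref{prop:14}.

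\textbf{Identification and moment bound.} The last step is to recognise the limit as $\widehat{\mm}_{t}(\eta)$. A direct change of variables in \eqref{eq: fourier transform} shows that pushing $\nn_{t}$ forward by the unitary $e^{-it\nu\omega}$ transforms its Fourier transform according to $\widehat{(e^{-it\nu\omega})_{\star}\nn_{t}}(\eta)=\widehat{\nn}_{t}(e^{it\nu\omega}\eta)$, while the conjugation by $e^{\mp it\mathcal{K}_{0}}$ acts only on the operator-valued Radon--Nikod\'{y}m density and commutes with the scalar Fourier factor, so that $\widehat{\mm}_{t}(\eta)=e^{-it\mathcal{K}_{0}}\widehat{\nn}_{t}(e^{it\nu\omega}\eta)e^{it\mathcal{K}_{0}}$, which is exactly the limit obtained above. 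The weighted moment bound is inherited at once from \eqref{eq: a priori bound common}, since the scalar pushforward by $e^{-it\nu\omega}$ is an $\mathfrak{h}$-isometry (so it preserves $\lVert z\rVert_{\mathfrak{h}}$) and the unitary conjugation $e^{-it\mathcal{K}_{0}}\cdot e^{it\mathcal{K}_{0}}$ preserves the trace of the Radon--Nikod\'{y}m density. The only genuine technical obstacle is the joint limit in the second step, and it is precisely to handle it that \cref{prop:13} was established in advance.
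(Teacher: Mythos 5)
Your proof is correct, and it is a slightly different, more self-contained route than the paper's. The paper dispatches the corollary in one line: it conjugates the test operators by $e^{\mp it\mathcal{K}_0}$ (your "cycling the particle factors out of the partial trace") and then invokes an abstract external result, \cite[Proposition~6.1]{falconi2017arxiv}, about the quantization of linear symplectic maps and the corresponding pushforward of Wigner measures, to handle the Bogoliubov unitary $e^{-it\nu(\eps)\diff\GG_{\eps}(\omega)}$ and its classical limit. You instead make the Bogoliubov covariance $e^{it\nu(\eps)\diff\GG_{\eps}(\omega)}W_{\eps}(\eta)e^{-it\nu(\eps)\diff\GG_{\eps}(\omega)}=W_{\eps}(e^{it\eps\nu(\eps)\omega}\eta)$ explicit, derive the key identity $\widehat{\Gamma}_{\eps}(t)(\eta)=e^{-it\mathcal{K}_{0}}\,\widehat{\Upsilon}_{\eps}(t)(e^{it\eps\nu(\eps)\omega}\eta)\,e^{it\mathcal{K}_{0}}$, and then handle the $\eps$-dependent test vector $\xi_{k}=e^{it\eps_{n_k}\nu(\eps_{n_k})\omega}\eta\to e^{it\nu\omega}\eta$ by a clean two-term splitting controlled by the $\eps$-uniform equicontinuity of \cref{prop:13} (which is exactly what that proposition was set up for) and the pointwise convergence of \cref{prop:14}. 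What your approach buys is independence from the external citation: every ingredient you use is already proved in this paper. What the paper's approach buys is brevity and generality, since the cited proposition applies to arbitrary symplectic maps, not just the specific free-field flow. The identification of the limit as $\widehat{\mm}_t(\eta)$ and the transfer of the moment bound via the $\mathfrak{h}$-isometry $e^{-it\nu\omega}$ are both handled correctly.
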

	
	\begin{proof}	
          The result trivially follows from \cref{prop:14} by identifying $e^{it \mathcal{K}_0}\mathcal{B}
          e^{-it \mathcal{K}_0}$, with $\mathcal{B}\in \mathscr{B}(\mathscr{H}) $, as the bounded operator
          for the weak convergence, and using a very general result for linear symplectic maps, and their
          quantization as maps on algebras of canonical commutation relations \cite[Proposition
          6.1]{falconi2017arxiv}.
	\end{proof}

Therefore, we have obtained a common convergent subsequence, and a map $t\mapsto \nn_t$ of
quasi-classical Wigner measures. The next step is to characterize such dynamical map explicitly by means
of a transport equation, and study the uniqueness properties of the latter. In order to do that, we study
the convergence of the integral equation provided in \cref{lemma:5}.

	\begin{proposition}[Transport equation for $ \nn(t) $]
  		\label{prop:15}
  		\mbox{}	\\
  		Under the same assumptions of \cref{prop:14}, the family of state-valued
  probability measures $ \lf\{ \nn_{t} \ri\}_{t \in \R} $ as in \eqref{eq: existence subsequence} satisfies in weak sense, \emph{i.e.},  when tested against any $\mathcal{B}\in \mathscr{B}(\mathscr{H}) $, the integral equation
  		\begin{equation}
  			\label{eq: fourier equation}
    			\widehat{\nn}_t(\eta)= \widehat{\nn}_s(\eta)-i\int_s^t  \mathrm{d}\tau  \int_{\mathfrak{h}}^{} \mathrm{d}\mu_{\nn_\tau}(z) \: \lf[ \widetilde{\mathcal{V}}_{\tau}(e^{-i\tau\nu\omega}z) , \gamma_{\nn_\tau}(z) \ri] e^{2i\Re \braket{\eta}{z}_{\mathfrak{h}}}   \; ,
  		\end{equation}
  		indexed by $\eta\in
  \mathfrak{h}$, where 
  		\beq
  			\widetilde{\mathcal{V}}_{\tau}(\: \cdot \:) := e^{i\tau \mathcal{K}_0}\mathcal{V}(\: \cdot \:)e^{-i\tau \mathcal{K}_0} =\sum_{j=1}^N
  e^{i\tau \mathcal{K}_0} 2\Re \braket{\lambda(\xv_j)}{\: \cdot \:}_{\mathfrak{h}} e^{-i\tau \mathcal{K}_0}
  		\eeq
  		is meant as a map  from $\mathfrak{h}$ to $ \mathscr{B}(\mathscr{H})$.
	\end{proposition}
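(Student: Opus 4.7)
The starting point is the microscopic Duhamel equation \eqref{eq: micro transport} of \cref{lemma:5}, restricted to the common convergent subsequence $ \lf\{ \eps_{n_k} \ri\}_{k \in \N} $ supplied by \cref{prop:14}. The plan is to pair both sides with an arbitrary compact operator $ \mathcal{B} \in \mathscr{L}^{\infty}(\mathscr{H}) $ via $ \tr_{\mathscr{H}} $ and pass to the limit $ k \to + \infty $. The LHS converges to $ \tr_{\mathscr{H}}\lf[\lf(\widehat{\nn}_t(\eta)-\widehat{\nn}_s(\eta)\ri)\mathcal{B}\ri] $ by \cref{def: qc convergence} and \cref{prop:14}. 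On the RHS, using cyclicity of $ \tr_{\mathscr{H}} $ to move the outer factors $ e^{\pm i\tau\mathcal{K}_0} $ onto $ \mathcal{B} $, the $ \tau $-integrand takes the form
\[
\Tr \lf( \lf[ \varphi_{\eps}\bigl(e^{-i\tau\eps\nu(\eps)\omega}\lambda(\xv_j)\bigr) , \gint(\tau) \ri] \lf( \widetilde{\mathcal{B}}(\tau)\otimes W_{\eps}(\eta) \ri) \ri), \quad \widetilde{\mathcal{B}}(\tau) := e^{-i\tau\mathcal{K}_0}\mathcal{B}\,e^{i\tau\mathcal{K}_0} \in \mathscr{L}^{\infty}(\mathscr{H}),
\]
using also that $ \widetilde{\mathcal{B}}(\tau) $ and $ W_{\eps}(\eta) $ act on different factors of the tensor product and that unitary conjugation preserves compactness.

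The limit of the RHS is then carried out in two steps. First, a dominated-convergence argument swaps $\lim_{k}$ with $\int_s^t \diff\tau$: combining the standard estimate $\lf\|\varphi_{\eps}(f)(\mathrm{d}\GG_{\eps}(1)+1)^{-1/2}\ri\|_{\mathscr{B}(\mathscr{K}_\eps)} \leq C\lVert f \rVert_{\mathfrak{h}}$, the unitarity of $W_\eps(\eta)$, and the $\eps$-uniform propagation \eqref{eq: propagation} from \cref{prop:12} together with \cref{rem: regularity gint}, one bounds the $\tau$-integrand in absolute value by $C\lVert \mathcal{B}\rVert\,\lVert \lambda\rVert_{L^\infty_\xv L^2_\kv}\,e^{c|\tau|}$, uniformly in both $\eps_{n_k}$ and $\tau\in[s,t]$. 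Second, for the pointwise-in-$\tau$ convergence one writes $\varphi_\eps(e^{-i\tau\eps\nu(\eps)\omega}\lambda(\xv_j)) = \mathrm{Op}_\eps^{\mathrm{Wick}}(\mathcal{F}^\eps_{\tau,j})$ with $\mathcal{F}^\eps_{\tau,j}(z) := 2\Re\braket{e^{-i\tau\eps\nu(\eps)\omega}\lambda(\xv_j)}{z}_{\mathfrak{h}}$, and shows, using $\eps\nu(\eps)\to\nu$, that $\lf\|\mathcal{F}^\eps_{\tau,j}-\mathcal{F}_{\tau,j}\ri\|_{L^\infty_\xv L^2_\kv}\to 0$ with $\mathcal{F}_{\tau,j}(z) := 2\Re\braket{e^{-i\tau\nu\omega}\lambda(\xv_j)}{z}_{\mathfrak{h}}$; together with the above $\eps$-uniform bounds this allows to replace $\mathcal{F}^\eps_{\tau,j}$ by $\mathcal{F}_{\tau,j}$. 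The commutator is then split into two pieces, and to each piece \cref{prop:10} is applied, with $\widetilde{\mathcal{B}}(\tau)$ playing the role of the compact observable and $\mathcal{T}, \mathcal{S}$ equal to $\mathds{1}$ or $\widetilde{\mathcal{B}}(\tau)$ as dictated by the order of the factors in the commutator.

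Putting the two steps together, the $\mathcal{B}$-pairing of \eqref{eq: fourier equation} is obtained for every compact $\mathcal{B}$, once the factors $e^{\pm i\tau\mathcal{K}_0}$ carried by $\widetilde{\mathcal{B}}(\tau)$ are reabsorbed by cyclicity to produce $\widetilde{\mathcal{V}}_\tau$ in the limit commutator. Since $\mathscr{L}^\infty(\mathscr{H})$ separates points in $\mathscr{L}^1(\mathscr{H})$, \eqref{eq: fourier equation} holds as an identity of trace-class operators, and hence, by duality, in the stated weak sense against any $\mathcal{B}\in\mathscr{B}(\mathscr{H})$. The main technical obstacle is the $\xv$-uniform convergence $\lf\|\mathcal{F}^\eps_{\tau,j}-\mathcal{F}_{\tau,j}\ri\|_{L^\infty_\xv L^2_\kv}\to 0$ needed to apply \cref{prop:10}, because strong continuity of $e^{-it\omega}$ on $\mathfrak{h}$ only produces convergence for each fixed $\xv$; in the concrete Nelson setting, where $\lambda(\xv;\kv)=\lambda_0(\kv)e^{-i\kv\cdot\xv}$ is the typical form, the $\xv$-dependence is a unimodular phase and the uniformity follows from a single $\xv$-independent dominated-convergence argument on $\lambda_0\in L^2(\R^d)$.
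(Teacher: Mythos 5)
Your plan coincides with the paper's proof: both start from \cref{lemma:5} restricted to the common convergent subsequence from \cref{prop:14}, absorb $e^{\pm i\tau\mathcal{K}_0}$ into the test operator by cyclicity, pass the limit inside the time integral by dominated convergence (via the Fock-space estimate for $\varphi_\eps$ and the $\eps$-uniform propagation bound from \cref{prop:12}/\eqref{eq: propagation}), apply \cref{prop:10} to the $\tau$-integrand, and finally upgrade from compact to bounded test operators once the limiting identity is seen to hold between trace-class operators.

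The one place where you go beyond the paper's (rather terse) proof is your explicit treatment of the $\eps$-dependence of the interaction symbol through the factor $e^{-i\tau\eps\nu(\eps)\omega}$, which is a genuine subtlety: \cref{prop:10} as stated applies to a fixed symbol $\mathcal{F}\in\mathscr{S}_{\ell,m}$, so one must justify replacing $\mathcal{F}^{\eps}_{\tau}$ by $\mathcal{F}_{\tau}$. However, your proposed control, namely $\lVert(e^{-i\tau\eps\nu(\eps)\omega}-e^{-i\tau\nu\omega})\lambda\rVert_{L^\infty_\xv L^2_\kv}\to 0$, fails for a general $\lambda\in L^\infty(\R^d;\mathfrak{h})$: $e^{-i\tau\eps\nu(\eps)\omega}\to e^{-i\tau\nu\omega}$ only strongly on $\mathfrak{h}$, not in operator norm, so the essential supremum over $\xv$ need not vanish unless $\lambda(\R^d)$ is precompact in $\mathfrak{h}$; your argument covers only the special form $\lambda(\xv;\kv)=\lambda_0(\kv)e^{-i\kv\cdot\xv}$, whereas the proposition is proved for arbitrary $\lambda\in L^\infty(\R^d;\mathfrak{h})$. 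The correct way to handle the general case is to route the $\eps$-dependence through the simple-function approximation of \cref{lemma:4}: approximate $\lambda$ by $\lambda_M=\sum_k\varphi_k\one_{B_k}$ with error $\lVert\lambda-\lambda_M\rVert_{L^\infty_\xv L^2_\kv}$ uniform in $\eps$ (since $e^{-i\tau\eps\nu(\eps)\omega}$ is unitary), and then, on the simple piece, $\lVert(e^{-i\tau\eps\nu(\eps)\omega}-e^{-i\tau\nu\omega})\lambda_M\rVert_{L^\infty_\xv L^2_\kv}=\max_{k\leq J(M)}\lVert(e^{-i\tau\eps\nu(\eps)\omega}-e^{-i\tau\nu\omega})\varphi_k\rVert_{\mathfrak{h}}\to 0$ by strong continuity, because only finitely many $\mathfrak{h}$-vectors are involved. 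With that replacement your argument is complete and matches what the paper does implicitly through its invocation of \cref{prop:10}.
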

	
	\begin{proof}
  		The existence of a common subsequence $ \lf\{ \eps_{n_k} \ri\}_{k \in \N} $, $ \eps_{n_k} \to 0 $, such that \eqref{eq: existence subsequence} holds true is guaranteed by \cref{prop:14}. Let us now fix $s,t\in \mathbb{R}$: given the convergence along the subsequence at any time, it is possible to take the limit $k\to \infty$ separately in all terms of the
  microscopic integral equation of evolution given in \cref{lemma:5}, traced against an arbitrary operator
  $\mathcal{B}\in \mathscr{B}(\mathscr{H}) $. 
  
  		For the integral term (second term on the r.h.s. of \eqref{eq: micro transport}), we make use of \cref{prop:10,prop:12}, where the
  latter is used to prove that $ \gint(\tau)$ satisfies the hypotheses of the former for all $ \tau \in
  [s,t]$, using $e^{-i\tau \mathcal{K}_0}\mathcal{B} e^{i\tau \mathcal{K}_0}$ as test operators. In order to do that, it is necessary to take the limit within the time integral. That is
  possible thanks to a dominated convergence argument, that makes use of the regularity assumption on
  $\Gamma_{\varepsilon}$: for any bounded operator $\mathcal{B}$, consider the integrand function
  		\begin{equation*}
  			I(\tau):= \sum_{j=1}^N\Tr \lf( \lf[\varphi_{\varepsilon}\bigl(e^{-i\tau\varepsilon\nu(\varepsilon)\omega}\lambda(\xv_j) \bigr), \gint(\tau) \ri] \bigl(e^{-i\tau \mathcal{K}_0}\mathcal{B} e^{i\tau \mathcal{K}_0}\otimes W_{\varepsilon}(\eta)\bigr) \ri)\; .
  	\end{equation*}
  	Its absolute value is bounded, using standard Fock space estimates, as 
  	\begin{equation*}
    		\lvert I(\tau)  \rvert_{}^{}\leq 2N \lf\| \lambda  \ri\|_{L^{\infty}(\mathbb{R}^d;,\mathfrak{h})}^{}\lVert \mathcal{B}  \rVert \Tr \lf( \gint(\tau){(\mathrm{d}\GG_{\varepsilon}(1)+1)}^{\frac{1}{2}} \ri) \; .
  	\end{equation*}
  Using \cref{prop:12} (see \eqref{eq: propagation} and \cref{rem: regularity gint}) and the regularity assumption on $\Gamma_{\varepsilon}$, it follows that the r.h.s.\ of the above expression is uniformly bounded by a finite constant. Hence, $ I(\tau) $ is integrable on any finite interval $[s,t]$, uniformly in $\varepsilon$.
	\end{proof}

\section{Uniqueness for the Quasi-Classical Equation of Transport}
\label{sec:uniq-quasi-class}

In this section we study the properties of the transport equation for state-valued measures obtained in
\cref{prop:15} as the quasi-classical limit of the microscopic evolution of states. 

The first technical point is discussed in \cref{lemma:9}
below, where it is proven that it is possible to exchange freely the two integrals of the aforementioned equation, which reads
\begin{equation}
	\label{eq: equation}
    	\mathrm{d}\nn_t(z)= \mathrm{d}\nn(z)-i\int_s^t \mathrm{d}\tau \mathrm{d} \mu_{\nn_\tau}(z) \: \lf[\widetilde{\mathcal{V}}_{\tau}(e^{-i\tau\nu \omega}z),\gamma_{\nn_\tau}(z) \ri]
 \end{equation}
 or, equivalently, using the Radon-Nikod\'{y}m decomposition $ \nn_t = \lf( \mu_{\nn_t}, \: \gamma_{\nn_t}(z) \ri) $,
\begin{equation}
  \label{eq:17}
  \gamma_{\nn_t} (z)\mathrm{d}\mu_{\nn_t}(z)= \gamma_{\nn_s}(z) \mathrm{d}\mu_{\nn_s}(z)-i\int_s^t\mathrm{d}\tau \:\mathrm{d}\mu_{\nn_{\tau}}(z) \lf[\widetilde{\mathcal{V}}_{\tau} (e^{-i\tau\nu\omega}z) , \gamma_{\nn_\tau}(z) \ri]  \; .
\end{equation}
Let us discuss the Bochner integrability of $ \bigl[\widetilde{\mathcal{V}}_{\tau} (e^{-i\tau\nu\omega}z) , \gamma_{\nn_\tau}(z)
\bigr]$ and justify the above statement.

	\begin{lemma}
  		\label{lemma:9}
  		\mbox{}	\\
  		Let $ \lf\{ \nn_{t} \ri\}_{t \in \R} $ be the family of state-valued measures as in \cref{prop:14}, then $\bigl[\widetilde{\mathcal{V}}_{t} (e^{-it\nu\omega}z) , \gamma_{\nn_t}(z) \bigr]$ is Bochner $\mu_{\nn_t}$-integrable
  for any $t\in \mathbb{R}$, and the norm of the integral is uniformly bounded w.r.t. $t$ on compact sets.
	\end{lemma}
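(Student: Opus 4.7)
The plan is to verify the two standard conditions for Bochner integrability in $\mathscr{L}^1(\mathscr{H})$, namely strong $\mu_{\nn_t}$-measurability of $z \mapsto [\widetilde{\mathcal{V}}_t(e^{-it\nu\omega}z),\gamma_{\nn_t}(z)]$ and integrability of its $\mathscr{L}^1$-norm, and then to read off the uniform bound on compact time intervals from the a priori estimate \eqref{eq: a priori bound common} with $\delta'=\tfrac{1}{2}$.

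First, I would check strong measurability. The map $z\mapsto \mathcal{V}(z) = \sum_{j=1}^N 2\Re\braket{\lambda(\xv_j)}{z}_{\mathfrak{h}}$ is a bounded multiplication operator on $\mathscr{H}$ depending linearly (over $\mathbb{R}$) and continuously on $z\in\mathfrak{h}$; composing with the unitary $e^{-it\nu\omega}:\mathfrak{h}\to \mathfrak{h}$ and conjugating by the unitary $e^{it\mathcal{K}_0}$ preserves this continuity, so $z\mapsto \widetilde{\mathcal{V}}_t(e^{-it\nu\omega}z)\in \mathscr{B}(\mathscr{H})$ is continuous (hence Borel measurable) in the uniform operator topology. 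The map $z\mapsto \gamma_{\nn_t}(z)\in \mathscr{L}^1_{+,1}(\mathscr{H})$ is strongly $\mu_{\nn_t}$-measurable by \cref{prop:7} (it is the Radon–Nikod\'ym derivative of the Bochner-integrable measure $\nn_t$). Since $\mathscr{L}^1(\mathscr{H})$ is a two-sided ideal in $\mathscr{B}(\mathscr{H})$ and multiplication is jointly continuous, the product, and therefore the commutator, is strongly measurable with values in $\mathscr{L}^1(\mathscr{H})$.

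Next, I would bound the integrand pointwise. Since $\mathcal{V}(z)$ acts as multiplication by $\sum_{j=1}^N 2\Re\braket{\lambda(\xv_j)}{z}_{\mathfrak{h}}$,
\begin{equation*}
  \lf\|\widetilde{\mathcal{V}}_t(e^{-it\nu\omega}z)\ri\|_{\mathscr{B}(\mathscr{H})}
  \leq 2N\,\lf\|\lambda\ri\|_{L^{\infty}_{\xv}L^2_{\kv}}\lf\|e^{-it\nu\omega}z\ri\|_{\mathfrak{h}}
  = 2N\,\lf\|\lambda\ri\|_{L^{\infty}_{\xv}L^2_{\kv}}\lf\|z\ri\|_{\mathfrak{h}},
\end{equation*}
where I used unitarity of $e^{-it\nu\omega}$ and Cauchy--Schwarz on each scalar product. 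Hence, because $\lf\|\gamma_{\nn_t}(z)\ri\|_{\mathscr{L}^1(\mathscr{H})}=1$ for $\mu_{\nn_t}$-a.e.\ $z$,
\begin{equation*}
  \lf\|\lf[\widetilde{\mathcal{V}}_t(e^{-it\nu\omega}z),\gamma_{\nn_t}(z)\ri]\ri\|_{\mathscr{L}^1(\mathscr{H})}
  \leq 2\lf\|\widetilde{\mathcal{V}}_t(e^{-it\nu\omega}z)\ri\|_{\mathscr{B}(\mathscr{H})}
  \leq 4N\,\lf\|\lambda\ri\|_{L^{\infty}_{\xv}L^2_{\kv}}\lf\|z\ri\|_{\mathfrak{h}}.
\end{equation*}

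Finally, I would integrate this bound. Applying \eqref{eq: a priori bound common} from \cref{prop:14} (which is available because the hypothesis $\delta\geq \tfrac{1}{2}$ is assumed) with $\delta'=\tfrac{1}{2}$, and using $\lf\|z\ri\|_{\mathfrak{h}}\leq (\lf\|z\ri\|_{\mathfrak{h}}^2+1)^{1/2}$, one obtains for every $t\in [-T,T]$
\begin{equation*}
  \int_{\mathfrak{h}} \diff\mu_{\nn_t}(z)\,\lf\|\lf[\widetilde{\mathcal{V}}_t(e^{-it\nu\omega}z),\gamma_{\nn_t}(z)\ri]\ri\|_{\mathscr{L}^1(\mathscr{H})}
  \leq 4N\,\lf\|\lambda\ri\|_{L^{\infty}_{\xv}L^2_{\kv}}\,C(T),
\end{equation*}
which is Bochner integrability together with the claimed uniform bound on compact time sets. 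I do not expect any real obstacle here: the only mild subtlety is to ensure a uniform-in-$t$ control of $\int \lf\|z\ri\|_{\mathfrak{h}}\diff\mu_{\nn_t}(z)$, which is provided precisely by \eqref{eq: a priori bound common}; no delicate weak-measurability argument (e.g.\ Pettis' theorem) is needed, since the strong measurability follows directly from continuity of $\widetilde{\mathcal{V}}_t(e^{-it\nu\omega}\,\cdot\,)$ in the operator norm and measurability of $\gamma_{\nn_t}(\,\cdot\,)$.
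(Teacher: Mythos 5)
Your proposal is correct and follows essentially the same route as the paper's proof: bound the commutator in trace norm by $4N\lVert \lambda\rVert_{L^{\infty}_{\xv}L^{2}_{\kv}}\lVert z\rVert_{\mathfrak{h}}$ using unitarity of $e^{-it\nu\omega}$ and $e^{it\mathcal{K}_0}$ together with $\lVert\gamma_{\nn_t}(z)\rVert_{\mathscr{L}^1(\mathscr{H})}=1$ a.e., then integrate via the a priori moment bound \eqref{eq: a priori bound common}. The additional measurability discussion you include is sound and merely makes explicit a step the paper leaves implicit.
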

	
	\begin{proof}
		By \eqref{eq: a priori bound common}, we immediately have that
		\begin{equation}
    			\label{eq:23}
    			\int_{\mathfrak{h}}^{} \mathrm{d} \mu_{\nn_t}(z) \lVert z  \rVert_{\mathfrak{h}} \leq C(t)\; 
 		\end{equation}
 		for all $t\in \mathbb{R}$ and for some $C(t) < + \infty $. Moreover, for $\mu_{\nn_t}$-almost all $z\in \mathfrak{h}$, $ \lf\|  \gamma_{\nn_t}(z)
  \ri\|_{\mathscr{L}^1(\mathscr{H})}^{}=1$, so that
  		\bdm
  		    \lf\| \lf[\widetilde{\mathcal{V}}_t (e^{-it\nu\omega}z) , \gamma_{\nn_t}(z) \ri]  \ri\|_{\mathscr{L}^1(\mathscr{H})}^{} \leq 2 \lf\| \widetilde{\mathcal{V}}_t (e^{-it\nu\omega}z)  \ri\| \leq 4N \lf\| \lambda  \ri\|_{L^{\infty}(\mathbb{R}^{d};\mathfrak{h})} \lf\| z  \ri\|_{\mathfrak{h}}^{}\;,
  		\edm
  		which implies the result via \eqref{eq:23}.
	\end{proof}

From now on, we assume that we are considering a solution of $t\mapsto \nn_t$ that
satisfies~\eqref{eq: fourier equation}. Let us introduce some terminology: a family of measures $t\mapsto
\nn_t$ solving \eqref{eq: fourier equation} in \cref{prop:15} for all $\eta\in \mathfrak{h}$ is called {\it weak} or {\it weak-$*$} \emph{Fourier solution}, if \eqref{eq: fourier equation} holds true when tested against bounded or compact operators, respectively. Note that every weak or weak-$*$ Fourier
solution is also a \emph{weak} or {\it weak-$*$ solution} of~\eqref{eq:17}, respectively, where the latter denote
solutions of the equation obtained testing with smooth cylindrical scalar functions instead of Fourier
characters. Let us specify further these last features. We first have to properly define the set of test cylindrical functions.

	\begin{definition}[Cylindrical functions]
  		\label{def:10}
  		\mbox{}	\\
  		A function $f: \mathfrak{h}\to \mathbb{C}$ is a smooth and compactly supported cylindrical function over
  		$\mathbb{P}\mathfrak{h}$, where $\mathbb{P}$ is an orthogonal projector and $\dim \mathbb{P} \mathfrak{h}<\infty$, iff there exists $g\in
  		C^{\infty}_0(\mathbb{P}\mathfrak{h})$ such that for all $z\in \mathfrak{h}$,
  		\begin{equation*}
   			 f(z)=g(\mathbb{P}z)\; .
  		\end{equation*}
  		We denote by $C_{0, \mathrm{cyl}}^{\infty}(\mathfrak{h})$ the set of all smooth cylindrical functions.
	\end{definition}
	
Now, let $f\in C_{0, \mathrm{cyl}}^{\infty}(\mathfrak{h})$, cylindrical over $\mathbb{P}\mathfrak{h}$, and
$\widehat{f}:\mathfrak{h}\to \mathbb{C}$ its Fourier transform, also cylindrical over $\mathbb{P}\mathfrak{h}$, defined as
\begin{equation}
  	\widehat{f}(\xi) =\int_{\mathbb{P}\mathfrak{h}}^{} \mathrm{d}\mathbb{P}z \: e^{-2\pi i \Re \braket{\xi}{\mathbb{P}z}} f(z) = \int_{\mathbb{P}\mathfrak{h}}^{} \mathrm{d}\mathbb{P}z  \: e^{-2\pi i \Re \braket{\xi}{\mathbb{P}z}} g(\mathbb{P}z)  =\widehat{g}(\mathbb{P}\xi)\; ,
\end{equation}
where $\mathrm{d} \mathbb{P}z $ stands for the Lebesgue measure on $\mathbb{P}\mathfrak{h}$. By testing \eqref{eq: fourier equation} against a cylindrical function $ \widehat{g}(\mathbb{P}\xi) $, we get 
\bmln{
      \int_{\mathbb{P}\mathfrak{h}}^{}\mathrm{d} \mathbb{P}\xi \: \widehat{g}(\mathbb{P}\xi)  \widehat{\nn}_t(\mathbb{P}\xi) \,= \int_{\mathbb{P}\mathfrak{h}}^{} \mathrm{d} \mathbb{P}\xi \: \widehat{g}(\mathbb{P}\xi) \widehat{\nn}_s(\mathbb{P}\xi) \\ -i \int_{\mathbb{P}\mathfrak{h}}^{}\mathrm{d} \mathbb{P}\xi \: \widehat{g}(\mathbb{P}\xi)  \int_s^t  \mathrm{d}\tau \int_{\mathfrak{h}}^{}   \mathrm{d}\nn_{\tau}(z)  \: \lf[\widetilde{\mathcal{V}}_{\tau}(e^{-i\tau\nu\omega}z), \gamma_{\nn_\tau}(z) \ri] e^{2\pi i\Re \braket{\mathbb{P}\xi}{z}_{\mathfrak{h}}} \,\; .
}
Hence, it follows that, for any  $f\in C_{0, \mathrm{cyl}}^{\infty}(\mathfrak{h})$,
\begin{equation}
  \label{eq:16}
  \int_{\mathfrak{h}}^{}  \mathrm{d}\nn_t(z) \: f(z)=\int_{\mathfrak{h}}^{}\mathrm{d}\nn_s(z) \: f(z)  -i\int_s^t \diff \tau \int_{\mathfrak{h}}^{}\mathrm{d}\nn_{\tau}(z) \: f(z)  \lf[\widetilde{\mathcal{V}}_{\tau}(e^{-i\tau\nu\omega}z), \gamma_{\nn_\tau}(z) \ri] \; .
\end{equation}

Now, let us fix $s\in \mathbb{R}$ as the initial time, and the corresponding $\nn_s\equiv \nn$ as
the initial datum. Then, the following $t\mapsto \nn_t$ is easily checked to be both a weak and weak-$*$
solution of~\eqref{eq:17}:
\begin{equation}
  \label{eq:3}
  t\mapsto \lf(\mu_{\nn_t}, \gamma_{\nn_t}(z) \ri) : = \lf( \mu_{\nn}, \: \widetilde{\mathcal{U}}_{t,s}(z) \gamma_{\nn}(z) \widetilde{\mathcal{U}}^{\dagger}_{t,s}(z) \ri)\;,
\end{equation}
where $\widetilde{\mathcal{U}}_{t,s}(z)$ is the two-parameter unitary group on $\mathscr{H} $ generated by
the time-dependent generator $ \widetilde{\mathcal{V}}_{\tau}(e^{-i t\nu\omega}z)\in \mathcal{L}(L^2)$. Note that such
an evolution two-parameter group exists for all $z\in \mathfrak{h}$ and $ t \in \mathbb{R}$, since the
$\widetilde{\mathcal{V}}_{t}(e^{-it\nu\omega}z)$ is bounded operators on $ \mathscr{H} $ (see, \emph{e.g.},
\cite{reed1975II}). Furthermore, the solution given by \eqref{eq:3} satisfies~\eqref{eq:23} at all times,
provided the inequality is satisfied by the initial datum.

It just remains to prove the solution in \eqref{eq:3} is actually \emph{unique}. This of course might depend on the notion of solution we adopt, but proving weak-$*$ uniqueness, we get also uniqueness for stronger solutions (weak, Fourier weak-$*$, and Fourier weak). As a matter of fact, the proof of uniqueness is actually independent of the notion of solution considered.

	\begin{proposition}[Uniqueness for the transport equation for $ \nn_t $]
  		\label{prop:16}
  		\mbox{}	\\
  		Let $s\in \mathbb{R}$ be the fixed initial time, and let $ \nn_s \equiv \nn \in
  \mathscr{M}(\mathfrak{h};\mathscr{H})$ be a Borel state-valued measure such that 
 		\begin{equation*}
    			\int_{\mathfrak{h}}^{} \mathrm{d}\mu_{\nn}(z)\lVert z  \rVert_{\mathfrak{h}} < C\; .
  		\end{equation*}
  		Then, the integral transport equation \eqref{eq: equation} admits a unique weak-$*$ solution $ \nn_t$, which satisfies~\eqref{eq:23}, defined by its norm Radon-Nikod\'{y}m decomposition
  		\begin{equation}
    			\lf( \mu_{\nn_t}, \: \gamma_{\nn_t}(z) \ri) = \lf( \mu_{\nn}, \:\widetilde{\mathcal{U}}_{t,s}(z) \: \gamma_{\nn}(z) \: \widetilde{\mathcal{U}}^{\dagger}_{t,s}(z) \ri)\; .
  		\end{equation}
  		Such solution is continuous and differentiable on every Borel set in the strong topology of
  $\mathscr{L}^1(\mathscr{H})$ and its the derivative $\partial_t \nn_t$ is a self-adjoint but in general not positive
  state-valued measure.
	\end{proposition}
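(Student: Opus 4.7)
Existence of the candidate solution is a direct verification: for every fixed $z \in \mathfrak{h}$, the bounded operator $\widetilde{\mathcal{V}}_{\tau}(e^{-i\tau\nu\omega}z) \in \mathscr{B}(\mathscr{H})$ is strongly continuous in $\tau$ with norm controlled by $2N\lVert\lambda\rVert_{L^\infty_\xv L^2_\kv}\lVert z\rVert_\mathfrak{h}$, so the Dyson series converges in operator norm and defines the unitary propagator $\widetilde{\mathcal{U}}_{t,s}(z)$, strongly differentiable in $t$ with $\partial_t \widetilde{\mathcal{U}}_{t,s}(z) = -i\widetilde{\mathcal{V}}_t(e^{-it\nu\omega}z)\widetilde{\mathcal{U}}_{t,s}(z)$. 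Setting $\gamma_{\nn_t}(z):=\widetilde{\mathcal{U}}_{t,s}(z)\gamma_{\nn}(z)\widetilde{\mathcal{U}}^{\dagger}_{t,s}(z) \in \mathscr{L}^1_{+,1}(\mathscr{H})$ and $\mu_{\nn_t}:=\mu_{\nn}$, differentiation in $t$ shows that the pair solves \eqref{eq:17} in the strong $\mathscr{L}^1(\mathscr{H})$-sense, hence a fortiori as a weak-$*$ solution, and that the a priori bound is preserved.

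The crux is uniqueness. I would first show that the scalar component $\mu_{\nn_t}$ of the Radon-Nikod\'{y}m decomposition of any weak-$*$ solution is independent of $t$, and then reduce the equation for the density to a linear Cauchy problem in $\mathscr{L}^1(\mathscr{H})$ with bounded generator. For the first step, positivity of state-valued measures yields $\mu_{\nn_t}(S) = \lVert\nn_t(S)\rVert_{\mathscr{L}^1(\mathscr{H})} = \tr_{\mathscr{H}}\nn_t(S)$. Pick a sequence $\{P_n\}_{n\in\N}$ of finite-rank orthogonal projections with $P_n \to \one$ in the strong operator topology and test \eqref{eq: equation} against $P_n$. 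Using cyclicity of the trace, the commutator term reads $\tr_{\mathscr{H}}\bigl(\gamma_{\nn_\tau}(z)[P_n,\widetilde{\mathcal{V}}_\tau(e^{-i\tau\nu\omega}z)]\bigr)$. Since strong operator convergence $P_n \to \one$ implies trace-norm convergence $P_n T, T P_n \to T$ for every $T\in\mathscr{L}^1(\mathscr{H})$, the quantity $[P_n,\widetilde{\mathcal{V}}_\tau(e^{-i\tau\nu\omega}z)]\gamma_{\nn_\tau}(z)$ goes to $0$ in $\mathscr{L}^1(\mathscr{H})$, so the trace vanishes pointwise in $(\tau,z)$. The uniform bound $2\lVert\widetilde{\mathcal{V}}_\tau(e^{-i\tau\nu\omega}z)\rVert \leq 4N\lVert\lambda\rVert_{L^\infty_\xv L^2_\kv}\lVert z\rVert_\mathfrak{h}$, combined with \eqref{eq:23} and \cref{lemma:9}, enables dominated convergence in $(\tau,z)$; this gives $\mu_{\nn_t}(S) = \mu_{\nn}(S)$ for all Borel $S$, hence $\mu_{\nn_t}=\mu_\nn$.

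With $\mu_{\nn_\tau}\equiv\mu_\nn$, testing the equation against any compact $\mathcal{B}\in\mathscr{L}^{\infty}(\mathscr{H})$ and applying Fubini yields, for every Borel $S$,
\begin{equation*}
  \int_S \mathrm{d}\mu_\nn(z)\,\Bigl[\tr_{\mathscr{H}}\bigl(\gamma_{\nn_t}(z)\mathcal{B}\bigr) - \tr_{\mathscr{H}}\bigl(\gamma_\nn(z)\mathcal{B}\bigr) + i\int_s^t \mathrm{d}\tau\,\tr_{\mathscr{H}}\bigl([\widetilde{\mathcal{V}}_\tau(e^{-i\tau\nu\omega}z),\gamma_{\nn_\tau}(z)]\mathcal{B}\bigr)\Bigr] = 0.
\end{equation*}
Varying $S$ and restricting $\mathcal{B}$ to a countable weak-$*$ dense subfamily of $\mathscr{L}^\infty(\mathscr{H})$, together with the strong time-continuity implicit in the Bochner integrability of \cref{lemma:9}, one concludes that for $\mu_\nn$-a.e.\ $z$ the density satisfies, in $\mathscr{L}^1(\mathscr{H})$, the operator Cauchy problem $\gamma_{\nn_t}(z) = \gamma_\nn(z) - i\int_s^t \mathrm{d}\tau\,[\widetilde{\mathcal{V}}_\tau(e^{-i\tau\nu\omega}z),\gamma_{\nn_\tau}(z)]$. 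The generator $\rho\mapsto -i[\widetilde{\mathcal{V}}_\tau(e^{-i\tau\nu\omega}z),\rho]$ is bounded on $\mathscr{L}^1(\mathscr{H})$ and continuous in $\tau$, so Picard iteration in the Banach space $\mathscr{L}^1(\mathscr{H})$, or equivalently a Gronwall estimate on the difference of two candidate densities, gives uniqueness and identifies the solution with the conjugation by $\widetilde{\mathcal{U}}_{t,s}(z)$. Continuity and differentiability on every Borel set in the strong $\mathscr{L}^1(\mathscr{H})$ topology then follow from the explicit formula; the derivative $-i\int_S \mathrm{d}\mu_\nn(z)\,[\widetilde{\mathcal{V}}_t(e^{-it\nu\omega}z),\gamma_{\nn_t}(z)]$ is self-adjoint yet trace-free, and hence generically not positive.

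\emph{Main obstacle.} The invariance $\mu_{\nn_t}=\mu_\nn$ is the delicate step: extending the weak-$*$ identity, valid only against compact test operators, to the trace identity that fixes the scalar measure requires an approximation-of-identity argument through finite-rank projections, made rigorous by the joint $(\tau,z)$-integrability secured by \eqref{eq:23} and \cref{lemma:9}. Once this invariance is in hand, the residual Cauchy problem falls squarely within classical Banach-space ODE theory.
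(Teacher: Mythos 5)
Your proposal is correct, but it follows a genuinely different route from the paper's proof. The paper's argument is an integrating-factor trick applied to the measure as a whole: it defines $\mathrm{d}\widetilde{\nn}_t(z):=\widetilde{\mathcal{U}}_{t,s}^{\dagger}(z)\,\gamma_{\nn_t}(z)\,\widetilde{\mathcal{U}}_{t,s}(z)\,\mathrm{d}\mu_{\nn_t}(z)$, differentiates in $t$ using the strong-$\mathscr{L}^1$ form \eqref{eq:18} of the equation together with the Leibniz rule, observes that the term coming from the propagators cancels the term coming from \eqref{eq:18}, and concludes $\widetilde{\nn}_t=\nn$; the decomposition \eqref{eq:3} and the invariance of the scalar measure then both drop out at the end (the latter because unitary conjugation preserves the trace norm of positive operators). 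You instead split the problem in two: you first freeze the scalar measure $\mu_{\nn_t}=\mu_{\nn}$ by testing against finite-rank projections $P_n\to\one$ — this step is sound, since $P_nT,TP_n\to T$ in trace norm for $T\in\mathscr{L}^1(\mathscr{H})$ does follow from strong convergence with $\lVert P_n\rVert\leq 1$, and dominated convergence in $(\tau,z)$ is secured by \eqref{eq:23} and \cref{lemma:9} — and then you disintegrate the equation into a $z$-pointwise linear Cauchy problem with bounded generator, settled by Picard/Gronwall. What each approach buys: the paper's conjugation argument is shorter and sidesteps the disintegration entirely, while yours makes explicit why weak-$*$ testing suffices and isolates the invariance of $\mu_{\nn}$ as a structural fact. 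The one place where your route demands extra care is the passage from the set-indexed identity to the $\mu_{\nn}$-a.e.\ pointwise Cauchy problem valid \emph{for all} $t$ simultaneously: the exceptional null set a priori depends on $t$ and on the test operator, and one must either pass through a countable dense set of times and a countable separating family of compacts and then restore all $t$ by continuity of the integral term (defining the continuous representative $\widetilde{\gamma}_t(z):=\gamma_{\nn}(z)-i\int_s^t\mathrm{d}\tau\,[\widetilde{\mathcal{V}}_\tau(e^{-i\tau\nu\omega}z),\gamma_{\nn_\tau}(z)]$ and running Gronwall on it), or accept the same level of informality about joint measurability of $(\tau,z)\mapsto\gamma_{\nn_\tau}(z)$ that the paper's own proof displays. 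This is a technical wrinkle, not a gap in the idea.
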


	\begin{proof}
  		Any weak solution $ \nn_t$ of the transport equation \eqref{eq: equation} or \eqref{eq:17}
  satisfying~\eqref{eq:23} is continuous and can be weakly differentiated w.r.t. time on Borel sets. However, given the structure of equation \eqref{eq:17}, it is easy to realize that such a derivative actually exists in the strong topology of $\mathscr{L}^1(\mathscr{H})$ and reads
  		\begin{equation}
    			\label{eq:18}
    			\mathrm{d}\,\partial_t \nn_t(z)= -i \lf [\widetilde{\mathcal{V}}_t(e^{-it\nu \omega}z),\gamma_{\nn_t}(z) \ri] \mathrm{d}\mu_{\nn_t}(z)\; .
  		\end{equation}
  
  		To prove uniqueness, suppose that $\nn_t$ is a solution satisfying~\eqref{eq:23}. Since we already know that \eqref{eq:3} solves the equation, it is sufficient to prove that $ \nn_t$ admits the
  Radon-Nikod\'{y}m decomposition \eqref{eq:3} (recall \cref{prop:7}). In order to do that, let us set
  		\begin{equation*}
    			\mathrm{d}{\widetilde{\nn}}_t(z):=\widetilde{\mathcal{U}}_{t,s}^{\dagger}(z) \gamma_{\nn_t}(z) \widetilde{\mathcal{U}}_{t,s}(z) \mathrm{d} \mu_{\nn_t}(z)\;,
  		\end{equation*}
  		so that, using \eqref{eq:17} once more, we get
  		\bmln{
      		i\mathrm{d}\partial_t\tilde{\nn}_t(z)=\widetilde{\mathcal{U}}^{\dagger}_{t,s}(z) \lf[\widetilde{\mathcal{V}}_t(e^{-it\nu \omega}z) , \gamma_{\nn_t}(z) \ri] \widetilde{\mathcal{U}}_{t,s}(z) \mathrm{d} \mu_{\nn_t}(z) \\
      		- \widetilde{\mathcal{U}}_{t,s}^{\dagger}(z) \lf[\widetilde{\mathcal{V}}_t(e^{-it\nu \omega}z), \gamma_{\nn_t}(z) \ri] \widetilde{\mathcal{U}}_{t,s}(z) \mathrm{d} \mu_{\nn_t}(z)  =0.
    		}
 		 Hence, $\tilde{\nn}_t= \tilde{\nn}_s = \nn $. Therefore,
  $ \nn_t$ has indeed norm Radon-Nikod\'{y}m decomposition \eqref{eq:3}.
\end{proof}

\section{Putting It All Together: Proof of \cref{thm:5}}
\label{sec:putting-it-all}

It is now possible to combine the results obtained in
\cref{sec:quasi-class-analys,sec:microscopic-model,sec:quasi-class-limit,sec:uniq-quasi-class}, and thus
prove \cref{thm:5}. We first state and prove the result for the evolution in the interaction picture and under a stronger assumption on the initial datum, and then complete the proof by relaxing it and going back to the evolution for $ \Gamma_{\eps}(t) $.

	\begin{proposition}[Quasi-classical evolution in the interaction picture]
  		\label{prop:17}
  		\mbox{}	\\
  		Let $\Gamma_{\varepsilon}\in \mathscr{L}^1_{+,1}(\mathscr{H} \otimes \mathscr{K}_{\eps})$ be such that there exists $\delta > \frac{1}{2}$, so
  that
  		\begin{equation*}
   			\Tr \lf( \Gamma_{\varepsilon}{(\mathrm{d}\GG_{\varepsilon}(1)+1)}^{\delta}  \ri)_{}\leq C\;.
  		\end{equation*}
  		Then, if $\Gamma_{\varepsilon_n}\to \mm$, 
  		\beq
  			\Upsilon_{\varepsilon_n}(t) \xrightarrow[n \to + \infty]{} \nn_t,		\qquad		\forall t \in \R,
		\eeq
		where $ \nn_t = \lf( \mu_{\mm}, \: \widetilde{\mathcal{U}}_{t,0}(z) \gamma_{\mm}(z) \widetilde{\mathcal{U}}_{t,0}^{\dagger}(z) \ri) $. 
	\end{proposition}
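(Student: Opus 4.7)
The plan is to stitch together Propositions \ref{prop:14}, \ref{prop:15}, and \ref{prop:16} via a standard subsequence--uniqueness argument. Because $\delta > \tfrac{1}{2}$, the hypothesis of Proposition \ref{prop:14} is satisfied, so any subsequence of $\{\varepsilon_n\}$ admits a further subsequence $\{\varepsilon_{n_k}\}$ along which $\Upsilon_{\varepsilon_{n_k}}(t) \to \widetilde{\nn}_t$ in the quasi-classical sense of \cref{def: qc convergence}, for some family $\{\widetilde{\nn}_t\}_{t\in \mathbb{R}}$ of state-valued probability measures satisfying the a priori bound \eqref{eq: a priori bound common}. The initial value is pinned down by observing that $\Upsilon_{\varepsilon_n}(0) = \Gamma_{\varepsilon_n}$, so that the hypothesis $\Gamma_{\varepsilon_n} \to \mm$ forces $\widetilde{\nn}_0 = \mm$; in particular $\mu_{\widetilde{\nn}_0} = \mu_{\mm}$ and $\gamma_{\widetilde{\nn}_0}(z) = \gamma_{\mm}(z)$ for $\mu_{\mm}$-almost all $z$.

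Next I would invoke Proposition \ref{prop:15} to deduce that $\{\widetilde{\nn}_t\}_{t\in\mathbb{R}}$ satisfies the integral transport equation \eqref{eq: fourier equation} in the weak-$*$ sense, with initial datum $\mm$ at $s=0$; here the hypothesis $\delta > \tfrac{1}{2}$ is used to apply \cref{prop:15}, which requires $\delta \geq \tfrac{1}{2}$ together with enough regularity to pass to the limit inside the Duhamel formula (the uniform integrability in $\tau$ of the integrand is controlled by \cref{prop:12} and the propagation estimate \eqref{eq: propagation}). The a priori bound \eqref{eq: a priori bound common} then verifies the hypothesis $\int_{\mathfrak{h}} \mathrm{d}\mu_{\widetilde{\nn}_0}(z) \, \lVert z \rVert_{\mathfrak{h}} < +\infty$ needed to invoke Proposition \ref{prop:16}.

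Applying Proposition \ref{prop:16}, the equation \eqref{eq: equation} with initial datum $\mm$ has a \emph{unique} weak-$*$ solution, which by construction must be
\begin{equation*}
  \widetilde{\nn}_t = \bigl(\mu_{\mm},\; \widetilde{\mathcal{U}}_{t,0}(z)\, \gamma_{\mm}(z)\, \widetilde{\mathcal{U}}_{t,0}^{\dagger}(z)\bigr) = \nn_t.
\end{equation*}
Since this limit is independent of the chosen subsequence, the usual Urysohn-type sub-subsequence principle applies: every subsequence of $\{\varepsilon_n\}$ has a further subsequence along which $\Upsilon_{\varepsilon_{n_k}}(t)$ converges, in the sense of \cref{def: qc convergence}, to the same measure $\nn_t$, hence the full sequence $\Upsilon_{\varepsilon_n}(t) \to \nn_t$ for every $t \in \mathbb{R}$.

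The only slightly delicate point is that \cref{def: qc convergence} is a pointwise-in-$\eta$ weak-$*$ convergence of the noncommutative Fourier transform in $\mathscr{L}^1(\mathscr{H})$, so the Urysohn argument must be applied at fixed $\eta \in \mathfrak{h}$ and fixed test operator $\mathcal{B} \in \mathscr{L}^{\infty}(\mathscr{H})$ (using the metrizability, on bounded sets, of the weak-$*$ topology of $\mathscr{L}^1(\mathscr{H})$ when tested against a countable family dense in $\mathscr{L}^{\infty}(\mathscr{H})$, which is separable). Uniqueness of the limit Fourier transform $\widehat{\nn}_t(\eta)$ through Bochner's theorem for state-valued measures then ensures that the limit of the full sequence is the measure $\nn_t$ itself, and not merely its Fourier transform at a single point.
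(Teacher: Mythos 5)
Your proposal is correct and follows essentially the same route as the paper's proof: extract a common-in-time convergent subsequence via \cref{prop:14}, identify the limit through the transport equation of \cref{prop:15} and the uniqueness result of \cref{prop:16}, and then upgrade to the full sequence by the sub-subsequence/uniqueness argument. The extra remarks on metrizability of the weak-$*$ topology and separability of $\mathscr{L}^{\infty}(\mathscr{H})$ are a harmless elaboration of the same step the paper carries out.
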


	\begin{proof}
 		Let us consider $ \Upsilon_{\varepsilon_n}(t)$: by \cref{prop:14}, there exists a
  common subsequence $\varepsilon_{n_k}\to 0$ such that for all $t\in \mathbb{R}$, 					
  		\bdm
  			\Upsilon_{\varepsilon_{n_k}}(t) \xrightarrow[k \to + \infty]{} \nn_t.
		\edm
		Clearly, $\nn_0 =\mm$. Moreover, by \cref{prop:15}, $\nn_t$ is also a weak
  solution of~\eqref{eq:17}, satisfying \cref{lemma:9} and \eqref{eq:23}. The weak solution of~\eqref{eq:17}
  satisfying~\eqref{eq:23} is however unique by \cref{prop:16}, and therefore $ \nn_t $ has the
  Radon-Nikod\'{y}m decomposition
  		\begin{equation*}
    			\bigl(\mu_{\mm}\,,\,\widetilde{\mathcal{U}}_{t,0}(z) \gamma_{\mm}(z) \widetilde{\mathcal{U}}^{\dagger}_{t,0}(z) \bigr)\; .
 		 \end{equation*}
 		 
 		 We now show that the convergence holds at any time along the original subsequence $ \{ \eps_n \}_{n \in \N} $. Let us take a convergent subsequence of $  \Upsilon_{\eps_n}(t) $ at an arbitrary time $ t $, i.e., such that
 		\beq
			\Upsilon_{\varepsilon_{n_j}}(t) \xrightarrow[j \to + \infty]{} \nn'_t.
		\eeq
		Then, by convergence at time $ t = 0 $, we immediately have that
		\bdm
				\Upsilon_{\varepsilon_{n_j}}(0) \xrightarrow[j \to + \infty]{} \mm.
		\edm
		Hence, we can proceed as in the proof of \cref{prop:14} and extract a subsequence $ \lf\{ \eps_{n_{j_k}} \ri\}_{k \in \N} $ of $ \lf\{ \eps_{n_j} \ri\}_{j \in \N} $ such that we have convergence at any time. Furthermore, again by \cref{prop:15}, the limit points $ \nn'_t $ are weak solutions of the transport equation. Therefore, by uniqueness of the solution, $ \nn'_t = \nn_t $. Hence, all the convergent subsequences of $ \Upsilon_{\eps_n}(t)  $ have the same limit point, which implies that $ \Upsilon_{\eps_n}(t) \xrightarrow[n \to+\infty]{} \nn_t $.
	\end{proof}

	The analogue for $ \Gamma_{\eps} $ of \cref{prop:17} is next

	\begin{corollary}
  		\label{cor:6}
  		\mbox{}	\\
  		Under the same hypotheses of \cref{prop:17}, for any $t\in \mathbb{R}$,
		\beq
			\Gamma_{\varepsilon_n}(t) \xrightarrow[n \to + \infty]{} \mm_t = e^{-it\mathcal{K}_0}\bigl(e^{-it\nu \omega}\, _{\star} \,\nn_t\bigr)e^{it \mathcal{K}_0}\; .
		\eeq
	\end{corollary}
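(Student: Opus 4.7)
The plan is to deduce the result from \cref{prop:17} by a direct manipulation of the noncommutative Fourier transform. Since $K_0 = \mathcal{K}_0 \otimes 1$ and $1 \otimes \mathrm{d}\GG_{\varepsilon}(\omega)$ commute, the interaction-picture relation \eqref{eq: interaction picture} inverts to
\begin{equation*}
	\Gamma_{\varepsilon}(t) = \bigl(e^{-it\mathcal{K}_0} \otimes e^{-it\nu(\varepsilon)\mathrm{d}\GG_{\varepsilon}(\omega)}\bigr)\, \Upsilon_{\varepsilon}(t)\, \bigl(e^{it\mathcal{K}_0} \otimes e^{it\nu(\varepsilon)\mathrm{d}\GG_{\varepsilon}(\omega)}\bigr).
\end{equation*}
Testing $\widehat{\Gamma}_{\varepsilon}(t)(\eta) = \tr_{\mathscr{K}_{\varepsilon}}(\Gamma_{\varepsilon}(t) W_{\varepsilon}(\eta))$ against an arbitrary compact $\mathcal{B} \in \mathscr{L}^{\infty}(\mathscr{H})$, moving the particle unitaries outside the partial trace, cycling the field unitary inside it, and applying the standard Bogoliubov rotation
\begin{equation*}
	e^{it\nu(\varepsilon)\mathrm{d}\GG_{\varepsilon}(\omega)} W_{\varepsilon}(\eta) e^{-it\nu(\varepsilon)\mathrm{d}\GG_{\varepsilon}(\omega)} = W_{\varepsilon}\bigl(e^{it\nu(\varepsilon)\varepsilon\omega}\eta\bigr),
\end{equation*}
I would arrive at the identity
\begin{equation*}
	\tr_{\mathscr{H}}\bigl(\widehat{\Gamma}_{\varepsilon}(t)(\eta)\,\mathcal{B}\bigr) = \tr_{\mathscr{H}}\bigl(\widehat{\Upsilon}_{\varepsilon}(t)(e^{it\nu(\varepsilon)\varepsilon\omega}\eta)\, e^{it\mathcal{K}_0}\mathcal{B} e^{-it\mathcal{K}_0}\bigr).
\end{equation*}

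Since $e^{it\mathcal{K}_0}\mathcal{B} e^{-it\mathcal{K}_0}$ is again compact, the remaining task is to pass to the limit $\varepsilon_n \to 0$ in the right-hand side. The hypothesis $\varepsilon_n\nu(\varepsilon_n) \to \nu$ yields $e^{it\nu(\varepsilon_n)\varepsilon_n\omega}\eta \to e^{it\nu\omega}\eta$ strongly in $\mathfrak{h}$ by standard functional calculus for the self-adjoint operator $\omega$. Splitting
\begin{equation*}
	\widehat{\Upsilon}_{\varepsilon_n}(t)(e^{it\nu(\varepsilon_n)\varepsilon_n\omega}\eta) - \widehat{\nn}_t(e^{it\nu\omega}\eta)
\end{equation*}
into $\widehat{\Upsilon}_{\varepsilon_n}(t)(e^{it\nu(\varepsilon_n)\varepsilon_n\omega}\eta) - \widehat{\Upsilon}_{\varepsilon_n}(t)(e^{it\nu\omega}\eta)$, which is controlled uniformly in $n$ by the equicontinuity of \cref{prop:13}, and $\widehat{\Upsilon}_{\varepsilon_n}(t)(e^{it\nu\omega}\eta) - \widehat{\nn}_t(e^{it\nu\omega}\eta)$, which vanishes in the weak-$*$ sense by \cref{prop:17}, produces
\begin{equation*}
	\tr_{\mathscr{H}}\bigl(\widehat{\Gamma}_{\varepsilon_n}(t)(\eta)\,\mathcal{B}\bigr) \xrightarrow[n \to +\infty]{} \tr_{\mathscr{H}}\bigl(\widehat{\nn}_t(e^{it\nu\omega}\eta)\, e^{it\mathcal{K}_0}\mathcal{B} e^{-it\mathcal{K}_0}\bigr).
\end{equation*}

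Finally, I would identify the limit with $\tr_{\mathscr{H}}(\widehat{\mm}_t(\eta)\mathcal{B})$ by a direct bookkeeping: the definition of $\mm_t = e^{-it\mathcal{K}_0}(e^{-it\nu\omega}\,_{\star}\,\nn_t)e^{it\mathcal{K}_0}$ forces the Radon-Nikod\'{y}m decomposition $\mu_{\mm_t} = (e^{-it\nu\omega})_{\star}\mu_{\nn_t}$ and $\gamma_{\mm_t}(z) = e^{-it\mathcal{K}_0}\gamma_{\nn_t}(e^{it\nu\omega}z)e^{it\mathcal{K}_0}$, so the change of variable $w = e^{it\nu\omega}z$ in $\widehat{\mm}_t(\eta) = \int_{\mathfrak{h}} \mathrm{d}\mu_{\mm_t}(z)\,\gamma_{\mm_t}(z)\,e^{2i\Re\braket{\eta}{z}_{\mathfrak{h}}}$ combined with the unitarity of $e^{it\nu\omega}$ (which transports the exponential into $e^{2i\Re\braket{e^{it\nu\omega}\eta}{w}_{\mathfrak{h}}}$) reproduces exactly the displayed limit. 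The only subtle step is the equicontinuity argument absorbing the $\varepsilon_n$-dependent shift inside the Fourier argument; the rest of the proof is cyclicity of the partial trace, the Weyl--Bogoliubov relation, and pushforward bookkeeping.
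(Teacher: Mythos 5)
Your proposal is correct and follows essentially the same route as the paper, which disposes of the statement in one line by invoking \cref{cor:4} (itself resting on the Weyl quantization of the linear symplectic map $e^{-it\nu\omega}$ via an external reference). Your version simply makes that delegation explicit — in particular the step where the $\varepsilon_n$-dependent argument $e^{it\varepsilon_n\nu(\varepsilon_n)\omega}\eta$ is replaced by $e^{it\nu\omega}\eta$ using the equicontinuity of \cref{prop:13} is exactly the content hidden in the paper's citation, and your bookkeeping $\widehat{\mm}_t(\eta)=e^{-it\mathcal{K}_0}\,\widehat{\nn}_t(e^{it\nu\omega}\eta)\,e^{it\mathcal{K}_0}$ is the right identification of the limit.
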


	\begin{proof}
		If \cref{prop:17} holds, then \cref{cor:6} is a direct consequence of \cref{cor:4}.
	\end{proof}

The proof of \cref{thm:5} is almost complete, it remains only to extend the result to states $\Gamma_{\varepsilon}\in \mathscr{L}^1_{+,1}(\mathscr{H} \otimes \mathscr{K}_{\eps})$   satisfying the weaker condition that there exists $\delta>0$ and $C< + \infty$, such that
\begin{equation}
  \label{eq:25}
  \Tr \lf(\Gamma_{\varepsilon}{(\mathrm{d}\GG_{\varepsilon}(1)+1)}^{\delta} \ri)\leq C \; .
\end{equation}
This is done by standard approximation techniques, using an argument originally proposed in
\cite[\textsection2]{ammari2011jmpa} (see also \cite[\textsection4.5]{ammari2014jsp}). Let us briefly reproduce the key
ideas here. Let $\Gamma_{\varepsilon}$ satisfy~\eqref{eq:25}, and define
\begin{equation}
 	\Gamma_{\varepsilon}^{(r)} : = \frac{\chi_r(\mathrm{d}\GG_{\varepsilon}(1)+1) \Gamma_{\varepsilon} \chi_r(\mathrm{d}\GG_{\varepsilon}(1)+1) }{ \Tr\bigl( \chi_r(\mathrm{d}\GG_{\varepsilon}(1)+1) \Gamma_{\varepsilon} \chi_r(\mathrm{d}\GG_{\varepsilon}(1)+1) \bigr)}\in \mathscr{L}^1_{+,1}(\mathscr{H} \otimes \mathscr{K}_{\eps}) \;,
\end{equation}
where $r> 0$ and $\chi_r(\cdot )=\chi(\cdot /r)$, $\chi\in C_0^{\infty}(\mathbb{R})$, with $0\leq \chi\leq 1$ and $\chi=1$ in a neighborhood of zero. By
functional calculus and~\eqref{eq:25}, for any $t\in \mathbb{R}$,
\begin{equation*}
  	\lf\| \Gamma_{\varepsilon}(t)-\Gamma_{\varepsilon}^{(r)}(t)  \ri\|_{\mathscr{L}^1_{+,1}(\mathscr{H} \otimes \mathscr{K}_{\eps})}^{} = 
  	\lf\| \Gamma_{\varepsilon}-\Gamma_{\varepsilon}^{(r)}  \ri\|_{\mathscr{L}^1_{+,1}(\mathscr{H} \otimes \mathscr{K}_{\eps})}^{} = o_r(1)\; ,
\end{equation*}
when $r\to \infty$, uniformly w.r.t. $\varepsilon\in (0,1)$. In addition, $\Gamma_{\varepsilon}^{(r)}$ satisfies the assumptions of
\cref{prop:17}. Suppose now that $\Gamma_{\varepsilon_n}\to \mm$, and for all $r>0$, let $\varepsilon_{n_k(r)}\to 0$ be a
subsequence and $\mm^{(r)}$ a state-valued measure such that $\Gamma_{\varepsilon_{n_k(r)}}^{(r)}\to
\mm^{(r)}$. Then, by \cref{cor:6}, $\Gamma_{\varepsilon_{n_k(r)}}^{(r)}(t)\to
\mm^{(r)}_t$, for any $t\in \mathbb{R}$, where the latter is defined by \cref{thm:5}, with $\mm^{(r)}$ in place of $\mm$. Finally, let us extract a subsequence $\varepsilon_{n_{k_\ell}(r,t)}\to 0$ such that
$\Gamma_{\varepsilon_{n_{k_l}(r,t)}}(t)\to \vec{\nu}_t$. By adapting the argument in \cite[Proposition
2.10]{ammari2011jmpa} to state-valued measures, we get that for any fixed $t\in \mathbb{R}$
\begin{equation*}
  	\int_{\mathfrak{h}}^{}  \mathrm{d} \lf| \mu_{\vec{\nu}_t} - \mu_{\mm^{(r)}_t} \ri|    = o_r(1)\; ,
\end{equation*}
where $\lvert\mu_{\vec{\nu}_t} - \mu_{\mm^{(r)}_t} \rvert_{}^{}$ is the scalar measure in the norm Radon-Nikod\'{y}m
decomposition of the total variation of the signed state-valued measure $\vec{\nu}_t - \mm_t^{(r)}$
\cite[\textsection A.3]{falconi2017arxiv}, \emph{i.e.}, the sum of its positive and negative parts. Therefore,
denoting by $\mm_t$ the measure appearing in \cref{thm:5}, we have
\bdm
   	\int_{\mathfrak{h}}^{}  \mathrm{d} \lf| \mu_{\vec{\nu}_t} - \mu_{\mm_t} \ri|  \leq \int_{\mathfrak{h}}^{}  \mathrm{d} \lf| \mu_{\vec{\nu}_t} - \mu_{\mm^{(r)}_t} \ri|  + \int_{\mathfrak{h}}^{}  \mathrm{d} \lf|\mu_{\mm^{(r)}_t} - \mu_{\mm_t} \ri| = o_r(1)\; .
\edm
Therefore, $\vec{\nu}_t \equiv\mm_t$. Since any subsequence extraction yields the same result, it follows that,
for all $t\in \mathbb{R}$, $\Gamma_{\varepsilon_n}(t)\to \mm_t$, thus concluding the proof of \cref{thm:5}. 

\cref{cor:7} is then a trivial application of the definition of w-$*$ convergence and we thus omit the proof. It only remains to prove \cref{cor:2} and \cref{thm:8}.

\begin{proof}[Proof of \cref{cor:2}] 
	We first of all remark that the quasi-classical evolution $t\mapsto \mathfrak{m}_t$ preserves the mass, \emph{i.e.},
  \begin{equation*}
    \forall t,t'\in \mathbb{R}\,, 	\qquad \lVert \mathfrak{m}_t(\mathfrak{h})  \rVert_{\mathscr{L}^1(\mathscr{H})}^{}=\lVert \mathfrak{m}_{t'}(\mathfrak{h})  \rVert_{\mathscr{L}^1(\mathscr{H})}^{}\; .
  \end{equation*}
  Therefore, for the first part of the statement, it suffices to prove that, under assumption \eqref{eq:
      ass1} or \eqref{eq: ass2}, $\lVert \mathfrak{m}(\mathfrak{h})
    \rVert_{\mathscr{L}^1(\mathscr{H})}^{}=1$. 
    
    In the case of assumption \eqref{eq: ass1}, we can actually show  that quasi-classical convergence can be lifted to bounded quasi-classical convergence. In fact, let  $\mathcal{B}\in \mathscr{B}(\mathscr{H})$ and consider
  	\begin{equation*}
   		 \tr_{\mathscr{H}} \lf(\hat{\Gamma}_{\varepsilon}(\eta)\mathcal{B} \ri)=\tr_{\mathscr{H}} \lf( \hat{\Gamma}_{\varepsilon,A}(\eta)\mathcal{A}^{-\frac{1}{2}}\mathcal{B} \mathcal{A}^{-\frac{1}{2}} \ri)\; ,
  	\end{equation*}
 	 where
  	\begin{equation*}
    		\hat{\Gamma}_{\varepsilon,A}(\eta)=\tr_{\mathscr{K}_{\varepsilon}} \lf(A^{\frac{1}{2}}\Gamma_{\varepsilon}A^{\frac{1}{2}}W_{\varepsilon}(\eta) \ri)\; ,
  	\end{equation*}
  	is the Fourier transform of the state $A^{\frac{1}{2}}\Gamma_{\varepsilon}A^{\frac{1}{2}}\in \mathscr{L}^1_+(\mathscr{H}\otimes
  \mathscr{K}_{\varepsilon})$ by \eqref{eq: ass1}. Therefore, on one hand, by \cref{prop:6}, if $\Gamma_{\varepsilon_n}\to \mathfrak{m}$, then
  $A^{\frac{1}{2}}\Gamma_{\varepsilon}A^{\frac{1}{2}}\to \mathcal{A}^{\frac{1}{2}}\mathfrak{m} \mathcal{A}^{\frac{1}{2}}$,
  on the other hand since $\mathcal{A}^{-\frac{1}{2}}\mathcal{B} \mathcal{A}^{-\frac{1}{2}}\in
  \mathscr{L}^{\infty}(\mathscr{H})$, \cref{thm:5} guarantees that
	\begin{multline}
  		\tr_{\mathscr{H}} \lf( \hat{\Gamma}_{\varepsilon_n}(\eta) \mathcal{B} \ri) \\
  		= \tr_{\mathscr{H}} \lf(\hat{\Gamma}_{\varepsilon_n,A}(\eta)\mathcal{A}^{-\frac{1}{2}}\mathcal{B} \mathcal{A}^{-\frac{1}{2}} \ri)   		\xrightarrow[n\to+\infty]{} \int_{\mathfrak{h}}^{}  \mathrm{d}\mu_{\mathfrak{m}} \: e^{2i\Re \lf\langle  \eta \lf. \ri\vert z \ri\rangle_{\mathfrak{h}}} \tr_{\mathscr{H}} \lf(\mathcal{A}^{\frac{1}{2}} \gamma_{\mathfrak{m}}(z)\mathcal{A}^{\frac{1}{2}}\mathcal{A}^{-\frac{1}{2}}\mathcal{B} \mathcal{A}^{-\frac{1}{2}} \ri)	\\
  = \tr_{\mathscr{H}} \lf(\hat{\mathfrak{m}}(\eta)\mathcal{B} \ri)\; .
	\end{multline}
	Choosing $\eta=0$ and $\mathcal{B}=\mathds{1}$ one concludes that $\lVert \mathfrak{m}(\mathfrak{h})  \rVert_{\mathscr{L}^1(\mathscr{H})}^{}=1$. 
	
	The proof assuming \eqref{eq: ass2} uses a dominated convergence
  argument. Let us denote by $ \lf\{ \mathcal{P}_k \ri\}_{k\in \mathbb{N}}\subset \mathscr{L}^{\infty}(\mathscr{H})$ the projections onto
  the eigenspaces of $\mathfrak{m}(\mathfrak{h})$, {\it i.e.}, 
  	\bdm
  		\mm(\mathfrak{h}) = \sum_{k \in \N} m_k \mathcal{P}_k,
	\edm
	$ m_k \in \R^+ $ being the associated eigenvalues. Then, for any $k\in \mathbb{N}$, we can apply \cref{thm:5} to get
	\begin{equation}
		\Tr\bigl(\Gamma_{\varepsilon_n}P_k\bigr) = \tr_{\mathscr{H}}\bigl(\gamma_{\varepsilon_n}\mathcal{P}_k\bigr)\xrightarrow[n\to + \infty]{} \tr_{\mathscr{H}}\bigl(\mathfrak{m}(\mathfrak{h})\mathcal{P}_k\bigr)=\alpha_k m_k\; ,
	\end{equation}
	where $\alpha_k\in \mathbb{N}$ is the multiplicity of the eigenvalue $ m_k$ of $\mathfrak{m}(\mathfrak{h})$. Hence, by \eqref{eq: ass2},
	\begin{equation*}
  		1=\Tr\bigl(\Gamma_{\varepsilon}\bigr) = \tr_{\mathscr{H}}\bigl(\gamma_{\varepsilon}\bigr)=\sum_{k\in \mathbb{N}}^{}\tr_{\mathscr{H}}\bigl(\gamma_{\varepsilon} \mathcal{P}_k\bigr)\leq \sum_{k\in \mathbb{N}}^{}\tr_{\mathscr{H}}\bigl(\gamma \mathcal{P}_k\bigr)=\tr_{\mathscr{H}}\bigl(\gamma\bigr)<+\infty\; .
	\end{equation*}
	Therefore, by dominated convergence,
	\bml{
  		1 = \lim_{n\to + \infty}\Tr\bigl(\Gamma_{\varepsilon_n}\bigr)=\lim_{n\to +\infty}\sum_{k\in \mathbb{N}}^{}\tr_{\mathscr{H}}\bigl(\gamma_{\varepsilon_n} \mathcal{P}_k\bigr)=\sum_{k\in \mathbb{N}}^{}\lim_{n\to + \infty} \tr_{\mathscr{H}}\bigl(\gamma_{\varepsilon_n} \mathcal{P}_k\bigr)	\\
  		=\sum_{k\in \mathbb{N}}^{}\alpha_k m_k=\lVert \mathfrak{m}(\mathfrak{h})  \rVert_{\mathscr{L}^1(\mathscr{H})}^{}\; .
 	}
\end{proof}

\begin{proof}[Proof of \cref{thm:8}]
	First of all, remark assumption \eqref{eq:29} is propagated in time by means of \cref{prop:12}. In addition, the measure $\mathfrak{m}_t$ is characterized by \cref{thm:5}  at any time $t\in \mathbb{R}$. Finally, the convergence of the expectations of Wick quantizations of symbols $\mathcal{F}\in \mathscr{S}_{\ell,m}$, under condition \eqref{eq:29}, is given
  by \cref{prop:10}. Combining the above ingredients \cref{thm:8} is then proved.
\end{proof}

\section{Technical Modifications for Pauli-Fierz and Polaron Models}
\label{sec:techn-modif-polar}

\cref{thm:5} is stated not only for the regularized Nelson model, but also for the Pauli-Fierz and polaron
models as well. The strategy of the proof for these cases is identical to the one followed above for the
Nelson model. However, one shall overcome some technical difficulties related to the fact that such models
are ``more singular''. In particular, the foremost difficulty is given by the presence of terms of type
$\nabla\cdot a_{\varepsilon}^{\#}\bigl(\vec{\lambda}(\xv)\bigr)$ and their adjoints in the microscopic Hamiltonian $H_{\varepsilon}$. In
relation to that, one needs to propagate in time some further regularity of quantum states, in addition to what is done in \cref{prop:12} for the Nelson model. Finally, some care has to be taken in defining the
effective limit dynamics $ \mathcal{U}_{t,s}(z)$. We comment below on the technical adaptations needed to take care of such difficulties.

\subsection{Quasi-Classical Analysis of Gradient Terms}
\label{sec:quasi-class-analys-1}

In order to deal with terms of the form $\nabla\cdot a_{\varepsilon}^{\#}\bigl(\vec{\lambda}(\xv)\bigr)$, with $\vec{\lambda}\in
  L^{\infty}(\mathbb{R}^d; \mathfrak{h}^d)$, one needs to extend the convergence proven in \cref{prop:10} to such kind
  of observables. This is done in two steps: first, it is possible to restrict the set of test observables
  using the set $\mathfrak{K}$ defined in \cref{lemma:7}, for it separates points, and then we prove that
  with such a restriction the expectation values indeed converge (\cref{prop:18}). In particular,
  \cref{lemma:7}, is used below for the convergence of gradient terms, to solve possible domain
  ambiguities whenever the gradient acts on the test operator: we end up with a form of the integral
transport equation for the measure that holds only when tested with particle observables in
$\mathfrak{K}$ (recall \eqref{eq: set K}), setting $\mathcal{T}=\mathcal{K}_0$, where $\mathcal{K}_0$ is the self-adjoint free
particle Hamiltonian. With such testing it still makes sense to study uniqueness of the solution, since
the aforementioned set separates points.

Let us now consider the convergence of the expectation value of the gradient term. Let us recall that
$a^{\#}_{\varepsilon}(f)$ stands for either $a_{\varepsilon}(f)$ or $a^{\dagger}_{\varepsilon}(f)$, and correspondingly $
\braket{f}{z}_{\mathfrak{h}}^{\#}$ stands for either $ \braket{f}{z} _{\mathfrak{h}}$ or $
\braket{z}{f}_{\mathfrak{h}}$. Let us recall that in all the concrete models considered,  $\mathcal{K}_0 \geq p > - \infty $, and 
  		\beq
  			\label{eq: apriori bound nabla}
  			\lvert \nabla \rvert_{}^{}{(\mathcal{K}_0+1-p)}^{-\frac{1}{2}}\in
  \mathscr{B}(\mathscr{H}).
		\eeq

	\begin{proposition}[Convergence of expectation values of gradient terms]
  		\label{prop:18}
  		\mbox{}	\\
 		Let $\Gamma_{\varepsilon}\in \mathscr{L}^1_{+,1}(\mathscr{H} \otimes \mathscr{K}_{\eps})$ be such that there exists $\delta>1$ so that
  		\begin{equation}
  			\label{eq: apriori bound K0}
    			\Tr\bigl( \Gamma_{\varepsilon}(K_0-p+{(\mathrm{d}\GG_{\varepsilon}(1)+1)}^\delta)  \bigr)\leq C \; .
  		\end{equation}
		Then,  if $\Gamma_{\varepsilon_n} \xrightarrow[n \to + \infty]{} \mm$, for any $ \mathcal{B}\in \mathfrak{K} $, any $\alpha,\beta\in \mathbb{R}$ and all $\eta\in \mathfrak{h}$,
  		\bml{
    			\label{eq:21}
      		\lim_{n\to +\infty}\Tr \lf(\Gamma_{\varepsilon_n} \lf(\alpha\nabla\cdot a^{\#}_{\varepsilon_n}(\vec{\lambda}(\xv))+\beta a^{\#}_{\varepsilon_n}(\vec{\lambda}(\xv))\cdot \nabla \ri) \bigl(\mathcal{B}\otimes W_{\varepsilon_n}(\eta)\bigr) \ri) \\
      		=\tr_{\mathscr{H}} \lf( \int_{\mathfrak{h}}^{} \mathrm{d}\mu_{\mm}(z) \: \gamma_{\mm}(z) \bigl(\alpha\nabla\cdot \braket{\vec{\lambda}(\xv)}{z}^{\#}_{\mathfrak{h}}+\beta \braket{\vec{\lambda}(\xv)}{z}^{\#}_{\mathfrak{h}}\cdot \nabla \bigr)e^{2i\Re \braket{\eta}{z}_{\mathfrak{h}}}\mathcal{B}\, \ri)\; ,
   		}
   		\bml{
    			\label{eq:22}
      		\lim_{n\to +\infty}\Tr \lf(\lf(\alpha\nabla\cdot a^{\#}_{\varepsilon_n}(\vec{\lambda}(\xv))+\beta a^{\#}_{\varepsilon_n}(\vec{\lambda}(\xv))\cdot \nabla \ri) \Gamma_{\varepsilon_n} \bigl(\mathcal{B}\otimes W_{\varepsilon_n}(\eta)\bigr) \ri) \\
      		=\tr_{\mathscr{H}} \lf( \int_{\mathfrak{h}}^{} \mathrm{d}\mu_{\mm}(z) \: \bigl(\alpha\nabla\cdot \braket{\vec{\lambda}(\xv)}{z}^{\#}_{\mathfrak{h}}+\beta \braket{\vec{\lambda}(\xv)}{z}^{\#}_{\mathfrak{h}}\cdot \nabla \bigr)e^{2i\Re \braket{\eta}{z}_{\mathfrak{h}}} \gamma_{\mm}(z) \mathcal{B}\, \ri)\;.
   		}	
	\end{proposition}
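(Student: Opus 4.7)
The argument closely parallels the proof of \cref{prop:10}, the new complication being the presence of the unbounded derivative $\nabla$ in the operator to be tested. The crucial enabling ingredient is the restriction to test operators $\mathcal{B} \in \mathfrak{K}$, with the natural choice $\mathcal{T} = \mathcal{K}_0$ in the definition \eqref{eq: set K}: writing $\mathcal{B} = P_m \mathcal{B}'' P_m$, with $P_m := \one_m(\mathcal{K}_0)$ and $\mathcal{B}'' \in \mathscr{L}^{\infty}_+(\mathscr{H})$, the bound \eqref{eq: apriori bound nabla} together with the spectral theorem yield
\begin{equation*}
\|\nabla P_m\|_{\mathscr{B}(\mathscr{H})} \leq \bigl\||\nabla|(\mathcal{K}_0+1-p)^{-\frac{1}{2}}\bigr\|_{\mathscr{B}(\mathscr{H})} \, \bigl\|(\mathcal{K}_0+1-p)^{\frac{1}{2}} P_m\bigr\|_{\mathscr{B}(\mathscr{H})} < \infty,
\end{equation*}
so that the gradient becomes a bounded operator on $\mathscr{H}$ whenever it is composed with the spectral projection carried by the test observable. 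This is the mechanism that allows us to exchange the trace with the limit $\varepsilon_n \to 0$.

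\textbf{Step 1 (simple-symbol approximation).} The plan is to first apply \cref{lemma:4} and \cref{cor:3}, componentwise, to replace $\vec{\lambda}$ by a sequence of simple functions $\vec{\lambda}_M$ with $\|\lambda_j - \lambda_{M,j}\|_{L^\infty(\mathbb{R}^d;\mathfrak{h})} \to 0$, so that $a^{\#}_{\varepsilon}(\vec{\lambda}_M(\mathbf{x})) = \sum_{k=1}^{J(M)} \one_{B_k}(\mathbf{x}) a^{\#}_{\varepsilon}(\vec{\varphi}_k)$ separates particle and field variables. The resulting error is controlled via Cauchy-Schwarz together with standard Fock-space bounds and the joint a priori estimate \eqref{eq: apriori bound K0}: the latter simultaneously yields $\Tr(\Gamma_\varepsilon |\nabla|^2) \leq C$ through \eqref{eq: apriori bound nabla}, and $\Tr(\Gamma_\varepsilon (\mathrm{d}\mathcal{G}_\varepsilon(1)+1)^\delta) \leq C$ with $\delta > 1$, which jointly dominate the full product $\nabla \cdot a^{\#}_{\varepsilon}$ acting on the state. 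The bound on the error is then of order $\|\vec{\lambda} - \vec{\lambda}_M\|_{L^\infty(\mathbb{R}^d;\mathfrak{h}^d)}$, and vanishes as $M \to \infty$, uniformly in $\varepsilon$.

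\textbf{Step 2 (convergence of the approximated expression).} For simple $\vec{\lambda}_M$, the expectation splits into a finite sum whose generic summand contains on the particle side a factor $\nabla \cdot \one_{B_k}(\mathbf{x})$ and on the field side the pure creation/annihilation $a^{\#}_{\varepsilon}(\vec{\varphi}_k)$. By cyclicity of the trace, justified because $\nabla \Gamma_\varepsilon^{1/2} \in \mathscr{L}^2(\mathscr{H} \otimes \mathscr{K}_\varepsilon)$ as a consequence of \eqref{eq: apriori bound K0} and \eqref{eq: apriori bound nabla}, one rearranges the particle factor so that the gradient composes directly with $P_m$, producing a bounded particle operator of the form $P_m \nabla \cdot \one_{B_k}(\mathbf{x}) P_m$ (or its adjoint, depending on whether one treats \eqref{eq:21} or \eqref{eq:22}). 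Each summand is then exactly of the form treated in \cref{prop:10}, with $P_m \nabla \cdot \one_{B_k}(\mathbf{x}) P_m$ (bounded) playing the role of $\mathcal{T}$ or $\mathcal{S}$, the compact $\mathcal{B}''$ playing the role of the compact factor, and the scalar cylindrical symbol $\mathcal{F}_k(z) := \braket{\vec{\varphi}_k}{z}^{\#}_{\mathfrak{h}} \in \mathscr{S}_{\ell,m}$ with $\ell+m = 1 < 2\delta$ as the Wick symbol. The convergence to the desired classical integral (pre-rearrangement) follows from \cref{prop:10}.

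\textbf{Step 3 ($M \to \infty$ and conclusion).} Finally, by \cref{cor:3} and dominated convergence, with the $\mathfrak{h}$-integrable dominant provided by \eqref{eq:19} of \cref{prop:8}, one takes $M \to \infty$ to recover the right-hand side of \eqref{eq:21}. The argument for \eqref{eq:22} is entirely symmetric, involving only a mirror-image reordering of the cyclic manipulations. The main technical obstacle is precisely the careful justification of the cyclic rearrangement in Step 2: it is here that the joint a priori bound \eqref{eq: apriori bound K0} is indispensable, as it provides the Hilbert-Schmidt control on $\nabla \Gamma_\varepsilon^{1/2}$ (equivalently, on $\nabla$ acting on $\sqrt{\Gamma_\varepsilon}$) that legitimizes moving the unbounded gradient from the state onto the spectral projection $P_m$ carried by the test observable, where it becomes a bounded operator and the analysis reduces to the already established setting of \cref{prop:10}.
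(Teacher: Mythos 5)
The proposal correctly identifies the key ingredients — the restriction to $\mathfrak{K}$ with $\mathcal{T}=\mathcal{K}_0$, the boundedness of $\nabla P_m$ via \eqref{eq: apriori bound nabla}, the simple-function approximation of \cref{lemma:4}, and the need for dominated convergence at the end — but Step~2 contains a genuine gap, and it is precisely the point where the paper's argument must do something substantially more elaborate than what you sketch.

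The term $\beta\, a^{\#}_{\varepsilon}(\vec{\lambda}(\xv))\cdot\nabla$ in \eqref{eq:21} is indeed handled as you describe: the gradient acts directly on $\mathcal{B}=P_m\mathcal{B}''P_m$, the operator $\nabla P_m$ is bounded, so $\nabla\mathcal{B}$ is compact, and \cref{prop:10} applies. This agrees with the paper. The problem is the term $\alpha\,\nabla\cdot a^{\#}_{\varepsilon}(\vec{\lambda}(\xv))$. After the simple-function approximation, the generic particle factor in the trace is $\partial_j \one_{B_k}(\xv)\, P_m\mathcal{B}''P_m$, with the derivative to the \emph{left} of $\one_{B_k}$, which in turn sits to the left of $P_m$. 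Cyclicity of the trace permutes the whole product cyclically, but it cannot change the relative order of $\partial_j$, $\one_{B_k}$, and $P_m$; in every cyclic rearrangement $\one_{B_k}$ remains between $\partial_j$ and the nearest $P_m$. Consequently you never produce the bounded operator $P_m\nabla\cdot\one_{B_k}(\xv)P_m$ that you claim. And $\partial_j\one_{B_k}P_m$ on its own is \emph{not} bounded: for $\psi\in P_m\mathscr{H}$, $\one_{B_k}\psi$ generically has a jump across $\partial B_k$ and therefore fails to be in $H^1$, so applying $\partial_j$ yields a surface distribution, not an $L^2$ function. The a priori bound $\nabla\Gamma_\varepsilon^{1/2}\in\mathscr{L}^2$ justifies that the microscopic trace is well defined, but it supplies no mechanism for moving the derivative across $\one_{B_k}$ onto the spectral projection.

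The paper circumvents this with a different device: it writes $\partial_j = (K_0-p)^{1/2}\cdot(K_0-p)^{-1/2}\partial_j$ and, by cyclicity, re-expresses the trace against the \emph{sandwiched state} $(K_0-p)^{1/2}\Gamma_{\varepsilon_n}(K_0-p)^{1/2}$, whose convergence to $(K_0-p)^{1/2}\mm\,(K_0-p)^{1/2}$ is established by the separate \cref{prop:6}. This makes the gradient bounded via $(K_0-p)^{-1/2}\partial_j$ rather than via $P_m\partial_j$, and the derivative ends up adjacent to the state (where the a priori bound \eqref{eq: apriori bound K0} controls it), not adjacent to $P_m$. On top of that, because the convergence for the sandwiched state is only available when tested against Weyl quantizations of \emph{smooth, compactly supported} cylindrical symbols, the paper introduces an additional approximation $a_{\varepsilon}(\varphi_{j,k})W_{\varepsilon}(\eta)\approx\mathrm{Op}^{\mathrm{Weyl}}_{\varepsilon}\bigl(F_\sigma^{(j,k,\eta)}\bigr)$, controls the error using Cauchy--Schwarz together with \eqref{eq: apriori bound K0} (this is where the constraint $\delta>1$ enters, through the requirement $2\delta'\leq\delta$ with $\delta'>\tfrac{1}{2}$), and then takes $\sigma\to\infty$ and $M\to\infty$ by dominated convergence. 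Your proposal omits both \cref{prop:6} and this Weyl-symbol approximation, and the reduction to \cref{prop:10} alone does not reach the term where the gradient is trapped between the state and $\one_{B_k}$. This is a missing idea, not merely a missing detail.
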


	\begin{proof}
          We prove the result for $\Gamma_{\varepsilon}\bigl(\nabla\cdot a_{\varepsilon}(\vec{\lambda}(\xv))+a_{\varepsilon}(\vec{\lambda}(\xv))\cdot \nabla\bigr)$, the
          other cases being perfectly analogous.

          First of all, we observe that ${(K_0-p)}^{\frac{1}{2}}\Gamma_{\varepsilon_n}{(K_0-p)}^{\frac{1}{2}}$ is a
          positive operator and we can then consider its quasi-classical convergence as $ n \to + \infty $: by \cref{prop:6}, 
                \begin{equation}
                  	{(K_0-p)}^{\frac{1}{2}}\Gamma_{\varepsilon_n}{(K_0-p)}^{\frac{1}{2}} \xrightarrow[n \to + \infty]{} {(K_0-p)}^{\frac{1}{2}}\mm{(K_0-p)}^{\frac{1}{2}}\; .
                \end{equation}
  		
  		The term $\Gamma_{\varepsilon} a_{\varepsilon}(\vec{\lambda}(\xv))\cdot \nabla$, in which the gradient acts directly on $\mathcal{B} $ converges by
  \cref{prop:10}, since $\partial_j \mathcal{B}\in \mathscr{L}^{\infty}(\mathscr{H})$ for all $j=1,\dotsc,d$ and
  $\mathcal{B}\in \mathscr{L}^{\infty}(\mathscr{H}) $.  

  		It remains to discuss the term $\Gamma_{\varepsilon}\nabla\cdot a_{\varepsilon}(\vec{\lambda}(\xv))$. This term requires suitable
  approximations. First of all, let us approximate each operator-valued symbol 
  		\bdm
  			F_{j}^{(\vec{\lambda})}(z) : = \braketl{\lambda_{j}(\xv)}{z}_{\mathfrak{h}}
		\edm
		$ j=1,\dotsc,d$, by means of \cref{lemma:4}, and let us denote its
  approximation by $ F_{j,M}^{(\vec{\lambda})}$. It follows that, using estimates analogous to the ones used in
  the proof of \cref{prop:10},
  		\bdm
      		\lf| \sum_{j =1}^d\Tr \lf( \Gamma_{\varepsilon}\partial_{j} \lf[ a_{\varepsilon}(\lambda_{\mu}(\xv))- \mathrm{Op}^{\mathrm{Wick}}_{\varepsilon} \lf(F_{j,M}^{(\vec{\lambda})} \ri) \ri] \bigl(\mathcal{B}\otimes W_{\varepsilon}(\eta)\bigr) \ri)  \ri|_{}^{}\leq C \lVert \mathcal{B}  \rVert \sum_{j=1}^d \lf\| F_{j}^{(\vec{\lambda})} - F_{j,M}^{(\vec{\lambda})} \ri\| \; ,
      	\edm
  and the r.h.s. does not depend on $\varepsilon$, and converges to zero when $M\to + \infty$. In addition, let us
  recall that the symbol $ F_{j,M}^{(\vec{\lambda})}$ has the form:
  	\begin{equation*}
    		F_{j,M}^{(\vec{\lambda})}= \sum_{k=1}^{J(M)} \braketl{\varphi_{j,k}}{z}_{\mathfrak{h}} \one_{B_k}(\xv)\;,
  	\end{equation*}
  	where $J(M)\in \mathbb{N}$, $\varphi_{j,k}\in \mathfrak{h}$, and $B_k\subseteq \mathbb{R}^d$ is a Borel set. Let us consider the convergence as
  $\varepsilon_n\to 0$ of each term of the above sums separately, for $M$ fixed. In other words, let us consider the
  convergence of
  	\begin{equation*}
    		\Tr \Bigl( \Gamma_{\varepsilon_n}\partial_{j}a_{\varepsilon_n}(\varphi_{j,k}) \one_{B_k}(\xv)\bigl(\mathcal{B}\otimes W_{\varepsilon_n}(\eta)\bigr) \Bigr) = \tr_{\mathscr{H}}\Bigl\{ \tr_{\mathscr{K}_{\eps}}\bigl[ \Gamma_{\varepsilon_n} a_{\varepsilon_n}(\varphi_{j,k}) W_{\varepsilon_n}(\eta) \bigr] \partial_{j} \one_{B_k}(\xv) \mathcal{B} \Bigr\} \; .
  	\end{equation*}
  	The operator $a_{\varepsilon}(\varphi_{\mu,k}) W_{\varepsilon}(\eta)$ is the product of the Weyl quantizations of
  two cylindrical albeit not compactly supported symbols, over the complex Hilbert subspace spanned by
  $\varphi_{j,k}$ and $\eta$. Therefore, by finite dimensional pseudodifferential calculus, for all $M$ there
  exists a smooth compactly supported scalar symbol $ F_{\sigma}^{(j,k,\eta)}\in C_{0,
    \mathrm{cyl}}^{\infty}(\mathfrak{h})$, such that, for any $\delta'>\frac{1}{2}$,
  	\begin{equation}
    		\lf\| \lf[ a_{\varepsilon}(\varphi_{j,k}) W_{\varepsilon}(\eta) - \mathrm{Op}^{\mathrm{Weyl}}_{\varepsilon} \lf( F_{\sigma}^{(j,k,\eta)} \ri) \ri]{\bigl(\mathrm{d}\GG_{\varepsilon}(1)+1\bigr)}^{-\delta'} \ri\|_{\mathscr{B}(\mathscr{K}_{\eps}
)}^{} = o_{\sigma}(1)+ o_{\varepsilon}(1) \; .
  	\end{equation}
  	Hence, Cauchy-Schwarz inequality, \eqref{eq: apriori bound K0} and \eqref{eq: apriori bound nabla} yield
  	\bml{
          \lf| \Tr \Bigl( \Gamma_{\varepsilon_n}\partial_{j} \lf( a_{\varepsilon_n}(\varphi_{j,k}) - \mathrm{Op}^{\mathrm{Weyl}}_{\varepsilon} \lf(F_{\sigma}^{(j,k,\eta)} \ri) \ri) \one_{B_k}(\xv)\bigl(\mathcal{B}\otimes W_{\varepsilon_n}(\eta)\bigr) \Bigr) \ri| \\
          \leq C \lf\| \mathcal{B} \ri\|  \lf\|  \lf[ a_{\varepsilon}(\varphi_{j,k}) W_{\varepsilon}(\eta) - \mathrm{Op}^{\mathrm{Weyl}}_{\varepsilon} \lf( F_{\sigma}^{(j,k,\eta)} \ri) \ri]{\bigl(\mathrm{d}\GG_{\varepsilon}(1)+1\bigr)}^{-\delta'} \ri\|_{\mathscr{B}(\mathscr{K}_{\eps}\otimes\mathscr{H})} \times	\\
          \times \lf| \tr \lf( (\mathcal{K}_0+1-p)^{\frac{1}{2}} {\bigl(\mathrm{d}\GG_{\varepsilon}(1)+1\bigr)}^{\delta'} \Gamma_{\varepsilon_n} \ri) \ri| \\
          \leq C \lf\|  \lf[ a_{\varepsilon}(\varphi_{j,k}) W_{\varepsilon}(\eta) - \mathrm{Op}^{\mathrm{Weyl}}_{\varepsilon} \lf( F_{\sigma}^{(j,k,\eta)} \ri) \ri]{\bigl(\mathrm{d}\GG_{\varepsilon}(1)+1\bigr)}^{-\delta'} \ri\|_{\mathscr{B}(\mathscr{K}_{\eps}\otimes\mathscr{H})} \times	\\
          \times \lf| \tr \lf[ \lf(\mathcal{K}_0+1-p + \bigl(\mathrm{d}\GG_{\varepsilon}(1)+1\bigr)^{2\delta'} \ri) \Gamma_{\varepsilon_n} \ri] \ri| 
          = o_{\sigma}(1)+ o_{\varepsilon}(1) \; .  } 
     In addition, for all $\delta'>\frac{1}{2}$,
  	\begin{equation}
   		\label{eq:26}
      	\lf( \braketl{\varphi_{j ,k}}{z}_{\mathfrak{h}} e^{2i\Re \braket{\eta}{ z}_{\mathfrak{h}}}- F_{\sigma}^{(j,k,\eta)}(z) \ri) \lf(\lVert z  \rVert_{\mathfrak{h}}^2+1 \ri)^{-\delta'}  = o_{\sigma}(1)\;
 	\end{equation}
 	uniformly in $ z \in \mathfrak{h} $.
  We can now take the limit $\varepsilon_n\to 0$ of the remaining term
  \bmln{
      \Tr\Bigl(\Gamma_{\varepsilon_n}\partial_{j}\mathrm{Op}^{\mathrm{Weyl}}_{\varepsilon}\lf(F_{\sigma}^{(j,k,\eta)} \ri)  \one_{B_k}(\xv)B \Bigr) \\
      = \tr_{\mathscr{H}}\Bigl( \tr_{\Gamma_{\mathrm{s}}}\Bigl({(K_0-p)}^{\frac{1}{2}}\Gamma_{\varepsilon_n} {(K_0-p)}^{\frac{1}{2}} \mathrm{Op}^{\mathrm{Weyl}}_{\varepsilon}\lf(F_{\sigma}^{(j,k,\eta)} \ri)  \Bigr) {(K_0-p)}^{-\frac{1}{2}}\partial_{j} \one_{B_k}(\xv) \mathcal{B}{(K_0-p)}^{-\frac{1}{2}} \Bigr) \; .
  }
  Since the symbol $ F_{\sigma}^{(j,k,\eta)} \in C_{0, \mathrm{cyl}}^{\infty}(\mathfrak{h})$, this converges to (see \cite{falconi2017arxiv} for further details)
  \bmln{
      \tr_{\mathscr{H}} \lf( \int_{\mathfrak{h}}^{} \mathrm{d}\mu_{\mm}(z) \: {(K_0-p)}^{\frac{1}{2}}\gamma_{\mm}(z) {(K_0-p)}^{\frac{1}{2}} F_{\sigma}^{(j,k,\eta)}(z) {(K_0-p)}^{-\frac{1}{2}} \partial_{j} \one_{B_k}(\xv) \mathcal{B}{(K_0-p)}^{-\frac{1}{2}} \ri)\\
      =\tr_{\mathscr{H}} \lf( \int_{\mathfrak{h}}^{} \mathrm{d}\mu_{\mm}(z) \:  \gamma_{\mm}(z) {(K_0-p)}^{\frac{1}{2}} F_{\sigma}^{(j,k,\eta)}(z) {(K_0-p)}^{-\frac{1}{2}} \partial_{j}\one_{B_k}(\xv) \mathcal{B} \ri).
 	}
  The limit $\sigma\to +\infty$ can then be taken by dominated convergence, at fixed $M$, thanks to the uniform bound for $ \delta > 1 $
  \begin{equation}
    \label{eq:27}
    \int_{\mathfrak{h}}^{}\mathrm{d}\mu_{\mm}(z) \: \lf( \tr_{\mathscr{H}}( \gamma_{\mm}(z)(K_0-p))+{\lf(\lVert z  \rVert_{\mathfrak{h}}^2+1\ri)}^{\delta} \ri)  \leq C\;,
  \end{equation}
  which also allows to take the limit $M\to +\infty$.          
\end{proof}

\subsection{Propagation Estimates and the Pull-Through Formula}
\label{sec:pull-thro-form}

In this section we discuss the so-called pull-through formula, needed both to characterize the dynamics in the quasi-classical limit for the polaron model: as we are going to see, the pull-through formula is key to propagate the a priori bounds on the initial state at later times. The
formula holds for the massive Nelson and the polaron model, therefore $H_{\varepsilon}$ in this section stands
for any of such Hamiltonians as defined above, although it is not needed for the Nelson model with
ultraviolet cut off, as we are considering in this paper. Indeed, in that case, one can simply use the
propagation estimates of \cref{eq: propagation 1}, valid also in the massless case.

Before discussing the formula, let us remark that the Pauli-Fierz and polaron Hamiltonians are
self-adjoint and bounded from below. There is an extensive literature concerning the self-adjointness of
the Pauli-Fierz Hamiltonian (see, \emph{e.g.},
\cite{hiroshima2000cmp,hiroshima2002ahp,spohn2004dcp,hasler2008rmp,falconi2015mpag,matte2017mpag} and
references therein), which, under our assumptions, is self-adjoint on $ \dom(K_0)\cap
\dom(\mathrm{d}\GG_{\varepsilon}(\omega))$. The polaron Hamiltonian is also self-adjoint
\cite{frank2014lmp,griesemer2015arxiv}, but its domain of self-adjointness is not explicitly
characterized. On the other hand, its form domain is known, and it coincides with the form domain of
$K_0+\nu(\varepsilon)\mathrm{d}\GG_{\varepsilon}(1)$.

We do not prove the pull-through formula, since it is discussed in detail for the massive
renormalized Nelson model in \cite{ammari2000mpag}, and its independence of the semiclassical parameter
has been shown in \cite{ammari2017sima}. The models we consider here are ``contained'' in the
massive renormalized Nelson model, namely all the terms in the Hamiltonians contained here are part
or are analogous to some parts of the renormalized Nelson Hamiltonian. Therefore, they have already been
discussed in the aforementioned papers.

	\begin{proposition}[Pull-through formula]
  		\label{prop:22}
  		\mbox{}	\\
  		There exist two finite constants $ a, b $ independent of $ \eps $, such that for any  $\varepsilon\in (0,1)$ and for any $\Psi_{\varepsilon}\in \dom(H_{\varepsilon})$,
  		\beq
    			\lf\lVert\mathrm{d}\GG_{\varepsilon}(1)\Psi_{\varepsilon} \ri\rVert_{\mathscr{H}\otimes \mathscr{K}_{\varepsilon}}^{}\leq \frac{a}{\nu(\varepsilon)} \lf\lVert (H_{\varepsilon}+b)\Psi_{\varepsilon}  \ri\rVert_{\mathscr{H}\otimes \mathscr{K}_{\varepsilon}}^{}\;.
  		\eeq
\end{proposition}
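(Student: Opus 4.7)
The plan is to establish the statement via the pull-through identity, i.e.\ the operator-valued resolvent identity for annihilation operators. The starting point is the commutator
\begin{equation*}
  [a(k), H_{\varepsilon}] = \nu(\varepsilon)\omega(k)\, a(k) + \sqrt{\varepsilon}\, v(\mathbf{x};k),
\end{equation*}
where $a(k)$ denotes the unscaled annihilation kernel (so that $\mathrm{d}\mathcal{G}_{\varepsilon}(1) = \int dk\, a^{\dagger}(k)a(k)$) and $v(\mathbf{x};k)$ is a multiplication operator on $\mathscr{H}$ given, up to signs, by $\sum_{j=1}^N \lambda(\mathbf{x}_j;k)$ for the Nelson model and by the analogous expression involving $\phi_r(\mathbf{x}_j;k)$ and the commutator with $-i\nabla_j$ for the polaron. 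Rearranging yields the identity
\begin{equation*}
  a(k)(H_{\varepsilon}+b)^{-1} = (H_{\varepsilon}+\nu(\varepsilon)\omega(k)+b)^{-1}\Bigl[a(k) - \sqrt{\varepsilon}\, v(\mathbf{x};k)(H_{\varepsilon}+b)^{-1}\Bigr],
\end{equation*}
valid for any $b$ exceeding $|\inf\sigma(H_{\varepsilon})|$; the latter is $\varepsilon$-uniformly bounded by the same relative-boundedness estimates that underlie the self-adjointness of $H_{\varepsilon}$.

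Next, I would apply this identity inside the integral representation $\mathrm{d}\mathcal{G}_{\varepsilon}(1)(H_{\varepsilon}+b)^{-1} = \int dk\, a^{\dagger}(k)\, a(k)(H_{\varepsilon}+b)^{-1}$, and estimate using the scalar resolvent bound $\|(H_{\varepsilon}+\nu(\varepsilon)\omega(k)+b)^{-1}\|\le (\nu(\varepsilon)\omega(k)+b')^{-1}$. The crucial input is that Proposition 22 is invoked only in the massive regime, so $\omega(k)\ge\omega_{\min}>0$ (for the polaron $\omega\equiv 1$). The leading contribution then produces the factor $(\nu(\varepsilon)\omega_{\min}+b')^{-1}$, responsible for the $\nu(\varepsilon)^{-1}$ in the statement, while the correction term is controlled by $\sqrt{\varepsilon}\,\|v(\mathbf{x};\cdot)\|_{L^2_k}$, finite thanks to the $L^{\infty}_{\mathbf{x}}L^2_k$ regularity of the form factors.

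The main obstacle is promoting this scheme from the easy quadratic-form bound $\|\mathrm{d}\mathcal{G}_{\varepsilon}(1)^{1/2}\Psi_{\varepsilon}\|\lesssim \nu(\varepsilon)^{-1/2}\|(H_{\varepsilon}+b)^{1/2}\Psi_{\varepsilon}\|$ to the genuine operator-norm bound claimed in the statement. Indeed, $\|\mathrm{d}\mathcal{G}_{\varepsilon}(1)\Psi_{\varepsilon}\|^2$ unfolds as $\int dk\, dk'\,\|a(k)a(k')\Psi_{\varepsilon}\|^2 + \langle\Psi_{\varepsilon},\mathrm{d}\mathcal{G}_{\varepsilon}(1)\Psi_{\varepsilon}\rangle$, and a second application of the pull-through reintroduces a term proportional to $\nu(\varepsilon)^{-2}\|\mathrm{d}\mathcal{G}_{\varepsilon}(1)(H_{\varepsilon}+b)\Psi_{\varepsilon}\|^2$, producing a self-referential estimate that must be closed by a careful commutator analysis of $[\mathrm{d}\mathcal{G}_{\varepsilon}(1), H_{\varepsilon}]$ together with an induction on the Fock grading. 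For the polaron, an additional subtlety is the singular form factor $\phi\notin\mathfrak{h}$: the splitting $\phi=\phi_r+\chi_r$ used to make sense of $H_{\varepsilon}$ as a quadratic form must be propagated consistently through the pull-through, so that only the regular piece $\phi_r\in L^{\infty}_{\mathbf{x}}L^2_k$ enters $v(\mathbf{x};k)$.

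Since the massive renormalized Nelson model is strictly more singular than any of the three models addressed here, and the full pull-through machinery (with $\varepsilon$-independent constants) has been worked out there in \cite{ammari2000mpag,ammari2017sima}, the most efficient route, and the one chosen by the authors, is to invoke those results after verifying that each term of the present Hamiltonians fits into the renormalized Nelson framework. A fully self-contained proof would essentially replay those estimates in the slightly simpler, unrenormalized setting at hand.
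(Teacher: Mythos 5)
Your proposal lands exactly where the paper does: the authors give no proof of \cref{prop:22} at all, deferring it to \cite{ammari2000mpag,ammari2017sima} after remarking that every term of the Hamiltonians treated here is already contained in (or analogous to a piece of) the massive renormalized Nelson model studied there --- which is precisely the route you advocate in your closing paragraph. Your preliminary sketch of the pull-through mechanism is a fair account of what those references do (up to a scaling bookkeeping slip: with $\mathrm{d}\GG_{\varepsilon}(1)=\int \mathrm{d}k\, a_{\varepsilon}^{\dagger}(k)a_{\varepsilon}(k)=\varepsilon\int \mathrm{d}k\, a^{\dagger}(k)a(k)$, the free field contributes $\varepsilon\nu(\varepsilon)\omega(k)$ to the commutator with the unscaled $a(k)$, not $\nu(\varepsilon)\omega(k)$), so nothing further is required to match the paper.
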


To study the quasi-classical limit of the (massless) Pauli-Fierz model, we cannot
  use the pull-through formula, we use instead the following propagation result (see \cite{ammari2019prep} for a detailed proof).

	\begin{proposition}[Propagation estimate]
  		\label{prop:23}
  		\mbox{}	\\
  		Let $H_{\varepsilon}$ be the Pauli-Fierz Hamiltonian, with either $\nu(\varepsilon)=1$ or $\nu(\varepsilon)=\frac{1}{\varepsilon}$. Then, there
  exist two finite constants $C_1,C_2 $ independent of $ \eps $, such that for any $\varepsilon\in (0,1)$, for any $\Psi_{\varepsilon}\in \dom(K_0)\cap \dom(\mathrm{d}\GG_{\varepsilon}(\omega))\cap
  \dom(\mathrm{d}\GG_{\varepsilon}(1))$ and for any $t\in \mathbb{R}$,
  		\bml{
    			\lf\lVert \mathrm{d}\GG_{\varepsilon}(1) e^{-it H_{\varepsilon}}\Psi_{\varepsilon} \ri\rVert_{\mathscr{H}\otimes \mathscr{K}_{\varepsilon}}^{}\leq C_1 \lf[ \lf\lVert \mathrm{d}\GG_{\varepsilon}(1)\Psi_{\varepsilon}  \ri\rVert_{\mathscr{H}\otimes \mathscr{K}_{\varepsilon}}^{} \ri.	\\
    			\lf. + \lf\lVert \lf(K_0+\mathrm{d}\GG_{\varepsilon}(\omega)+1 \ri)\Psi_{\varepsilon}  \ri\rVert_{\mathscr{H}\otimes \mathscr{K}_{\varepsilon}}^{} \ri] e^{C_2 \lvert t  \rvert_{}^{}}\;.
  		}
  		In addition, there exist two finite constants $c,C>0$ independent of $ \eps $, such that for any $\varepsilon\in (0,1)$ and for any $\Psi_{\varepsilon}\in \dom(H_{\varepsilon})=\dom(K_0)\cap
  \dom(\mathrm{d}\GG_{\varepsilon}(\omega))$,
  		\begin{equation}
   			 c\lf\| (H_{\varepsilon}+1) \Psi_{\varepsilon} \ri\|_{\mathscr{H}\otimes \mathscr{K}_{\varepsilon}}^{}\leq \lf\lVert \lf(K_0+ \nu(\varepsilon) \mathrm{d}\GG_{\varepsilon}(\omega)+1 \ri) \Psi_{\varepsilon} \ri\rVert_{\mathscr{H}\otimes \mathscr{K}_{\varepsilon}}^{} \leq C \lf\lVert (H_{\varepsilon}+1) \Psi_{\varepsilon} \ri\rVert_{\mathscr{H}\otimes \mathscr{K}_{\varepsilon}}^{}\; .
 		 \end{equation}
	\end{proposition}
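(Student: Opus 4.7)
The plan is to tackle the second inequality first (the norm equivalence), since it will be a crucial tool for the first one. For the equivalence, I would use standard Kato--Rellich arguments. Expanding the minimal coupling and using the Coulomb gauge condition $\nabla\cdot\vec{\lambda}=0$, the Pauli--Fierz operator reads
\begin{equation*}
   H_{\varepsilon}= K_0 - 2i\sum_{j=1}^{N}\nabla_j\cdot \aav_{\varepsilon}(\xv_j) + \sum_{j=1}^{N}\aav_{\varepsilon}(\xv_j)^2 + \nu(\varepsilon)\mathrm{d}\GG_{\varepsilon}(\omega),
\end{equation*}
where the nontrivial point is the $\varepsilon$-uniformity of the relative bounds. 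Each term $a_{\varepsilon}^{\#}(\lambda_\ell(\xv_j))$ is bounded on $\mathscr{D}(\mathrm{d}\GG_{\varepsilon}(\omega)^{1/2})$ by $\lVert \omega^{-1/2}\lambda_\ell\rVert_{L^{\infty}_{\xv}\mathfrak{h}} \,\lVert \mathrm{d}\GG_{\varepsilon}(\omega)^{1/2}\Psi\rVert$ (plus an $O(\sqrt{\varepsilon})$ commutator contribution that is actually harmless), and $\aav_{\varepsilon}^2$ can be estimated in the same fashion, using in addition $\lVert \omega^{1/2}\lambda_\ell\rVert_{L^{\infty}_{\xv}\mathfrak{h}}$. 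Combining these Fock estimates with the gradient bound via Cauchy--Schwarz on $-2i\nabla_j\cdot\aav_{\varepsilon}(\xv_j)$ in terms of $\lVert(-\Delta_j)^{1/2}\Psi\rVert\,\lVert\mathrm{d}\GG_{\varepsilon}(\omega)^{1/2}\Psi\rVert$, and absorbing everything with a small parameter, one obtains, uniformly in $\varepsilon$, the infinitesimal relative boundedness of the interaction w.r.t.\ $K_0+\nu(\varepsilon)\mathrm{d}\GG_{\varepsilon}(\omega)$. This yields the two-sided norm equivalence.

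For the propagation estimate, I would set $N_{\varepsilon} := \mathrm{d}\GG_{\varepsilon}(1)$ and $\Psi_{\varepsilon}(t) := e^{-itH_{\varepsilon}}\Psi_{\varepsilon}$, and use the Heisenberg-type identity
\begin{equation*}
   \frac{\diff}{\diff t}\lVert N_{\varepsilon}\Psi_{\varepsilon}(t)\rVert^2 = 2\,\Im\,\bigl\langle N_{\varepsilon}\Psi_{\varepsilon}(t),\,[H_{\varepsilon},N_{\varepsilon}]\Psi_{\varepsilon}(t)\bigr\rangle,
\end{equation*}
which is justified on the invariant core described in the self-adjointness theorem. Since $[\mathrm{d}\GG_{\varepsilon}(1),a_{\varepsilon}^{\#}(f)]=\pm\varepsilon\,a_{\varepsilon}^{\#}(f)$, one computes explicitly
\begin{equation*}
   [H_{\varepsilon},N_{\varepsilon}] = -2i\sum_{j}\nabla_j\cdot\bigl[\aav_{\varepsilon}(\xv_j),N_{\varepsilon}\bigr] + \sum_{j}\bigl[\aav_{\varepsilon}(\xv_j)^2,N_{\varepsilon}\bigr],
\end{equation*}
and both brackets are linear/quadratic in field operators, with an explicit $\varepsilon$ prefactor compensating the semiclassical commutation relations. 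Applying Cauchy--Schwarz together with standard $N_{\varepsilon}^{1/2}$-type bounds on creation/annihilation operators, each contribution to $|\langle N_{\varepsilon}\Psi_{\varepsilon}(t),[H_{\varepsilon},N_{\varepsilon}]\Psi_{\varepsilon}(t)\rangle|$ can be controlled, modulo constants depending only on the form-factor norms, by
\begin{equation*}
   \lVert N_{\varepsilon}\Psi_{\varepsilon}(t)\rVert\,\bigl[\,\lVert N_{\varepsilon}\Psi_{\varepsilon}(t)\rVert + \lVert K_0^{1/2}\Psi_{\varepsilon}(t)\rVert + \lVert \mathrm{d}\GG_{\varepsilon}(\omega)^{1/2}\Psi_{\varepsilon}(t)\rVert \bigr].
\end{equation*}

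The remaining two energy seminorms of $\Psi_{\varepsilon}(t)$ are handled by conservation of $H_{\varepsilon}$ combined with the norm equivalence just proved: one gets
$\lVert K_0^{1/2}\Psi_{\varepsilon}(t)\rVert + \lVert\mathrm{d}\GG_{\varepsilon}(\omega)^{1/2}\Psi_{\varepsilon}(t)\rVert \leq C\,\lVert(K_0+\mathrm{d}\GG_{\varepsilon}(\omega)+1)^{1/2}\Psi_{\varepsilon}\rVert$ uniformly in $\varepsilon$ and $t$ (the case $\nu(\varepsilon)=1/\varepsilon$ is actually favourable, giving an extra $\sqrt{\varepsilon}$ on the field piece). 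Plugging these bounds back, one obtains a differential inequality of the form $\frac{\diff}{\diff t}\lVert N_{\varepsilon}\Psi_{\varepsilon}(t)\rVert^2 \leq C_1\lVert N_{\varepsilon}\Psi_{\varepsilon}(t)\rVert^2 + C_2\lVert N_{\varepsilon}\Psi_{\varepsilon}(t)\rVert\cdot \lVert(K_0+\mathrm{d}\GG_{\varepsilon}(\omega)+1)\Psi_{\varepsilon}\rVert$, from which a Gr\"onwall argument yields the desired exponential estimate. The delicate point, which forces us to propagate $N_{\varepsilon}$ independently instead of extracting it from a power of $H_{\varepsilon}$, is precisely the masslessness of the photon: the inequality $N_{\varepsilon}\lesssim \mathrm{d}\GG_{\varepsilon}(\omega)$ fails, so the number operator cannot be dominated by the (conserved) free field energy, and one must track its growth through the interaction-induced commutator.
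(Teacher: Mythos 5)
The paper does not actually prove this proposition: it is imported verbatim from \cite{ammari2019prep} (``in preparation''), so there is no in-text argument to compare yours against. Your architecture --- graph-norm equivalence first, then a Gr\"onwall inequality for $\lVert \mathrm{d}\GG_{\varepsilon}(1)\Psi_{\varepsilon}(t)\rVert^2$ driven by $[H_{\varepsilon},\mathrm{d}\GG_{\varepsilon}(1)]$, with the energy terms at time $t$ recovered from conservation of $H_{\varepsilon}$ plus the equivalence --- is the standard and correct one. Two steps, however, do not close as written.

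First, the norm equivalence cannot be obtained by ``absorbing everything with a small parameter''. The cross term satisfies at best
$\lVert \aav_{\varepsilon}(\xv_j)\cdot\nabla_j\Psi\rVert \leq C\,\lVert\omega^{-1/2}\vec{\lambda}\rVert_{L^{\infty}_{\xv}\mathfrak{h}}\,\lVert \mathrm{d}\GG_{\varepsilon}(\omega)^{1/2}\nabla_j\Psi\rVert+\dotsb$, and $\lVert \mathrm{d}\GG_{\varepsilon}(\omega)^{1/2}\nabla_j\Psi\rVert^2=\langle\Psi,(-\Delta_j)\,\mathrm{d}\GG_{\varepsilon}(\omega)\Psi\rangle$ is a genuinely first-order quantity: splitting it with a parameter $\eta$ makes the coefficient of $\lVert\mathrm{d}\GG_{\varepsilon}(\omega)\Psi\rVert$ small only at the price of blowing up the coefficient of $\lVert\Delta_j\Psi\rVert$. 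The interaction is therefore relatively bounded with relative bound proportional to the size of $\vec{\lambda}$, not infinitesimally, and Kato--Rellich yields the two-sided equivalence only for small coupling. At arbitrary coupling the equivalence is precisely the content of the self-adjointness results of \cite{hiroshima2002ahp,hasler2008rmp,matte2017mpag}, and the actual task --- which your sketch does not address --- is to check that the constants in those (non-perturbative) arguments are uniform in $\varepsilon\in(0,1)$ under the semiclassical CCR.

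Second, in the Gr\"onwall step the intermediate bound $\lVert[H_{\varepsilon},N_{\varepsilon}]\Psi(t)\rVert\lesssim\lVert N_{\varepsilon}\Psi(t)\rVert+\lVert K_0^{1/2}\Psi(t)\rVert+\lVert\mathrm{d}\GG_{\varepsilon}(\omega)^{1/2}\Psi(t)\rVert$ is too strong for the same reason: the commutator of the minimal-coupling term produces $\varepsilon\,\lVert\mathrm{d}\GG_{\varepsilon}(\omega)^{1/2}\nabla_j\Psi(t)\rVert$, controlled only by $\varepsilon\,\lVert(K_0+\nu(\varepsilon)\mathrm{d}\GG_{\varepsilon}(\omega)+1)\Psi(t)\rVert$; it is the explicit factor $\varepsilon$ from the CCR that compensates the factor $\nu(\varepsilon)=\varepsilon^{-1}$ returned by energy conservation, which is exactly why the right-hand side of the statement carries the full first power $\lVert(K_0+\mathrm{d}\GG_{\varepsilon}(\omega)+1)\Psi_{\varepsilon}\rVert$ rather than square roots. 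This bookkeeping is the delicate point and must be made explicit. Finally, the Heisenberg identity for $N_{\varepsilon}^2$ needs justification: unlike the Nelson model there is no obvious invariant core for the Pauli--Fierz dynamics, so the computation should be performed for a bounded regularization such as $N_{\varepsilon}(1+\sigma N_{\varepsilon})^{-1}$, with $\sigma\to0$ removed at the end.
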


        Let us now outline in more detail how one can use the pull-through formula in the adaptations of
        the arguments to cover the polaron model. The main technicality
        is the propagation of the a priori bound and regularity of the state. This can be achieved by a direct application of \cref{prop:22}: one can simply restrict the
        proof of \cref{thm:5} to states satisfying \beqn
  	\label{eq:polass1}& \Tr\bigl( \Gamma_{\varepsilon}\bigl((K_0+\mathrm{d}\GG_{\varepsilon}(\omega)+1)^2+\mathrm{d}\GG_{\varepsilon}(1)^2 \bigr) \bigr)\leq C\; ;\\
        \label{eq:polass2}& \Tr\bigl( \Gamma_{\varepsilon}H_{\varepsilon}^2\bigr)\leq C \nu(\varepsilon)^2\;, \eeqn for any $ \eps \in (0,1)
        $. Let us remark that the regularity assumptions above are not propagated in time as
            they are, but they are rather used to control the following expectations at any time $t\in \mathbb{R}$:
        \begin{itemize}
        \item $\Tr \bigl(\Gamma_{\varepsilon}(t) K_0\bigr)$;
          
        \item $\Tr \bigl(\Gamma_{\varepsilon}(t)(\mathrm{d}\mathcal{G}_{\varepsilon}(1)+1)^2 \bigr)$.
        \end{itemize}
        The first expectation is bounded uniformly w.r.t. $\varepsilon$ as in \cite[Lemma
          3.4]{carlone2019arxiv}, using the assumption \eqref{eq:polass1}. The second expectation is
          bounded using \cref{prop:22} and assumption \eqref{eq:polass2}. Once the bounds for the two
          quantities above are established at any time, it is possible to use \cref{prop:18} for the
          quasi-classical convergence of the interaction terms appearing in the integral equation. The
          result is then extended to general states satisfying~\eqref{eq:20} by means of the procedure
          outlined in~\cref{sec:putting-it-all}.

          For the Pauli-Fierz model one proceeds similarly, using \cref{prop:23} instead of the
            pull-through formula.  \cref{thm:5} is first proved for initial states such that
\begin{equation}
  \Tr\bigl(\Gamma_{\varepsilon}(K_0+\mathrm{d}\GG_{\varepsilon}(\omega)+{(\mathrm{d}\GG_{\varepsilon}(1)+1)}^2)\bigr)\leq C\; 
\end{equation}
for $  \eps \in (0,1) $. The needed regularity of the expectation of the number operator at any time is then obtained thanks
  to \cref{prop:23}. To bound the free particle part, one proceeds analogously as it
  was done for the polaron model in the aforementioned result \cite[Lemma 3.4]{carlone2019arxiv}, the
  only difference being that instead of using KLMN-smallness, which would be true only for small values of
  the particles' charge, one uses again the number estimate of \cref{prop:23} to close the argument
  (see \cite{olivieri2019phd} for additional details). Therefore, it is possible to apply \cref{prop:18}
  for the quasi-classical convergence of the gradient terms appearing in the integral equation, and
  \cref{prop:10} for the convergence of quadratic terms in the creation and annihilation operators. The proof can then be completed exactly as for the polaron model.

\subsection{Quasi-Classical Evolution}
\label{sec:quasi-class-evol}

In this section we briefly discuss the well-posedness of the effective evolution equation for the polaron
  and Pauli-Fierz models with $\nu(\varepsilon)=\frac{1}{\varepsilon}$. In fact, when $\nu(\varepsilon)=1$ or $ \nu = 0 $, the generator of
$\mathcal{U}_{t,s}(z)$, $\mathcal{K}_0+\mathcal{V}(z)$, does not depend on time and therefore the
evolution is defined by Stone's theorem, and the interaction picture $\widetilde{\mathcal{U}}_{t,s}(z)$ is
weakly generated by $t\mapsto \widetilde{\mathcal{V}}_{t}(z)$.

If on the opposite $ \nu(\eps) = \frac{1}{\eps} $, we need to prove the existence of a two parameter group of unitary operators
${(\mathcal{U}_{t,s}(z))}_{t,s\in \mathbb{R}}$, satisfying
\begin{equation*}
  \begin{split}
    &i\partial_t\mathcal{U}_{t,s}(z)\psi_s=(\mathcal{K}_0+\mathcal{V}_t(z))\mathcal{U}_{t,s}(z)\psi_s,\\
    &i\partial_s\mathcal{U}_{t,s}(z)\psi_s=-\mathcal{U}_{t,s}(z)(\mathcal{K}_0+\mathcal{V}_s(z))\psi_s,\\
    &\mathcal{U}_{s,s}(z)\psi_s=\psi_s,
  \end{split}
\end{equation*}
for any $\psi\in \dom(\mathcal{K}_0)$, where the latter is also the domain of self-adjointness for
$\mathcal{K}_0+\mathcal{V}_t(z)$, for all $t\in \mathbb{R}$.

For the Pauli-Fierz model, it is easy to prove that the map $ t \mapsto \mathcal{K}_0+\mathcal{V}_t(z)$ is
strongly continuously differentiable on $\dom(\mathcal{K}_0)$. Therefore, a result by \cite{schmid2014mpag}
guarantees the existence of ${(\mathcal{U}_{t,s}(z))}_{t,s\in \mathbb{R}}$. Again, $\tilde{\mathcal{U}}_{t,s}(z)$ is
then defined by $\widetilde{\mathcal{U}}_{t,s}(z)=e^{it\mathcal{K}_0}\mathcal{U}_{t,s}(z)e^{-is
  \mathcal{K}_0}$, and it is weakly generated by $\widetilde{\mathcal{V}}_{t}(z)$. For the polaron
  model, on the other hand, the existence of the quasi-classical dynamics follows from a general result concerning the
  evolution generated by time-dependent closed quadratic forms with a time-independent common core, proved, {\it e.g.}, in \cite[Theorem II.27 \& Corollary II.28]{simon1971psp}.


\begin{thebibliography}{CTDRG97}

\bibitem[ABN19]{ammari2019tjm}
Z. Ammari, S. Breteaux, and F. Nier.
\newblock Quantum mean-field asymptotics and multiscale analysis.
\newblock {\em Tunis. J. Math.}, {\bf 1}:221-272, 2019.

\bibitem[AF14]{ammari2014jsp}
Z. Ammari and M. Falconi.
\newblock {W}igner measures approach to the classical limit of the {N}elson
  model: {C}onvergence of dynamics and ground state energy.
\newblock {\em J. Stat. Phys.}, {\bf 157}:330--362, 2014.

\bibitem[AF17]{ammari2017sima}
Z. Ammari and M. Falconi.
\newblock {Bohr's correspondence principle for the renormalized Nelson model}.
\newblock {\em SIAM J. Math. Anal.}, {\bf 49}:5031--5095, 2017.

\bibitem[AFH19]{ammari2019prep}
Z. Ammari, M. Falconi, and F. Hiroshima.
\newblock From the quantum to the classical electrodynamics of charges and
  fields.
\newblock In preparation.

\bibitem[AJN17]{amour2017arxiv}
L. {Amour}, L. {Jager}, and J. {Nourrigat}.
\newblock {Infinite dimensional semiclassical analysis and applications to a model in nuclear magnetic resonance}.
\newblock {\em J. Math. Phys.} {\bf 60}:071503, 2019.

\bibitem[ALN17]{amour2017jmp}
L.~Amour, R.~Lascar, and J.~Nourrigat.
\newblock Weyl calculus in {QED} {I}. {T}he unitary group.
\newblock {\em J. Math. Phys.}, {\bf 58}:013501, 2017.

\bibitem[Amm00]{ammari2000mpag}
Z. Ammari.
\newblock Asymptotic completeness for a renormalized nonrelativistic
  {H}amiltonian in quantum field theory: the {N}elson model.
\newblock {\em Math. Phys. Anal. Geom.}, {\bf 3}:217--285, 2000.

\bibitem[AN08]{ammari2008ahp}
Z. Ammari and F. Nier.
\newblock Mean field limit for bosons and infinite dimensional phase-space
  analysis.
\newblock {\em Ann. Henri Poincar\'e}, {\bf 9}:1503--1574, 2008.

\bibitem[AN09]{ammari2009jmp}
Z. Ammari and F. Nier.
\newblock Mean field limit for bosons and propagation of {W}igner measures.
\newblock {\em J. Math. Phys.}, {\bf 50}:042107, 2009.

\bibitem[AN11]{ammari2011jmpa}
Z. Ammari and F. Nier.
\newblock Mean field propagation of {W}igner measures and {BBGKY} hierarchies
  for general bosonic states.
\newblock {\em J. Math. Pures Appl.}, {\bf 95}:585--626, 2011.

\bibitem[AN15a]{ammari2015asns}
Z. Ammari and F. Nier.
\newblock Mean field propagation of infinite-dimensional {W}igner measures with
  a singular two-body interaction potential.
\newblock {\em Ann. Sc. Norm. Super. Pisa Cl. Sci.}, {\bf 14}:155--220, 2015.

\bibitem[AN15b]{amour2015arxiv}
L.~Amour and J.~Nourrigat.
\newblock Hamiltonian systems and semiclassical dynamics for interacting spins
  in {QED}.
\newblock {\em arXiv:1512.08429 [math.AP]}, 2015.

\bibitem[Bal85]{ba} A. Balazard-Konlein. 
\newblock
Asymptotique semi-classique du spectre pour des op\'{e}rateurs a symbole operatoriel.
\newblock
{\it Comptes Rendus Acad. Sci. Paris} {\bf 301}:903--906 (1985).


\bibitem[CCFO19]{carlone2019arxiv}
R. Carlone, M. Correggi, M. Falconi, and M. Olivieri.
\newblock {Microscopic Derivation of Time-dependent Point Interactions}.
\newblock {\em  	arXiv:1904.11012 [math-ph]}, 2019.

\bibitem[CF18]{correggi2017ahp}
M. Correggi and M. Falconi.
\newblock {Effective Potentials Generated by Field Interaction in the
  Quasi-Classical Limit}.
\newblock {\em Ann. Henri Poincar{\'e}}, {\bf 19}:189--235, 2018.

\bibitem[CFO19]{correggi2017arxiv}
M.~Correggi, M.~Falconi, and M.~Olivieri.
\newblock {Magnetic Schr\"odinger Operators as the Quasi-Classical Limit of
  Pauli-Fierz-type Models}.
\newblock {\em J. Spectr. Theory} {\bf 9}:1287--1325, 2019. 
\bibitem[Coh13]{cohn2013mt}
D.~L. Cohn.
\newblock {\em Measure theory}.
\newblock Birkh\"auser Advanced Texts: Basler Lehrb\"ucher. Birkh\"auser/Springer, New York, second
  edition, 2013.

\bibitem[CDG97]{cohen-tannoudji1997wiley}
C. Cohen-Tannoudji, J. Dupont-Roc, and G. Grynberg.
\newblock {\em {Photons and Atoms: Introduction to Quantum Electrodynamics}}.
\newblock Wiley-VCH, 1997.

\bibitem[DP40]{dunford1940tams}
N. Dunford and B.~J. Pettis.
\newblock Linear operations on summable functions.
\newblock {\em Trans. Amer. Math. Soc.}, {\bf 47}:323--392, 1940.

\bibitem[DU77]{diestel1977ms}
J.~Diestel and J.~J. Uhl, Jr.
\newblock {\em Vector measures}.
\newblock Mathematical Surveys {\bf 15}. American Mathematical Society,
  Providence, R.I., 1977.
  

\bibitem[Fal13]{falconi2013jmp}
M. Falconi.
\newblock Classical limit of the {N}elson model with cutoff.
\newblock {\em J. Math. Phys.}, {\bf 54}:012303, 2013.

\bibitem[Fal15]{falconi2015mpag}
M. Falconi.
\newblock Self-adjointness criterion for operators in {F}ock spaces.
\newblock {\em Math. Phys. Anal. Geom.}, {\bf 18}:Art. 2, 2015.

\bibitem[Fal18a]{falconi2017ccm}
M. Falconi.
\newblock {Concentration of cylindrical Wigner measures}.
\newblock {\em Commun. Contemp. Math.}, {\bf 20}:1750055, 2018.

\bibitem[Fal18b]{falconi2017arxiv}
M. Falconi.
\newblock {Cylindrical Wigner measures}.
\newblock {\em Doc. Math.}, {\bf 23}:1677--1756, 2018.

\bibitem[FG02]{fg}	
C. Fermanian-Kammerer and P. G\'{e}rard.
\newblock
Mesures Semi-classiques et Croisement de Modes.
\newblock
{\it Bull. Soc. Math. France} {\bf 130}:123--168, 2002.


\bibitem[FG17]{frank2017apde}
R.~L. Frank and Z. Gang.
\newblock Derivation of an effective evolution equation for a strongly coupled
  polaron.
\newblock {\em Anal. PDE}, {\bf 10}:379--422, 2017.

\bibitem[FG19]{frank2019arxiv2}
R.~L. {Frank} and Z. {Gang}.
\newblock {A non-linear adiabatic theorem for the one-dimensional Landau-Pekar
  equations}.
\newblock {\em arXiv:1906.07908 [math-ph]}, 2019.

\bibitem[Fro37]{frohlich1937prslA}
H.~Frohlich.
\newblock {Theory of Electrical Breakdown in Ionic Crystals}.
\newblock {\em Proc. R. Soc. Lond. Ser. A Math. Phys. Eng. Sci.},
  {\bf 160}:230--241, 1937.

\bibitem[FS14]{frank2014lmp}
R.~L. Frank and B. Schlein.
\newblock Dynamics of a strongly coupled polaron.
\newblock {\em Lett. Math. Phys.}, {\bf 104}:911--929, 2014.

\bibitem[FS19]{frank2019arxiv}
R.~L. Frank and R. Seiringer.
\newblock {Quantum corrections to the Pekar asymptotics of a strongly coupled
  polaron}.
\newblock {\em arXiv:1902.02489 [math-ph]}, 2019.

\bibitem[G\'{e}r91]{pg}	P. G\'{e}rard.
\newblock
Microlocal defect measures.
\newblock
{\it Comm. Partial Differential Equations}, {\bf16}:1761--1794, 1991.

\bibitem[GMS91]{gms} 
C. Gerard, A. Martinez and J. Sj\"{o}strand.
\newblock
A Mathematical Approach to the Effective Hamiltonian
in Perturbed Periodic Problems.
\newblock
{\it Commun. Math. Phys.}, {\bf 142}:217--244, 1991.

\bibitem[GNV06]{ginibre2006ahp}
J. Ginibre, F. Nironi, and G. Velo.
\newblock Partially classical limit of the {N}elson model.
\newblock {\em Ann. Henri Poincar\'e}, {\bf 7}:21--43, 2006.

\bibitem[Gri17]{griesemer2016arxiv2}
M. Griesemer.
\newblock On the dynamics of polarons in the strong-coupling limit.
\newblock {\em Rev. Math. Phys.}, {\bf 29}:1750030, 2017.

\bibitem[Gro76]{gross1976ap}
E.~P. Gross.
\newblock Strong coupling polaron theory and translational invariance.
\newblock {\em Ann. Physics}, {\bf 99}:1--29, 1976.

\bibitem[GW13]{griesemer2013jpa}
M.~Griesemer and D.~Wellig.
\newblock The strong-coupling polaron in static electric and magnetic fields.
\newblock {\em J. Phys. A}, {\bf 46}:425202, 2013.

\bibitem[GW16]{griesemer2015arxiv}
M.~Griesemer and A.~Wuensch.
\newblock {Self-Adjointness and Domain of the Froehlich Hamiltonian}.
\newblock {\em J. Math. Phys.}, {\bf 57}:021902, 2016.

\bibitem[HH08]{hasler2008rmp}
D.~Hasler and I.~Herbst.
\newblock On the self-adjointness and domain of {P}auli-{F}ierz type
  {H}amiltonians.
\newblock {\em Rev. Math. Phys.}, {\bf 20}:787--800, 2008.

\bibitem[Hir00]{hiroshima2000cmp}
F. Hiroshima.
\newblock Essential self-adjointness of translation-invariant quantum field
  models for arbitrary coupling constants.
\newblock {\em Comm. Math. Phys.}, {\bf 211}:585--613, 2000.

\bibitem[Hir02]{hiroshima2002ahp}
F. Hiroshima.
\newblock Self-adjointness of the {P}auli-{F}ierz {H}amiltonian for arbitrary
  values of coupling constants.
\newblock {\em Ann. Henri Poincar{\'e}}, {\bf 3}:171--201, 2002.

\bibitem[Kos72]{kossakowski1972rmp}
A. Kossakowski.
\newblock {On quantum statistical mechanics of non-Hamiltonian systems}.
\newblock {\em Rep. Math. Phys.}, {\bf 3}:247--274, 1972.

\bibitem[Lin76]{lindblad1976cmp}
G.~Lindblad.
\newblock On the generators of quantum dynamical semigroups.
\newblock {\em Comm. Math. Phys.}, {\bf 48}:119--130, 1976.

\bibitem[LRSS19]{leopold2019arxiv}
N. {Leopold}, S. {Rademacher}, B. {Schlein}, and R.
  {Seiringer}.
\newblock {The Landau-Pekar equations: Adiabatic theorem and accuracy}.
\newblock {\em arXiv:1904.12532 [math-ph]}, 2019.

\bibitem[LS19]{lieb2019arxiv}
E.~H. {Lieb} and R. {Seiringer}.
\newblock Divergence of the effective mass of a polaron in the strong coupling
  limit.
\newblock {\em J. Stat. Phys.}, 2019.

\bibitem[LT97]{lieb1997cmp}
E.~H. Lieb and L.~E. Thomas.
\newblock Exact ground state energy of the strong-coupling polaron.
\newblock {\em Comm. Math. Phys.}, {\bf 183}:511--519, 1997.

\bibitem[Mat17]{matte2017mpag}
O. Matte.
\newblock Pauli-{F}ierz type operators with singular electromagnetic potentials
  on general domains.
\newblock {\em Math. Phys. Anal. Geom.}, {\bf 20}:Art. 18, 2017.


\bibitem[Nel64]{nelson1964jmp}
E. Nelson.
\newblock Interaction of nonrelativistic particles with a quantized scalar
  field.
\newblock {\em J. Math. Phys.}, {\bf 5}:1190--1197, 1964.

\bibitem[Oli19]{olivieri2019phd}
M. Olivieri.
\newblock {\em {Quasi-classical limit for the Dynamics of Particle-Field
  Quantum Systems}}.
\newblock PhD thesis, ``Sapienza'' Universit\`{a} di Roma, 2019.

\bibitem[PST03]{panati2003atmp}
G. Panati, H. Spohn, and S. Teufel.
\newblock Space-adiabatic perturbation theory.
\newblock {\em Adv. Theor. Math. Phys.}, {\bf 7}:145--204, 2003.

\bibitem[RS75]{reed1975II}
M.~Reed and B.~Simon.
\newblock Methods of modern mathematical physics. {II}. {F}ourier analysis, self-adjointness.
\newblock Academic Press, New York, 1975.

\bibitem[SG14]{schmid2014mpag}
J. Schmid and M. Griesemer.
\newblock Kato's theorem on the integration of non-autonomous linear evolution
  equations.
\newblock {\em Math. Phys. Anal. Geom.}, {\bf 17}:265--271, 2014.

\bibitem[Sim71]{simon1971psp}
B. Simon.
\newblock {\em Quantum mechanics for {H}amiltonians defined as quadratic
  forms}.
\newblock Princeton University Press, Princeton, NJ, 1971.


\bibitem[Spo04]{spohn2004dcp}
H. Spohn.
\newblock {\em Dynamics of charged particles and their radiation field}.
\newblock Cambridge University Press, Cambridge, 2004.

\bibitem[ST13]{stiepan2013cmp}
H.-M. Stiepan and S. Teufel.
\newblock Semiclassical approximations for {H}amiltonians with operator-valued
  symbols.
\newblock {\em Comm. Math. Phys.}, {\bf 320}:821--849, 2013.

\bibitem[Teu02]{teufel2002ahp}
S.~Teufel.
\newblock Effective {$N$}-body dynamics for the massless {N}elson model and
  adiabatic decoupling without spectral gap.
\newblock {\em Ann. Henri Poincar\'e}, {\bf 3}:939--965, 2002.

\bibitem[Teu03]{teu}
S.~Teufel.
\newblock
{\it Adiabatic Perturbation Theory in Quantum Dynamics}.
\newblock
Lect. Notes Math. {\bf 1821}, Springer, Berlin, 2003.

\bibitem[TT08]{tenuta2008cmp}
L.~Tenuta and S.~Teufel.
\newblock Effective dynamics for particles coupled to a quantized scalar field.
\newblock {\em Comm. Math. Phys.}, {\bf 280}:751--805, 2008.

\end{thebibliography}
\end{document}